\newtheorem{theorem}{Theorem}
\numberwithin{theorem}{section}
\newtheorem{corollary}[theorem]{Corollary}
\newtheorem{lemma}[theorem]{Lemma}
\newtheorem{proposition}[theorem]{Proposition}
\theoremstyle{definition}
\newtheorem{definition}[theorem]{Definition}
\newtheorem{example}[theorem]{Example}
\newtheorem{remark}[theorem]{Remark}
\newtheorem{fact}[theorem]{Fact}
\def\squareforqed{\hbox{\rlap{$\sqcap$}$\sqcup$}}
\def\qed{\ifmmode\squareforqed\else{\unskip\nobreak\hfil
\penalty50\hskip1em\null\nobreak\hfil\squareforqed
\parfillskip=0pt\finalhyphendemerits=0\endgraf}\fi}
\def\endenv{\ifmmode\;\else{\unskip\nobreak\hfil
\penalty50\hskip1em\null\nobreak\hfil\;
\parfillskip=0pt\finalhyphendemerits=0\endgraf}\fi}
\newcommand{\beq}{\begin{equation}}
\newcommand{\eeq}{\end{equation}}
\newcommand{\SO}{\mathrm{SO}}
\newcommand{\so}{\mathfrak{so}}
\newcommand\SU{\mathrm{SU}}
\newcommand\SL{\mathrm{SL}}
\newcommand\GL{\mathrm{GL}}
\newcommand\U{\mathrm{U}}
\newcommand{\midd}{\textup{ s.t. }}
\newcommand{\diag}{\operatorname{diag}}
\DeclareMathOperator\ord{ord}
\newcommand{\cH}{\mathcal{H}}
\newcommand{\ZZ}{\mathbb{Z}}
\newcommand{\NN}{\mathbb{N}}
\newcommand{\QQ}{\mathbb{Q}}
\newcommand{\CC}{\mathbb{C}}
\newcommand{\RR}{\mathbb{R}}
\renewcommand\Re{\operatorname{Re}}
\newcommand{\mM}{\mathsf{M}}
\newcommand{\mI}{\mathrm{I}}
\newcommand{\ox}{\otimes}
\newcommand{\ket}[1]{\vert #1\rangle}
\newcommand{\bra}[1]{\langle #1\vert}
\newcommand{\braket}[2]{\langle#1|#2\rangle}
\newcommand{\proj}[1]{\ket{#1}\!\bra{#1}}
\DeclareMathOperator\Tr{Tr}
\DeclareMathOperator\Hom{Hom}
\DeclareMathOperator{\Span}{Span}
\DeclareMathOperator\cProj{Pr}
\DeclareMathAlphabet{\mathdutchcal}{U}{dutchcal}{m}{n}
\newcommand{\dude}{\mathdutchcal{d}}
\newcommand{\norm}[1]{\left\lVert#1\right\rVert}
\newcommand{\abs}[1]{\left\lvert#1\right\rvert}
\patchcmd{\frontmatter@RRAP@format}{(}{}{}{}
\patchcmd{\frontmatter@RRAP@format}{)}{}{}{}
\renewcommand\Dated@name{}
\begin{document}

\title{Composing \texorpdfstring{$\mathbf{p}$}{Lg}-adic qubits: from representations of \texorpdfstring{$\mathbf{SO(3)_p}$}{Lg}\protect\\ to entanglement and universal quantum logic gates}

\author{Ilaria Svampa\,\orcidlink{0000-0002-1389-0319}}
\email{ilaria.svampa@uni-koeln.de}
\affiliation{Department Mathematik/Informatik--Abteilung Informatik, Universit\"at zu K\"oln, Albertus-Magnus-Platz, 50923 K\"oln, Germany}

\author{Sonia L'Innocente\,\orcidlink{0000-0002-9224-7451}}
\email{sonia.linnocente@unicam.it}
\affiliation{School of Science and Technology, Universit\`a di Camerino, Via Madonna delle Carceri 9, I-62032 Camerino, Italy}

\author{Stefano Mancini\,\orcidlink{0000-0002-3797-3987}}
\email{stefano.mancini@unicam.it}
\affiliation{School of Science and Technology, Universit\`a di Camerino, Via Madonna delle Carceri 9, I-62032 Camerino, Italy}
\affiliation{Istituto Nazionale di Fisica Nucleare, Sezione di Perugia,\\ via A.~Pascoli, I-06123 Perugia, Italy}              

\author{Andreas Winter\,\orcidlink{0000-0001-6344-4870}}
\email{andreas.winter@uni-koeln.de}
\affiliation{Department Mathematik/Informatik--Abteilung Informatik, Universit\"at zu K\"oln, Albertus-Magnus-Platz, 50923 K\"oln, Germany}
\affiliation{ICREA {\&} Grup d'Informaci\'o Qu\`antica, Departament de F\'isica, Universitat Aut\`onoma de Barcelona, 08193 Bellaterra (BCN), Spain}
\affiliation{Institute for Advanced Study, Technische Universit\"at M\"unchen,\\  Lichtenbergstra{\ss}e 2a, D-85748 Garching, Germany\vspace{2mm}}

%\date{(20 January 2026)}

%%%%%%%%%%%%%%%%%%%%%%%%%%%%%%%%%

\begin{abstract}
In the context of $p$-adic quantum mechanics, we investigate composite systems of $p$-adic qubits and $p$-adically controlled quantum logic gates. We build on the notion of a single $p$-adic qubit as a two-dimensional irreducible representation of the compact $p$-adic special orthogonal group $\SO(3)_p$. We show that the classification of these representations reduces to the finite case, as they all factorise through some finite quotient $\SO(3)_p\bmod p^k$. Then, we tackle the problem of $p$-adic qubit composition and entanglement,
% for systems of two $p$-adic qubits
fundamental for a $p$-adic formulation of quantum information processing. We classify the representations of $\SO(3)_p\bmod p$,
%: interestingly, there are several $p$-adic qubit representations for $p>3$, and only $\SO(3)_3\bmod 3$ has $4$-dimensional irreducibles. In this work
and analyse tensor products of two $p$-adic qubit representations lifted from $\SO(3)_p\bmod p$. We solve the Clebsch-Gordan problem for such systems, revealing that the coupled bases decompose into singlet and doublet states. We further study entanglement arising from those stable subsystems. %We further demonstrate that these irreducible subsystems are spanned by maximally entangled Bell states. Moreover, we propose a circuit model where logic gates are driven by the actions of $\SO(3)_p$. %For $p=3$ (only case where $\SO(3)_p\bmod p$ has $4$-dimensional irreducible representations) 
For $p=3$, we construct a set of gates from $4$-dimensional irreducible representations of $\SO(3)_p\bmod p$ %(only existing for $p=3$) 
that we prove to be universal for quantum computation.
%(encoded on real numbers)
\end{abstract}

\maketitle

\textbf{Mathematics Subject Classification: }{11S31%Class field theory; p-adic formal groups
; 20C20%Modular representations and characters
; 20C25%Projective representations and multipliers
; 81P40%Quantum coherence, entanglement, quantum correlations
; 81P65%Quantum gates
}

%\tableofcontents

%%%%%%%%%%%%%%%%%%%%%%%%%%%%%%%%%%%%%%%%%%%%%%%%%%%%%%%%%%%%%%%%%%%%%%%%

\section{Introduction}
\label{sec:intro}
Applying $p$-adic numbers to physics was an idea put forward in 1968 by two pure mathematicians, F. van der Blij and A. Monna~\cite{BlijMonna}. Since the 1980s, $p$-adic numbers have been increasingly used in quantum physics. $p$-Adic strings and gravity were among the first models of $p$-adic quantum physics (see, for example,~\cite{FreundWitten,VolSeminal,ArafevaDragovich}). Physicists' interest in $p$-adic numbers stems from attempts to construct new models of space-time capable of describing phenomena at the (extremely small) Planck scale, where the Archimedean geometry is no longer valid. 

The pioneering studies on $p$-adic string theory stimulated further research in $p$-adic quantum mechanics and field theory (see the books~\cite{VVZ,Khrennikov}). These investigations, in turn, fostered the development of $p$-adic mathematics in many directions: theory of distributions~\cite{Albeverio}, differential and pseudo-differential equations~\cite{Krennpall,VVZpaper}, probability theory~\cite{Khrennikbastaa}, and spectral theory of operators in a $p$-adic analogue of Hilbert space~\cite{Albeveriobis,Albeveriotris,Khrennikbis}.

The representation of $p$-adic numbers as sequences of digits makes it possible to use this number system for information coding. Consequently, $p$-adic models can serve as a framework for describing various information processes, particularly those based on $p$-adic dynamical systems~\cite{AlbeverioMore,AlbeverioPal}. There are several areas of information processing where $p$-adic dynamics have proven effective, like computer science (straight-line programs), numerical analysis and simulations (pseudorandom numbers), uniform distribution of sequences, cryptography (stream ciphers, $T$-functions), combinatorics (Latin squares), automata theory and formal languages, and genetics (a comprehensive survey of these applications can be found in Ref.~\cite{Anashin}). Further studies in computer science and cryptography were driven by the observation that fundamental computer instructions (and thus programs composed of them) can be interpreted as continuous transformations with respect to the $2$-adic metric~\cite{Anashinbis,Anashintris}. Additionally, $p$-adic numbers provide a natural framework for describing a broad class of neural networks with hierarchical structures~\cite{AlbeverioKhTir,ZunigaGal}.

Given the nowadays deep intertwining of quantum theory with computation and information theory, it is particularly intriguing to explore whether fundamental entities of these disciplines --- such as the qubit --- can be formalized within the $p$-adic framework.

It is important to remark that $p$-adic numbers can appear in two distinct conceptual roles. Consider a scalar field $\phi \colon X \to Y$, where $X$ represents space-time. In some models $X$ is taken to be $p$-adic, while $Y$ remains real or complex. In other approaches, both $X$ and $Y$ are assumed to be non-Archimedean. This distinction is well-known in the context of the Langlands Program, which itself divides into two separate arenas depending on whether one's preferred $\zeta$ (or $L$-) functions take values in $\CC$ or in $\QQ_p$~\cite{Bump}.

Adhering to the first line of reasoning, in Ref.~\cite{our2nd}, the notion of a qubit was introduced, opening new perspectives for the intersection of $p$-adic mathematics, quantum theory, and information science. Actually, the $p$-adic qubit arises from the two-dimensional unitary irreducible representations of the $p$-adic special orthogonal group $\SO(3)_p$ in a three-dimensional $p$-adic space, of which constructed examples for all primes $p$.

Going beyond the description of a single $p$-adic qubit, it is natural to address the problem of combining multiple such qubits, beginning with the case of two. This, in turn, requires establishing the mathematical foundations for the $p$-adic composition of angular momentum and spin (see also~\cite{Crespo} for an approach based on symplectic geometry).
The progression of this framework involves four key stages: \emph{representations}, \emph{Clebsch-Gordan problem}, \emph{entanglement}, and \emph{logic gates}. 

The structure of the paper reflects this progression. Section~\ref{sec:introrappels} reviews the characterization of the group $\SO(3)_p$ as an inverse limit of finite groups $\SO(3)_p\bmod p^k,\,k\in\NN$, along with the necessary background on projective unitary representations. In Section~\ref{sec:irrepsSO3pDp}, we begin presenting our original results. We show that every finite-dimensional projective unitary representation of $\SO(3)_p$ factorise through $\SO(3)_p\bmod p^k$, for some $k\in\NN$ (Subsec.~\ref{fact:repnspk}). Moreover, we prove that $\SO(3)_p\bmod p$ is (isomorphic to) the semidirect product of $\ZZ/p\ZZ\times \ZZ/p\ZZ$ and the dihedral group $\mathrm{D}_{p+1}$ (Subsec.~\ref{subsec:strirrpsGp}). This allows us to classify the unitary irreducible representations of $\SO(3)_p\bmod p$ by the method of little groups by Wigner and Mackey~\cite{Mackey}. In particular, we identify $p$-adic qubit representations for every prime $p$~\cite{our2nd}. Then, this work primarily investigates systems of two $p$-adic qubits, via the tensor products of the found representations. Interestingly, for primes $p > 3$, there exist multiple non-equivalent $p$-adic qubits, resulting in a broader landscape of multi-qubit systems in the $p$-adic framework. Section~\ref{CBdecocoedf} addresses the Clebsch-Gordan decomposition of two $p$-adic qubits into a direct sum of irreducible subrepresentations. This boils down to the Clebsch-Gordan problem of the dihedral group $\mathrm{D}_{p+1}$. Since the latter has only one- and two-dimensional irreducible representations, the coupled bases are given by singlet and/or doublet states, in contrast to a singlet and a triplet in standard quantum mechanics. We also find the Clebsch-Gordan coefficients providing the change of basis which realises a Clebsch-Gordan decomposition, explicitly for the smallest primes, $p=2,3,5,7$ (Subsec.~\ref{subsec:CBcpeff2357}). For the general treatment, for every odd prime $p$, we examine the structural nature of the coupled bases realising the various Clebsch-Gordan decompositions. This is done in Section~\ref{sec:entangldectp}, where we study maximally entangled vector states spanning the stable subsystems. We will underline that, except for the singlets, the projectors onto doublets (and triplets) are separable quantum states. Lastly, in Section~\ref{3pUnigates}, we propose a circuit model of quantum computation where logic gates are controlled by the actions of $\SO(3)_p$. Our programme is to construct $p$-adically controlled quantum logic gates on the single-, two- and $n$-qubit levels, using elements from the same $2^n$-dimensional unitary representations of $\SO(3)_p$. For real orthogonal representations%(as a subclass of complex unitary ones)
, as those lifted from $\SO(3)_p\mod p$, an extra qubit is added to encode unitary circuits through orthogonal gates according to the recipe by Bernstein and Vazirani~\cite{BernVazi}. In this setting, we find a set of gates from $4$-dimensional representations of $\SO(3)_3$ which is universal for quantum computation. The choice $p=3$ is not our restriction; rather, it is motivated by the absence of $4$-dimensional irreducible representations of $\SO(3)_p\bmod p$ for every other prime. This set of gates is obtained with the aid of the softwares GAP (Groups, Algorithms, and Programming) and Wolfram Mathematica, while its universality is established by using Lie-algebraic tools. The existence of such a universal set of $p$-adically controlled gates is of fundamental importance, as it guarantees the computational completeness of the proposed model and opens new avenues for the study of quantum circuits.

%%%%%%%%%%%%%%%%%%%%%%%%%%%%%%%%%%%%%%%%%%%%%%%%%%%%%%%%%%%%%%%%%%%%%%%%
\section{Rapells}
\label{sec:introrappels}
%In this introductory section, we present the objects of our investigation, i.e. the group $\SO(3)_p$ of rotations on $\QQ_p^3$, some of its representations and $p$-adic qubits, for every prime $p$. We take this opportunity to introduce the fundamental concepts from literature.
In this introductory section, we outline the objects of our investigation, together with some of their representations and the notion of $p$-adic qubits for each prime $p$. We also take this opportunity to introduce the fundamental concepts drawn from the existing literature.

\subsection{The profinite group \texorpdfstring{$\SO(3)_p$}{Lg}}
Let $p$ be a prime number greater than or equal to $2$. Let $\ZZ_p$ and $\QQ_p$ denote respectively the ring of $p$-adic integers and the field of $p$-adic numbers~\cite{folland2016course,Gouvea,Serre, Cassels}. Topologically, they are the metric completions of $\ZZ$ and $\QQ$ respectively, with respect to the $p$-adic absolute value $\abs{\,\cdot\,}_p$. %Moreover, $\ZZ_p$ is compact, $\QQ_p$ is only locally compact, and they both are totally disconnected. 
Let $\mathbb{U}_p$ denote the multiplicative group of $p$-adic units.
%Every $x$ in $\QQ_p^\times$ is written uniquely as $x=\sum_{k\geq-k_0}c_kp^k$ with $k_0\in\ZZ$ and $c_k\in\{0,\dots,p-1\}$ for $k\geq-k_0$.  If $\mathbb{U}_p$ denotes the multiplicative group of $p$-adic units, every $x\in\QQ_p^\times$ writes uniquely as $x=p^{-k_0}u$ for some $k_0\in\ZZ$ and $u\in\mathbb{U}_p$. The $p$-adic absolute value on $\QQ_p$ is given by \beq \abs{x}_p\coloneqq\begin{cases} p^{k_0}, & \textup{ if } x=p^{-k_0}u=\sum_{k\geq-k_0}c_kp^k\textup{ with } c_{-k_0}\neq0,\\ 0, & \textup{ if } x=0. \end{cases} \eeq For $n\in\NN$, the $p$-adic norm on $\QQ_p^n$ is $\norm{\mathbf{x}}_p\coloneqq \max_{i=1}^n\abs{x_i}_p$ for $\textbf{x}=(x_i)_{i=1}^n\in \QQ_p^n$. The following structure of the group $\QQ_p^\times/(\QQ_p^\times)^2$ of non-zero $p$-adic numbers modulo squares is well-known~\cite{Serre, Cassels,Gouvea}. If $p>2$, then $\QQ_p^\times/(\QQ_p^\times)^2 \simeq \langle u, p\rangle = \{1,u,p,up\}\simeq (\ZZ/2\ZZ)^2$ for a non-square $p$-adic unit $u\in\mathbb{U}_p$; when $p=2$, then $\QQ_2^\times/(\QQ_2^\times)^2\simeq \langle -1,2,5\rangle =\{\pm1,\pm2,\pm5,\pm10\}\simeq (\ZZ/2\ZZ)^3$. 

\begin{proposition}[\cite{our1st}, Sec. 2]\label{quadform}
For every prime $p$, up to equivalence, there is a unique non-isotropic quadratic form over $\QQ_p^3$:
\beq \label{eq:quadrform3}
Q_+(\mathbf{x})\coloneqq \begin{cases}
x_1^2-vx_2^2+px_3^2,&p\textup{ odd},\\
x_1^2+x_2^2+x_3^2,&p=2,
\end{cases}
\eeq
where $v\coloneqq
\begin{cases}
 -1, &\textup{if } p\equiv3 \mod4,\\
 -u, &\textup{if } p\equiv1 \mod4,
\end{cases}$\ \,  for a non-square $u\in\mathbb{U}_p$. Then, for every prime $p$, up to isomorphism, there is a unique compact special orthogonal group on $\QQ_p^3$:
\beq 
\SO(3)_p\coloneqq \left\{L\in \mM(3,\QQ_p)\midd L^\top A_+L=A_+\textup{ and } \det(L)=1\right\},\label{SO3p}
\eeq 
where $A_+\coloneqq \begin{cases}
\diag(1,-v,p),&p\textup{ odd},\\
\mI_3,&p=2,
\end{cases}$\ \, is the matrix representation of $Q_+$ with respect to the canonical basis.
\end{proposition}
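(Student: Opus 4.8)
The plan is to derive both assertions from the Hasse--Minkowski local classification of quadratic forms over $\QQ_p$, supplemented by a boundedness argument for the compactness statement. I would split it into two parts: first classify the non-isotropic ternary forms and identify $Q_+$ among them, then translate the result to orthogonal groups.

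\emph{Part 1 (the form).} Recall from the local theory (e.g. Serre, \emph{Cours d'arithm\'etique}, Ch.~IV) that a nondegenerate quadratic form over $\QQ_p$ is determined up to isometry by its rank, its discriminant $d\in\QQ_p^\times/(\QQ_p^\times)^2$, and its Hasse invariant $\epsilon\in\{\pm1\}$, and that a ternary form with invariants $(d,\epsilon)$ is isotropic iff $(-1,-d)_p=\epsilon$. Thus an anisotropic ternary form is pinned down by its discriminant alone, so there are exactly $\#\bigl(\QQ_p^\times/(\QQ_p^\times)^2\bigr)$ of them (four for $p$ odd, eight for $p=2$). Since rescaling by $\lambda\in\QQ_p^\times$ multiplies the discriminant by $\lambda^3\equiv\lambda$ modulo squares and preserves anisotropy, the group $\QQ_p^\times/(\QQ_p^\times)^2$ acts simply transitively on this set; hence all non-isotropic ternary forms lie in a single class (equivalence being understood, as in the cited reference, up to an invertible change of variables together with an overall scalar), and it remains to verify that $Q_+$ is anisotropic. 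For $p$ odd I would invoke the standard reduction principle: $\langle\alpha_1,\dots,\alpha_r\rangle\perp p\langle\beta_1,\dots,\beta_s\rangle$ with all $\alpha_i,\beta_j\in\ZZ_p^\times$ is anisotropic over $\QQ_p$ iff both reduced forms are anisotropic over $\mathbb{F}_p$; applied to $Q_+=\langle1,-v\rangle\perp p\langle1\rangle$ this reduces to the requirement that $v$ be a non-residue mod $p$, which is exactly how $v$ was chosen ($-1$ is a non-residue when $p\equiv3\bmod4$, and $-u$ is a non-residue when $p\equiv1\bmod4$). For $p=2$, anisotropy of $\langle1,1,1\rangle$ is a one-line computation modulo $8$ (squares in $\ZZ_2$ reduce to $0,1,4$, and no three of these sum to $0$ mod $8$ without all the variables being even), or one can note it is the trace-zero part of the norm form of the $2$-adic division quaternion algebra.

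\emph{Part 2 (the group).} The crux is the equivalence: for a nondegenerate $Q$ on $\QQ_p^n$, the group $O(Q)$, hence its closed subgroup $\SO(Q)$, is compact iff $Q$ is anisotropic. For the ``if'' direction I would set $N(x)=\abs{Q(x)}_p$, a continuous, positively $2$-homogeneous function vanishing only at $0$; restricting $N$ to the compact shell $\ZZ_p^n\setminus p\ZZ_p^n$ it attains a positive minimum $m$ and finite maximum $M$, and by homogeneity this gives $m\,\norm{x}_p^2\le N(x)\le M\,\norm{x}_p^2$ for all $x$, i.e. $N$ is comparable to the square of the sup-norm. Any $L$ preserving $Q$ preserves $N$, so $\norm{Lx}_p\le\sqrt{M/m}\,\norm{x}_p$ uniformly; thus $O(Q)$ is bounded in $\mM(n,\QQ_p)$, and it is closed (the zero locus of the polynomial conditions $L^\top A L=A$, $\det L=\pm1$), hence compact. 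For the ``only if'' direction, an isotropic $Q$ splits off a hyperbolic plane, so $\SO(Q)$ contains the corresponding split one-parameter torus $\cong\QQ_p^\times$ (acting trivially on the orthogonal complement), which is non-compact. Combining with Part~1: a compact $\SO$ on $\QQ_p^3$ must be $\SO(Q)$ for an anisotropic ternary $Q$, every such $Q$ is a scalar multiple of $Q_+$, and since $\SO(\lambda Q)=\SO(Q)$ while isometric forms have conjugate orthogonal groups, all these groups are isomorphic to $\SO(3)_p=\SO(Q_+)$; the latter is compact by the ``if'' direction, which also settles existence. This proves uniqueness up to isomorphism.

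\emph{Expected main obstacle.} The conceptual content is modest once the local classification of $\QQ_p$-forms is granted; the delicate points are bookkeeping rather than ideas. The first is handling the residue-class dependence of $v$ uniformly across $p\bmod4$, together with the genuinely separate $p=2$ analysis (where the unit/$p$-part reduction lemma is not available in the clean form used for odd $p$). The second is being careful in the compactness argument that the norm-comparison constant $\sqrt{M/m}$ is truly independent of $L$, and that ``bounded and closed $\Rightarrow$ compact'' is applied in the correct sense valid for the locally compact field $\QQ_p$. Neither is deep, but both must be executed cleanly to make the claim of a \emph{unique} compact $\SO(3)_p$ fully rigorous.
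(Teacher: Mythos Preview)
The paper does not prove this proposition; it is stated with a bare citation to the authors' earlier work~\cite{our1st}, so there is no in-paper argument to compare against. Your proof is correct and self-contained: the classification of ternary forms by $(d,\epsilon)$, the observation that scaling by $\lambda$ shifts the discriminant by $\lambda^3\equiv\lambda$ and hence acts simply transitively on the anisotropic ternary classes, the residue-field reduction (odd $p$) and mod-$8$ check ($p=2$) that $Q_+$ is anisotropic, and the norm-comparison argument for compactness are all standard and cleanly executed. You are also right to make explicit that ``up to equivalence'' must include an overall scalar, since isometry alone leaves $\lvert\QQ_p^\times/(\QQ_p^\times)^2\rvert$ distinct anisotropic ternary forms; this is precisely the convention the cited reference uses, though the present paper's wording is terse on the point. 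One cosmetic remark: in the bound $\norm{Lx}_p\le\sqrt{M/m}\,\norm{x}_p$ the ratio $M/m$ is a power of $p$ that need not be a perfect square, but since both norms are themselves integer powers of $p$ the inequality immediately sharpens to $\norm{Lx}_p\le p^{\lceil\frac{1}{2}\log_p(M/m)\rceil}\norm{x}_p$, which is harmless for your purpose.
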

Compact $p$-adic special orthogonal groups exist only up to dimension $4$, in contrast to the analogous groups over $\RR$. Also, there is more than one non-isotropic quadratic form (or scalar product) on $\QQ_p^2$.

The group $\SO(3)_p$ is a topological group, once supplied with the $p$-adic norm $\norm{L}_p=\norm{(\ell_{ij})_{ij}}_p\coloneqq \max_{i,j=1}^3\abs{\ell_{ij}}_p$, and hence with \emph{$p$-adic ultrametric topology}. It is homeomorphic to a subspace of $\mM(3,\QQ_p)\simeq\QQ_p^9$, from which it inherits the properties of being a \emph{totally disconnected}, locally compact, \emph{second countable} and \emph{Hausdorff} group. The group $\SO(3)_p$ is indeed \emph{compact}: it is closed %homeomorphic to a closed subset of $\mM(3,\QQ_p)$, as it is a group of matrices whose entries are solutions of a system of continuous (polynomial) equations (given by the special orthogonal conditions)
and bounded, as its matrices have entries in $\ZZ_p=\{x\in\QQ_p\midd \abs{x}_p\leq 1\}$~\cite[Thm.~5]{our1st}:
\beq \label{eq:inclsSO4SO3SO2SLint}
\SO(3)_p\subset \SL(3,\ZZ_p).
\eeq 
%\begin{theorem}[\cite{PhDtesi}, Theorem 2.19, and]\label{thm:compactness} For every prime $p>2$, $\dude \in \{-v,p,up\}$, and for $p=2$, $\dude\in\{1,\pm2,5,\pm10\}$, \begin{align}  & \SO(2)_{p,\dude}\subset \SL(2,\ZZ_p)\quad \, \textup{while}\quad  \SO(2)_{2,-5}\subset\SL(2,2^{-1}\ZZ_2); \label{eq:inclsSO4SO3SO2SLintA}\\ &\SO(3)_p\subset \SL(3,\ZZ_p);\label{eq:inclsSO4SO3SO2SLint}\\ & \SO(4)_p\subset\SL(4,\ZZ_p) \textup{ for }p>2; \quad  \SO(4)_2\subset \SL(4,2^{-1}\ZZ_2).\label{eq:inclsSO4SO3SO2SLintB} \end{align} \end{theorem}
The elements of $\SO(3)_p$ act as rotations on $\QQ_p^3$~\cite[Thm.~6]{our1st}.

\medskip
We now briefly describe topological groups, in particular totally disconnected and compact groups, to further characterise $\SO(3)_p$.

\begin{remark}
\label{rem:clopen}
A topological space $X$ is \emph{connected} if and only if the only \emph{clopen subsets} of $X$ are $\emptyset$ and $X$. For $n\in\NN$, the space $\RR^n$ is connected, while $\QQ_p^n$ and $\ZZ_p^n$ are totally disconnected and admit proper (compact) clopen subsets, as well as the $p$-adic orthogonal groups. 

An \emph{open subgroup} $H$ of a topological group $G$ is \emph{closed}. %This is easily shown, by using that cosets of an open subgroup are open (since translations are homeomorphisms in a topological group), and $G\setminus H$ is a union of cosets of $H$. 
As a consequence, a connected group has no proper open subgroups%perchè sarebbe clopen
; e.g. the additive group $\RR$%and no open compact subgroups.. Vedi Corollary p.23 sec3.4 di Robert
. On the other hand, the additive group $\QQ_p$ has a infinitely many proper open compact subgroups: $p^m\ZZ_p=B_{\leq p^{-m}}(0)$ for $m\in\ZZ$. %Vedi Prop in sec 3.5 di Robert
Actually, they are all and the only proper closed subgroups of $\QQ_p$~\cite{RobertWOW}. A base for the topology of $p^m\ZZ_p$ is $\{B_{<p^k}(x_0)\midd x_0\in p^m\ZZ,\, k\in J\}$, since $p^m\ZZ$ is a countable dense subset of $p^m\ZZ_p$, and where $J$ is a subset of $\ZZ_{\leq -m+1}$ without a minimum. 

If $G$ is a compact topological group, a subgroup $H$ is open if and only if $H$ is closed of finite index in $G$~\cite[Lemma~2.1.2]{profinite}. For example, for the compact $\ZZ_p$, the ideals $p^m\ZZ_p$, $m\in\NN$ are clopen of finite index; in fact, they are normal subgroups and $\ZZ_p/p^m\ZZ_p\simeq \ZZ/p^m\ZZ$ is of order $p^m$. 
\end{remark}

Let $(I,\leq)$ be a \emph{(right-)directed partially ordered set}. This is a non-empty set $I$ supplied with a partial order (i.e. a reflexive, transitive and antisymmetric binary relation) $\leq$, such that any finite subset of $I$ has upper bounds in $I$. We first recall the definition of inverse family and inverse limit of topological groups~\cite{BourTop} (see also~\cite{profinite} for a more categorical approach).
\begin{definition}\label{def:invlimsetgroup}
Let $\{G_i\}_{i\in I}$ be a family of topological groups, and $\{f_{ij}\colon X_j\rightarrow G_i\}_{i\leq j,\ i,j\in I}$ a family of continuous group homomorphisms such that
\begin{enumerate}
    \item $f_{ii}$ is the identity map on $G_i$, for every $i \in I$;
    \item $f_{ik}=f_{ij}\circ f_{jk}$, for every $i\leq j\leq k$,\ \ $i,j,k\in I$.
\end{enumerate}
We call $\big\{\{G_i\}_{i\in I}, \{f_{ij}\colon X_j\rightarrow G_i\}_{i\leq j,\ i,j\in I}\big\}\equiv \{G_i,f_{ij}\}_I$ an \emph{inverse family of topological groups}. 

Let now $\prod\limits_{i\in I}G_i$ be the Cartesian product of the family of groups $\{G_i\}_{i\in I}$, and $\cProj_i\colon\prod\limits_{i\in I}G_i\rightarrow G_i,\ g=(g_i)_{i\in I}\mapsto g_i$ the projection on the $i$-th component. 
The \emph{inverse} (or \emph{projective}) \emph{limit} of the inverse family of topological groups $\{G_i,f_{ij}\}_I$ is the subgroup 
\begin{align}
    &\varprojlim\{G_i,f_{ij}\}_I\coloneqq\left\{x\in \prod_{i\in I}G_i\midd \cProj_i(x)=f_{ij}\circ \cProj_j(x),\textup{ for every }i\leq j,\ i,j\in I\right\}\notag\\
    &= \left\{(g_i)_{i\in I}\in \prod_{i\in I}G_i\midd g_i=f_{ij}(x_j),\textup{ for every }i\leq j,\ i,j\in I\right\}\leq \prod_{i\in I}G_i,\label{invlimset}
\end{align}
endowed with the coarsest topology for which all $\cProj_i$ are continuous ($i\in I$), coinciding with the topology induced by the product topology of $\prod\limits_{i\in I}G_i$.

The restriction $f_i$ of $\cProj_i$ to $\varprojlim\{G_i,f_{ij}\}_I$ is called the {\it canonical projection} of $\varprojlim\{G_i,f_{ij}\}_I$ into $G_i$, and $f_i=f_{ij}\circ f_j$ for $i\leq j$.
\end{definition}

%The inverse limit of an inverse family of topological groups always exists (this is not true in the broader setting of an inverse family in an arbitrary category). In any category, the definition of inverse limit is given by means of a universal property, so that if an inverse limit exists, it is necessarily unique: if $X$ and $X'$ are two inverse limits of the same inverse family, with canonical projection maps $\{f_i\}_i$ and $\{f'_i\}_i$ respectively, then there exists a \emph{unique} isomorphism $f\colon X\rightarrow X'$ such that $f'_i\circ f = f_i$ for every $i\in I$. 

\begin{example}
\label{ex:p-adciinvlim}
One has~\cite[p.~11]{Serre}, 
\beq\label{eq:invlimZpppp}
\ZZ_p\simeq \varprojlim\left\{\ZZ_p/p^k\ZZ_p,\,\phi_{kl}\right\}_{\NN}\simeq \varprojlim \left\{\ZZ/p^k\ZZ,\Phi_{kl}\right\}_\NN
\eeq
where $\phi_{kl}\colon \ZZ_p/p^l\ZZ_p\rightarrow \ZZ_p/p^k\ZZ_p$, $\phi_{kl}(x + p^l\ZZ_p)\coloneqq x+p^k\ZZ_p$ and $\Phi_{kl}\colon \ZZ/p^l\ZZ\rightarrow \ZZ/p^k\ZZ$, $\Phi_{kl}\left(x_l\right) \coloneqq x_l\mod p^k$, for every $l\geq k>0$. We denote by $\Phi_k$ the canonical projection $\Phi_k\colon \ZZ_p\rightarrow \ZZ/p^k\ZZ$, $x\mapsto x_k\mod p^k$. 

Similarly~\cite[Exercise~E1.16]{Hofmann},
\beq 
\QQ_p\simeq \varprojlim\left\{\QQ_p/p^k\QQ_p,\,\phi_{kl}'\right\}_{\NN}\simeq \varprojlim\left\{\frac{1}{p^\infty}\ZZ/p^k\ZZ,\Phi_{kl}'\right\}_\NN,
\eeq 
where $\frac{1}{p^\infty}\ZZ\coloneqq \bigcup_{n\geq1}\frac{1}{p^n}\ZZ\subseteq\QQ$, $\phi_{kl}'\colon \QQ_p/p^l\QQ_p\rightarrow \QQ_p/p^k\QQ_p$, $\phi'_{kl}(x + p^l\QQ_p)\coloneqq x+p^k\QQ_p$ and $\Phi'_{kl}\colon \frac{1}{p^\infty}\ZZ/p^l\ZZ\rightarrow \frac{1}{p^\infty}\ZZ/p^k\ZZ$, $\Phi_{kl}'\left(x_l\right) \coloneqq x_l\mod p^k$, for every $l\geq k>0$. 

Note that here, since the index set is the totally ordered set $\NN$, it is sufficient to consider $l=k+1$, $k\in\NN$.
\end{example}

\begin{definition}\label{def.2}
A topological group $G$ is said to be \emph{profinite} if it is (isomorphic to) the inverse limit of an inverse family of finite groups, each given the discrete topology.
\end{definition}

\begin{theorem}[{\cite[Thm.~1.1.12]{profinite}}]
\label{prop:profcompdisc}
A topological group $G$ is profinite if and only if it is compact (Hausdorff) and totally disconnected.
\end{theorem}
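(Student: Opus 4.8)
The plan is to prove the two implications separately; the forward direction is formal, and the converse carries the content. For ``profinite $\Rightarrow$ compact, Hausdorff, totally disconnected'': if $G\simeq\varprojlim\{G_i,f_{ij}\}_I$ with every $G_i$ finite and discrete, then each $G_i$ is compact, Hausdorff and totally disconnected, and all three properties pass to the Cartesian product $\prod_{i\in I}G_i$ (Tychonoff for compactness). By Definition~\ref{def:invlimsetgroup}, $\varprojlim\{G_i,f_{ij}\}_I$ is the subset of $\prod_i G_i$ cut out by the conditions $\cProj_i(x)=f_{ij}\bigl(\cProj_j(x)\bigr)$ for $i\leq j$; since the $f_{ij}$ and $\cProj_j$ are continuous and the $G_i$ are Hausdorff, each such condition is closed, so the inverse limit is a closed subgroup of $\prod_i G_i$, and closed subspaces inherit compactness, the Hausdorff property and total disconnectedness.

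For the converse, the key point is that in a compact Hausdorff totally disconnected group $G$ the open normal subgroups form a neighbourhood basis at the identity $e$. First, in a compact Hausdorff space the connected component of a point equals the intersection of the clopen sets containing it; since $G$ is totally disconnected this intersection is $\{e\}$, and a short compactness argument (cover $G\setminus U$ by complements of clopen neighbourhoods of a given point of it) shows that the clopen sets form a basis for the topology. Given an open $U\ni e$, pick a clopen $V$ with $e\in V\subseteq U$. As $V$ is compact and multiplication is continuous, a finite-subcover argument yields a symmetric open $W\ni e$ with $WV\subseteq V$; then $W^kV\subseteq V$ for all $k\geq1$, so (using $e\in V$) the open subgroup $H\coloneqq\bigcup_{k\geq1}W^k$ is contained in $V$. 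Open subgroups are closed (Remark~\ref{rem:clopen}), hence $H$ is clopen; being open in the compact group $G$ it has finite index, hence only finitely many conjugates, and its core $N\coloneqq\bigcap_{g\in G}gHg^{-1}$ is an open normal subgroup with $N\subseteq H\subseteq U$.

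With this in hand, let $\mathcal N$ be the set of open normal subgroups of $G$, made into a directed set by declaring $N\leq N'$ when $N\supseteq N'$ (so $N_1\cap N_2$ is an upper bound of $N_1,N_2$). For $N'\subseteq N$ the canonical surjections $G/N'\to G/N$ turn $\{G/N\}_{N\in\mathcal N}$ into an inverse family of finite discrete groups --- finite because open subgroups of the compact $G$ have finite index, discrete because $N$ is open --- and $\phi\colon G\to\varprojlim_{N\in\mathcal N}G/N$, $g\mapsto(gN)_{N}$, is a continuous homomorphism. It is injective since $\ker\phi=\bigcap_{N\in\mathcal N}N=\{e\}$, by the neighbourhood-basis property together with the Hausdorff axiom. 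Its image $\phi(G)$ is compact, hence closed, and it is dense: a basic open set of the inverse limit constrains only finitely many coordinates $N_1,\dots,N_r$, and for $N\coloneqq N_1\cap\dots\cap N_r$ any representative in $G$ of the prescribed coset in $G/N$ maps under $\phi$ into that set. Therefore $\phi$ is onto; a continuous bijection from a compact space onto a Hausdorff space is a homeomorphism, so $\phi$ is an isomorphism of topological groups and $G$ is profinite.

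The main obstacle is the non-formal input used at the start of the converse: the fact that the clopen subsets form a basis for the topology of a compact Hausdorff totally disconnected space, which rests on the coincidence of quasi-components and connected components in compact Hausdorff spaces. Once that is granted, every remaining step --- extracting an open subgroup inside a clopen neighbourhood of $e$, replacing it by its core to gain normality, and verifying that the natural map into the inverse limit of the finite quotients $G/N$ is a bijection --- is a routine compactness argument.
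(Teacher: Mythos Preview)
Your proof is correct and follows the standard route found in the cited reference (Ribes--Zalesskii): the forward direction via closedness of the inverse limit in the product, and the converse via the clopen basis $\Rightarrow$ open subgroup $\Rightarrow$ open normal core $\Rightarrow$ isomorphism onto $\varprojlim G/N$. The paper itself does not supply a proof of this theorem---it is quoted as \cite[Thm.~1.1.12]{profinite} and used as a black box---so there is no in-paper argument to compare against; your write-up is essentially the textbook proof the citation points to.
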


The additive group $\ZZ_p$ is profinite, but $\QQ_p$ is not since it is not compact.

\begin{lemma}[{\cite[Lemma~2.1.1]{profinite}}]
\label{lemma:baseeeeker}
Let $G=\varprojlim\{G_i\}_I$ be a profinite group, with canonical projection $f_i\colon G\rightarrow G_i$, $i\in I$. Then \vspace{-0.25cm}
\beq 
\vspace{-0.2cm} \Big\{\ker(f_i)=G\cap\big[{e}_{G_i}\times \big(\prod_{j\neq i}G_j\big)\big]\Big\}_{i\in I}
\eeq 
is a fundamental system of open neighbourhoods of $e$ in $G$.
\end{lemma}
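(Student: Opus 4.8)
The plan is to pull the problem back to the ambient product $\prod_{i\in I}G_i$, whose product topology is completely explicit, and then intersect with $G$. First I would record that each $\ker(f_i)$ is an open neighbourhood of $e$ in $G$. Indeed, since $G_i$ carries the discrete topology, the singleton $\{e_{G_i}\}$ is open in $G_i$; continuity of the coordinate projection $\cProj_i\colon\prod_{j\in I}G_j\to G_i$ then shows that $\cProj_i^{-1}(e_{G_i})=\{e_{G_i}\}\times\prod_{j\neq i}G_j$ is open in the product, and its trace on $G$ is precisely $f_i^{-1}(e_{G_i})=\ker(f_i)$ — which is the description appearing in the statement — open in the subspace topology of $G$. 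Since the identity $e=(e_{G_j})_{j\in I}$ lies in every $\ker(f_i)$, these are genuine neighbourhoods of $e$.

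Next I would exhibit a known fundamental system of neighbourhoods of $e$ in $G$ and dominate each of its members by a single $\ker(f_i)$. Because $\prod_{j\in I}G_j$ carries the product topology and every factor is discrete, the sets $\prod_{j\in F}\{e_{G_j}\}\times\prod_{j\notin F}G_j$, $F\subseteq I$ finite, form a neighbourhood base at $e$ in the product; taking traces on $G$ (a neighbourhood base of a point restricts to a neighbourhood base in any subspace containing it), the sets $V_F:=G\cap\big(\prod_{j\in F}\{e_{G_j}\}\times\prod_{j\notin F}G_j\big)=\bigcap_{j\in F}\ker(f_j)$, $F\subseteq I$ finite, form a neighbourhood base at $e$ in $G$. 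It then suffices to show each $V_F$ contains some single $\ker(f_m)$. This is exactly where the hypothesis that $(I,\leq)$ is \emph{directed} is used: choose an upper bound $m\in I$ with $j\leq m$ for all $j\in F$. By the compatibility relation $f_j=f_{jm}\circ f_m$ (valid since $j\leq m$), any $x$ with $f_m(x)=e_{G_m}$ satisfies $f_j(x)=f_{jm}(f_m(x))=f_{jm}(e_{G_m})=e_{G_j}$ for every $j\in F$, using only that group homomorphisms preserve the identity; hence $\ker(f_m)\subseteq V_F$.

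Combining the two steps gives the claim: every open neighbourhood of $e$ in $G$ contains some $V_F$, which contains some $\ker(f_m)$, and each $\ker(f_i)$ is itself an open neighbourhood of $e$. The only delicate point — and it is hardly an obstacle — is the passage in the second step from finite intersections of kernels to a single kernel, which is precisely where the directedness of the index set and the coherence $f_j=f_{jm}\circ f_m$ of the transition maps do the work; everything else is a mechanical unwinding of the definitions of the product topology and of the inverse limit, and in fact uses finiteness of the $G_i$ only through discreteness (i.e. openness of $\{e_{G_i}\}$).
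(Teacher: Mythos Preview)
Your proof is correct. The paper does not supply its own proof of this lemma: it merely cites \cite[Lemma~2.1.1]{profinite} and moves on, so there is nothing to compare against beyond noting that your argument is the standard one (openness of $\{e_{G_i}\}$ by discreteness, product-topology base at the identity via finite intersections $\bigcap_{j\in F}\ker(f_j)$, and directedness of $I$ together with $f_j=f_{jm}\circ f_m$ to collapse each finite intersection into a single $\ker(f_m)$).
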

%Each $\ker(f_i)=f_i^{-1}(e_{G_i})$ is an open normal subgroup, since the singleton $\{e_{G_i}\}$ is open in the discrete topology of $G_i$. By Remark~\ref{rem:clopen}, $\ker(f_i)$ is a clopen normal subgroup and $G/\ker(f_i)$ has finite order.
Each $\ker(f_i)$ is a clopen normal subgroup and $G/\ker(f_i)$ has finite order (cf.~Remark~\ref{rem:clopen}).
\begin{comment}
\begin{theorem}[\cite{profinite}, Theorem 2.1.3]\label{prop:profequiv}
The following conditions on a topological group $G$ are equivalent.
\begin{enumerate}
\item $G$ is a profinite group;
\item $G$ is compact, Hausdorff, totally disconnected, and for each open normal subgroup $N$ of $G$, $G/N$ is finite;
\item $G$ is compact, and $e\in G$ admits a fundamental system $\mathcal{N}_e$ of neighbourhoods $N$ of $e$ such that $\bigcap_{N\in \mathcal{N}_e}N=\{e\}$ and each $N$ is an open normal subgroup of $G$ with $G/N$ finite;%Montgomery-Zipping p.56;
\item $e\in G$ admits a fundamental system $\mathcal{N}_e$ of neighbourhoods $N$ of $e$ such that each $N$ is an open normal subgroup of $G$ with $G/N$ finite, and
\beq 
G\simeq \varprojlim_{N\in \mathcal{N}_e}G/N.
\eeq 
%Corollary 1 p.118 Cassel
\end{enumerate}
\end{theorem}
In point 3., if $G=\varprojlim\{G_i,f_i\}_{i\in I}$, then $\{\ker(f_i)\}_{i\in I}$ provides an example of such a fundamental system of neighbourhoods of $e$ in $G$. E.g., $\left\{p^m\mathbb{Z}_p\mid m\in\mathbb{N}\cup\{0\}\right\}$ is a fundamental system of neighbourhoods of $1$ in $\ZZ_p$.
\end{comment}
\begin{theorem}[{\cite[Thm.~2.1.3]{profinite}}]
\label{prop:profequiv}
A topological group $G$ is profinite if and only if $e\in G$ admits a fundamental system $\mathcal{N}_e$ of neighbourhoods $N$ of $e$ such that each $N$ is an open normal subgroup of $G$ with $G/N$ finite and \vspace{-0.3cm}
\beq 
G\simeq \varprojlim_{N\in \mathcal{N}_e}G/N.
\eeq 
%Corollary 1 p.118 Cassel
\end{theorem}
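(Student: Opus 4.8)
The plan is to prove the two implications separately, with essentially all the substance in the ``only if'' direction. For the ``if'' direction I would argue almost directly from the definitions: given a fundamental system $\mathcal{N}_e$ of neighbourhoods of $e$ consisting of open normal subgroups with $G/N$ finite and with $G\simeq\varprojlim_{N\in\mathcal{N}_e}G/N$, I would order $\mathcal{N}_e$ by reverse inclusion, observe that this makes it a directed set (any fundamental system contains some $N_3\subseteq N_1\cap N_2$ below two given members), check that the quotient maps $G/N_2\twoheadrightarrow G/N_1$ for $N_2\subseteq N_1$ form an inverse family of finite (hence discrete) groups in the sense of Definition~\ref{def:invlimsetgroup}, and then invoke Definition~\ref{def.2}: $G$ is topologically isomorphic to an inverse limit of finite groups, hence profinite. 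This part should be routine.

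For the ``only if'' direction, suppose $G$ is profinite. First I would use Theorem~\ref{prop:profcompdisc} to record that $G$ is compact, Hausdorff and totally disconnected, and write $G=\varprojlim\{G_i,f_{ij}\}_I$ with all $G_i$ finite. Next I would exhibit the required fundamental system: by Lemma~\ref{lemma:baseeeeker} the kernels $\ker f_i$ already form a fundamental system of open neighbourhoods of $e$, each is normal, and each $G/\ker f_i$ embeds into the finite group $G_i$ and hence is finite. So I would take $\mathcal{N}_e$ to be the collection of \emph{all} open normal subgroups of $G$; since it contains every $\ker f_i$ it is still a fundamental system of neighbourhoods of $e$, it is directed by reverse inclusion because the intersection of two open normal subgroups is one, and by Remark~\ref{rem:clopen} every $N\in\mathcal{N}_e$ is closed of finite index, so $G/N$ is finite.

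The substantive step is then to verify $G\simeq\varprojlim_{N\in\mathcal{N}_e}G/N$ via the canonical map $\theta\colon g\mapsto(gN)_{N}$. I would check: (i) $\theta$ is a continuous homomorphism, because each $G\to G/N$ is continuous (quotient by an open subgroup) and the limit carries the coarsest topology making the projections continuous; (ii) $\theta$ is injective, since $\ker\theta=\bigcap_{N\in\mathcal{N}_e}N=\{e\}$ by the Hausdorff property, as any $g\neq e$ is excluded by some $N$ of the fundamental system; (iii) $\theta$ is surjective --- for a compatible family $(x_N)_N$, the preimages $C_N\coloneqq\pi_N^{-1}(x_N)$ are non-empty closed cosets with the finite intersection property (given $N_1,\dots,N_k$, take $N\in\mathcal{N}_e$ with $N\subseteq N_1\cap\cdots\cap N_k$, so $C_N\subseteq\bigcap_j C_{N_j}$ by compatibility), whence compactness of $G$ forces $\bigcap_N C_N\neq\emptyset$ and any element of it maps to $(x_N)_N$; (iv) $\theta$ is a homeomorphism, being a continuous bijection from a compact space to a Hausdorff space. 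Then $\theta$ is the desired topological isomorphism.

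I expect the only real obstacle to be step (iii), the surjectivity: the abundance of open normal subgroups of finite index is already supplied by Lemma~\ref{lemma:baseeeeker}, and injectivity is just the Hausdorff axiom, but realising \emph{every} compatible family as the image of a genuine group element is exactly where compactness of $G$ enters, through the finite-intersection-property argument --- which is also precisely why the analogous characterisation fails for the non-compact group $\QQ_p$.
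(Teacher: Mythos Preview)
The paper does not actually prove this theorem: it is stated with a citation to \cite[Thm.~2.1.3]{profinite} and used as a black box, so there is no proof in the paper to compare against. Your argument is correct and is essentially the standard one found in the cited reference: the ``if'' direction is immediate from Definition~\ref{def.2}, and the ``only if'' direction proceeds exactly as you outline, with Lemma~\ref{lemma:baseeeeker} supplying enough open normal subgroups, Hausdorffness giving injectivity of the canonical map $\theta$, compactness plus the finite-intersection-property yielding surjectivity, and the compact-to-Hausdorff lemma upgrading the continuous bijection to a homeomorphism. One minor simplification: since you chose $\mathcal{N}_e$ to be \emph{all} open normal subgroups, in step (iii) you can simply take $N=N_1\cap\cdots\cap N_k\in\mathcal{N}_e$ rather than appealing to the fundamental-system property to find a smaller member.
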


According to Theorem~\ref{prop:profcompdisc}, the group $\SO(3)_p$ is profinite, i.e., it is an inverse limit of an inverse family of finite discrete groups. 
Recall Eq.~\eqref{eq:inclsSO4SO3SO2SLint}, and that the elements of $\ZZ_p$ can be reduced $\bmod \ p^k$, $k\in\NN$, via the canonical projection $\Phi_k$ as in Example~\ref{ex:p-adciinvlim}. In $\mM(3,\ZZ_p)$ the matrix product is defined through sums and products of entries, for which $\Phi_k$ are homomorphisms. Therefore, the map
\begin{align}\label{eq:grouphoprpi}
\pi_k\colon &\SO(3)_p\subset \mM(3,\ZZ_p) \rightarrow \pi_k\left(\SO(3)_p\right)\subset \mM(3,\ZZ/p^k\ZZ),\nonumber\\
&\pi_k(L)\coloneqq \begin{pmatrix} \Phi_k(\ell_{ij}) \end{pmatrix}_{ij}=\begin{pmatrix} \ell_{ij}\mod p^k \end{pmatrix}_{ij},
\end{align} 
for $L=(\ell_{ij})_{ij}\in\SO(3)_p$, is a group homomorphism. Note that $\pi_k\left(\SO(3)_p\right)$ is a finite group since $\abs{\mM(3,\ZZ/p^k\ZZ)}=p^{9k}$. Moreover, $\ker(\pi_k)=\left(\mI_3+p^k\mM(3,\ZZ_p)\right)\cap \SO(3)_p$ is a normal subgroup of $\SO(3)_p$ --- the so-called \emph{$k$-th congruence subgroup} --- which provides a fundamental system of neighbourhoods of $e$ in $\SO(3)_p$. Lastly, $\SO(3)_p/\ker(\pi_k)=\pi_k\left(\SO(3)_p\right)$ (refer to Theorem~\ref{prop:profequiv}). 
\begin{theorem}[{\cite[Thm.~3.1]{Haar2} and~\cite[Sec.~III]{our2nd}}]
\label{theor-invlimSO3p}
For every prime $p$, 
\beq 
\SO(3)_p\simeq\varprojlim \{G_{p^k},\pi_{kl}\}_\NN,
\eeq 
where, for every $k\leq l$, $k,l\in \NN$, $G_{p^k} \coloneqq \pi_k\left(\SO(3)_p\right)=\SO(3)_p\mod p^k$ is a finite group, $\pi_k\colon \SO(3)_p\rightarrow G_{p^k}$ is the surjective homomorphism in Eq.~\eqref{eq:grouphoprpi}, and 
\beq
\pi_{kl}:G_{p^l}\rightarrow G_{p^k},\quad (\ell_{ij}\mod p^l)_{ij}\mapsto (\ell_{ij}\mod p^k)_{ij}. \label{eq:mapinverslimit} 
\eeq
\end{theorem}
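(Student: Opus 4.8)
The statement has three assertions bundled together: (i) each $G_{p^k} = \pi_k(\SO(3)_p)$ is finite; (ii) the maps $\pi_{kl}$ in Eq.~\eqref{eq:mapinverslimit} are well-defined group homomorphisms making $\{G_{p^k},\pi_{kl}\}_\NN$ into an inverse family; and (iii) the canonical map from $\SO(3)_p$ into the inverse limit is an isomorphism of topological groups. Let me sketch how I would establish each.

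For (i), I would invoke Eq.~\eqref{eq:inclsSO4SO3SO2SLint}: since $\SO(3)_p \subset \SL(3,\ZZ_p) \subset \mM(3,\ZZ_p)$, the reduction map $\pi_k$ of Eq.~\eqref{eq:grouphoprpi} is well-defined, and its image lies in $\mM(3,\ZZ/p^k\ZZ)$, a set of cardinality $p^{9k}$; hence $G_{p^k}$ is finite. That $\pi_k$ is a homomorphism follows because $\Phi_k\colon\ZZ_p\to\ZZ/p^k\ZZ$ is a ring homomorphism (Example~\ref{ex:p-adciinvlim}), so it commutes with the sums and products that define matrix multiplication, as already noted in the text preceding the theorem.

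For (ii), the relation $\Phi_{kl}\circ\Phi_l=\Phi_k$ for $l\geq k$ (from Eq.~\eqref{eq:invlimZpppp}) lifts entrywise to $\pi_{kl}\circ\pi_l=\pi_k$, which simultaneously shows that $\pi_{kl}$ is well-defined on the image $G_{p^l}$ (independent of the lift of the reduced entries), that it is a group homomorphism, that $\pi_{kk}=\mathrm{id}$, and that $\pi_{kl}=\pi_{km}\circ\pi_{ml}$ for $k\leq m\leq l$ — so Definition~\ref{def:invlimsetgroup}'s conditions hold and we have a genuine inverse family. Each $\pi_k$ is surjective onto $G_{p^k}$ by construction, so the $\pi_{kl}$ are surjective too.

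For (iii), I would appeal to Theorem~\ref{prop:profequiv}: it suffices to exhibit a fundamental system $\mathcal{N}_e$ of open normal subgroups $N$ with $G/N$ finite and $\SO(3)_p\simeq\varprojlim_{N} \SO(3)_p/N$. Take $N_k = \ker(\pi_k) = \big(\mI_3 + p^k\mM(3,\ZZ_p)\big)\cap\SO(3)_p$, the $k$-th congruence subgroup. As remarked in the text, these are normal with $\SO(3)_p/N_k \cong G_{p^k}$ finite, and $\{N_k\}_{k\in\NN}$ is a fundamental system of neighbourhoods of $e$ because $\{p^k\mM(3,\ZZ_p)\}_k$ is a neighbourhood base at $0$ in $\mM(3,\ZZ_p)\simeq\QQ_p^9$ and $\SO(3)_p$ carries the subspace topology. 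Then $\SO(3)_p$ is compact (Eq.~\eqref{eq:inclsSO4SO3SO2SLint} plus closedness, already established) and totally disconnected, hence profinite by Theorem~\ref{prop:profcompdisc}, and Theorem~\ref{prop:profequiv} gives $\SO(3)_p\simeq\varprojlim_k \SO(3)_p/N_k = \varprojlim_k G_{p^k}$, with transition maps exactly the $\pi_{kl}$. Since $\NN$ is totally ordered, by the remark at the end of Example~\ref{ex:p-adciinvlim} it even suffices to track the maps with $l=k+1$.

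The only genuinely non-routine point is the surjectivity/exhaustiveness underlying the isomorphism in (iii) — concretely, that every compatible sequence $(L_k)_k$ with $L_k\in G_{p^k}$ and $\pi_{kl}(L_l)=L_k$ actually arises from a single $L\in\SO(3)_p$. This is where one needs that the defining equations $L^\top A_+ L = A_+$, $\det L = 1$ are "closed" under the inverse limit: a compatible system of approximate solutions mod $p^k$ assembles to a $p$-adic matrix $L\in\mM(3,\ZZ_p)$ (by completeness of $\ZZ_p$), and passing to the limit in the polynomial identities shows $L$ exactly satisfies them. I would cite the cited references~\cite{Haar2,our2nd} for the detailed verification, since the abstract profinite machinery of Theorem~\ref{prop:profequiv} already packages most of it once one observes $\bigcap_k N_k = \{e\}$ (an intersection of congruence subgroups forces $L\equiv\mI_3$).
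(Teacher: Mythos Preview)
Your proposal is correct. The paper does not give its own proof of this theorem: it is stated with external citations to~\cite{Haar2,our2nd}, and the text immediately preceding it merely assembles the ingredients (that $\pi_k$ is a homomorphism, that $\ker(\pi_k)$ is a normal congruence subgroup forming a fundamental system of neighbourhoods of $e$, and a pointer to Theorem~\ref{prop:profequiv}) without spelling out the argument. Your write-up is precisely the natural expansion of that sketch into a full proof, so there is nothing to compare against --- you have supplied what the paper outsources.
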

We recall from~\cite[Thm.~IV.5]{our2nd} that, for every odd prime $p$,
\beq \label{eq:paramGp}
G_p=\left\{L(a,b,c,d,s)\coloneqq\begin{pmatrix}
a & svb & 0 \\
b & s a & 0 \\
c & d & s
\end{pmatrix}\midd  a^2-vb^2\equiv 1,s\equiv \pm1,c,d\in \ZZ/p\ZZ\right\},
\eeq
where the solutions $(a,b)\in (\ZZ/p\ZZ)^2$ to $a^2-vb^2\equiv 1\mod p$ form a cyclic group of order $p+1$ (see~\cite[App.~A]{our2nd}). Then $\abs{G_p}=2p^2(p+1)$.

\subsection{Representation theory}
\label{subsec:prologo}
Here we introduce projective unitary representations, and motivate their fundamental role in quantum mechanics. Let $G$ be a topological group, and $\cH$ some non-zero Hilbert space over $\CC$. 

\begin{definition}[Cf.~\cite{folland2016course}]
A \emph{(linear)} \emph{unitary representation} of $G$ on $\cH$ is a homomorphism $\mathcal{U}\colon G\rightarrow \U(\cH)$ %cioè homomorphismo \mathcal{U}(gh)=\mathcal{U}(g)\mathcal{U}(h), e unitarietà \mathcal{U}(g)^{-1} = \mathcal{U}(g)^\dagger <-> <\mathcal{U}(g)v,\mathcal{U}(g)w> = <v,w>
that is continuous with respect to the topology of $G$ and the strong operator topology on $\U(\cH)$% (that is, $G\ni g\mapsto \mathcal{U}(g)\mathbf{v}\in\cH$ is continuous for every $\mathbf{v}\in\cH$)
.
\end{definition}
\begin{comment}
The strong operator topology is chosen so to make $\U(\cH)$ a second countable Polish group; other topologies %(such as the topology induced by the operator norm) 
would be too restrictive and a regular representation would not be continuous in general.
\end{comment}
The \emph{dimension} of a representation $\mathcal{U}\colon G\rightarrow \U(\cH)$ is $d(\mathcal{U})\coloneqq \dim(\cH)$. For $n\in\NN$, an $n$-dimensional unitary representations of $G$ is on a Hilbert space $\cH\simeq \CC^n$.
\begin{comment}
The unitary group $\U(n)\equiv \U(\CC^n)$ is a Lie group, with %manifold structure according to the 
standard subspace topology as $\U(n)\subset \mM(n,\CC)\simeq \CC^{n^2}\simeq\RR^{2n^2}$, induced by the Euclidean norm. This coincides with the strong operator topology in the finite-dimensional framework, where all norms are equivalent. The scalar product on $\CC^n$ is $\langle\mathbf{x},\mathbf{y}\rangle=\sum_{i=1}^n\overline{x_i}y_i$, for all $\mathbf{x}=(x_i)_{i=1}^n,\mathbf{y}=(y_i)_{i=1}^n\in\CC^N$.
\end{comment}
The unitary Lie group $\U(n)\equiv \U(\CC^n)$ is considered with standard Euclidean topology, which coincides with strong operator topology in the finite-dimensional framework, where all norms are equivalent. \emph{Subrepresentations} are defined in terms of stable subspaces with respect to the representation action of the group. A representation is said to be \emph{irreducible} if it has no non-trivial subrepresentations.  Hereafter, a linear unitary irreducible representation of a compact group will be referred to as \emph{irrep}, for short.

If $G$ is a locally compact abelian group, each of its irreps is one-dimensional, and can therefore be identified with its character. The irreps of $G$ close a group, $\widehat{G}\simeq\Hom(G,\U(1))$, called the \emph{dual group} of $G$, investigated within the context of Pontryagin duality~\cite{reiter2000,folland2016course}. 
\begin{comment}
According to Proposition~$4.3.3$ in~\cite{reiter2000}, the dual group of an inverse-limit group is isomorphic to the direct limit of the dual groups of the inverse family: 
\beq \label{eq:invdirlimhomo}
\widehat{\varprojlim G_k}\simeq \varinjlim\widehat{G_k}\simeq\bigcup_k\widehat{G_k}.
\eeq 
%\begin{proposition}[\cite{reiter2000}, Proposition~$4.3.3$]\label{prop:Nik} Let $\{G_i\}_\NN$ be an inverse family of (locally compact) abelian groups. Then, the dual group of the inverse limit group, $\widehat{\varprojlim G_i}$, contains a sequence of open subgroups (isomorphic to) $\widehat{G_i}$ such that \beq \label{eq:invlimreps} \widehat{\varprojlim G_i}\simeq \varinjlim\widehat{G_i}\simeq\bigcup_i\widehat{G_i}, \eeq where $\varinjlim\widehat{G_i}$ denotes the direct limit of the family $\widehat{G_i}$. \end{proposition} We do not describe direct limits here (which are categorically dual to inverse limits), as it is enough to regard $\varinjlim\widehat{G_i}$ as the union of the $\widehat{G_i}$s, each one embedded in the subsequent.
For example, this applies to the abelian groups of $p$-adic planar rotations $\SO(2)_{p,\dude}$.
\end{comment}

If $G$ is not abelian, the study of its representations becomes more challenging, and it is helpful to require that $G$ is compact, for which fundamental results of representation theory of finite groups extend% (e.g. Schur's Lemma for irreducible unitary representations)
. Every complex finite-dimensional representation of a compact group $G$ is equivalent to a unitary representation of $G$.
\begin{theorem}[{\cite[Thm.~5.2]{folland2016course} and~\cite[Thms.~3~{\&}~4,~pp.168-169]{BarutRaczka}}] 
\label{barurazcdiresumun}
If $G$ is a compact group, then every irreducible representation of $G$ is finite-dimensional, and every unitary representation of $G$ is completely reducible into a direct sum of %(finite-dimensional unitary) 
irreducible subrepresentations.
\end{theorem}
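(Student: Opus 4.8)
The plan is to derive both statements simultaneously from a single averaging construction based on the normalised Haar measure $dg$ on the compact group $G$, whose existence, bi-invariance and finiteness I take as given. Let $\mathcal{U}\colon G\to\U(\cH)$ be a unitary representation with $\cH\neq 0$; by the very definition of unitary representation, $g\mapsto\mathcal{U}(g)\psi$ is norm-continuous for every $\psi\in\cH$. Fix a unit vector $\psi$ and define an operator $\widetilde{T}$ on $\cH$ weakly by $\braket{\phi}{\widetilde{T}\chi}\coloneqq\int_G\braket{\phi}{\mathcal{U}(g)\psi}\,\braket{\mathcal{U}(g)\psi}{\chi}\,dg$ for $\phi,\chi\in\cH$; the integrand is continuous in $g$ and bounded in modulus by $\norm{\phi}\,\norm{\chi}$, so $\widetilde{T}$ is a well-defined bounded operator, manifestly positive and self-adjoint. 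Substituting $g\mapsto hg$ and using left-invariance of $dg$ gives $\mathcal{U}(h)\widetilde{T}\mathcal{U}(h)^{*}=\widetilde{T}$, i.e.\ $\widetilde{T}$ commutes with every $\mathcal{U}(h)$.

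Next I would establish the two decisive features of $\widetilde{T}$. It is nonzero, since $\braket{\psi}{\widetilde{T}\psi}=\int_G\abs{\braket{\psi}{\mathcal{U}(g)\psi}}^{2}\,dg$ and the continuous integrand equals $1$ at $g=e$, hence stays above $1/2$ on an open, positive-measure neighbourhood of $e$. It is trace-class --- in particular compact --- because, choosing an orthonormal basis $(e_n)$ of $\cH$ and interchanging the (nonnegative) sum with the integral, $\sum_n\braket{e_n}{\widetilde{T}e_n}=\int_G\sum_n\abs{\braket{e_n}{\mathcal{U}(g)\psi}}^{2}\,dg=\int_G\norm{\mathcal{U}(g)\psi}^{2}\,dg=1$. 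The spectral theorem for compact positive self-adjoint operators then provides an eigenvalue $\lambda=\norm{\widetilde{T}}>0$ whose eigenspace $\cH_\lambda=\ker(\widetilde{T}-\lambda)$ is finite-dimensional and nonzero; since $\widetilde{T}$ commutes with the representation, $\cH_\lambda$ is $G$-invariant. Thus every nonzero unitary representation of $G$ has a nonzero finite-dimensional invariant subspace, and when $\mathcal{U}$ is irreducible this forces $\cH=\cH_\lambda$, giving the first assertion. (For a general, a priori non-unitary finite-dimensional representation one reduces to this case by first making it unitary, averaging an inner product over $G$ and using that the $\mathcal{U}(g)$ are uniformly bounded on the compact $G$.)

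For the second assertion I would first recall the finite-dimensional fact: a finite-dimensional unitary representation is an orthogonal direct sum of irreducibles, by induction on dimension --- an invariant subspace of least positive dimension is irreducible, and its orthogonal complement is invariant because $\mathcal{U}(g)^{*}=\mathcal{U}(g)^{-1}$. Combined with the previous paragraph, every nonzero unitary representation of $G$ contains an irreducible (finite-dimensional) invariant subspace. Now let $\mathcal{F}$ be the collection of families of pairwise orthogonal irreducible invariant subspaces of $\cH$, partially ordered by inclusion; since unions of chains in $\mathcal{F}$ again lie in $\mathcal{F}$, Zorn's lemma yields a maximal family $\{\cH_i\}_{i\in I}$. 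Put $\cH_0\coloneqq\overline{\bigoplus_{i\in I}\cH_i}$, a closed invariant subspace. If $\cH_0\neq\cH$, then $\cH_0^{\perp}$ is a nonzero invariant subspace, hence contains an irreducible invariant subspace orthogonal to all $\cH_i$, contradicting maximality; therefore $\cH=\overline{\bigoplus_{i\in I}\cH_i}$, which is the asserted complete reducibility into irreducible subrepresentations.

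I expect the only genuine obstacle to be the claim that $\widetilde{T}$ is compact: one must set up the weak operator-valued integral carefully and justify the interchange of the integral with the basis sum defining the trace, using nothing more than norm-continuity of the orbit map $g\mapsto\mathcal{U}(g)\psi$ and finiteness of the Haar measure. Once $\widetilde{T}$ is known to be compact, the remainder is the spectral theorem plus a routine Zorn's lemma argument. A minor point worth flagging is that $G$ is not assumed metrisable or abelian, so the continuity and averaging steps should be phrased intrinsically; this causes no difficulty, since the argument only ever uses continuity of $g\mapsto\mathcal{U}(g)\psi$ together with the invariance and finiteness of $dg$.
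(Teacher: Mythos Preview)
The paper does not supply its own proof of this theorem; it is quoted as a classical result with references to Folland and Barut--Raczka, and is used only as background. Your argument is correct and is essentially the standard proof one finds in those references: average a rank-one projection over the group to obtain a nonzero compact (indeed trace-class) intertwiner, extract a finite-dimensional invariant eigenspace via the spectral theorem, and then run Zorn's lemma on orthogonal families of irreducible subspaces. The justification of the sum--integral interchange by nonnegativity (Tonelli) and the use of positivity of open sets under Haar measure are exactly the right ingredients, so there is nothing to add.
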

%The second part of this statement can be regarded as a generalisation of \emph{Maschke's Theorem}. %every repn of $G$ finito e' completely reducible

Let now $\widehat{G}$ denote the set of (unitary) equivalence classes of irreps of $G$. In contrast to the abelian case, here $\widehat{G}$ is not a group in general. The \emph{decomposition} of a unitary representation of $G$ into direct sum of irreps is generally not unique; however, the decomposition into subspaces corresponding to distinct equivalence classes (so-called isotypic subspaces) in $\widehat{G}$ is uniquely determined. Also, the \emph{multiplicity} of an irreducible subrepresentation (up to equivalence) of a unitary representation is the same for all the possible decompositions of the latter.

A milestone for the theory of representations of a compact group $G$ is the \emph{Peter-Weyl Theorem} (cf.~\cite[Thm.~5.12]{folland2016course}, \cite[Cor.~VII.10.2]{simon}, \cite[Thm.~1,~p.~172~{\&}~ Thm.~2,~p.~174]{BarutRaczka}), and relies on the concept of Haar measure $\mu$ on $G$. The Peter-Weyl Theorem establishes that the collection of all matrix coefficient functions taken from the set of irreps of $G$ (up to equivalences) forms a complete orthonormal basis for the space $\mathrm{L}^2(G,\mu)$ of complex-valued square-integrable functions on $G$. %This contains generalisations of classical results for finite groups, such us Schur's orthogonality relations and the completeness identity \beq \label{eq:complsumsquarirr} \sum_{\rho\in\widehat{G}}d(\rho)^2=\ord(G). \eeq 
Moreover, the Peter-Weyl Theorem states that the regular representation of $G$ on $\mathrm{L}^2(G,\mu)$ is equivalent to the direct sum of all the irreps in $\widehat{G}$, each occurring with multiplicity equal to its dimension.%(extension of the similar result about the regular representation of a finite group on its group algebra)

%\begin{remark}\label{rem:Peter-Weylmac} The Peter-Weyl theorem establishes a method to classify (up to equivalence) all the irreps of a compact group $G$: once the Haar measure $\mu$ on $G$ is known, one constructs the left or right regular representation $\rho_{\textup{reg}}$ of $G$ on $\mathrm{L}^2(G,\mu)$, and every irrep of $G$ occurs (and can be studied) as a subrepresentation of $\rho_{\textup{reg}}$. This method can be applied to our group $\SO(3)_p$, on which we already constructed the Haar measure~\cite{Haar1, Haar2}. However, this is a non-easy task, and we will introduce a more suitable approach to classify the irreps of $\SO(3)_p$, depending on its particular structure of inverse-limit group. \end{remark}
Besides the decomposition of the regular representation, there is another well-suited approach to classify the irreps of $\SO(3)_p$, depending on its particular structure of inverse-limit group. 

\medskip
Our focus on unitary representations is motivated by quantum-theoretical considerations: the evolution of a vector state in a Hilbert space $\cH$ 
is represented by a unitary operator on $\cH$.
%if the state of an isolated system is represented by a vector in a Hilbert space $\cH$, its state change is postulated to be described by a unitary operator on $\cH$, which preserve norms and probability amplitudes $\langle Uv,Uv\rangle=\langle v,v\rangle$ e $\langle Uv,Uw\rangle=\langle v,w\rangle$. 
Actually, in quantum theory, the field of interest is broadened to \emph{projective} unitary representations~\cite{Bargmann,varadarajan,Masahito}. Indeed, a \emph{pure state} of a quantum-mechanical system is described by a \emph{unit vector ray} $\underline{\mathbf{v}}\coloneqq\{e^{i\theta}\mathbf{v},\,\theta\in\RR\}$ for a vector $\mathbf{v}\in\cH$ of norm $1$. Wigner's theorem~\cite{Wigner} establishes that any symmetry transformation preserving transition probabilities is implemented by a unitary or antiunitary operator $\mathdutchcal{U}$ unique only up to a phase factor. Consequently, a symmetry of a quantum system under the group $G$ is represented by operator rays $\underline{\mathdutchcal{U}(g)}\coloneqq\{e^{i\theta}\mathdutchcal{U}(g),\ \theta\in\RR\}$, satisfying the group law up to a scalar multiplier: $\underline{\mathcal{U}(g)}\,\underline{\mathcal{U}(h)}=\underline{\mathcal{U}(gh)}$, for every $g,h\in G$. This motivates how in quantum theory one necessitates projective unitary representations.

\begin{definition}
A \emph{projective unitary representation} of a topological group $G$ is a map $\underline{\mathdutchcal{U}}\colon G\rightarrow \mathrm{PU}(\cH)\coloneqq \U(\cH)/\U(1)$, $g\mapsto \underline{\mathdutchcal{U}}(g)\coloneqq \underline{\mathdutchcal{U}(g)}$, which assigns a unitary operator ray $\underline{\mathdutchcal{U}(g)}$ to every $g\in G$, and is
\begin{enumerate}
\item a homomorphism, i.e., 
\beq \label{eq:projrepsn}
\underline{\mathdutchcal{U}(g)}\,\underline{\mathdutchcal{U}(h)}=\underline{\mathdutchcal{U}(gh)},\qquad g,h\in G,
\eeq 
\item continuous with respect to the topology of $G$ and the quotient topology on $\mathrm{PU}(\cH)$.
\end{enumerate}
\end{definition}

It follows that $\underline{\mathdutchcal{U}(e)}=\underline{id}$ and $\underline{\mathdutchcal{U}(g^{-1})}=\underline{\mathdutchcal{U}(g)}^{-1}$. The quotient topology on $\mathrm{PU(\cH)}$ has a physical meaning: it coincides with the topology which makes continuous any mapping of a correspondence $\Phi$ as above to the associated transition probabilities from states $\underline{\mathbf{v}}$ to $\Phi \underline{\mathbf{v}}$. This topology makes $\mathrm{PU}(\cH)$ a second countable Polish group.

Choosing representatives $\mathdutchcal{U}(g)\in\U(\cH)$ for $\underline{\mathdutchcal{U}(g)}$, for every $g\in G$, Eq.~\eqref{eq:projrepsn} yields
\beq \label{eq:weakerreo}
\mathdutchcal{U}(g)\mathdutchcal{U}(h)=\omega(g,h)\mathdutchcal{U}(gh),
\eeq 
where $\omega(g,h)\in\U(1)$. The function $\omega\colon G\times G\rightarrow\U(1)$ is an instance of what is called \emph{Schur multiplier} or \emph{2-cocycle}, and it depends on the choice of the representatives. %Thus, the unitary operators $\mathdutchcal{U}_g$ define a representation of $G$ only up to a phase factor, which depends on the selection of the representatives.
Due to this weaker condition, the set of projective unitary representations of $G$ contains and is generally larger than the set of unitary representations of $G$: unitary representations are projective ones for which there is a set of representatives such that $\omega\equiv1$. %In many quantum-mechanical problems, one is able to select representatives for which $\omega(g,h)=0$ for all $g,h\in G$, but this factor cannot always be eliminated like this (e.g. Galilean group). Investigate whether this can be achieved or not.

The definitions of subrepresentation, equivalence and irreducibility are naturally extended to projective unitary representations. The Haar measure on a group remains a valuable tool for the analysis of orthogonality relations between its projective irreps.

\begin{remark}\label{proj:rem:saràperfat}
Let $\underline{\mathdutchcal{U}}\colon G\rightarrow \mathrm{PU}(n)$ be an $n$-dimensional projective unitary representation of $G$ (on $\cH\simeq \CC^n$). It uniquely gives a unitary representation $\rho$ on the vector space of $n\times n$ complex matrices, by conjugation:
\beq \label{eq:chanlrepn}
\rho\colon G\rightarrow \U(\mM(n,\CC)),\quad \rho(g)M\coloneqq \mathdutchcal{U}(g)M\mathdutchcal{U}(g)^{\dagger},
\eeq 
where $\mathdutchcal{U}(g)^\dagger$ denotes the adjoint operator of $\mathdutchcal{U}(g)$. This does not depend on the choice of the representatives $\mathdutchcal{U}(g)\in\U(n)$ of $\underline{\mathdutchcal{U}(g)}\in\mathrm{PU}(n)$, as the phases factors of $\mathdutchcal{U}(g)$ and $\mathdutchcal{U}(g)^\dagger$ cancel out. Every $\rho(g)$ is in fact unitary, according to the \emph{Hilbert-Schmidt} product $\langle A,B\rangle\coloneqq \Tr(A^\dagger B)$ in $\mM(n,\CC)$: we have $\langle \rho(g)M,\rho(g)L\rangle%=\langle\mathdutchcal{U}(g)M\mathdutchcal{U}(g)^{\dagger},\,\mathdutchcal{U}(g)L\mathdutchcal{U}(g)^{\dagger}\rangle
=\Tr\left(\mathdutchcal{U}(g)M^\dagger\mathdutchcal{U}(g)^{\dagger}\mathdutchcal{U}(g)L\mathdutchcal{U}(g)^{\dagger}\right) %= \Tr\left(\mathdutchcal{U}(g)M^\dagger L^\dagger\mathdutchcal{U}(g)^{\dagger}\right)= \Tr\left(M^\dagger L^\dagger\mathdutchcal{U}(g)^{\dagger} \mathdutchcal{U}(g)\right)
= \Tr\left(M^\dagger L\right)=\langle M,L\rangle$ by cyclicity of the trace, for every $L,M\in\mM(n,\CC)$ and every $g\in G$.

Conversely, $\rho$ allows to uniquely reconstruct $\underline{\mathdutchcal{U}}$, by any choice of unitary operators $\mathdutchcal{U}(g)$ implementing $\rho$; the $\mathdutchcal{U}(g)$s are representatives of the $\underline{\mathdutchcal{U}(g)}$, which automatically satisfy Eq.~\eqref{eq:weakerreo}. %basta che poi faccio le classi di U(g) al variare della fase e ho la repn proiettiva
\end{remark}

\medskip

We conclude with a brief survey on the representations of angular momentum and spin in standard quantum mechanics. The $p$-adic translation of angular momenta and spins is what we aim to develop in our wider program~\cite{our2nd} in line with Volovich's philosophy~\cite{VolSeminal}.
% \begin{remark}\label{rem:proremSO3} We give a brief overview of representations of angular momentum and spin in standard quantum mechanics.

The study of real and complex representations of a standard Lie group often proceeds via the associated Lie algebra. %repn di G -> repn di Lie (Lie ne ha di più in generale) %l'associazione è per derivazione o per exponential map
In particular, the Lie algebra isomorphism $\so(3) \simeq \mathfrak{su}(2)$ implies an identical local structure, yet the global properties differ: $\SU(2)$ is the universal covering group of $\SO(3)$, with $\SO(3)\simeq \SU(2)/\{\pm\mI\}$. While, $\so(3)$ and $\mathfrak{su}(2)$ share the same representation theory, the representation theory of $\SU(2)$ encompasses that of $\SO(3)$.

Because $\SU(2)$ is simply connected, there is a one-to-one correspondence between its irreps and those of $\mathfrak{su}(2)$%questo vale in generale per homomorphismi da questi gruppi/algebre, Hall 5.7
. The Lie algebra $\mathfrak{su}(2)$ (and then the group $\SU(2)$) has exactly one irrep in every dimension, up to equivalences. These irreps are conventionally indexed by $l \in \{0,\frac{1}{2},1,\frac{3}{2},\dots\}$ and have dimension $2l+1$.

On the other hand, $\SO(3)$ is connected but not simply connected. The $l$-th irrep of $\so(3)\simeq\mathfrak{su}(2)$ is associated with an irrep of $\SO(3)$ if and only if $l$ is an integer. Instead, half-integer values of $l$ correspond to %an irrep of the universal covering $\SU(2)$ but
projective irreps of $\SO(3)$. In quantum mechanics, those representations are associated with integer spin (bosons) or orbital angular momentum when $l$ is integer, and with half-integer spin (fermions) when $l$ is half-integer. In particular, a spin one-half particle, or abstractly a qubit, is realised by $l=\frac{1}{2}$, for which the action of $\SU(2)$ on the state space $\CC^2$ corresponds to a projective action of $\SO(3)$ on the Bloch sphere~\cite{Hall,Georgi}. 
%\end{remark}

Accordingly, we give the following definition of \emph{$p$-adic qubit}, as proposed in~\cite{our2nd}.
\begin{definition}
For every prime $p$, we call \emph{$p$-adic qubit} any pair $(\cH,\underline{\mathdutchcal{U}})$, where $\cH\simeq\CC^2$ is a two-dimensional complex Hilbert space, and $\underline{\mathdutchcal{U}}$ is a projective unitary irreducible representation of $\SO(3)_p$ on $\cH$. 
\end{definition}

\section{About representations of \texorpdfstring{$\SO(3)_p$}{Lg}}
\label{sec:irrepsSO3pDp}
In this section, we begin presenting our contributions on the representation theory of $\SO(3)_p$.

\subsection{Factorisation of representations modulo \texorpdfstring{$p^k$}{Lg}}
\label{fact:repnspk}
A fundamental tool to understand the structural aspects of $\SO(3)_p$ is quotienting the group modulo $p^k$, $k\in\NN$, which is possible since the matrix entries are $p$-adic integers, for every prime $p$. As seen in Theorem~\ref{theor-invlimSO3p}, these quotients are finite groups $G_{p^k}$ and, thus, they are much easier to study, while reproducing the profinite $\SO(3)_p$ as an inverse limit. The groups $G_{p^k}$ are pivotal also because their representations give rise to representations of the entire $\SO(3)_p$: if $\underline{\mathdutchcal{U}_{p,k}}$ is a projective irrep of $G_{p^k}$ on $\CC^n$ for some $k,n\in\NN$, then
\beq \label{eq:factorisationreps}
\underline{\mathdutchcal{U}_p}\coloneqq \underline{\mathdutchcal{U}_{p,k}}\circ \pi_k
\eeq 
is a projective irrep of $\SO(3)_p$ on $\CC^n$. In fact, $\pi_k$ from Eq.~\eqref{eq:grouphoprpi} is a continuous homomorphism, and so is $\underline{\mathdutchcal{U}_p}$; since $\pi_k$ is surjective, for every subspace $\mathcal{M}\subseteq\CC^n$, $\underline{\mathdutchcal{U}_p}\left(\SO(3)_p\right)\mathcal{M}=\mathcal{M}$ is equivalent to $\underline{\mathdutchcal{U}_{p,k}}\left(G_{p^k}\right)\mathcal{M}=\mathcal{M}$%, which in turn is equivalent to $\mathcal{M}=\{\mathbf{0}\},\CC^n$, because $\rho_{p,k}$ is irreducible, hence $\rho_p$ is irreducible too.
, thus $\underline{\mathdutchcal{U}_p}$ is irreducible if and only if $\underline{\mathdutchcal{U}_{p,k}}$ is irreducible (i.e. $\mathcal{M}=\{\mathbf{0}\},\CC^n$).

A crucial question is whether the (projective) irreps of $\SO(3)_p$ are exhausted by those that factorise through the finite quotients $G_{p^k}$. This holds true for a profinite abelian group (e.g. the $p$-adic groups of planar rotations $\SO(2)_{p,\dude}$), by~\cite[Prop.~4.3.3]{reiter2000}. In this case, the irreps are all one-dimensional and close the dual group, and the dual group of an inverse-limit group is isomorphic to the direct limit of the dual groups of the inverse family. %poi si estende questo alle projective come facciamo sotto anche noi

We therefore ask if a similar factorisation property holds for the non-abelian profinite group $\SO(3)_p$. Specifically, do all projective irreps of $\SO(3)_p$ factorise through $G_{p^k}$ for some $k\in\mathbb{N}$, or do other such representations of $\SO(3)_p$ exist that are not ``lifted'' from the $G_{p^k}$s? We are going to provide a positive answer, even for finite-dimensional projective unitary representations.

\begin{proposition}\label{prop:gennotoJess}
Let $G$ be a profinite group and $\rho\colon G\rightarrow \GL(n,\CC)$ an $n$-dimensional representation of $G$. Then, $\rho$ is continuous if and only if $\ker(\rho)$ is open in $G$.
\end{proposition}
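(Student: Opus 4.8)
The plan is to prove the two implications separately. The implication ``$\ker(\rho)$ open $\Rightarrow$ $\rho$ continuous'' is purely formal; the converse will rest on the classical fact that the Lie group $\GL(n,\CC)$ has \emph{no small subgroups}.

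I would first dispatch the easy direction. Since $G$ is profinite it is compact (Theorem~\ref{prop:profcompdisc}), so the cosets of the open subgroup $\ker(\rho)$ form an open partition of $G$ admitting a finite subcover; hence $[G:\ker(\rho)]<\infty$ and $G/\ker(\rho)$ is a \emph{finite} group (its normality being automatic as the kernel of a homomorphism). Endow $G/\ker(\rho)$ with the discrete topology: then the quotient map $q\colon G\to G/\ker(\rho)$ is continuous, $\rho$ factors as $\rho=\overline{\rho}\circ q$ for an injective $\overline{\rho}\colon G/\ker(\rho)\to\GL(n,\CC)$, and $\overline{\rho}$ is continuous because its domain is discrete. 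Hence $\rho$ is continuous.

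For the converse, ``$\rho$ continuous $\Rightarrow$ $\ker(\rho)$ open'', I would invoke the no-small-subgroups property: there is an open neighbourhood $V$ of $I_n$ in $\GL(n,\CC)$ such that $\{I_n\}$ is the only subgroup of $\GL(n,\CC)$ contained in $V$. Granting this, continuity of $\rho$ makes $\rho^{-1}(V)$ an open neighbourhood of $e$ in $G$; since $G$ is profinite, Lemma~\ref{lemma:baseeeeker} supplies an open (normal) subgroup $N\subseteq\rho^{-1}(V)$. Then $\rho(N)$ is a subgroup of $\GL(n,\CC)$ contained in $V$, forcing $\rho(N)=\{I_n\}$, i.e. $N\subseteq\ker(\rho)$; and a subgroup containing an open neighbourhood of the identity is itself open, being the union of the translates $h N$, $h\in\ker(\rho)$. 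Hence $\ker(\rho)$ is open.

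The only step that is not purely formal is the no-small-subgroups property, and this is where the real (but standard) work sits; one may either reprove it or cite it as Lie-group folklore. A self-contained argument uses the matrix exponential: choose $r>0$ so that $\exp$ restricts to a diffeomorphism from the ball $B_r=\{X\in\mM(n,\CC):\norm{X}<r\}$ onto a neighbourhood of $I_n$, and set $V=\exp(B_{r/2})$. If a subgroup $H\subseteq V$ contained an element $h=\exp(X)$ with $0\ne X\in B_{r/2}$, then, since $\norm{2^kX}\to\infty$, there would be a largest $m\ge 0$ with $2^mX\in B_{r/2}$, so that $2^{m+1}X\in B_r\setminus B_{r/2}$; but $h^{2^{m+1}}=\exp(2^{m+1}X)$ lies in $H\subseteq\exp(B_{r/2})$, so injectivity of $\exp$ on $B_r$ (which contains $2^{m+1}X$) forces $2^{m+1}X\in B_{r/2}$, a contradiction. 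An alternative, shorter route for the whole converse is to note that $\rho(G)\cong G/\ker(\rho)$ is simultaneously a compact Lie group (a closed subgroup of $\GL(n,\CC)$) and a profinite group (a Hausdorff quotient of a profinite group), hence a totally disconnected, i.e. $0$-dimensional, compact manifold, hence finite — so again $\ker(\rho)$ is open. I expect the main ``obstacle'' to be merely editorial: deciding how much of this well-known Lie-theoretic material to reprove versus reference.
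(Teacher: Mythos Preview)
Your proof is correct and follows essentially the same approach as the paper: the substantive direction (continuous $\Rightarrow$ kernel open) is handled identically via the no-small-subgroups property of $\GL(n,\CC)$ together with the basis of open normal subgroups in a profinite group, and your easy direction (factoring through the finite discrete quotient) is a minor variant of the paper's direct verification that preimages of open sets are unions of cosets of $\ker(\rho)$. The additional material you supply (the self-contained proof of no-small-subgroups and the alternative ``compact Lie $\cap$ profinite $=$ finite'' route) goes beyond what the paper includes, which simply cites the no-small-subgroups property.
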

\begin{proof}
For the direct part, it is enough to prove that $\ker(\rho)$ contains an open neighbourhood, say $U_e$, of the identity $e$ of $G$. In fact, $gU_e\subseteq \ker(\rho)$ for every $g\in\ker(\rho)$, as $\ker(\rho)$ is a (normal) subgroup, so $\bigcup_{g\in\ker(\rho)}gU_e\subseteq \ker(\rho)$; also $e\in U_e$ implies $g\in gU_e$ and $\bigcup_{g\in\ker(\rho)}gU_e$ covers $\ker(\rho)$. Thus, $\ker(\rho)=\bigcup_{g\in\ker(\rho)}gU_e$, and each $gU_e$ is open because translations in a topological group are homeomorphisms and $U_e$ is open. Then, $\ker(\rho)$ is open as a union of open sets (then clopen, cf.\ Remark~\ref{rem:clopen}). 

We invoke the fact that $\GL(n,\CC)$, as a Lie group, has \emph{no small subgroups} (cf.\ Proposition~$2.17$ in~\cite{Liegen}): there exists an open neighbourhood $U_{\mI_n}$ of $\mI_n\in \GL(n,\CC)$ which contains no non-trivial subgroups of $\GL(n,\CC)$, because powers of some non-identity element $g\in \GL(n,\CC)$ close to $\mI_n$ ``escape'' from $\mI_n$. Let $V\coloneqq\rho^{-1}(U_{\mI_n})$. $V$ is open in $G$, since $\rho$ is continuous and $U_{\mI_n}$ is open. $e\in V$ because $\mI_n\in U_{\mI_n}$, $\rho$ is a homomorphisms, and $\rho^{-1}(\mI_n)=\rho^{-1}(\rho(e))\ni e$. This shows that $V$ is an open neighbourhood of $e$ in $G$. Also, $\ker(\rho)\subseteq V$ since $\ker(\rho)=\rho^{-1}(\mI_n)$%allargo un po' intorno all'identità
. By Theorem~\ref{prop:profequiv}, point~3. or~4., there is a fundamental system of neighbourhoods of $e$ which are open normal subgroups of $G$%, e.g. $\ker(f_i)$
. Then, there is an open normal subgroup $N$ of $G$ such that $N\subseteq V$. Since $\rho$ is a homomorphism, $\rho(N)$ is a subgroup of $\GL(n,\CC)$, and $\rho(N)\subseteq \rho(V)\subseteq U_{\mI_n}$. By the no-small-subgroup argument $\rho(N)=\{\mI_n\}$, thus $N\subseteq\ker(\rho)$. We have found $N$ as an open neighbourhood of $e$ in $G$ contained in $\ker(\rho)$, hence $\ker(\rho)=\bigcup_{g\in\ker(\rho)}gN$ is open.

For the converse part, we equivalently prove that if $U\subseteq \GL(n,\CC)$ is open, then $\rho^{-1}(U)$ is open in $G$. If $g\in \rho^{-1}(U)$, then $g\ker(\rho)\subseteq\rho^{-1}(U)$, and $g\ker(\rho)$ is open. Again, $\rho^{-1}(U)=\bigcup_{g\in\rho^{-1}(U)}g\ker(\rho)$ is open as the union of open subsets.
\end{proof}

\begin{corollary}\label{cor:fattrepn}
Let $G\simeq\varprojlim\{G_k\}_{k\in I}$ be a profinite group, with canonical projections $f_k$. If $\rho\colon G\rightarrow\GL(n,\CC)$ is a continuous representation, then there exists $k\in I$ such that $\rho$ factorises through $G_k$ as $\rho=\rho_k\circ f_k$, for a continuous representation $\rho_k\colon G_k\rightarrow\GL(n,\CC)$.
\end{corollary}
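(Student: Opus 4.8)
The plan is to deduce the statement from Proposition~\ref{prop:gennotoJess} and the neighbourhood basis of Lemma~\ref{lemma:baseeeeker}, and then to factor $\rho$ through a finite quotient by the First Isomorphism Theorem. First, since $\rho$ is continuous and $G$ is profinite, Proposition~\ref{prop:gennotoJess} gives that $\ker(\rho)$ is open in $G$, hence an open neighbourhood of $e$. By Lemma~\ref{lemma:baseeeeker}, the kernels $\{\ker(f_i)\}_{i\in I}$ of the canonical projections form a fundamental system of open neighbourhoods of $e$; by definition of such a system there is therefore an index $k\in I$ with $\ker(f_k)\subseteq\ker(\rho)$. (One may also note that, for $i\le j$, the relation $f_i=f_{ij}\circ f_j$ forces $\ker(f_j)\subseteq\ker(f_i)$, so the $\ker(f_i)$ are nested along $\le$; combined with directedness of $I$ this is exactly why a single $k$ suffices.)

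Next I would use this inclusion to build $\rho_k$. The canonical projection $f_k\colon G\to G_k$ is a surjective continuous homomorphism with kernel $\ker(f_k)$, so it induces an isomorphism $G/\ker(f_k)\cong G_k$. Because $\ker(f_k)\subseteq\ker(\rho)$, the map $\rho$ is constant on each coset of $\ker(f_k)$, so the assignment $\rho_k\bigl(f_k(g)\bigr)\coloneqq\rho(g)$ is well defined on $G_k$; it is a group homomorphism since $f_k$ and $\rho$ are, and $\rho=\rho_k\circ f_k$ by construction. Continuity of $\rho_k$ is immediate, since $G_k$ is a finite group with the discrete topology (equivalently, $\ker(\rho_k)=\{e_{G_k}\}$ is open, so Proposition~\ref{prop:gennotoJess} applies to $G_k$). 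Hence $\rho_k\colon G_k\to\GL(n,\CC)$ is a continuous representation factorising $\rho$, as claimed.

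I do not expect a genuine obstacle here: the corollary merely repackages Proposition~\ref{prop:gennotoJess} together with the elementary fact that an open subgroup of a profinite group contains one of the basic clopen normal subgroups $\ker(f_k)$ of Lemma~\ref{lemma:baseeeeker}, after which the factorisation is routine. The one place that rewards a little care is the bookkeeping that ``$\ker(\rho)$ open'' forces containment of \emph{one} basic neighbourhood and hence gives a \emph{single} working index $k$. Finally, when this corollary is later invoked for \emph{projective} irreps of $\SO(3)_p$, the only additional step is to first pass to the associated conjugation representation $\rho$ on $\mM(n,\CC)$ of Remark~\ref{proj:rem:saràperfat} --- which is continuous precisely when $\underline{\mathdutchcal{U}}$ is --- apply the above to obtain $\rho=\rho_k\circ\pi_k$, and then reconstruct the projective representation $\underline{\mathdutchcal{U}_k}$ of $G_{p^k}=\SO(3)_p\bmod p^k$ from $\rho_k$, so that $\underline{\mathdutchcal{U}}=\underline{\mathdutchcal{U}_k}\circ\pi_k$.
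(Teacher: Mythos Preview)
Your proof is correct and follows essentially the same route as the paper: invoke Proposition~\ref{prop:gennotoJess} to get $\ker(\rho)$ open, use Lemma~\ref{lemma:baseeeeker} to find $\ker(f_k)\subseteq\ker(\rho)$, then factor via the fundamental theorem on homomorphisms. You supply more detail (the explicit construction of $\rho_k$, the discreteness argument for continuity, and the remark on nesting of kernels), and your closing paragraph on projective representations in fact anticipates the paper's next corollary (Corollary~\ref{cor:seratoizaloafatrza}), but the core argument is identical.
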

\begin{proof}
By Proposition~\ref{prop:gennotoJess}, $\ker(\rho)$ is open in $G$. In particular, by %point~3. of Theorem~\ref{prop:profequiv}, 
Lemma~\ref{lemma:baseeeeker}, %def di base, e def fundamental neighborhood in base, un aperto contiene un elemento di base e a sua volta di fund syste
$\ker(f_k)\subseteq \ker(\rho)$ for some $k\in I$. Then, $\rho$ factorises through $G_k$ by the fundamental theorem of homomorphism. 
\end{proof}

\begin{corollary}\label{cor:seratoizaloafatrza}
Let $G\simeq\varprojlim\{G_k\}_{k\in I}$ be a profinite group, with canonical projections $f_k$. Every finite-dimensional projective unitary representation $\underline{\mathdutchcal{U}}\colon G\rightarrow \mathrm{PU}(n)$ of $G$ factorises through $G_k$ as $\underline{\mathdutchcal{U}}=\underline{\mathdutchcal{U}_k}\circ f_k$, for some $k\in I$ and projective unitary representation $\underline{\mathdutchcal{U}_k}\colon G_k\rightarrow \mathrm{PU}(n)$.
\end{corollary}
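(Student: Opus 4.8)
The plan is to reduce the projective statement to the linear one already proved in Corollary~\ref{cor:fattrepn}, via the conjugation correspondence of Remark~\ref{proj:rem:saràperfat}. Given a finite-dimensional projective unitary representation $\underline{\mathdutchcal{U}}\colon G\rightarrow\mathrm{PU}(n)$, I would first attach to it the genuine (linear) unitary representation $\rho\colon G\rightarrow\U(\mM(n,\CC))\subseteq\GL(\mM(n,\CC))\simeq\GL(n^2,\CC)$ by conjugation, $\rho(g)M\coloneqq\mathdutchcal{U}(g)M\mathdutchcal{U}(g)^\dagger$, exactly as in Eq.~\eqref{eq:chanlrepn}. By Remark~\ref{proj:rem:saràperfat} this $\rho$ is well defined independently of the chosen unitary lifts $\mathdutchcal{U}(g)$, and it is continuous: it is the composition $G\xrightarrow{\ \underline{\mathdutchcal{U}}\ }\mathrm{PU}(n)\rightarrow\U(\mM(n,\CC))$, where the second arrow is the adjoint action of $\mathrm{PU}(n)$ on $n\times n$ matrices, a morphism of Lie groups and hence continuous. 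Since Corollary~\ref{cor:fattrepn} evidently holds with $\GL(n,\CC)$ replaced by $\GL(n^2,\CC)$, it applies to $\rho$: there is an index $k\in I$ with $\ker(f_k)\subseteq\ker(\rho)$, and $\rho$ descends to $G_k$.

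The bridge back to the projective picture is the Schur-type identification $\ker(\rho)=\ker(\underline{\mathdutchcal{U}})$, where $\ker(\underline{\mathdutchcal{U}})\coloneqq\{g\in G\colon\underline{\mathdutchcal{U}}(g)=\underline{\mathrm{id}}\}$. Indeed, $\rho(g)=\mathrm{id}_{\mM(n,\CC)}$ means that $\mathdutchcal{U}(g)$ commutes with every matrix, so $\mathdutchcal{U}(g)$ is a scalar multiple of the identity (the centre of $\mM(n,\CC)$ consists of scalars), i.e. $\underline{\mathdutchcal{U}}(g)=\underline{\mathrm{id}}$; the reverse inclusion is immediate. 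Hence $\ker(f_k)\subseteq\ker(\underline{\mathdutchcal{U}})$. As $f_k$ is surjective and $\underline{\mathdutchcal{U}}$ is constant on the fibres of $f_k$, setting $\underline{\mathdutchcal{U}_k}(\bar g)\coloneqq\underline{\mathdutchcal{U}}(g)$ for any $g$ with $f_k(g)=\bar g$ yields a well-defined map $\underline{\mathdutchcal{U}_k}\colon G_k\rightarrow\mathrm{PU}(n)$ with $\underline{\mathdutchcal{U}}=\underline{\mathdutchcal{U}_k}\circ f_k$. It is a homomorphism because $\underline{\mathdutchcal{U}}$ is, and it is automatically continuous since $G_k$ carries the discrete topology and is finite; therefore $\underline{\mathdutchcal{U}_k}$ is a projective unitary representation of $G_k$, which is the assertion.

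The only non-bookkeeping content is the continuity of $\rho$ (equivalently, that the adjoint action $\mathrm{PU}(n)\rightarrow\U(\mM(n,\CC))$ is continuous) together with the kernel computation $\ker(\rho)=\ker(\underline{\mathdutchcal{U}})$; everything else is a direct appeal to Corollary~\ref{cor:fattrepn} and the universal property of the quotient $G\twoheadrightarrow G_k$. I do not expect a genuine obstacle here: the delicate point is merely to phrase the passage $\underline{\mathdutchcal{U}}\leftrightarrow\rho$ cleanly enough that both the continuity and the kernel identification are transparent, after which the finite-quotient factorisation of $\rho$ transports back verbatim to $\underline{\mathdutchcal{U}}$.
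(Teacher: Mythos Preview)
Your proof is correct and follows essentially the same approach as the paper: both pass from $\underline{\mathdutchcal{U}}$ to the conjugation representation $\rho$ of Remark~\ref{proj:rem:saràperfat}, identify $\ker(\rho)=\ker(\underline{\mathdutchcal{U}})$ via the fact that only scalars commute with all of $\mM(n,\CC)$, and then invoke Corollary~\ref{cor:fattrepn} together with the fundamental homomorphism theorem. Your version is slightly more explicit about the continuity of $\rho$ and the well-definedness of $\underline{\mathdutchcal{U}_k}$, but the structure of the argument is identical.
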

\begin{proof}
According to Remark~\ref{proj:rem:saràperfat}, $\underline{\mathdutchcal{U}}$ gives a unitary representation on $\U(\mM(n,\CC))$ as in Eq.~\eqref{eq:chanlrepn}. Observe that 
%\beq \label{eq:eqkerproj} \ker(\rho) = \ker(\underline{\mathdutchcal{U}}), \eeq because 
\begin{align}
    \ker(\rho)%&=\{g\in G\midd \rho(g)=id\}\nonumber\\
    & = \{g\in G\midd \mathdutchcal{U}(g)M\mathdutchcal{U}(g)^{\dagger} =M\textup{ for all } M\in \mM(n,\CC)\}\nonumber\\
    & = \{g\in G\midd \mathdutchcal{U}(g)=e^{i\theta}\mI_n\textup{ for some }\theta\in\RR\}\nonumber\\
    %& = U^{-1}\big(\U(1)\mathrm{I}_{d\times d}\big)\\
    & = \{g\in G\midd \underline{\mathdutchcal{U}(g)}=\underline{\mI_n}\in\mathrm{PU}(n)\}\nonumber\\
    & = \ker(\underline{\mathdutchcal{U}}),
\end{align}
where, in the third equality, we used the fact that the only unitary matrices $\mathdutchcal{U}(g)\in\U(n)$ which commute with every $M\in\mM(n,\CC)$ are phase-multiples of the identity matrix. By Corollary~\ref{cor:fattrepn}, $\ker(f_k)\subseteq \ker(\rho)=\ker(\underline{\mathdutchcal{U}})$ for some $k\in I$, hence $\underline{\mathdutchcal{U}}$ factorises through $G_k$ too, by the fundamental theorem of homomorphism.
\end{proof}

All these results apply to the profinite group $\SO(3)_p\simeq \varprojlim\{G_{p^k}\}_\NN$, with canonical projections $\pi_k$ of entry-wise reduction modulo $p^k$ (cf.~Theorem~\ref{theor-invlimSO3p}), leading to the following result.
\begin{theorem}\label{cor:fattrepNOP}
For every prime $p$, and every $n$-dimensional ($n\in\NN$) projective unitary representation (resp. projective irrep) $\underline{\mathdutchcal{U}_p}$ of $\SO(3)_p$, there exists $k\in\NN$ and an $n$-dimensional projective unitary representation (resp. projective irrep) $\underline{\mathdutchcal{U}_{p,k}}$ of $G_{p^k}$ such that $\underline{\mathdutchcal{U}_p}$ factorises through $G_{p^k}$ as $\underline{\mathdutchcal{U}_p}=\underline{\mathdutchcal{U}_{p,k}}\circ\pi_k$, i.e., such that the following diagram commutes.
\beq \notag 
\begin{tikzcd}
\mathrm{SO}(3)_p \arrow[d, "\pi_k"'] \arrow[r, "\underline{\mathdutchcal{U}_p}"] & \mathrm{PU}(n) \\
G_{p^k} \arrow[ru, "{\underline{\mathdutchcal{U}_{p,k}}}"']                      &   
\end{tikzcd}
\eeq 
\end{theorem}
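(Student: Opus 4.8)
The plan is to apply the abstract machinery already assembled — Proposition~\ref{prop:gennotoJess}, Corollary~\ref{cor:fattrepn}, and Corollary~\ref{cor:seratoizaloafatrza} — to the specific profinite presentation $\SO(3)_p\simeq\varprojlim\{G_{p^k}\}_\NN$ from Theorem~\ref{theor-invlimSO3p}, whose canonical projections are the entry-wise reduction maps $\pi_k$. Since $\SO(3)_p$ has matrix entries in $\ZZ_p$ (Eq.~\eqref{eq:inclsSO4SO3SO2SLint}), the $\pi_k$ are genuine continuous surjective group homomorphisms with $G_{p^k}$ finite, so the hypotheses of the corollaries are met with index set $I=\NN$ and $f_k=\pi_k$.

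First I would treat the projective case directly: given an $n$-dimensional projective unitary representation $\underline{\mathdutchcal{U}_p}\colon\SO(3)_p\rightarrow\mathrm{PU}(n)$, Corollary~\ref{cor:seratoizaloafatrza} immediately yields $k\in\NN$ and a projective unitary representation $\underline{\mathdutchcal{U}_{p,k}}\colon G_{p^k}\rightarrow\mathrm{PU}(n)$ with $\underline{\mathdutchcal{U}_p}=\underline{\mathdutchcal{U}_{p,k}}\circ\pi_k$; this is exactly the commuting triangle in the statement. (If one prefers to phrase things for a genuine linear representation $\rho\colon\SO(3)_p\rightarrow\GL(n,\CC)$, Corollary~\ref{cor:fattrepn} does the same job.) Then I would verify that the factor $\underline{\mathdutchcal{U}_{p,k}}$ is automatically continuous — trivially, since $G_{p^k}$ carries the discrete topology — and that it has the same dimension $n$ by construction.

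The remaining point is the ``resp.\ projective irrep'' clause: I would argue that irreducibility passes back and forth across the factorisation exactly as in the discussion following Eq.~\eqref{eq:factorisationreps}. Because $\pi_k$ is surjective, for any subspace $\mathcal{M}\subseteq\CC^n$ one has $\underline{\mathdutchcal{U}_p}(\SO(3)_p)\mathcal{M}=\mathcal{M}$ if and only if $\underline{\mathdutchcal{U}_{p,k}}(G_{p^k})\mathcal{M}=\mathcal{M}$, so the two representations have literally the same stable subspaces; hence $\underline{\mathdutchcal{U}_p}$ is irreducible iff $\underline{\mathdutchcal{U}_{p,k}}$ is. This upgrades the factorisation of representations to a factorisation of irreps, completing the proof.

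There is essentially no obstacle left at this stage: the real work was done in Proposition~\ref{prop:gennotoJess} (the no-small-subgroups argument forcing $\ker(\rho)$ open) and in the passage from a projective representation to the associated conjugation representation $\rho$ on $\mM(n,\CC)$ with $\ker(\rho)=\ker(\underline{\mathdutchcal{U}})$ (Corollary~\ref{cor:seratoizaloafatrza}). If anything needs care, it is only the bookkeeping that the minimal such $k$ exists because $\NN$ is well-ordered and the congruence subgroups $\ker(\pi_k)$ are nested and cofinal in the neighbourhood filter of $e$ (Lemma~\ref{lemma:baseeeeker}); but existence of \emph{some} $k$, which is all the statement claims, is immediate.
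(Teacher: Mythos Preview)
Your proposal is correct and follows essentially the same approach as the paper: specialise Corollary~\ref{cor:seratoizaloafatrza} to $\SO(3)_p\simeq\varprojlim\{G_{p^k}\}_\NN$ with $f_k=\pi_k$, then transfer irreducibility using the surjectivity of $\pi_k$ exactly as in the discussion after Eq.~\eqref{eq:factorisationreps}. The paper's proof is just a terser version of yours, without the extra remarks on continuity over the discrete $G_{p^k}$ or the well-ordering of $\NN$.
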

\begin{proof}
This theorem in terms of projective unitary representations is the specialisation of Corollary~\ref{cor:seratoizaloafatrza} to the group $\SO(3)_p$. %Now, consider a projective irrep $\underline{\mathdutchcal{U}_p}\colon \SO(3)_p\rightarrow \mathrm{PU}(n)$. There exists some projective unitary representation $\underline{\mathdutchcal{U}_{p,k}}\colon G_{p^k}\rightarrow \mathrm{PU}(n)$ such that $\underline{\mathdutchcal{U}_p}=\underline{\mathdutchcal{U}_{p,k}}\circ\pi_k$. 
Moreover, $\underline{\mathdutchcal{U}_p}$ is an irrep if and only if $\underline{\mathdutchcal{U}_{p,k}}$ is an irrep, because $\pi_k$ is surjective (see below Eq.~\eqref{eq:factorisationreps}).
\end{proof}
Recall that $\pi_k=\pi_{kl}\circ\pi_l$ for every $k\leq l$. If $k\in\NN$ is the minimum natural number such that $\underline{\mathdutchcal{U}_p}$ factorises through $G_{p^k}$, as $\underline{\mathdutchcal{U}_p}=\underline{\mathdutchcal{U}_{p,k}}\circ\pi_k$, then $\underline{\mathdutchcal{U}_p}=(\underline{\mathdutchcal{U}_{p,k}}\circ\pi_{kl})\circ\pi_l=\underline{\mathdutchcal{U}_{p,l}}\circ\pi_l$, where $\underline{\mathdutchcal{U}_{p,l}}\coloneqq \underline{\mathdutchcal{U}_{p,k}}\circ\pi_{kl}$ is a representation of $G_{p^l}$, i.e., $\underline{\mathdutchcal{U}_p}$ factorises through $G_{p^l}$ for every $l\geq k$.

\begin{remark}\label{rem:invlimrepmac}
By Theorem~\ref{cor:fattrepNOP}, the representations lifted by the irreps %(hence finite-dimensional, by Theorem~\ref{barurazcdiresumun}) 
of $G_{p^k}=\SO(3)_p\mod p^k$, $k\in\NN$, are all and the only irreps of $\SO(3)_p$, for every prime $p$. This offers a machinery to exhaustively investigate and classify irreps of the infinite group $\SO(3)_p$, through those of the finite groups $G_{p^k}$. The latter are easier to study, via standard tools of representation theory of finite groups~\cite{Etingof,Serre:reps,FultonHarris}. This has already served to provide first irreps of $\SO(3)_p$ and $p$-adic qubits, for every prime $p$~\cite{our2nd}. 

The future research programme is to achieve a systematic classification of the irreps of $\SO(3)_p$. This can be done by the Kirillov Orbit Method and Howe's theory~\cite{kirillov,Howe,altromaKirillov}, to analyse the representations of the pro-$p$ main congruence subgroup $\ker(\pi_1)=\left(\mI_3+p\mathsf{M}(3,\ZZ_p\right)\cap \SO(3)_p$ %the normal pro-$p$ subgroup of finite index
 and its associated $\ZZ_p$-Lie algebra%(by Lazard correspondence)
 . %repns di \ket(\pi_1) directly from the geometry of co-adjoint orbits in the dual of the Lie algebra.
One extend these results to the full inverse-limite group $\SO(3)_p$%by Clifford Theory
, by handling the action of the finite quotient $G_p$ on the representations of $\ker(\pi_1)$.
% filtration analysis
This programme lies beyond the scope of the present article, in which we focus on the irreps of $\SO(3)_p$ lifted from $G_p$, and combine them to build systems of $p$-adic qubits and gates acting on them.
\end{remark}

From this point on, we focus on irreps of $\SO(3)_p$ --- as they form a subset of the projective irreps --- factorising through $G_p$.

%%%%%%%%%%%%%%%%%%%%%%%%%%%%%%%%%%%%%%%%%%%%%%%%%%%%%%%%%%%%%%%%%%%%%%%%%%%%%
\subsection{Structure and irreps of \texorpdfstring{$G_p$}{Lg}}\label{subsec:strirrpsGp}
%In this subsection, we assume that $p$ is an odd prime (if $p=2$, we already know~\cite{our2nd} that $G_2\simeq\mathrm{D}_3$). 

If $G$ is a group, $H$ a subgroup of $G$, and $N$ a normal subgroup of $G$, we say that $G$ is the \emph{(inner) semidirect product of $N$ and $H$}, denoted by $G=N\rtimes H$, iff $G=NH$ with $N\cap H=\{e\}$. Also, $G=N\rtimes H$ is uniquely defined up to isomorphism by the left action of $H$ by automorphisms on $N$, i.e. by the homomorphism $\psi\colon H\rightarrow \mathrm{Aut}(N)$, $\psi(h)(m)\coloneqq hmh^{-1}$%se fai h^{-1}mh, l'azione definitoria non sarà più banalmente a sinistra, quindi non darà più correttamente un omomorfismo a meno che non lo pensi da destra (è un antiomomorfismo \phi(ab)=phi(b)phi(a))
. Let us also present the \emph{dihedral group} of order $2n$ by 
\beq \label{eq:presDieh}
\mathrm{D}_n=\langle r,x,\midd r^m=x^2=e,xrx=r^{-1}\rangle.
\eeq 

\begin{theorem}\label{theor:structGp}
For every odd prime $p$,
\beq 
G_p\simeq N_{p^2}\rtimes H_{2(p+1)}\simeq C_p^2\rtimes \mathrm{D}_{p+1},
\eeq 
where $N_{p^2}\coloneqq\{L(1,0,c,d,1),\textup{ for } c,d,\in\ZZ/p\ZZ\}\simeq C_p^2$ is the unique Sylow $p$-subgroup of $G_p$, and $H_{2(p+1)}\coloneqq\{L(a,b,0,0,s)\midd a^2-vb^2\equiv1,s\equiv\pm1\}\simeq \mathrm{D}_{p+1}$. The action defining this semidirect product is
\begin{align}
\psi_p\colon &H_{2(p+1)}\rightarrow\mathrm{Aut}(N_{p^2}),\\
&\psi_p(L(a_0,b_0,0,0,1))\colon L(1,0,c,d,1)\mapsto L(1,0,a_0c-b_0d,a_0d-vb_0c,1),\notag\\
&\psi_p(L(1,0,0,0,-1))\colon L(1,0,c,d,1)\mapsto L(1,0,-c,d,1), \notag    
\end{align}
where $(a_0,b_0)$ is a generator of the group of solutions $(a,b)$ to $a^2-vb^2\equiv1\mod p$. This action corresponds to the following representation of the dihedral group%looking at the action of these matrices on the column vector (c,d)\in\mathbb{F}_p^2
: %non è la rappresentazione del diedrale usuale che avevamo nel secondo paper ortogonale rispetto alla forma A=(1&0\\0&-v) come R^TAR=A, ma questa è con matrice di r inversa trasposta e matrice di x opposta e in totale è ortogonale rispetto alla forma (-v&0\\0&1) equivalente ad A, oppure rispetto alla stessa A=(1&0\\0&-v) se scrivo RAR^T=A... le due sono legate così: se R ed X sono le classiche, e R', X' sono queste nuove strane, allora cambio di base C:=(0&1\\-1&0) tale che C^{-1}RC=R' e C^{-1}XC=X'.. si ha anche C^T=C^{-1}, e C è il cambio di base che mappa l'una nell'altra forma quadratica
\beq 
\widetilde{\psi_p}\colon\mathrm{D}_{p+1}\rightarrow \mathrm{Aut}(C_p^2)\simeq\mathrm{GL}(2,\mathbb{F}_p),\quad \widetilde{\psi_p}(r)=\begin{pmatrix}a_0 & -b_0\\ -vb_0 & a_0 \end{pmatrix},\ \widetilde{\psi_p}(x)=\begin{pmatrix}-1&0\\0&1 \end{pmatrix}.
\eeq 
\end{theorem}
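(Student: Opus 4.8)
The plan is to work directly with the explicit parametrisation of $G_p$ in Eq.~\eqref{eq:paramGp}, verifying the three ingredients needed for an inner semidirect product: that $N_{p^2}$ is a normal subgroup, that $H_{2(p+1)}$ is a subgroup with $N_{p^2}\cap H_{2(p+1)}=\{e\}$, and that $G_p=N_{p^2}H_{2(p+1)}$. First I would compute the matrix product $L(a,b,c,d,s)\,L(a',b',c',d',s')$ from the $3\times 3$ block form in Eq.~\eqref{eq:paramGp}; because the third row is the only ``inhomogeneous'' part, this product has the shape $L(a'',b'',c'',d'',s'')$ with $(a'',b'',s'')$ depending only on the $(a,b,s)$-data (a group law on the set $\{a^2-vb^2\equiv1\}\times\{\pm1\}$) and $(c'',d'')$ depending affinely on $(c,d)$ and $(c',d')$. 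This single computation simultaneously shows: (i) the projection $L(a,b,c,d,s)\mapsto L(a,b,0,0,s)$, after noting the multiplicativity of the $(a,b,s)$-part, is a homomorphism $G_p\to H_{2(p+1)}$ with kernel exactly $N_{p^2}$, so $N_{p^2}\trianglelefteq G_p$; (ii) $H_{2(p+1)}$ is closed under the product and inverses, hence a subgroup; (iii) $N_{p^2}\cap H_{2(p+1)}=\{L(1,0,0,0,1)\}=\{e\}$ is immediate from the parametrisation; and (iv) a counting argument, $\abs{N_{p^2}}\cdot\abs{H_{2(p+1)}}=p^2\cdot 2(p+1)=\abs{G_p}$ together with $N_{p^2}\cap H_{2(p+1)}=\{e\}$, forces $G_p=N_{p^2}H_{2(p+1)}$.

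Next I would identify the two factors. For $N_{p^2}$: from the product formula, $L(1,0,c,d,1)L(1,0,c',d',1)=L(1,0,c+c',d+d',1)$, so $N_{p^2}\simeq(\ZZ/p\ZZ)^2=C_p^2$; it has order $p^2=p^{2}$, which is the full power of $p$ dividing $\abs{G_p}=2p^2(p+1)$ (since $p\nmid 2(p+1)$), so it is a Sylow $p$-subgroup, and being normal it is \emph{the} unique one. For $H_{2(p+1)}$: the elements $L(a,b,0,0,1)$ with $a^2-vb^2\equiv1$ form, under the $(a,b)$-group law read off from the product, a cyclic group of order $p+1$ (quoted from~\cite[App.~A]{our2nd}); call a generator $r=L(a_0,b_0,0,0,1)$. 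The element $x=L(1,0,0,0,-1)$ has order $2$, and a direct check of $x\,L(a,b,0,0,1)\,x$ shows it sends $(a,b,1)\mapsto(a,-b,1)$, i.e.\ $xrx=r^{-1}$. By the presentation Eq.~\eqref{eq:presDieh} this exhibits a surjection $\mathrm{D}_{p+1}\to H_{2(p+1)}$, and a count of orders ($2(p+1)$ on both sides) makes it an isomorphism $H_{2(p+1)}\simeq\mathrm{D}_{p+1}$.

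Finally I would extract the conjugation action $\psi_p$. Since $G=N\rtimes H$ is determined up to isomorphism by $\psi\colon H\to\mathrm{Aut}(N)$, $\psi(h)(m)=hmh^{-1}$, it remains to compute $L(a_0,b_0,0,0,1)\,L(1,0,c,d,1)\,L(a_0,b_0,0,0,1)^{-1}$ and $L(1,0,0,0,-1)\,L(1,0,c,d,1)\,L(1,0,0,0,-1)$ using the product formula; the former should yield $L(1,0,a_0c-b_0d,\,a_0d-vb_0c,1)$ and the latter $L(1,0,-c,d,1)$, matching the claimed $\psi_p$. Transporting through the identification $N_{p^2}\simeq C_p^2$ via $(c,d)\leftrightarrow\begin{pmatrix}c\\d\end{pmatrix}$, these become the matrices $\widetilde{\psi_p}(r)=\begin{pmatrix}a_0&-b_0\\-vb_0&a_0\end{pmatrix}$ and $\widetilde{\psi_p}(x)=\begin{pmatrix}-1&0\\0&1\end{pmatrix}$ in $\mathrm{GL}(2,\mathbb{F}_p)$; one should double-check that $\widetilde{\psi_p}(r)$ indeed has order $p+1$ and that $\widetilde{\psi_p}(x)\widetilde{\psi_p}(r)\widetilde{\psi_p}(x)=\widetilde{\psi_p}(r)^{-1}$, so the assignment respects the dihedral relations and genuinely lands in $\mathrm{Aut}(C_p^2)$.

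The main obstacle is bookkeeping rather than conceptual: carrying out the $3\times3$ matrix multiplication in the parametrisation carefully enough to read off \emph{both} the $(a,b,s)$-group law (and recognise it as the cyclic-times-$C_2$ structure underlying $\mathrm{D}_{p+1}$, citing App.~A of~\cite{our2nd} for cyclicity of the order-$(p+1)$ part) \emph{and} the exact affine form of the $(c,d)$-transformation, keeping the factor $v$ in the right places. Everything else — the intersection being trivial, the order count $p^2\cdot2(p+1)=\abs{G_p}$, uniqueness of the Sylow $p$-subgroup — is then essentially automatic.
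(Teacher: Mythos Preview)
Your proposal is correct and follows essentially the same approach as the paper: both work directly with the explicit parametrisation~\eqref{eq:paramGp}, use the product formula to identify $N_{p^2}\simeq(\ZZ/p\ZZ)^2$ and $H_{2(p+1)}\simeq\mathrm{D}_{p+1}$, check trivial intersection plus an order count for the semidirect product, and compute the conjugation action on generators. Your argument for normality of $N_{p^2}$ (as the kernel of the projection to the $(a,b,s)$-part) is in fact slightly slicker than the paper's, which verifies normality by a direct conjugation check together with Sylow's theorems, and you establish $H_{2(p+1)}\simeq\mathrm{D}_{p+1}$ via the presentation~\eqref{eq:presDieh} where the paper instead invokes the maps $\mathdutchcal{F}_p,\upvarphi_p$ from~\cite{our2nd}; but these are minor tactical variations within the same overall strategy.
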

\begin{proof}
$N_{p^2}$ is a $p$-subgroup of $G_p$. Since $\lvert G_p\rvert=2(p+1)p^2$%e tutti i sottogruppi devono avere ordine che divide |Gp| per Lagrange, e un Sylow $p$-subgroup è un sottogruppo di Gp di ordine potenza di p massimale, e (p+1,p)=1 per cui non trovo in p+1 altri fattori di per avere $p$-sottogruppi più grandi contenenti quelli da p^2 
, a subgroup of order $p^2$ of $G_p$ is a Sylow $p$-subgroups of $G_p$. By the Sylow's theorems, the number $n_p$ of Sylow $p$-subgroups of a finite group $G$ is $n_p=\lvert N_G(P):G\rvert$ where $P$ is any Sylow $p$-subgroup of $G$ and $N_G(P)$ denotes the normalizer of $P$ in $G$. Then, a Sylow $p$-subgroup $P$ is unique if and only if $P$ is normal (i.e. $N_G(P)=G$). It is a straightforward calculation to check that $L(a,b,c,d,s)L(1,0,g,h,1)L(a,b,c,d,s)^{-1}%=\begin{pmatrix}a&svb&0\\b&sa&0\\ c&d&s\end{pmatrix}\begin{pmatrix} 1&0&0\\0&1&0\\g&h&1 \end{pmatrix}\begin{pmatrix}a&-vb&0\\ -sb&sa&0\\ bd-sac&svbc-ad&s\end{pmatrix}=\begin{pmatrix} 1&0&0\\0&1&0\\g'&h'&1 \end{pmatrix}
\in N_{p^2}$ for all $a,b,c,d,g,h,s\in \ZZ/p\ZZ$, $a^2-vb^2\equiv1\mod p$, $s=\pm1$, i.e. $N_{p^2}$ is a normal subgroup of $G_p$.
%è normale \beq MNM^{-1}= \eeq 
Then, $L(1,0,g,h,1)L(1,0,g',h',1)=L(1,0,g+g',h+h',1)$ %è sottogruppo
enlightens the isomorphism 
\beq \label{eq:normalparGpcoZpZp}
N_{p^2}\simeq (\ZZ/p\ZZ)^2,\qquad L(1,0,c,d,1)\leftrightarrow (c,d).
\eeq %è quindi normale abeliano

We have $\left\lvert G_p/N_{p^2}\right\rvert = 2(p+1)$.
Since $L(a,b,c,d,s)L(1,0,g,h,1)%=\begin{pmatrix}a&svb&0\\ b&sa&0\\ c&d&s\end{pmatrix}\begin{pmatrix}1&0&0\\ 0&1&0\\ g&h&1\end{pmatrix}=\begin{pmatrix}a&svb&0\\ b&sa&0\\ c+sg&d+sh&s\end{pmatrix}
=L(a,b,c+sg,d+sh,s)$, a left coset $L(a,b,c,d,s)N_{p^2}$ is identified by $(a,b,s)$ and has representative $L(a,b,0,0,s)$. The matrices of the form $L(a,b,0,0,s)$ close a group, say $H_{2(p+1)}$%because $L(a,b,0,0,s)L(a',b',0,0,s')=L(aa'+svbb', a'b+sab',0,0,ss')$
, isomorphic to $G_p/N_{p^2}$. we recall from~\cite[Subsec.~IV.B]{our2nd} that 
\beq \label{eq:homochepoiisop}
\mathdutchcal{F}_p\colon G_p\rightarrow %\mathdutchcal{F}_p(G_p)\subset 
\GL(2,\ZZ/p\ZZ),\quad \begin{pmatrix}
a & svb & 0 \\
b & s a & 0 \\
c & d & s
\end{pmatrix} \mapsto \begin{pmatrix}a & svb \\ b & sa\end{pmatrix}
\eeq 
is a homomorphism, and
\beq \label{eq:giaiso}
\upvarphi_p\colon \mathdutchcal{F}_p(G_p)\rightarrow\mathrm{D}_{p+1}, \quad \begin{pmatrix}a_0 & vb_0 \\ b_0 & a_0\end{pmatrix}\mapsto r,\ \ \begin{pmatrix} 1 & 0\\ 0 & -1 \end{pmatrix}\mapsto x
\eeq
is an isomorphism, where $(a_0,b_0)$ is a generator of the group of solutions $(a,b)$ to $a^2-vb^2\equiv1\mod p$. The maps~\eqref{eq:homochepoiisop},~\eqref{eq:giaiso} provide the isomorphisms $H_{2(p+1)}\simeq \mathdutchcal{F}_p(H_{2(p+1)})=\mathdutchcal{F}_p(G_p)\simeq\mathrm{D}_{p+1}$. %We have shown that G_p/N_{p^2}\simeq H_{2(p+1)}\simeq \mathrm{D}_{p+1}.

Lastly, $G_p\simeq N_{p^2}H_{2(p+1)}$ where $N_{p^2}\cap H_{2(p+1)}=\{\mI_3\}$. The action defining the semidirect product is $\psi_p\colon H_{2(p+1)}\rightarrow \mathrm{Aut}(N_{p^2})$, $\psi_p(h)(n)=HNH^{-1}$. It is enough to express it for the generators $L(a_0,b_0,0,0,1)\leftrightarrow r$ and $L(1,0,0,0,-1)\leftrightarrow x$.
\end{proof}

\begin{comment}
By using GAP, we find that
\beq 
G_3\simeq (\mathrm{S}_3\times \mathrm{S}_3)\rtimes C_2.
\eeq 
Indeed, a possibility is $\mathrm{S}_3\times \mathrm{S}_3\simeq N_3 \unlhd G_3$, $N_3=\big\{L(\pm1,0,c,d,s),\textup{ for }s\in\{\pm1\},\,c,d\in\ZZ/3\ZZ\big\}$ as in Eq.~\eqref{eq:G3normal36N}, where $\mathrm{S}_3\times\{e\}\simeq \langle (g_1,e)\coloneqq L(1,0,0,-1,1)%=\begin{pmatrix} 1& 0& 0 \\ 0& 1& 0 \\ 0& -1& 1 \end{pmatrix}
,\, (g_2,e)\coloneqq L(-1,0,0,-1,-1)% =\begin{pmatrix} -1& 0& 0 \\ 0& 1& 0\\ 0& -1& -1 \end{pmatrix}
\rangle\leq N_3$, $\{e\}\times \mathrm{S}_3\simeq\langle (e,h_1)\coloneqq L(1,0,-1,0,1) %=\begin{pmatrix} 1& 0& 0 \\ 0& 1& 0 \\ -1& 0& 1 \end{pmatrix}
,\, (e,h_2)\coloneqq L(1,0,1,0,-1) %= \begin{pmatrix} 1& 0& 0 \\ 0& -1& 0 \\ 1& 0& -1 \end{pmatrix}
\rangle\leq N_3$, and $C_2\simeq\{\mI_2,t\coloneqq M\coloneqq L(0,-1,1-1,-1)%=\begin{pmatrix} 0&-1&0 \\ -1&0&0 \\ 1&-1&-1 \end{pmatrix}
\}\leq G_3$. Here, the action of $C_2$ on $\mathrm{S}_3\times \mathrm{S}_3$, defining the semidirect product is given by the swap of the direct components:
\beq 
t(g_i,e)t^{-1}=(e,h_i)\quad \textup{for }i=1,2.
\eeq 
while, for every prime from $5$ to $29$ included,
\beq 
G_p\simeq (C_p\times C_p)\rtimes \mathrm{D}_{p+1}.
\eeq
MA IN REALTà DA SOPRA è EVIDENTE CHE SONO LA STESSA COSA PER TUTTI GLI ODD PRIMES
\end{comment}

Being a semidirect product by a normal abelian subgroup $N_{p^2}$, the irreps of $G_p$ can be constructed from those of certain subgroups of $H_{2(p+1)}$, by the so called \emph{method of little groups} of Wigner and Mackey~\cite{Serre:reps,Mackey}, which we briefly recall.

Let $G$ be a finite group such that $G=A\rtimes H$ where $A$ is a normal abelian subgroup of $G$. The irreps of $A$ are one-dimensional and close the dual group $\widehat{A}=\mathrm{Hom}(A,\U(1))\simeq A$. %Serre: The group $G$ acts on $\widehat{A}$ by $(g.\chi)(a)=\chi(g^{-1}ag)$ for $g\in G, a \in A,\chi\in\widehat{A}$... questo lo fa per avere una azione qui a sinistra, ma complicando invece le cose sulla def di prodotto semidiretto che ora non sarebbe più data da un omomorfismo ma da un antiomomorfismo cioè dall'azione h^{-1}ah da destra (inoltre fa l'azione su tutto G anzichè solo H, ma questo è strettamente correlato=....    ma faccio alla Mackey hah^{-1} (gruppo definito a sinistra ok, azione qui sui caratteri a destra):
The group $H$ acts on $\widehat{A}$ from the right by $(\chi.h)(a)\coloneqq\chi(hah^{-1})$ for $h\in H, a \in A,\chi\in\widehat{A}$.
Let $\mathcal{O}_\chi\coloneqq\{\chi.h\textup{ for }h\in H\}$ be the orbit of $\chi\in \widehat{A}$ by $H$, and let $(\chi_i)_{i\in \widehat{A}/H}$ be a system of representatives for the orbits by $H$ on $\widehat{A}$. For every $i$, let $H_{\chi_i}\coloneqq\{h\in H\midd \chi_i.h=\chi_i\}$ be the stabiliser subgroup of $\chi_i$ of $H$, and $G_{\chi_i}\coloneqq A \rtimes H_{\chi_i}$ the corresponding subgroup of $G$. %**Serre: Extend the function $\chi_i$ from $A$ to $G_{\chi_i}$ by setting $\widetilde{\chi}_i(ah)=\chi_i(a)$ for $a\in A,h\in H_i$. Then, \widetilde{\chi}_i$ is an irreducible character of $G_{\chi_i}$ of dimension $1$, since $H_{\chi_i}$ stabilises $\chi_i$. 
Under these hypotheses, the following result holds true.

\begin{proposition}[{\cite[Prop.~25]{Serre:reps} and~\cite[Thms.~A~and~B,~pp.~42-43]{Mackey}}]
If $\rho$ is an irrep of $H_{\chi_i}$, %**Serre: then $\widetilde{\rho}\coloneqq\rho\circ\pi_i$ is an irrep of $G_{\chi_i}$, where $\pi_i\colon G_{\chi_i}\rightarrow G_{\chi_i}/A=H_{\chi_i}$ is the canonical projection. Also $\widetilde{\chi}_i\otimes \widetilde{\rho}$ is an irrep of $G_{\chi_i}$, 
then $\chi_i\rho$ defined by $\chi_i\rho(ah)=\chi_i(a)\rho(h)$ is an irrep of $G_{\chi_i}$%quell'estensione di carattere e di rappresentazione era solo per portare il tutto su G_{\chi_i}, in modo da vedere lì la rappresentazione \chi\rho di Mackey come prodotto tensore... ma non me frega e me va bene alla Mackey
, which induces an irrep $\theta_{i,\rho}$ of $G$. If $\theta_{i,\rho}$ and $\theta_{i',\rho'}$ are equivalent, then $i=i'$ and $\rho,\rho'$ are equivalent; every irreducible representation of $G$ is isomorphic to one of the $\theta_{i,\rho}$.
\end{proposition}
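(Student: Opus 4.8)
The statement is the classical Wigner--Mackey \emph{method of little groups}, and I would reproduce its standard proof (as in~\cite{Serre:reps}), broken into three parts: (a) for an irrep $\rho$ of $H_{\chi_i}$ the map $\chi_i\rho$ is a well-defined irrep of $G_{\chi_i}=A\rtimes H_{\chi_i}$ and its induction $\theta_{i,\rho}:=\mathrm{Ind}_{G_{\chi_i}}^G(\chi_i\rho)$ is irreducible; (b) the family $\{\theta_{i,\rho}\}$ has no coincidences beyond $i=i'$, $\rho\cong\rho'$; (c) every irrep of $G$ is some $\theta_{i,\rho}$. For~(a) I would first check that $\chi_i\rho(ah):=\chi_i(a)\rho(h)$ respects the product of $G_{\chi_i}$: expanding $(ah)(a'h')=a\,(ha'h^{-1})\,(hh')$ and using $h\in H_{\chi_i}$, i.e.\ $\chi_i(ha'h^{-1})=(\chi_i.h)(a')=\chi_i(a')$, the extra factor cancels, so $\chi_i\rho$ is a homomorphism; since $A$ acts on the representation space by the scalar $\chi_i$, every invariant subspace is automatically $H_{\chi_i}$-invariant, so irreducibility of $\rho$ gives that of $\chi_i\rho$.

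For the irreducibility of $\theta_{i,\rho}$ I would invoke \emph{Mackey's irreducibility criterion}: $\mathrm{Ind}_K^G\sigma$ is irreducible iff $\sigma$ is and, for every $s\in G\setminus K$, the representations $\sigma$ and ${}^s\sigma:=\sigma(s^{-1}\,\cdot\,s)$ of $K\cap sKs^{-1}$ are disjoint. Here $K=G_{\chi_i}$ contains the normal subgroup $A$, hence $A\subseteq K\cap sKs^{-1}$ for all $s$. Writing $s=a_0h_0$ with $a_0\in A$, $h_0\in H$, one computes ${}^s\chi_i=\chi_i.h_0^{-1}$ on $A$ (the $a_0$-conjugations cancel because $A$ is abelian); and $s\notin G_{\chi_i}$ is equivalent to $h_0\notin H_{\chi_i}$, i.e.\ $\chi_i.h_0^{-1}\neq\chi_i$. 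Thus $\mathrm{Res}_A(\chi_i\rho)$ is $\chi_i$-isotypic while $\mathrm{Res}_A\big({}^s(\chi_i\rho)\big)$ is $(\chi_i.h_0^{-1})$-isotypic for a \emph{different} character, so the two are already disjoint on $A$, a fortiori on $K\cap sKs^{-1}$; the criterion applies and $\theta_{i,\rho}$ is irreducible.

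For~(b) I would restrict $\theta_{i,\rho}$ to $A$: by Mackey's restriction formula (which simplifies since $A$ is normal in $G$ and contained in $G_{\chi_i}$), $\mathrm{Res}_A\theta_{i,\rho}=\bigoplus_{\chi\in\mathcal{O}_{\chi_i}}\chi^{\oplus\dim\rho}$, so $\theta_{i,\rho}$ determines the orbit $\mathcal{O}_{\chi_i}$, forcing $i=i'$ whenever $\theta_{i,\rho}\cong\theta_{i',\rho'}$; Frobenius reciprocity (or inspecting the $\chi_i$-isotypic block) then gives $\rho\cong\rho'$. For~(c), given an irrep $\sigma$ of $G$ on $V$, I would decompose $V=\bigoplus_\chi V_\chi$ into $A$-isotypic components; $G$ permutes these blocks compatibly with its action on $\widehat{A}$, and irreducibility of $\sigma$ forces the support to be a single orbit $\mathcal{O}_{\chi_i}$. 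The block $W:=V_{\chi_i}$ is $G_{\chi_i}$-stable, and translating $W$ by a transversal of $G_{\chi_i}$ in $G$ identifies $\sigma$ with $\mathrm{Ind}_{G_{\chi_i}}^G(W)$; on $W$, $A$ acts through $\chi_i$ and $H_{\chi_i}$ through some representation $\rho$, necessarily irreducible since $\sigma$ is, so $\sigma\cong\theta_{i,\rho}$. (Alternatively, once~(a) and~(b) are in hand, the dimension count $\sum_{i,\rho}(\dim\theta_{i,\rho})^2=|G|$, using $[G:G_{\chi_i}]=\lvert\mathcal{O}_{\chi_i}\rvert$ and $\sum_\rho(\dim\rho)^2=\lvert H_{\chi_i}\rvert$, closes the argument.) The main obstacle here is not conceptual but bookkeeping: one must state Mackey's restriction and irreducibility results with the correct left/right conventions for the $H$-action on $\widehat{A}$, and verify precisely that ``$s\notin G_{\chi_i}$'' amounts to ``the $H$-part of $s$ moves $\chi_i$'', which is exactly where the normality and commutativity of $A$ are used.
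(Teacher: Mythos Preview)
Your proof is correct and is precisely the standard Wigner--Mackey argument as presented in~\cite[Prop.~25]{Serre:reps}. Note, however, that the paper does not give its own proof of this proposition: it merely cites the result from Serre and Mackey and then applies it as a black box to classify the irreps of $G_p$ in Theorem~\ref{theo:irrepsGp}. So there is nothing to compare against; you have supplied the proof that the paper deliberately omits, and your treatment of the conventions (the right action $(\chi.h)(a)=\chi(hah^{-1})$ and the equivalence $s\notin G_{\chi_i}\Leftrightarrow h_0\notin H_{\chi_i}$) matches the paper's setup exactly.
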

%The set of irreps of $G$ is in one-to-one correspondence with the (disjoint) union over $i\in \widehat{A}/H$ of the set of irreps of $H_i$ up to equivalences; every irrep of $G$ is uniquely located by a pair $(i,\rho)$.

The trivial character $\chi_{\textup{triv}}$ has orbit $\{\chi_{\textup{triv}}\}$ and stabiliser group $H_{\chi_{\textup{triv}}}=H$: this provides those irreps of $G$ induced by the irreps of $H=G/A$. We have $\dim(\theta_{i,\rho})=|\mathcal{O}_{\chi_i}|\dim(\rho)$ %, because $\dim\left(\textup{Ind}_{G_{\chi_i}}^G(V)\right)=[G_{\chi_i}:G]\dim(V)$ and $[G_{\chi_i}:G] = |G|/|G_{\chi_i}| = \frac{|A||H_{\chi_i}|}{|A||H|} = |H_{\chi_i}|/|H| = |\mathcal{O}_{\chi_i}|$
because, as a representation induced by $\chi_i\rho$ of $G_{\chi_i}$ to $G$, the dimension of $\theta_{i,\rho}$ is equal to the dimension of $\chi_i\rho$ times the index of $G_{\chi_i}$ in $G$.

By applying this method to $G_p$, we prove the following.
\begin{theorem}\label{theo:irrepsGp}
For every odd prime $p$, up to equivalences, $G_p$ has the following irreps, up to equivalence:
\begin{itemize}
    \item four one-dimensional irreps, and $\frac{p-1}{2}$ two-dimensional irreps, induced by $\mathrm{D}_{p+1}$;
    \item other $2(p-1)$ $(p+1)$-dimensional irreps, induced by the two irreps of each of the $p-1$ stabiliser subgroups $C_2$ of the orbits of the non-trivial characters of $C_p^2$.
\end{itemize}
\end{theorem}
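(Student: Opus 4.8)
The plan is to run the method of little groups exactly as in the Proposition quoted just above, using the semidirect decomposition $G_p\simeq A\rtimes H$ with $A=N_{p^2}\simeq C_p^2$ (normal, abelian, of exponent $p$) and $H=H_{2(p+1)}\simeq\mathrm{D}_{p+1}$ from Theorem~\ref{theor:structGp}. First I would identify the relevant data: since $A$ has exponent $p$, one has $\widehat A\simeq\Hom_{\mathbb{F}_p}(A,\mathbb{F}_p)\simeq\mathbb{F}_p^2$, and the right action of $H$ on $\widehat A$ given by $(\chi.h)(a)=\chi(hah^{-1})$ is, in coordinates, $\chi\mapsto\widetilde{\psi_p}(h)^\top\chi$. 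A short check shows this is again a faithful action of $\mathrm{D}_{p+1}$ on $\mathbb{F}_p^2$ of the same shape as $\widetilde{\psi_p}$: the image of $r$ is a matrix of determinant $a_0^2-vb_0^2=1$ and characteristic polynomial $\lambda^2-2a_0\lambda+1$, generating a cyclic group $T\leq\GL(2,\mathbb{F}_p)$ of order $p+1$, while the image of $x$ is a reflection normalising and inverting $T$.

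The crux is the orbit analysis of $H$ on $\widehat A\setminus\{0\}$ (which has $p^2-1$ elements). In each case $v$ is a non-square modulo $p$ (for $p\equiv3\bmod4$, $v=-1$ is a non-square; for $p\equiv1\bmod4$, $-1$ is a square and $v=-u$ with $u$ a non-square, so $v$ is a non-square). Hence $\mathbb{F}_{p^2}=\mathbb{F}_p(\sqrt v)$ and $a^2-vb^2$ is the norm form of $\mathbb{F}_{p^2}/\mathbb{F}_p$; under the identification $\mathbb{F}_p^2\simeq\mathbb{F}_{p^2}$ the torus $T$ becomes multiplication by the norm-one subgroup $\mu$ of $\mathbb{F}_{p^2}^\times$, with $|\mu|=p+1$. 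The $T$-action on $\mathbb{F}_{p^2}^\times$ is therefore free, and its orbits are the $(p^2-1)/(p+1)=p-1$ cosets of $\mu$, each of size $p+1$. The reflection $x$ acts as $z\mapsto-\bar z$, and since $N(\bar z/z)=N(\bar z)/N(z)=1$ one has $\bar z\in\mu z$, so $x$ preserves every $T$-orbit setwise. Thus $\mathrm{D}_{p+1}$ has the trivial orbit $\{\chi_{\textup{triv}}\}$ together with exactly $p-1$ further orbits, each of cardinality $p+1$; by orbit--stabiliser the stabiliser $H_{\chi_i}$ of a non-trivial character has order $2(p+1)/(p+1)=2$, i.e. $H_{\chi_i}\simeq C_2$ generated by a reflection, while $H_{\chi_{\textup{triv}}}=H=\mathrm{D}_{p+1}$.

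Finally I would assemble the irreps using $\dim\theta_{i,\rho}=|\mathcal{O}_{\chi_i}|\dim\rho$. For the trivial orbit, $G_{\chi_{\textup{triv}}}=G_p$ and the $\theta_{i,\rho}$ are just the inflations to $G_p$ of the irreps of $\mathrm{D}_{p+1}$; since $p$ is odd, $p+1$ is even, so $\mathrm{D}_{p+1}$ has four one-dimensional irreps and $\tfrac{p+1}{2}-1=\tfrac{p-1}{2}$ two-dimensional ones --- giving the first bullet. For each of the $p-1$ non-trivial orbits, $G_{\chi_i}=C_p^2\rtimes C_2$ and $C_2$ has precisely two (one-dimensional) irreps, each of which yields an irrep of $G_p$ of dimension $|\mathcal{O}_{\chi_i}|\cdot1=p+1$; this gives $2(p-1)$ irreps of dimension $p+1$, the second bullet. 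By the quoted Proposition these are pairwise inequivalent and exhaust $\widehat{G_p}$; as a consistency check, $4\cdot1+\tfrac{p-1}{2}\cdot4+2(p-1)(p+1)^2=2p^2(p+1)=|G_p|$. I expect the main obstacle to be the orbit computation of the second bullet --- recognising $\langle\widetilde{\psi_p}(r)\rangle$ as a non-split torus, transporting the action to the model over $\mathbb{F}_{p^2}$, and verifying that the reflection fixes each torus orbit; the remaining steps are routine, using the standard character tables of $\mathrm{D}_{p+1}$ and $C_2$ and the induction dimension formula.
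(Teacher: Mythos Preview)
Your proof is correct and follows essentially the same route as the paper: both apply Wigner--Mackey to $G_p\simeq C_p^2\rtimes\mathrm{D}_{p+1}$, identify the action on $\widehat{C_p^2}\simeq\mathbb{F}_p^2$ via $\widetilde{\psi_p}^\top$, pass to the model $\mathbb{F}_{p^2}=\mathbb{F}_p(\sqrt{v})$ where the torus becomes multiplication by the norm-one subgroup, obtain $p-1$ non-trivial orbits of size $p+1$ with stabiliser $C_2$, and finish with the standard dimension check. The only cosmetic difference is that the paper phrases the orbit invariant as the value of the quadratic form $Q_{-v}$ (i.e.\ the norm), whereas you describe the orbits directly as cosets of the norm-one subgroup and then check that the reflection preserves them---these are equivalent since the norm map identifies those cosets with $\mathbb{F}_p^\times$.
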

\begin{proof}
Recall that $G_p\simeq N_{p^2}\rtimes H_{2(p+1)}$ from Theorem~\ref{theor:structGp}, where $N_{p^2}\simeq (\ZZ/p\ZZ)^2$ and $H_{2(p+1)}\simeq \mathrm{D}_{p+1}$. The group $\widehat{\ZZ/p\ZZ}\simeq \ZZ/p\ZZ$ is composed of the characters $\chi_k$ defined by $\chi_k(1)\coloneqq e^{\frac{2\pi i}{p}k}$ for $k\in\ZZ/p\ZZ$. Then, the characters of $(\ZZ/p\ZZ)^2$ are $\chi_{k_1,k_2}$ defined by $\chi_{k_1,k_2}(c,d)\coloneqq\chi_{k_1}(c)\chi_{k_2}(d)=e^{\frac{2\pi i}{p}(k_1c+k_2d)}$ for $c,d\in\ZZ/p\ZZ$, for $k_1,k_2\in\ZZ/p\ZZ$. This provides the dual group $\widehat{N_{p^2}}$ according to the isomorphism~\eqref{eq:normalparGpcoZpZp}. %We can visualise $N_{p^2}\simeq\mathbb{F}_p^2$ by column vectors as $L(1,0,c,d,1)\leftrightarrow \begin{pmatrix} c\\ d\end{pmatrix}$, and $\widehat{N_{p^2}}\simeq\mathbb{F}_p^2$ by row vectors as $\chi_{k_1,k_2}\leftrightarrow \begin{pmatrix} k_1& k_2\end{pmatrix}$, and we have $\chi_{k_1,k_2}(c,d)=\exp\left[\frac{2\pi i}{p}\begin{pmatrix} k_1& k_2\end{pmatrix}\begin{pmatrix} c\\ d\end{pmatrix}\right]$.

The action of $H_{2(p+1)}$ on the characters in $\widehat{N_{p^2}}$ is given by $\left(\chi_{k_1,k_2}.H\right)(N)=\chi_{k_1,k_2}\left(HNH^{-1}\right)=\chi_{k_1,k_2}\big(\psi_p(H)(N)\big)$. If $N=L(1,0,c,d,1)$ in $N_{p^2}$ corresponds to $(c,d)$ in $(\ZZ/p\ZZ)^2\simeq \mathbb{F}_p^2$, and $H$ in $H_{2(p+1)}$ corresponds to $h$ in $\mathrm{D}_{p+1}$, then $\chi_{k_1,k_2}\big(\psi_p(H)(N)\big)=\exp\left[\frac{2\pi i}{p}\begin{pmatrix} k_1& k_2\end{pmatrix}\begin{pmatrix}\widetilde{\psi_p}(h)\end{pmatrix}\begin{pmatrix} c\\ d\end{pmatrix}\right]=\exp\left[\frac{2\pi i}{p}\left(\begin{pmatrix}\widetilde{\psi_p}(h)\end{pmatrix}^\top\begin{pmatrix} k_1\\ k_2\end{pmatrix}\right)^\top\begin{pmatrix} c\\ d\end{pmatrix}\right]$, which means that $\chi_{k_1,k_2}.H = \chi_{\tiny{\left(\begin{pmatrix}\widetilde{\psi_p}(h)\end{pmatrix}^\top\begin{pmatrix} k_1\\ k_2\end{pmatrix}\right)^\top}}$. For every $h\in \mathrm{D}_{p+1}$, %we have $\begin{pmatrix}\widetilde{\psi_p}(h)\end{pmatrix}^\top\in\SO(2)_{p,-v}\bmod p$, i.e. 
$\begin{pmatrix}\widetilde{\psi_p}(h)\end{pmatrix}^\top$ is orthogonal with respect to the quadratic form $Q_{-v}$ of matrix representation $A_{-v}=\diag(1,-v)\bmod p$:
\beq 
\begin{pmatrix}a_0 & -b_0\\ -vb_0 & a_0 \end{pmatrix}\begin{pmatrix}1&0\\0&-v \end{pmatrix} \begin{pmatrix}a_0 & -b_0\\ -vb_0 & a_0 \end{pmatrix}^\top =
\begin{pmatrix}-1 & 0\\ 0 & 1 \end{pmatrix}\begin{pmatrix}1&0\\0&-v \end{pmatrix} \begin{pmatrix}-1 & 0\\ 0 & 1 \end{pmatrix}^\top =
\begin{pmatrix}1&0\\0&-v \end{pmatrix}.
\eeq 
Therefore, the value of the quadratic form $Q_{-v}$, redefined on $(\ZZ/p\ZZ)^2$, %i.e. $Q_{-v}((k_1,k_2))=k_1^2-vk_2^2$, 
is the invariant defining the orbits by $H_{2(p+1)}$ on $\widehat{N_{p^2}}$.

Since $v$ is a non-square in the Galois field $\mathbb{F}_p\simeq\ZZ/p\ZZ$, the ring $\mathbb{F}_p[x]/(x^2-v)$ is a field, isomorphic to the quadratic field extension $\mathbb{F}_p[i]$ with $i$ denoting a root of $x^2=v$. This is in turn isomorphic to $\mathbb{F}_p+i\mathbb{F}_p\simeq\mathbb{F}_{p^2}$. The Galois group $\mathrm{Aut}(\mathbb{F}_p[i]/\mathbb{F}_p) = \{\textup{automorphisms }\psi \colon \mathbb{F}_p[i]\rightarrow \mathbb{F}_p[i]\midd \psi(x) = x \textup{ for all }x \in \mathbb{F}_p\}$ contains only two elements: the identity and the non-trivial automorphism $i\mapsto-i$. The latter maps  $z=k_1+ik_2\in \mathbb{F}_p[i]$ to $\overline{z}\coloneqq k_1-ik_2$, and is equivalently expressed as $z\mapsto z^p$~\cite{galois}. The quadratic form $Q_{-v}$ on $(\ZZ/p\ZZ)^2$ corresponds to the (squared) norm $\mathcal{N}\colon \mathbb{F}_p[i]\rightarrow\mathbb{F}_p$, $\mathcal{N}(z)=z\overline{z}=z^{p+1}=k_1^2-vk_2^2$. Clearly $\mathcal{N}$ is a group homomorphism. Also, $\mathbb{F}_p[i]^\times$ with the multiplicative operation is a cyclic group (since the $\mathbb{F}_p[i]$ is a finite field) of order $p^2-1$, and $\ker(\left.\mathcal{N}\right\rvert_{\mathbb{F}_p[i]^\times})=\{z\in \mathbb{F}_p[i]^\times\midd z^{p+1}=1\}$ has cardinality $p+1$ (cf.~\cite[App.~A]{our2nd}). By the first isomorphism theorem, %Immagine=Dominio/Kernel
$\abs{\mathcal{N}\left(\mathbb{F}_p[i]^\times\right)}=\frac{\abs{\mathbb{F}_p[i]^\times}}{\abs{\ker(\left.\mathcal{N}\right\rvert_{\mathbb{F}_p[i]^\times})}}=\frac{p^2-1}{p+1}=p-1$, i.e., $\left.\mathcal{N}\right\rvert_{\mathbb{F}_p[i]^\times}$ is a surjective homomorphism. Therefore $Q_{-v}((\ZZ/p\ZZ)^2)=\ZZ/p\ZZ$, meaning that for every $t\in \ZZ/p\ZZ$ the equation $k_1^2-vk_2^2=t$ admits solution $(k_1,k_2)\in (\ZZ/p\ZZ)^2$. If $t=0$, then $k_1^2-vk_2^2=0$ has unique solution $(k_1,k_2)=(0,0)$, since $v$ is not a square and $Q_{-v}$ is non-isotropic. If $t\neq0$, then the set of solutions corresponds to $\{z\in \mathbb{F}_p[i]\midd \mathcal{N}(z)=t\}=\{kz_0\midd k\in\ker(\mathcal{N})\}$ where $z_0\in \mathbb{F}_p[i]^\times$, $\mathcal{N}(z_0)=t$%siccome \mathcal{N} è un homomorphismo suriettivo, tutte le fibre (le pre-immagini) sono "copie traslate" del Kernel.
, and has always cardinality $p+1$. Indeed, the action of the rotation part $C_{p+1}$ of $\widetilde{\psi_p}\left(\mathrm{D}_{p+1}\right)^\top$ is transitive on the vectors in $\mathbb{F}_p^2$ of fixed norm, and each set of solution has structure of cyclic group $C_{p+1}$. Finally, the orbits by $H_{2(p+1)}$ on $\widehat{N_{p^2}}$ are given as follows: for every $t\in \ZZ/p\ZZ$, pick $(k_1,k_2)\in (\ZZ/p\ZZ)^2$ such that $Q_{-v}(k_1,k_2)=t$, and
\beq 
\mathcal{O}_{\chi_{k_1,k_2}}=\{\chi_{k_1',k_2'}\midd (k_1',k_2')\in(\ZZ/p\ZZ)^2 \textup{ with }Q_{-v}(k_1',k_2')=Q_{-v}(k_1,k_2)\}.
\eeq 
Each of these orbits has cardinality $p+1$, except from $\abs{\mathcal{O}_{\chi_{0,0}}}=\abs{\chi_{0,0}}=1$. There are a total of $p$ orbits.

Every stabiliser group $H_{\chi_{k_1,k_2}}$ is of order $\abs{H_{\chi_{k_1,k_2}}}=\frac{\abs{H_{2(p+1)}}}{\abs{\mathcal{O}_{\chi_{k_1,k_2}}}} = 2$, i.e. $H_{\chi_{k_1,k_2}}\simeq C_2$, except from $H_{\chi_{0,0}}=H_{2(p+1)}\simeq \mathrm{D}_{p+1}$. Each of the proper $p-1$ subgroups $H_{\chi_{k_1,k_2}}\simeq C_2$ induces two irreps of dimension $\abs{\mathcal{O}_{\chi_{k_1,k_2}}}\cdot 1=p+1$. We check that these are all and the only inequivalent irreps of $G_p$ by the fact that the sum of the squares of the dimensions of all the irreps of $G_p$ must equal $\abs{G_p}$: $1^2\cdot4+2^2\cdot \frac{p-1}{2}+(p+1)^22(p-1)=2p^2(p+1)$.
\end{proof}

%*%*%*%*%*%*  Qui, si potrebbero specificatamente trovare i due gruppi stabilizzatori isomorfi a $C_2$ che inducono le quattro irreps $4$-dimensionali di $G_3$. Si potrebbero cosi indurre esplicitamente queste $4$-dimensional irreps di $G_3$, piuttosto che farle calcolare da GAP w.r.t basi scomode in Subsec.~\ref{subsec:irreps4G3}   %*%*%*

\begin{remark}\label{rem:solo4Dp3}
According to Theorem~\ref{theo:irrepsGp}, $G_3$ has four $4$-dimensional irreps, while for every prime $p>3$ the group $G_p$ has no $4$-dimensional irreps ($\SO(3)_p$ could have $4$-dimensional irreps factorising through $G_{p^k}$ for some $k>1$).
\end{remark}

For every odd prime $p$, we explicitly write the irreps of $G_p$ induced by $\mathrm{D}_{p+1}$. Also, we complete this section with structure and irreps of $G_2$. These irreps will constitute the main objects of our forthcoming results. 

For every odd prime $p$, consider $n\coloneqq p+1$, which is an even number greater than or equal to $4$. The dihedral group $\mathrm{D}_n$ has the following irreps, up to equivalence~\cite{Serre:reps}: four one-dimensional irreps,
\begin{itemize}
\item the trivial representation $\sigma_{\text{1D}}^\text{triv}$, mapping all elements to $1$, 
\item the representation $\sigma_{\text{1D}}^{(1)}$ which maps all the elements in $\langle r \rangle$ to $1$ and all the others to $-1$,
\item the representation $\sigma_{\text{1D}}^{(2)}$ which maps all the elements in $\langle r^2, x \rangle$ to $1$ and all the others to $-1$,
\item the representation $\sigma_{\text{1D}}^{(3)}$ which maps all the elements in $\langle r^2, rx \rangle$ to $1$ and all the others to $-1$,
\end{itemize}
and $\frac{n-2}{2}$ two-dimensional irreps: for $j=1,\dots,\frac{n-2}{2}$,
\beq\label{eq:repnD4caz}
\sigma_{n-1}^{(j)}\colon\quad r\mapsto
 \begin{pmatrix}
        \cos\frac{2\pi j}{n} & -\sin\frac{2\pi j}{n}\\
        \sin\frac{2\pi j}{n} & \cos\frac{2\pi j}{n}
       \end{pmatrix},
 \quad
 x\mapsto
 \begin{pmatrix}
        1 & 0\\
        0 & -1
       \end{pmatrix}.
\eeq
The one-dimensional irreps of $G_p$ induced by $\mathrm{D}_{p+1}$ are as follows, up to equivalence: 
\beq \label{1dirrepsDp+1Gp}
\mathdutchcal{triv}\coloneqq \sigma_{\text{1D}}^\text{triv} \circ \upvarphi_p \circ \mathdutchcal{F}_p, \quad \mathdutchcal{s}\coloneqq \sigma_{\text{1D}}^{(1)} \circ \upvarphi_p \circ \mathdutchcal{F}_p,\quad \mathdutchcal{t}\coloneqq \sigma_{\text{1D}}^{(2)} \circ \upvarphi_p \circ \mathdutchcal{F}_p, \quad \mathdutchcal{st}\coloneqq \sigma_{\text{1D}}^{(3)} \circ \upvarphi_p \circ \mathdutchcal{F}_p.
\eeq

For $p=2$, there exists an isomorphism $\upphi_2$ from $G_2$ to $\mathrm{D}_3$~\cite[Sec.~VII]{our2nd}:
\beq 
G_2\simeq \mathrm{D}_3.
\eeq 
$\mathrm{D}_3\simeq \mathrm{S}_3$ has the following irreps, up to equivalence:
\begin{itemize}
    \item the trivial representation $\operatorname{TRIV}$, sending all elements to $1$,
    \item the sign representation $\operatorname{SIGN}$, which maps %all rotations
    $\langle r\rangle \simeq \langle (123)\rangle  \mapsto1$ and the other elements %the reflections $x,rx,r^2x\mapsto-1$
    to $-1$,
    \item the two-dimensional irrep
\beq \label{eq:D3sigma2irrep}
\sigma_2\colon\quad r\mapsto\begin{pmatrix}
        -\frac{1}{2}&-\frac{\sqrt{3}}{2}\\\frac{\sqrt{3}}{2}&-\frac{1}{2}\end{pmatrix}%che corrisponde a \sigma_5^{(2)}, cioè a una rotazione di 120°, ossia ruoto antiorario per 2 vertici del triangolo, o ruoto in verso orario 1 vertice del triangolo
        ,\quad  x\mapsto
 \begin{pmatrix}
        1 & 0\\
        0 & -1
       \end{pmatrix}.
\eeq 
\end{itemize}

\begin{theorem}[\cite{our2nd}]\label{thrortheorpqub}
With the above notation, for every prime $p$, there exists at least a $2$-dimensional irrep of $G_p$,
\beq \label{eq:GpdaDp+1}
\rho_{2,1}\coloneqq \sigma_2\circ \upphi_2\ \ \textup{for}\ \ p=2,\qquad \rho_{p,1}^{(j)}\coloneqq \sigma_p^{(j)}\circ\upvarphi_p\circ \mathdutchcal{F}_p\ \ \textup{for}\ \ p\textup{ odd},
\eeq 
and there is at least a $p$-adic qubit,
\begin{align}
(\CC^2,\rho_2) &\ \ \textup{where}\ \  \rho_2\coloneqq \rho_{2,1}\circ\pi_1\colon \SO(3)_2\rightarrow\mathrm{O}(2),\quad \textup{for}\ \ p=2,\label{eq:2qubitirr}\\
(\CC^2,\rho_p^{(j)}) &\ \ \textup{where}\ \ \rho_p^{(j)}\coloneqq \rho_{p,1}^{(j)}\circ\pi_1\colon \SO(3)_p\rightarrow\mathrm{O}(2),\quad \textup{for}\ \ p\textup{ odd}, \label{eq:pqubitirr}
\end{align}
for $j=1,\dots,\frac{p-1}{2}$.
\end{theorem}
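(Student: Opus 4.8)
The plan is to recognise that this statement is essentially a corollary of the structural results already in hand: Theorem~\ref{theor:structGp} (the decomposition $G_p\simeq C_p^2\rtimes\mathrm{D}_{p+1}$ for odd $p$), Theorem~\ref{theo:irrepsGp} (the explicit list of irreps of $G_p$), and the lifting mechanism of Eq.~\eqref{eq:factorisationreps}/Theorem~\ref{cor:fattrepNOP}. I would organise the argument in two stages --- first producing a two-dimensional unitary irrep of the finite group $G_p$, then lifting it along $\pi_1$ to $\SO(3)_p$ --- and treat the cases $p$ odd and $p=2$ separately, since $G_2\simeq\mathrm{D}_3$ is too small for the generic construction.

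\emph{Odd $p$.} I would start from the surjective homomorphism $\upvarphi_p\circ\mathdutchcal{F}_p\colon G_p\twoheadrightarrow\mathrm{D}_{p+1}$ built in Eqs.~\eqref{eq:homochepoiisop}--\eqref{eq:giaiso} (it is the quotient map $G_p\to G_p/N_{p^2}$ post-composed with an isomorphism). Setting $n\coloneqq p+1$, which is even and $\geq4$ for every odd prime $p$, the dihedral group $\mathrm{D}_n$ carries $\frac{n-2}{2}=\frac{p-1}{2}\geq1$ genuinely two-dimensional irreps $\sigma_{n-1}^{(j)}=\sigma_p^{(j)}$, $j=1,\dots,\frac{p-1}{2}$, given by Eq.~\eqref{eq:repnD4caz}, whose images already lie in $\mathrm{O}(2)\subset\U(2)$. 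Pulling back, $\rho_{p,1}^{(j)}\coloneqq\sigma_p^{(j)}\circ\upvarphi_p\circ\mathdutchcal{F}_p\colon G_p\to\mathrm{O}(2)$ is a two-dimensional unitary representation, and it is irreducible: since $\upvarphi_p\circ\mathdutchcal{F}_p$ is onto, any $\rho_{p,1}^{(j)}(G_p)$-invariant subspace of $\CC^2$ is $\sigma_p^{(j)}(\mathrm{D}_{p+1})$-invariant, hence $\{\mathbf 0\}$ or $\CC^2$. These are precisely the two-dimensional irreps counted in Theorem~\ref{theo:irrepsGp}.

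\emph{$p=2$.} Here I would invoke the isomorphism $\upphi_2\colon G_2\to\mathrm{D}_3$ from~\cite[Sec.~VII]{our2nd} together with the two-dimensional irrep $\sigma_2$ of $\mathrm{D}_3\simeq\mathrm{S}_3$ in Eq.~\eqref{eq:D3sigma2irrep}, again with image in $\mathrm{O}(2)$. Then $\rho_{2,1}\coloneqq\sigma_2\circ\upphi_2\colon G_2\to\mathrm{O}(2)$ is a two-dimensional unitary irrep, irreducibility being immediate as $\upphi_2$ is an isomorphism. This settles the first assertion for every prime.

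\emph{Lifting to $\SO(3)_p$.} Finally I would precompose with the reduction map $\pi_1\colon\SO(3)_p\twoheadrightarrow G_p$ of Eq.~\eqref{eq:grouphoprpi}, exactly as in Eq.~\eqref{eq:factorisationreps}, setting $\rho_p^{(j)}\coloneqq\rho_{p,1}^{(j)}\circ\pi_1$ for $p$ odd and $\rho_2\coloneqq\rho_{2,1}\circ\pi_1$ for $p=2$. Continuity is automatic since $\pi_1$ is continuous and $G_p$ is discrete, and irreducibility is preserved because $\pi_1$ is surjective (the same invariant-subspace argument, or directly Theorem~\ref{cor:fattrepNOP}). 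Hence each $\rho_p^{(j)}$ (resp.\ $\rho_2$) is a two-dimensional linear unitary irrep of $\SO(3)_p$, in particular a projective unitary irrep, so $(\CC^2,\rho_p^{(j)})$ (resp.\ $(\CC^2,\rho_2)$) is a $p$-adic qubit. I do not expect a genuine obstacle: the real work is already done in Theorems~\ref{theor:structGp}--\ref{theo:irrepsGp}, and the only points requiring a line of care are the stability of irreducibility under pullback along the surjections $\upvarphi_p\circ\mathdutchcal{F}_p$ and $\pi_1$, and the elementary count $\frac{p-1}{2}\geq1$ ensuring that $\mathrm{D}_{p+1}$ has at least one two-dimensional irrep.
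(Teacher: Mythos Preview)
Your proposal is correct and follows exactly the route the paper lays out: the theorem is quoted from~\cite{our2nd} without a self-contained proof here, and the surrounding text (the homomorphisms $\mathdutchcal{F}_p$, $\upvarphi_p$, $\upphi_2$, the dihedral irreps of Eqs.~\eqref{eq:repnD4caz} and~\eqref{eq:D3sigma2irrep}, and the lifting of Eq.~\eqref{eq:factorisationreps}) is precisely the skeleton you flesh out. Your handling of irreducibility under pullback along the surjections $\upvarphi_p\circ\mathdutchcal{F}_p$ and $\pi_1$ is the only point requiring an explicit sentence, and you supply it correctly.
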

Note that for $p=2,3$ there exists a unique $p$-adic qubit arising from the irreps of $G_p$, while this is no longer the case for larger primes.

%%%%%%%%%%%%%%%%%%%%%%%%%%%%%%%%%%%%%%%%%%%%%%%%%%%%%%%%%%%%%%%%%%%%%%%%
\section{Clebsch-Gordan problem for \texorpdfstring{$p$}{Lg}-adic qubits}\label{CBdecocoedf}
We consider systems of more $p$-adic qubits, modelled by the tensor product of the corresponding two-dimensional irreps of $\SO(3)_p$. The \emph{Clebsch-Gordan problem} consists of identifying the %unitary 
transformation that decomposes the \emph{tensor-product representation} into a direct sum of irreps. The \emph{Clebsch-Gordan coefficients} mediate the change of basis between the uncoupled system (i.e., tensor product of individual spins) and the coupled system (with basis formed by the eigenstates of the total spin operator).

More generally, for $i$ ranging in a finite set, let ${\mathdutchcal{U}_{p}^{(i)}}\colon\SO(3)_p\rightarrow \U(\cH_i)$ be an irrep of $\SO(3)_p$. By Theorem~\ref{cor:fattrepNOP}, for every $i$, there exists $k_i\in\NN$ and an irrep ${\mathdutchcal{U}_{p,k_i}^{(i)}}\colon G_{p^{k_i}}\rightarrow \U(\cH_i)$ such that ${\mathdutchcal{U}_{p}^{(i)}} = {\mathdutchcal{U}_{p,k_i}^{(i)}}\circ\pi_{k_i}$. Take $k\coloneqq\max_i\{k_i\}$, so that ${\mathdutchcal{U}_{p,k}^{(i)}}\coloneqq {\mathdutchcal{U}_{p,k_i}^{(i)}}\circ\pi_{k_i,k}$ is an irrep of $G_{p^k}$ and
\beq 
\bigotimes_i{\mathdutchcal{U}_{p}^{(i)}}=\bigotimes_i \left({\mathdutchcal{U}_{p,k_i}^{(i)}}\circ\pi_{k_i}\right) = \bigotimes_i\left({\mathdutchcal{U}_{p,k}^{(i)}}\circ\pi_k\right) = \bigotimes_i{\mathdutchcal{U}_{p,k}^{(i)}}\circ\pi_k.
\eeq 
The \emph{Clebsch-Gordan decomposition} of this tensor-product representation is a direct sum of irreps of $G_{p^k}$. %Some of these representation could factorise through $G_{p^m}$ for $m<k$, however 
%From this decomposition, no representation arises for which $m>k$ is the minimum value such that it factorises through $G_{p^m}$. %Such a Clebsch-Gordan decomposition could be useful to find unknown representations for which $k$ is the minimum to factorise on $G_{p^k}$, once known only one such representation in tensor product with representations factorising on a ``lower level''.
%però io posso mette insieme il prod tensore di una che fattorizza a mod p, e una che fattorizza a mod p^2, e magari ne trovo di nuove che fattorizzano a mod p^2

\medskip 

In this section, we address the Clebsch-Gordan problem for the $p$-adic qubits in Eqs.~\eqref{eq:2qubitirr},~\eqref{eq:pqubitirr}, for every prime $p$. At given $p$, we consider a system of two $p$-adic qubits. This amounts to the Clebsch-Gordan problem for the tensor product of two two-dimensional irreps of the dihedral group $\mathrm{D}_n$, for $n\coloneqq p+1$. The latter is a finite group, hence we shall resort to character theory.

\medskip 

Given representations $\rho$ and $\sigma$ of a finite group $G$, with associated characters $\chi_\rho$ and $\chi_\sigma$ respectively, the character of the tensor-product representation $\rho\ox\sigma$ is given by the product of the single characters: $\chi_{\rho\ox\sigma}(g)\coloneqq \chi_\rho(g)\chi_\sigma(g)$ for $g\in G$. One defines the following inner product between characters $\chi_i$ and $\chi_j$ of representations of $G$: 
\beq \label{eq:prodcaratt}
(\chi_i,\chi_j)_G\coloneqq \frac{1}{\ord(G)}\sum_{k=1}^{\#\textup{CC}}\abs{C_k} \chi_i(C_k)\chi_j(C_k)^\ast,
\eeq 
where $C_k$ is a conjugacy class of $G$, $\#\textup{CC}$ denotes the number of conjugacy classes of $G$, and $\phantom{.}^\ast$ denotes the complex conjugation. If $\chi_i,\chi_j$ are irreducible, then $(\chi_i,\chi_j)_G=\delta_{ij}$ (\emph{first orthogonality relation}% between irreducible characters of $G$, i.e., between two rows of the character table of $G$
). In a direct-sum decomposition of the kind $\cH\ox \cH'\simeq \bigoplus_i\cH_i^{\oplus m_i}$, for every $i$, the multiplicity $m_i$ of an irreducible representation $\cH_i$ within $\cH\ox \cH'$ is given by
\beq \label{eq:prodcarattMUL}
m_i=\left(\chi_{\cH\ox \cH'},\,\chi_{\cH_i}\right)_G.
\eeq 

\medskip 

We develop two different treatments of the Clebsch-Gordan problem for $p=2$ and odd primes. We start with $p=2$. Recalling the irreps of $\mathrm{D}_3$ from Eq.~\eqref{eq:D3sigma2irrep} and above, we find the following character table.
\beq \notag
\begin{array}{ |c|c|c|c| } 
 \hline
\mathrm{D}_3%\simeq\mathrm{S}_3
&\ \{e\}\ & \{r,r^2\}%=(123),(132) tricicli
& \{x,xr,xr^2\}%=(12),(13),(23) scambi
\\  \hline
\chi_{\text{TRIV}} &1 & 1&1 \\  \hline
\chi_{\text{SIGN}} & 1 & 1&-1 \\  \hline
\chi_{\sigma_2} & 2 & -1&0 \\ 
 \hline
\end{array}
\eeq

\begin{proposition}\label{prop:clebsch2deco}
The Clebsch-Gordan decomposition of the second tensor power of the two-dimensional irrep of $\mathrm{D}_3$ is as follows, up to equivalences:
\beq 
\sigma_2^{\otimes2}\simeq \textup{TRIV}\oplus\textup{SIGN}\oplus\sigma_2.
\eeq 
\end{proposition}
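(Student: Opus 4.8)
The plan is to proceed entirely by character theory, using the character table of $\mathrm{D}_3$ displayed just above the statement together with the multiplicity formula~\eqref{eq:prodcarattMUL}. First I would compute the character of the tensor square: since $\chi_{\sigma_2^{\otimes 2}}(g)=\chi_{\sigma_2}(g)^2$ for every $g\in\mathrm{D}_3$, reading the values $(2,-1,0)$ of $\chi_{\sigma_2}$ off the table and squaring componentwise gives the class function $\chi_{\sigma_2^{\otimes 2}}=(4,1,0)$ on the conjugacy classes $\{e\}$, $\{r,r^2\}$, $\{x,xr,xr^2\}$ respectively.

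Next I would extract the multiplicity of each irrep by pairing this character against $\chi_{\text{TRIV}}$, $\chi_{\text{SIGN}}$ and $\chi_{\sigma_2}$ via the inner product~\eqref{eq:prodcaratt}, with $\ord(\mathrm{D}_3)=6$ and class sizes $1,2,3$. A short arithmetic check yields $m_{\text{TRIV}}=\tfrac{1}{6}(4+2+0)=1$, $m_{\text{SIGN}}=\tfrac{1}{6}(4+2-0)=1$ and $m_{\sigma_2}=\tfrac{1}{6}(8-2+0)=1$. By complete reducibility (Theorem~\ref{barurazcdiresumun}), these multiplicities determine $\sigma_2^{\otimes2}$ up to equivalence, giving $\sigma_2^{\otimes2}\simeq\textup{TRIV}\oplus\textup{SIGN}\oplus\sigma_2$.

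As a consistency check I would verify the dimension count $1+1+2=4=(\dim\sigma_2)^2$ and, optionally, compute $(\chi_{\sigma_2^{\otimes2}},\chi_{\sigma_2^{\otimes2}})_{\mathrm{D}_3}=\tfrac{1}{6}(16+2+0)=3$, which independently confirms that the tensor square is multiplicity-free with exactly three irreducible constituents. There is essentially no real obstacle here: the statement reduces to a short and direct computation, the only point requiring care being the use of the correct class cardinalities $1,2,3$ as weights in the sum~\eqref{eq:prodcaratt}.
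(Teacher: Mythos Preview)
Your proposal is correct and follows essentially the same approach as the paper: compute the character of $\sigma_2^{\otimes 2}$ by squaring the values of $\chi_{\sigma_2}$, then read off the multiplicities of $\textup{TRIV}$, $\textup{SIGN}$ and $\sigma_2$ via the inner product~\eqref{eq:prodcaratt} with class sizes $1,2,3$, obtaining $1$ in each case. The only addition you make is the optional consistency check via $(\chi_{\sigma_2^{\otimes2}},\chi_{\sigma_2^{\otimes2}})_{\mathrm{D}_3}=3$ and the dimension count, which the paper omits.
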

\begin{proof}
The character $\chi_{\sigma_2^{\otimes2}}$ of $\sigma_2^{\otimes2}$ is given 
\beq
\chi_{\sigma_2}(e)^2=4,\quad \chi_{\sigma_2}(r)^2=1,\quad \chi_{\sigma_2}(x)^2=0.
\eeq 
We compute the multiplicity of each irrep of $\mathrm{D}_3$ in $\sigma_2^{\otimes2}$ through Eqs.~\eqref{eq:prodcaratt},~\eqref{eq:prodcarattMUL}: 
\begin{align}
&\left(\chi_{\sigma_2^{\otimes2}},\,\chi_{\text{TRIV}}\right)_{\mathrm{D}_3}=\left(\chi_{\sigma_2^{\otimes2}},\,\chi_{\text{SIGN}}\right)_{\mathrm{D}_3}=\frac{1}{6}(1\cdot4\cdot1^\ast+2\cdot1\cdot1^\ast+0)=1;\nonumber\\
&\left(\chi_{\sigma_2^{\otimes2}},\,\chi_{\sigma_2}\right)_{\mathrm{D}_3}=\frac{1}{6}(1\cdot4\cdot2^\ast+2\cdot1\cdot(-1)^\ast+0)=1.
\end{align}
In terms of the dimensions of the stable subspaces, we have $2\otimes 2\simeq1\oplus1\oplus 2$.
\end{proof}
The Clebsch-Gordan decomposition of a system of two $2$-adic qubits as in Theorem~\ref{thrortheorpqub} is
\beq \label{eq:clebscgorddeco2}
\rho_2^{\otimes2}\simeq (\textup{TRIV}\circ\upphi_2\circ\pi_1)\oplus (\textup{SIGN}\circ\upphi_2\circ\pi_1)\oplus (\sigma_2\circ\upphi_2\circ\pi_1).
\eeq

\medskip

Now we assume that $p$ is an odd prime. Here, we consider $\mathrm{D}_n$ with even $n\geq 4$, as in particular we will take $n=p+1$ for the Clebsch-Gordan problem of $p$-adic qubits. Then, $\mathrm{D}_n$ has $\frac{n+6}{2}$ conjugacy classes:
\begin{itemize}
    \item $\{e\}$ and $\{r^{n/2}\}$ of one element each;
    \item $\frac{n-2}{2}$ conjugacy classes of two elements each, i.e., $\{r,r^{n-1}\}$, $\{r^2,r^{n-2}\}$, \dots, $\{r^{\frac{n-2}{2}},r^{\frac{n+2}{2}}\}$;
    \item $\{x,xr^2,\dots, xr^{n-2}\}$ and $\{xr,xr^3,\dots, xr^{n-1}\}$ of $n/2$ elements each.
\end{itemize}
Recall also the irreps of $\mathrm{D}_n$ from Eq.~\eqref{eq:repnD4caz} and above. 
Then, the character table of $\mathrm{D}_n$ can be summarised as follows~\cite{Serre:reps}, for every even $n\geq4$, $k=1,\dots,n$, $j=1,\dots,\frac{n-2}{2}$.
\beq\notag
\begin{array}{ |c|c|c| } 
 \hline
\mathrm{D}_n &r^k &\ xr^k \\  \hline
\chi^\text{triv}_{\text{1D}}&1&1\\  \hline
\chi^{(1)}_{\text{1D}}&1&-1 \\  \hline
\chi^{(2)}_{\text{1D}}&(-1)^k&(-1)^k\\  \hline
\chi^{(3)}_{\text{1D}}&(-1)^k&\quad (-1)^{k+1}\quad \\  \hline
\chi_{n-1}^{(j)}&\quad 2\cos\left(\frac{2\pi jk}{n}\right)\quad &0 \\  \hline
\end{array}
\eeq
The character $\chi_{j,l}$ of a four-dimensional tensor-product representation $\sigma_{n-1}^{(j)}\ox \sigma_{n-1}^{(l)}$ is given by
\beq \label{eq:car2x2GNE}
\chi_{j,l}(r^k)=4\cos\left(\frac{2\pi jk}{n}\right)\cos\left(\frac{2\pi lk}{n}\right),\qquad \chi_{i,j}(xr^k)=0,
\eeq 
for $k=1,\dots,n$. Dimensionally, the Clebsch-Gordan decomposition of some $\sigma_{n-1}^{(j)}\ox \sigma_{n-1}^{(l)}$ is one of the following kinds: 
\begin{align*}
&2\ox 2\simeq 1\oplus1\oplus1\oplus1;\\
&2\ox 2\simeq 1\oplus1\oplus2;\\
&2\ox 2\simeq 2\oplus2.
\end{align*}
The multiplicity of each irrep of $\mathrm{D}_n$ in any $\sigma_{n-1}^{(j)}\ox \sigma_{n-1}^{(l)}$ is given by the following inner products between characters: for every $j,l,m=1,\dots,\frac{n-2}{2}$,
\begin{align}
\left(\chi_{j,l}, \chi^{\text{triv}}_{\text{1D}}\right)_{\mathrm{D}_n}&=\left(\chi_{j,l}, \chi^{(1)}_{\text{1D}}\right)_{\mathrm{D}_n}\nonumber\\
& =\frac{2}{n}\left(1+(-1)^{j+l}+2\sum_{k=1}^{(n-2)/2}\cos\left(\frac{2\pi jk}{n}\right)\cos\left(\frac{2\pi lk}{n}\right)\right);\label{eq:permultGENCB1}\\
\left(\chi_{j,l}, \chi^{(2)}_{\text{1D}}\right)_{\mathrm{D}_n}&=\left(\chi_{j,l}, \chi^{(3)}_{\text{1D}}\right)_{\mathrm{D}_n}\nonumber\\
& =\frac{2}{n}\left(1+(-1)^{j+l+n/2}+2\sum_{k=1}^{(n-2)/2}(-1)^k\cos\left(\frac{2\pi jk}{n}\right)\cos\left(\frac{2\pi lk}{n}\right)\right);\label{eq:permultGENCB2}\\
\left(\chi_{j,l}, \chi_{n-1}^{(m)}\right)_{\mathrm{D}_n}&=
\frac{4}{n}\left(1+(-1)^{j+l+m}+2\sum_{k=1}^{(n-2)/2}\cos\left(\frac{2\pi jk}{n}\right)\cos\left(\frac{2\pi lk}{n}\right) \cos\left(\frac{2\pi mk}{n}\right)\right).\label{eq:permultGENCB3}
\end{align}
\begin{remark}
In the Clebsch-Gordan decomposition of $\sigma_{n-1}^{(j)}\ox \sigma_{n-1}^{(l)}$, the one-dimensional irreps $\sigma_{\textup{triv}}^{\textup{1D}}$ and $\sigma_1^{\textup{1D}}$ always come in pair, as well as $\sigma_2^{\textup{1D}}$ with $\sigma_3^{\textup{1D}}$. 
\end{remark}

It is important to detect whether two tensor-product representations $\sigma_{n-1}^{(j)}\ox \sigma_{n-1}^{(l)}$ and $\sigma_{n-1}^{(m)}\ox \sigma_{n-1}^{(q)}$ are equivalent, i.e., they share the same character. According to Eq.~\eqref{eq:car2x2GNE}, this happens if and only if
\beq \label{eq:pqubequiviff}
\cos\left(\frac{2\pi jk}{n}\right)\cos\left(\frac{2\pi lk}{n}\right) = 
\cos\left(\frac{2\pi mk}{n}\right)\cos\left(\frac{2\pi qk}{n}\right),
\eeq 
for every $k=1,\dots, n$. %Setting $n=p+1$, this provides equivalent systems of two $p$-adic qubits. We shall decompose those tensor-product representations into a direct sum of irreps of $\mathrm{D}_{p+1}$, to find the Clebsch-Gordan decomposition and coefficients of a system of two $p$-adic qubits.

%\begin{remark}\label{rem:simmetria} Eq.~\eqref{eq:pqubequiviff} holds if $(j,l)=(m,q),(q,m)$, for every $j,l,m,q=1,\dots\frac{n-2}{2}$. Also, Eqs.~\eqref{eq:permultGENCB1},~\eqref{eq:permultGENCB2},~\eqref{eq:permultGENCB3} are symmetric in $j$ and $l$. \end{remark}

\begin{proposition}\label{prop:clebschpdeco}
Let $n\geq4$ be an even number. The Clebsch-Gordan decomposition of the tensor product of two two-dimensional irreps of $\mathrm{D}_n$ is as follows, up to equivalences. If $n/2$ is even, 
\beq \label{eq:p+1/42unica1}
{\sigma_{n-1}^{(n/4)}}^{\ox2}\simeq 
\sigma^{\textup{triv}}_{\textup{1D}}\oplus \sigma^{(1)}_{\textup{1D}} \oplus
\sigma^{(2)}_{\textup{1D}}\oplus \sigma^{(3)}_{\textup{1D}}.
\eeq 
For every $n\geq 6$, $j\in [1,n/4)\cap\NN$%ossia: When $n/2$ is odd, for every $j=1,\dots,\frac{n-2}{4}$, and when $n>4$, $n/2$ is even, for every $j=1,\dots,n/4-1$
, 
\beq \label{eq:quadratinoncentrali}
{\sigma_{n-1}^{(j)}}^{\ox2}\simeq {\sigma_{n-1}^{(n/2-j)}}^{\ox2}\simeq \sigma^{\textup{triv}}_{\textup{1D}}\oplus \sigma^{(1)}_{\textup{1D}}\oplus\sigma_{n-1}^{(2j)},
\eeq 
and %ossia j< l, non n/4 entrambi
\beq \label{prodsimmdalcentro}
\sigma_{n-1}^{(j)}\ox \sigma_{n-1}^{(n/2-j)}\simeq \sigma_{n-1}^{(n/2-j)} \ox \sigma_{n-1}^{(j)}\simeq \sigma_{\textup{1D}}^{(2)}\oplus \sigma_{\textup{1D}}^{(3)}\oplus \sigma_{n-1}^{(n/2-2j)}.
\eeq
For every $n\geq 8$, $j\in [1,n/4-1]\cap\NN$ and $j'=j'(j)=j+1,\dots,n/2-1-j$ at each $j$,
\begin{align}\label{ultimirinfusa}
&\sigma_{n-1}^{(j)}\ox \sigma_{n-1}^{(n/2-j')}\simeq \sigma_{n-1}^{(n/2-j')}\ox \sigma_{n-1}^{(j)}\nonumber\\
&
\simeq \sigma_{n-1}^{(j')}\ox \sigma_{n-1}^{(n/2-j)} \simeq \sigma_{n-1}^{(n/2-j)}\ox \sigma_{n-1}^{(j')} \nonumber\\
&\simeq \sigma_{n-1}^{(n/2-j-j')}\oplus \sigma_{n-1}^{(n/2+j-j')}.
\end{align}
\end{proposition}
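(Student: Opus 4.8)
The plan is to reduce the whole statement to a character computation in $\mathrm{D}_n$: since $\mathrm{D}_n$ is finite, a representation is pinned down up to equivalence by its character, so it is enough to decompose the character $\chi_{j,l}$ of $\sigma_{n-1}^{(j)}\ox\sigma_{n-1}^{(l)}$ into irreducible characters. By Eq.~\eqref{eq:car2x2GNE} this character is $0$ on every reflection class $xr^k$ and equals $4\cos(2\pi jk/n)\cos(2\pi lk/n)$ on $r^k$. The one identity I would use is product-to-sum, $4\cos\alpha\cos\beta=2\cos(\alpha+\beta)+2\cos(\alpha-\beta)$, which turns the character on $r^k$ into $2\cos(2\pi(j+l)k/n)+2\cos(2\pi(j-l)k/n)$.

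Next I would build a short dictionary. For an arbitrary integer $m$, let $\Xi(m)$ be the class function equal to $2\cos(2\pi mk/n)$ on $r^k$ and $0$ on $xr^k$. Comparing with the $\mathrm{D}_n$ character table, $\Xi(m)$ is the character of $\sigma^{\textup{triv}}_{\textup{1D}}\oplus\sigma^{(1)}_{\textup{1D}}$ when $n\mid m$, of $\sigma^{(2)}_{\textup{1D}}\oplus\sigma^{(3)}_{\textup{1D}}$ when $m\equiv n/2\pmod n$, and of the two-dimensional irrep $\sigma_{n-1}^{(m')}$ otherwise, where $m'\in\{1,\dots,n/2-1\}$ is the reduction of $\pm m$ modulo $n$; the degenerate cases are immediate since $1+(-1)=0$ and $(-1)^k+(-1)^{k+1}=0$ on $xr^k$, and $\chi^{(m)}_{n-1}=\chi^{(n-m)}_{n-1}$ because $\cos$ is even and $n$-periodic. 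Combined with the previous paragraph this gives the master formula $\sigma_{n-1}^{(j)}\ox\sigma_{n-1}^{(l)}\simeq\Xi(j+l)\oplus\Xi(j-l)$, and every decomposition in the statement is then obtained by plugging in the relevant indices. (As a sanity check one could instead verify the multiplicities directly from Eqs.~\eqref{eq:permultGENCB1}--\eqref{eq:permultGENCB3}.)

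The remaining work is bookkeeping with the index ranges. For Eq.~\eqref{eq:p+1/42unica1}, $j=l=n/4$ (which requires $4\mid n$) gives $j-l=0$ and $j+l=n/2$, so both summands degenerate and we get the four one-dimensional irreps. For Eq.~\eqref{eq:quadratinoncentrali}, $j=l\in[1,n/4)$ forces $0<2j<n/2$, so $j+l=2j$ is a genuine two-dimensional index while $j-l=0$; taking $l=j$ or $l=n/2-j$ changes $j+l$ to $2j$ or $n-2j$, which reduce to the same value $2j$, explaining ${\sigma_{n-1}^{(j)}}^{\ox2}\simeq{\sigma_{n-1}^{(n/2-j)}}^{\ox2}$. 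For Eq.~\eqref{prodsimmdalcentro}, $j+l=n/2$ (degenerate) and $j-l=2j-n/2$ has $|j-l|=n/2-2j\in\{1,\dots,n/2-1\}$. For Eq.~\eqref{ultimirinfusa}, the bounds $j+1\le j'\le n/2-1-j$ are precisely what guarantee that $j+l=n/2+j-j'$ and $-(j-l)=n/2-j-j'$ both lie strictly between $0$ and $n/2$, so neither summand degenerates; the chain of equivalences there follows from commutativity of $\ox$ together with $\cos(2\pi(n/2-i)k/n)=(-1)^k\cos(2\pi ik/n)$, which makes $\sigma_{n-1}^{(j)}\ox\sigma_{n-1}^{(n/2-j')}$ and $\sigma_{n-1}^{(j')}\ox\sigma_{n-1}^{(n/2-j)}$ share a character. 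A dimension count ($4=2+2=1+1+2=1+1+1+1$) confirms the lists are exhaustive in each case.

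I expect the only real obstacle to be organisational: tracking the modular reductions $\langle j\pm l\rangle$ carefully and handling the parity of $n/2$, in particular isolating the $j=n/4$ term (present only when $4\mid n$), and checking near the boundary values that an index does not accidentally land on $0$ or $n/2$ and thereby degenerate. Conceptually it is just the product-to-sum identity applied twice, plus reading off the dihedral character table.
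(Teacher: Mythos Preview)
Your proposal is correct and takes a genuinely different, cleaner route than the paper. The paper computes, case by case, the multiplicities $\left(\chi_{j,l},\chi^{\bullet}_{\textup{1D}}\right)_{\mathrm{D}_n}$ and $\left(\chi_{j,l},\chi_{n-1}^{(m)}\right)_{\mathrm{D}_n}$ from Eqs.~\eqref{eq:permultGENCB1}--\eqref{eq:permultGENCB3}, evaluating the resulting trigonometric sums via geometric series (e.g.\ $\sum_{k=1}^{n/2-1}\cos(4\pi jk/n)=-1$), and then argues exhaustiveness by a count of unordered pairs $\{j,l\}$. You instead apply the product-to-sum identity once to rewrite $\chi_{j,l}$ itself as $\Xi(j+l)+\Xi(j-l)$ and read each summand off the character table, obtaining a single master formula $\sigma_{n-1}^{(j)}\ox\sigma_{n-1}^{(l)}\simeq\Xi(j+l)\oplus\Xi(j-l)$ from which every case of the proposition is a specialisation. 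Your approach is shorter and conceptually more transparent---the paper's multiplicity computations are in effect the orthogonality-relation verification of the identity you use directly---while the paper's method has the minor advantage that it never requires recognising which $\Xi(m)$ one is looking at, only checking that a given inner product equals $0$ or $1$. Your bookkeeping of the degenerate indices $0$ and $n/2$ and of the parity of $n/2$ is the same content as the paper's case splits, just organised around the master formula rather than around separate computations.
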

\begin{proof}
Let $j,l=1,\dots,\frac{n-2}{2}$. By Eq.~\eqref{eq:pqubequiviff}, $\sigma_{n-1}^{(j)}\ox \sigma_{n-1}^{(l)}\simeq \sigma_{n-1}^{(l)}\ox \sigma_{n-1}^{(j)}$. Then, the number of unordered pairs $\{j,l\}$ locating a tensor product of two two-dimensional irreps of $\mathrm{D}_n$ up to commutation is the $2$-combination of $\frac{n-2}{2}$ elements with repetition, i.e., $\frac{n(n-2)}{8}$.

If $n/2$ is even, then
\beq \notag 
\left(\chi_{n/4,n/4},\ \chi^{\text{triv}}_{\text{1D}}\right)_{\mathrm{D}_n} =\left(\chi_{n/4,n/4},\ \chi^{(1)}_{\text{1D}}\right)_{\mathrm{D}_n} = \frac{4}{n}\left(1+\sum_{k=1}^{n/2-1}\cos\left(\frac{\pi k}{2}\right)^2\right) = \frac{4}{n}\left(1+\frac{n}{4}-1\right) = 1,
\eeq
and similarly $\left(\chi_{n/4,n/4}, \ \chi^{(2)}_{\text{1D}}\right)_{\mathrm{D}_n} =\left(\chi_{n/4,n/4}, \ \chi^{(3)}_{\text{1D}}\right)_{\mathrm{D}_n} =1$, since only the $n/4-1$ even indices $k$ in the sums in Eqs.~\eqref{eq:permultGENCB1},~\eqref{eq:permultGENCB2} provide non-zero contributions. Therefore, by dimensional analysis, Eq.~\eqref{eq:permultGENCB3} gives $\left(\chi_{n/4,n/4},\ \chi_{n-1}^{(m)}\right)_{\mathrm{D}_n}=0$ for every $m$, and Eq.~\eqref{eq:p+1/42unica1} is proved.

For every $j=1,\dots,\frac{n-2}{2}$, different from $n/4$ if $n/2$ is even, consider $j=l$ and $m=q=n/2-j$ in Eq.~\eqref{eq:pqubequiviff}: its right-hand side becomes $\cos\left(\pi k-\frac{2\pi jk}{n}\right)^2 = \cos\left(\frac{2\pi jk}{n}\right)^2$, which is equal to the left-hand side. This means that ${\sigma_{n-1}^{(j)}}^{\ox2}\simeq {\sigma_{n-1}^{(n/2-j)}}^{\ox2}$. Then, Eq.~\eqref{eq:permultGENCB1} yields
\begin{align*}
\left(\chi_{j,j},\ \chi^{\text{triv}}_{\text{1D}}\right)_{\mathrm{D}_n} & =\frac{4}{n}\left(1+\sum_{k=1}^{n/2-1}\cos\left(\frac{2\pi jk}{n}\right)^2\right) \\
& = 
\frac{4}{n}\left[1+\frac{1}{2}\sum_{k=1}^{n/2-1}\left(1+\cos\left(\frac{4\pi jk}{n}\right)\right)\right]\\
& = \frac{4}{n}\left(1+\frac{1}{2}\left(\frac{n}{2}-1\right)+\frac{1}{2}(-1)\right)= 1,
\end{align*}
where, by defining $\omega_j\coloneqq \exp\left(i\frac{4\pi j}{n}\right)$, we compute 
$\sum_{k=1}^{n/2-1}\cos\left(\frac{4\pi jk}{n}\right)=\Re\sum_{k=1}^{n/2-1}\omega_j = \Re\frac{1-\omega_j^{n/2}}{1-\omega_j}-1=-1$. 
\begin{comment}
Similarly, Eq.~\eqref{eq:permultGENCB2} for even $n/2$ gives
\begin{align*}
\left(\chi_{n/4,n/4},\ \chi^{(2)}_{\text{1D}}\right)_{\mathrm{D}_n} & =\frac{4}{n}\left(1+\sum_{k=1}^{n/2-1}(-1)^k\cos\left(\frac{2\pi jk}{n}\right)^2\right) \\
& = 
\frac{4}{n}\left[1+\frac{1}{2}\sum_{k=1}^{n/2-1}(-1)^k\left(1+\cos\left(\frac{4\pi jk}{n}\right)\right)\right]\\
& = \frac{4}{n}\left(1-\frac{1}{2}-\frac{1}{2}\right)= 0,
\end{align*}
as well as for odd $n/2$, 
\begin{align*}
\left(\chi_{n/4,n/4},\ \chi^{(2)}_{\text{1D}}\right)_{\mathrm{D}_n} & =\frac{4}{n}\sum_{k=1}^{n/2-1}(-1)^k\cos\left(\frac{2\pi jk}{n}\right)^2 \\
& = \frac{4}{n}\left(0+0\right)= 0,
\end{align*}
\end{comment}
Similarly, Eq.~\eqref{eq:permultGENCB3} with $j=l$ and $m=2j$ gives
\begin{align*}
\left(\chi_{j,j}, \chi_{n-1}^{(2j)}\right)_{\mathrm{D}_n} & = \frac{8}{n}\left(1+\sum_{k=1}^{n/2-1}\cos\left(\frac{2\pi jk}{n}\right)^2\cos\left(\frac{4\pi jk}{n}\right)\right)\\
& = \frac{8}{n}\left(1+\frac{1}{2}\sum_{k=1}^{n/2-1}\cos\left(\frac{4\pi jk}{n}\right)+\frac{1}{4}\sum_{k=1}^{n/2-1}1+\frac{1}{4}\sum_{k=1}^{n/2-1}\cos\left(\frac{8\pi jk}{n}\right)\right)\\
& =  \frac{8}{n}\left(1-\frac{1}{2}+\frac{n-2}{8}-\frac{1}{4}\right)=1.
\end{align*}
This shows Eq.~\eqref{eq:quadratinoncentrali}, by considering $j<n/4$ so that $2j$ locates a two-dimensial irrep of $\mathrm{D}_n$%2j\leq (n-2)/2 sse j<n/4
, and $n>4$ accordingly. This exhausts all the possible tensor products of a two-dimensional irrep of $\mathrm{D}_n$ with itself, which are $\frac{n-2}{2}$ in number.

Putting $l=n/2-j$ in Eq.~\eqref{eq:permultGENCB2}, we obtain $\left(\chi_{j,n/2-j}, \chi^{(2)}_{\text{1D}}\right)_{\mathrm{D}_n}=\left(\chi_{j,n/2-j}, \chi^{(3)}_{\text{1D}}\right)_{\mathrm{D}_n}=1$. Then, Eq.~\eqref{eq:permultGENCB3} leads to $\left(\chi_{j,n/2-j}, \chi_{n-1}^{(n/2-2j)}\right)_{\mathrm{D}_n}=1$, with $j\neq n/4$ if $n/2$ is even. This shows Eq.~\eqref{prodsimmdalcentro}, again bounding $j<n/4$ so that $n/2-2j$ locates a two-dimensional irrep of $\mathrm{D}_n$%n/2-2j\geq1 iff j<n/4, whenever $n>4$
. This exhausts all the possible unordered pairs $\{j,l\}$ locating $\sigma_{n-1}^{(j)}\ox \sigma_{n-1}^{(l)}$ such that $j+l=n/2$, $j< l$%ossia j\neq n/4 quando n/2 pari... sono tutte le coppie simmetriche rispetto al centro n/4 degli indici
. They are $\frac{n-2}{4}$ when $n/2$ is odd, and $n/4-1$ when %$n>4$,
$n/2$ is even.

The left-out unordered pairs $\{j,l\}$ are in number $\frac{n(n-2)}{8}-\frac{n-2}{2}-\frac{n-2}{4} = \frac{(n-2)(n-6)}{8}$ when $n/2$ is odd, and $\frac{n(n-2)}{8}-\frac{n-2}{2}-\frac{n}{4}+1 = \frac{(n-4)^2}{8}$ when $n/2$ is even. They are $\{j,n/2-j'\}$ and $\{j',n/2-j\}$ for every $n\geq8$, $j\in[1,n/4-1]\cap\NN$ %ossia j=1,...,n/4-1 per n/2 pari, e j=1,...,(n-6)/4 per n/2 dispari
and $j'=j'(j)=j+1,\dots,n/2-1-j$ at each $j$. Here, the left-hand side of Eq.~\eqref{eq:pqubequiviff} with $l=n/2-j'$ becomes $(-1)^k\cos\left(\frac{2\pi jk}{n}\right)$, equal to the right-hand side with $m=j'$, $q=n/2-j$; therefore $\sigma_{n-1}^{(j)}\ox \sigma_{n-1}^{(n/2-j')}\simeq \sigma_{n-1}^{(n/2-j')}\ox \sigma_{n-1}^{(j)}
\simeq 
\sigma_{n-1}^{(j')}\ox \sigma_{n-1}^{(n/2-j)}\simeq \sigma_{n-1}^{(n/2-j)}\ox \sigma_{n-1}^{(j')}$. Eq.~\eqref{eq:permultGENCB3} with $l=n/2-j'$ and $m=n/2-j-j'$ gives
{\small \begin{align*}
\left(\chi_{j,n/2-j'}, \chi_{n-1}^{(n/2-j-j')}\right)_{\mathrm{D}_n}%&= \frac{4}{n}\left(1+(-1)^{n-2j'}+2\sum_{k=1}^{(n-2)/2}\cos\left(\frac{2\pi jk}{n}\right)\cos\left(\pi k-\frac{2\pi j'k}{n}\right) \cos\left(\pi k-\frac{2\pi (j+j')k}{n}\right)\right)\\
& = \frac{4}{n}\left(1+(-1)^{n-2j'}+2\sum_{k=1}^{(n-2)/2}\cos\left(\frac{2\pi jk}{n}\right)\cos\left(\frac{2\pi j'k}{n}\right) \cos\left(\frac{2\pi (j+j')k}{n}\right)\right)\\
& = %prodotto di coseni di \alpha_i è somma di coseni di permutazioni di segni della somma degli \alpha_i
\frac{8}{n}\left[1+\frac{1}{4}\sum_{k=1}^{(n-2)/2}\left(1 + \cos\left(\frac{4\pi jk}{n}\right) + \cos\left(\frac{4\pi j'k}{n}\right) + \cos\left(\frac{4\pi (j+j')k}{n}\right)\right)\right]\\
& = \frac{8}{n}\left[1+\frac{1}{4}\left(\frac{n-2}{2}-3\right)\right]=1.
\end{align*}}Similarly, one gets $\left(\chi_{j,n/2-j'}, \chi_{n-1}^{(n/2+j-j')}\right)_{\mathrm{D}_n}=1$. This proves Eq.~\eqref{ultimirinfusa}.

We treated all the possible tensor products of two two-dimensional irreps of $\mathrm{D}_n$. From the direct sum of irreps in Eqs.~\eqref{eq:p+1/42unica1},~\eqref{eq:quadratinoncentrali},~\eqref{prodsimmdalcentro},~\eqref{ultimirinfusa}, there are no other equivalences among tensor products of two two-dimensional irreps of $\mathrm{D}_n$: in Eq.~\eqref{eq:p+1/42unica1} there is the only $1\oplus1\oplus1\oplus1$ decomposition, in Eqs.~\eqref{eq:quadratinoncentrali},~\eqref{prodsimmdalcentro} there are the only $1\oplus 1\oplus2$ decompositions with different one-dimensional irreps, and any of the tensor-product representations in Eq.~\eqref{ultimirinfusa} (decomposed as $2\oplus 2$) located by $j,j'$ is equivalent to one located by $k,k'$ if and only if $n/2\pm j-j'=n/2\pm k-k'$ or $n/2\pm j-j'=n/2\mp k-k'$, respectively if and only if $j=k,\,j'=k'$ (same representation) or $j=-k,\,j'=k'$ (impossible).
\end{proof}

The Clebsch-Gordan decompositions of two $p$-adic qubits as in Theorem~\ref{thrortheorpqub} are written similarly to Eq.~\eqref{eq:clebscgorddeco2} for all the possibilities in Eqs.~\eqref{eq:p+1/42unica1},~\eqref{eq:quadratinoncentrali},~\eqref{prodsimmdalcentro},~\eqref{ultimirinfusa} for $n=p+1$.

\begin{remark}
Since the irreps of $\SO(3)_p$ we consider are complex (not $p$-adic), states of qubits and $p$-adic qubits are vector rays in $\CC^2$. Possible bases spanning the state space $\CC^2\otimes \CC^2\simeq \CC^4$ of two qubits are the canonical basis $\big(\ket{00},\,\ket{01},\,\ket{10},\,\ket{11}\big)$ or the basis of the four (\emph{maximally entangled}) \emph{Bell states}, 
\beq \label{eq:Bellbasis}
\big(\ket{\phi^+},\ket{\phi^-},\ket{\psi^+},\ket{\psi^-}\big),
\eeq 
where
\beq \label{eq:Bellpsiphi}
\ket{\phi^\pm}\coloneqq\frac{1}{\sqrt{2}}(\ket{00}\pm\ket{11}),\qquad \ket{\psi^\pm}\coloneqq
\frac{1}{\sqrt{2}}(\ket{01}\pm\ket{10}).
\eeq
In standard quantum mechanics~\cite{sakurai}, The Clebsch-Gordan decomposition of the tensor product of two qubits is given in terms of stable subspaces as
\beq \label{eq:singltripl+}
2\otimes 2\simeq 1\oplus 3.
\eeq 
This means that the second tensor power of the unique two-dimensional irrep of $\SU(2)$ decomposes into the direct sum of the unique one-dimensional irrep with the unique three-dimensional irrep of $\SU(2)$, up to equivalences. This is realised by the change of basis from the canonical one to
\beq \label{eq:singltripl}
\big(\ket{\psi^-},\ \ket{00},\ket{\psi^+}, \ket{11}\big).
\eeq 
The vector $\ket{\psi^-}$ spans the one-dimensional subrepresentation (associated to a spin-$0$ particle) and is called \emph{singlet} state, while $\big(\ket{00},\ket{\psi^+}, \ket{11}\big)$ spans the three-dimensional subrepresentation (associated to a spin-$1$ particle) and is called \emph{triplet}. The matrix of the Clebsch-Gordan coefficients realising this basis change for the composite system of two qubits is
\beq\label{ClebschGordan}
\operatorname{T}=\begin{pmatrix}
  1&0&0&0\\0&\frac{1}{\sqrt{2}}&0&\frac{1}{\sqrt{2}}\\0&\frac{1}{\sqrt{2}}&0&-\frac{1}{\sqrt{2}}\\0&0&1&0
\end{pmatrix}.
\eeq 
Lastly, the Clebsch-Gordan decomposition of the tensor powers of the qubit representation gives rise to all the other irreps of $\SU(2)$~\cite{Weizsaecker}.
\end{remark}

\subsection{Clebsch-Gordan coefficients}\label{subsec:CBcpeff2357}
We explicitly calculate the Clebsch-Gordan coefficients of the Clebsch-Gordan decompositions of systems of two $p$-adic qubits corresponding to Propositions~\ref{prop:clebsch2deco} and~\ref{prop:clebschpdeco} for the smallest 
emblematic %p=2 sempre a parte, p=3 =3(4), p=5=1(4), p=7 il primo in cui capita 2x2=2+2, così copro tutte le possibilità
primes $p=2,3,5,7$ (cf.~\cite[Ch.~7]{PhDtesi} for complete calculations). For the general treatment, for every odd prime $p$, we will look at the structural nature of the coupled bases realising the various direct-sum decompositions (see next Section~\ref{sec:entangldectp}).
%the change of basis required to decompose the tensor product corresponds

\medskip

$\boldsymbol{p=2}$\\
According to Proposition~\ref{prop:clebsch2deco}, $2\otimes 2\simeq 1\oplus 1\oplus 2$, and we find a basis for $\CC^4$ such that $\sigma_2^{\otimes2}$ stabilises two lines and a two-dimensional subspace (which are pairwise orthogonal), acting on them isomorphically to $\textup{TRIV},\,\textup{SIGN},\,\sigma_2$ respectively. This means to find a basis $\big(\mathbf{v},\mathbf{w},\mathbf{y},\mathbf{z}\big)$ such that
\beq\notag  \begin{array}{ll}
    \sigma_2^{\otimes2}(r)(\mathbf{v})=\mathbf{v},&\sigma_2^{\otimes2}(x)(\mathbf{v})=\mathbf{v},\\
    \sigma_2^{\otimes2}(r)(\mathbf{w})=\mathbf{w},&\sigma_2^{\otimes2}(x)(\mathbf{w})=-\mathbf{w},\\
    \sigma_2^{\otimes2}(r)(\mathbf{y})=-\frac{1}{2}\mathbf{y}+\frac{\sqrt{3}}{2}\mathbf{z},\quad\quad &\sigma_2^{\otimes2}(x)(\mathbf{y})=\mathbf{y},\\
    \sigma_2^{\otimes2}(r)(\mathbf{z})=-\frac{\sqrt{3}}{2}\mathbf{y}-\frac{1}{2}\mathbf{z},& \sigma_2^{\otimes2}(x)(\mathbf{z})=-\mathbf{z}.
\end{array}\eeq 
%esattamente come i quadrati in p=5
With respect to the canonical basis, these conditions give
\beq 
\mathbf{v}=\begin{pmatrix}
  v_1\\0\\0\\v_1
\end{pmatrix},\quad \mathbf{w}=\begin{pmatrix}
  0\\w_2\\-w_2\\0
\end{pmatrix},\quad \mathbf{y}=\begin{pmatrix}
  y_1\\0\\0\\-y_1
\end{pmatrix},\quad \mathbf{z} = \begin{pmatrix}
  0\\-y_1\\-y_1\\0
\end{pmatrix},
\eeq 
for $v_1,w_2,y_1\in \CC$. Normalising these vectors to $1$, and ignoring global phases, the new basis $\big(\mathbf{v},\mathbf{w},\mathbf{y},\mathbf{z}\big)$ with respect to the canonical one of $\CC^2\otimes\CC^2$ is
\beq \label{Bellbasis2}
\big(\ket{\phi^+},\,\ket{\psi^-},\ket{\phi^-},%-
\ket{\psi^+}\big),
\eeq 
that is, some ordering of the Bell basis~\eqref{eq:Bellbasis}. The first two are \emph{singlet} states, while the last two form a \emph{doublet}. The matrix of the associated change of basis $\big(\ket{00},\,\ket{01},\,\ket{10},\,\ket{11}\big)\mapsto \big(\ket{\phi^+},\,\ket{\psi^-},\ket{\phi^-},%-
\ket{\psi^+}\big)$ is
\beq \label{2CBcoeff}
\operatorname{T}_2\coloneqq\begin{pmatrix}
  \frac{1}{\sqrt{2}}&0&\frac{1}{\sqrt{2}}&0\\0&\frac{1}{\sqrt{2}}&0&%-
  \frac{1}{\sqrt{2}}\\0&-\frac{1}{\sqrt{2}}&0&%-
  \frac{1}{\sqrt{2}}\\\frac{1}{\sqrt{2}}&0&-\frac{1}{\sqrt{2}}&0
\end{pmatrix}.
\eeq 
\begin{proposition}
The matrix elements of $\operatorname{T}_2$ in Eq.~\eqref{2CBcoeff} are the Clebsch-Gordan coefficients for two $2$-adic qubits $\big((\CC^2)^{\otimes2},(\rho_2)^{\otimes2}\big)$ with respect to the canonical basis.
\end{proposition}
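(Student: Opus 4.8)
The plan is to reduce the statement to a computation with the dihedral group $\mathrm{D}_3$ and then read off the change of basis directly. First I would record that, by Theorem~\ref{thrortheorpqub}, $\rho_2=\sigma_2\circ\upphi_2\circ\pi_1$, so that $(\rho_2)^{\ox2}=\sigma_2^{\ox2}\circ\upphi_2\circ\pi_1$ is the pullback of the $\mathrm{D}_3$-representation $\sigma_2^{\ox2}$ along the surjective homomorphism $\upphi_2\circ\pi_1\colon\SO(3)_2\to\mathrm{D}_3$. Consequently a subspace of $\CC^2\ox\CC^2$ is $(\rho_2)^{\ox2}(\SO(3)_2)$-stable if and only if it is $\sigma_2^{\ox2}(\mathrm{D}_3)$-stable, and any coupled basis realising the Clebsch--Gordan decomposition of $\sigma_2^{\ox2}$ realises the one of $(\rho_2)^{\ox2}$ (cf.~Eq.~\eqref{eq:clebscgorddeco2}) with the same Clebsch--Gordan coefficients. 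It therefore suffices to work with the two generators $r,x$, i.e. with the operators $\sigma_2^{\ox2}(r)=\sigma_2(r)\ox\sigma_2(r)$ and $\sigma_2^{\ox2}(x)=\sigma_2(x)\ox\sigma_2(x)$ for $\sigma_2(r),\sigma_2(x)$ as in Eq.~\eqref{eq:D3sigma2irrep}.

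Next I would invoke Proposition~\ref{prop:clebsch2deco}: $\sigma_2^{\ox2}\simeq\textup{TRIV}\op\textup{SIGN}\op\sigma_2$, a multiplicity-free sum of three pairwise inequivalent irreps. Hence the three isotypic subspaces of $\CC^2\ox\CC^2$ are uniquely determined, and by Schur's lemma the intertwiner on each is unique up to a scalar, so the coupled basis $(\mathbf v,\mathbf w,\mathbf y,\mathbf z)$ is pinned down up to a phase on each one-dimensional block and an overall phase (and a choice of intertwiner) on the two-dimensional block. I would then solve the elementary linear systems coming from the transformation laws displayed above Eq.~\eqref{Bellbasis2}: $\mathbf v,\mathbf w$ are common eigenvectors of $\sigma_2^{\ox2}(r),\sigma_2^{\ox2}(x)$ with eigenvalue pairs $(1,1)$ and $(1,-1)$, and $(\mathbf y,\mathbf z)$ spans a plane on which $(r,x)$ acts by the matrices of Eq.~\eqref{eq:D3sigma2irrep}. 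Solving in the canonical basis gives the vectors up to scalars; normalising and dropping global phases identifies the coupled basis with the Bell basis in the order $\bigl(\ket{\phi^+},\ket{\psi^-},\ket{\phi^-},\ket{\psi^+}\bigr)$ of Eq.~\eqref{Bellbasis2}, whose coordinate columns in $\bigl(\ket{00},\ket{01},\ket{10},\ket{11}\bigr)$ are exactly the columns of $\operatorname{T}_2$ in Eq.~\eqref{2CBcoeff}. A one-line check gives $\operatorname{T}_2^\top\operatorname{T}_2=\mI_4$, so $\operatorname{T}_2$ is orthogonal and is a bona fide unitary change of basis, whence by definition its entries are the Clebsch--Gordan coefficients of $\bigl((\CC^2)^{\ox2},(\rho_2)^{\ox2}\bigr)$ relative to the canonical basis.

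I expect the only step that is not pure bookkeeping to be the verification that the two-dimensional block $\mathrm{Span}(\ket{\phi^-},\ket{\psi^+})$ genuinely carries a representation equivalent to $\sigma_2$, and not a sum of one-dimensional pieces. The cleanest route I would take is the identification $\CC^2\ox\CC^2\cong\mM(2,\CC)$, $\ket i\ox\ket j\mapsto\ket i\!\bra j$, under which $\sigma_2^{\ox2}(g)$ acts by $M\mapsto\sigma_2(g)\,M\,\sigma_2(g)^\top$ --- the conjugation (``channel'') picture of Remark~\ref{proj:rem:saràperfat}, available here because $\sigma_2(\mathrm{D}_3)\subset\mathrm{O}(2)$ is real orthogonal. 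Then $\mI_2\leftrightarrow\ket{\phi^+}$ is fixed (carrying $\textup{TRIV}$); the matrix $\bigl(\begin{smallmatrix}0&1\\-1&0\end{smallmatrix}\bigr)\leftrightarrow\ket{\psi^-}$ is sent to $\det(\sigma_2(g))$ times itself (carrying $\textup{SIGN}$, since $\det\sigma_2(g)=1$ on $\langle r\rangle$ and $-1$ otherwise); and the remaining invariant plane of traceless symmetric matrices --- spanned by $\ket{\phi^-}$ and $\ket{\psi^+}$ --- carries a faithful, hence irreducible, two-dimensional real representation of $\mathrm{D}_3$ (conjugation by the order-$3$ rotation $\sigma_2(r)$ acts as a non-trivial rotation, and $\sigma_2(x)=\diag(1,-1)$ acts as $\diag(1,-1)$), which is necessarily $\sigma_2$ up to equivalence. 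Everything else --- the eigenspace computations for $\mathbf v,\mathbf w$, the normalisations, and the orthogonality of $\operatorname{T}_2$ --- is routine.
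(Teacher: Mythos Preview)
Your proposal is correct and follows essentially the same route as the paper: reduce from $\SO(3)_2$ to $\mathrm{D}_3$ via the surjection $\upphi_2\circ\pi_1$, invoke the decomposition $\sigma_2^{\ox2}\simeq\textup{TRIV}\op\textup{SIGN}\op\sigma_2$ from Proposition~\ref{prop:clebsch2deco}, solve the resulting eigen/intertwining conditions in the canonical basis, normalise, and read off the columns of $\operatorname{T}_2$ as the Bell states in the order of Eq.~\eqref{Bellbasis2}. Your additional verification via the identification $\CC^2\ox\CC^2\cong\mM(2,\CC)$ with the conjugation action is a pleasant conceptual supplement not present in the paper, but it is not needed once Proposition~\ref{prop:clebsch2deco} is in hand.
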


\medskip

$\boldsymbol{p=3}$\\
There is a unique $3$-adic qubit arising from $G_3$, given by the irrep $\rho_3\coloneqq \rho_3^{(1)}$ as in Theorem~\ref{thrortheorpqub}. This writes as $\rho_3=\rho_{3,1}\circ\pi_1=\sigma_3\circ\upvarphi_3\circ \mathdutchcal{F}_3\circ\pi_1$, where $\rho_{3,1}\coloneqq\rho_{3,1}^{(1)}$ and $\sigma_3\coloneqq \sigma_3^{(1)}$ are respectively the unique two-dimensional irreps of $G_3$ and $\mathrm{D}_4$. As a consequence of Proposition~\ref{prop:clebschpdeco}, the Clebsch-Gordan decomposition of a system of two $3$-adic qubits as in Theorem~\ref{thrortheorpqub} is
\beq 
\rho_3^{\otimes2}\simeq (\mathdutchcal{triv}\circ\pi_1)\oplus (\mathdutchcal{s}\circ\pi_1)\oplus (\mathdutchcal{t}\circ\pi_1)\oplus(\mathdutchcal{st}\circ\pi_1),
\eeq 
where the one-dimensional irreps of $G_3$ are as in Eq.~\eqref{1dirrepsDp+1Gp} (cf.\ also~\cite[Prop.~V.3]{our2nd}). This decomposition is of the kind $2\otimes 2\simeq 1\oplus1\oplus1\oplus1$. For the Clebsch-Gordan coefficients, we set up a similar calculation to that done for $p=2$.
\begin{comment}
we need to find a basis for the complex four-dimensional vector space, such that $\sigma_3^{\otimes2}$ stabilises the lines through each basis vector, acting on them isomorphically to $\sigma_\text{triv}^{\text{1D}},\sigma_1^{\text{1D}},\sigma_2^{\text{1D}},\sigma_3^{\text{1D}}$ respectively. This means to find a basis $\big(\mathbf{v},\mathbf{w},\mathbf{y},\mathbf{z}\big)$ such that
\beq\notag  \begin{array}{ll}
    \sigma_3^{\otimes2}(r)(\mathbf{v})=\mathbf{v},&\sigma_3^{\otimes2}(x)(\mathbf{v})=\mathbf{v},\\
    \sigma_3^{\otimes2}(r)(\mathbf{w})=\mathbf{w},&\sigma_3^{\otimes2}(x)(\mathbf{w})=-\mathbf{w},\\
    \sigma_3^{\otimes2}(r)(\mathbf{y})=-\mathbf{y},\quad\quad &\sigma_3^{\otimes2}(x)(\mathbf{y})=\mathbf{y},\\
    \sigma_3^{\otimes2}(r)(\mathbf{z})=-\mathbf{z},& \sigma_3^{\otimes2}(x)(\mathbf{z})=-\mathbf{z}.
\end{array}\eeq 
With respect to the canonical basis, these conditions give
\beq 
\mathbf{v}=\begin{pmatrix}
  v_1\\0\\0\\v_1
\end{pmatrix},\quad \mathbf{w}=\begin{pmatrix}
  0\\w_2\\-w_2\\0
\end{pmatrix},\quad \mathbf{y}=\begin{pmatrix}
  y_1\\0\\0\\-y_1
\end{pmatrix},\quad \mathbf{z}=\begin{pmatrix}
  0\\z_2\\z_2\\0
\end{pmatrix},
\eeq 
for $v_1,w_2,y_1,z_2\in \CC$. Normalising these vectors to $1$, and choosing $v_1=w_2=y_1=z_2=\frac{1}{\sqrt{2}}$, the new basis $\big(\mathbf{v},\mathbf{w},\mathbf{y},\mathbf{z}\big)$ 
\end{comment}
The new basis with respect to the canonical one of $\CC^2\otimes\CC^2$ is
\beq 
\big(\ket{\phi^+},\,\ket{\psi^-},\ket{\phi^-},\ket{\psi^+}\big),
\eeq 
as in Eq.~\eqref{Bellbasis2} for $p=2$%in realtà là l'ultimo vettore deve avere segno meno
. However, here all four states are \emph{singlet} states. 
\begin{comment}
The matrix of the associated change of basis $\big(\ket{00},\,\ket{01},\,\ket{10},\,\ket{11}\big)\mapsto \big(\ket{\phi^+},\,\ket{\psi^-},\ket{\phi^-},\ket{\psi^+}\big)$ is
\beq \label{3CBcoeff}
\operatorname{T}_3=\begin{pmatrix}
  \frac{1}{\sqrt{2}}&0&\frac{1}{\sqrt{2}}&0\\0&\frac{1}{\sqrt{2}}&0&\frac{1}{\sqrt{2}}\\0&-\frac{1}{\sqrt{2}}&0&\frac{1}{\sqrt{2}}\\\frac{1}{\sqrt{2}}&0&-\frac{1}{\sqrt{2}}&0
\end{pmatrix}.
\eeq 
%equal to $\operatorname{T}_2$. %in realtà l'ultima colonna di T2 e T3 sarebbero con segno scambiato
\end{comment}
\begin{proposition}
The matrix elements of $\operatorname{T}_3\coloneqq\operatorname{T}_2$ in Eq.~\eqref{2CBcoeff} are the Clebsch-Gordan coefficients for two $3$-adic qubits $\big((\CC^2)^{\otimes2},(\rho_3)^{\otimes2}\big)$ with respect to the canonical basis.
\end{proposition}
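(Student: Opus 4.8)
The plan is to exhibit explicitly the coupled basis of $\CC^2\otimes\CC^2$ realising the Clebsch-Gordan decomposition of $(\rho_3)^{\otimes2}$ and to read off the change-of-basis matrix, checking that it equals $\operatorname{T}_2$ of Eq.~\eqref{2CBcoeff}. Recall $\rho_3=\sigma_3\circ\upvarphi_3\circ\mathdutchcal{F}_3\circ\pi_1$ with $\sigma_3\coloneqq\sigma_3^{(1)}$ the unique two-dimensional irrep of $\mathrm{D}_4$, so Eq.~\eqref{eq:repnD4caz} with $n=4$, $j=1$ gives
\beq\notag
\sigma_3(r)=\begin{pmatrix}0&-1\\1&0\end{pmatrix},\qquad
\sigma_3(x)=\begin{pmatrix}1&0\\0&-1\end{pmatrix},
\eeq
and $(\rho_3)^{\otimes2}(g)=\sigma_3(\bar g)\otimes\sigma_3(\bar g)$, where $\bar g\coloneqq\upvarphi_3(\mathdutchcal{F}_3(\pi_1(g)))$ ranges over $\mathrm{D}_4=\langle r,x\rangle$. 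Hence a unit vector of $\CC^2\otimes\CC^2$ spans a subrepresentation of $(\rho_3)^{\otimes2}$ carrying the lift of one of the four one-dimensional irreps of $\mathrm{D}_4$ exactly when it is a common eigenvector of $\sigma_3(r)\otimes\sigma_3(r)$ and $\sigma_3(x)\otimes\sigma_3(x)$, with the eigenvalue pair $(r,x)$ prescribed by the $\mathrm{D}_4$-character table: $(1,1)$, $(1,-1)$, $(-1,1)$, $(-1,-1)$ for $\sigma^{\textup{triv}}_{\textup{1D}},\sigma^{(1)}_{\textup{1D}},\sigma^{(2)}_{\textup{1D}},\sigma^{(3)}_{\textup{1D}}$, that is for $\mathdutchcal{triv},\mathdutchcal{s},\mathdutchcal{t},\mathdutchcal{st}$.

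I would then simultaneously diagonalise these two operators --- legitimate since, by Eq.~\eqref{eq:p+1/42unica1} with $n=4$, $(\rho_3)^{\otimes2}$ is a direct sum of one-dimensional representations, so all the $(\rho_3)^{\otimes2}(g)$ admit a common eigenbasis. In the canonical basis $(\ket{00},\ket{01},\ket{10},\ket{11})$ one has $\sigma_3(x)\otimes\sigma_3(x)=\diag(1,-1,-1,1)$, with eigenspaces $\Span\{\ket{00},\ket{11}\}$ (eigenvalue $+1$) and $\Span\{\ket{01},\ket{10}\}$ (eigenvalue $-1$); a one-line computation shows that $\sigma_3(r)\otimes\sigma_3(r)$ preserves each, acting as $\ket{00}\mapsto\ket{11}\mapsto\ket{00}$ on the first and as $\ket{01}\mapsto-\ket{10}\mapsto-\ket{01}$ on the second. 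Diagonalising within each eigenspace produces the common eigenvectors $\ket{\phi^\pm}$ ($r$-eigenvalue $\pm1$, $x$-eigenvalue $+1$) and $\ket{\psi^\mp}$ ($r$-eigenvalue $\pm1$, $x$-eigenvalue $-1$); the eigenvalue pairs are thus $\ket{\phi^+}\leftrightarrow(1,1)$, $\ket{\psi^-}\leftrightarrow(1,-1)$, $\ket{\phi^-}\leftrightarrow(-1,1)$, $\ket{\psi^+}\leftrightarrow(-1,-1)$. Matching against the character table identifies these vectors, in this order, with the summands $\mathdutchcal{triv}\circ\pi_1$, $\mathdutchcal{s}\circ\pi_1$, $\mathdutchcal{t}\circ\pi_1$, $\mathdutchcal{st}\circ\pi_1$ of the Clebsch-Gordan decomposition of $(\rho_3)^{\otimes2}$ given by Proposition~\ref{prop:clebschpdeco}.

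Finally, the coupled basis $(\ket{\phi^+},\ket{\psi^-},\ket{\phi^-},\ket{\psi^+})$ is, as an ordered list, identical to the one found for $p=2$ in Eq.~\eqref{Bellbasis2}; expanding these Bell states of Eq.~\eqref{eq:Bellpsiphi} in the canonical basis and taking their normalised coordinate columns reproduces precisely $\operatorname{T}_2$ of Eq.~\eqref{2CBcoeff}, whence $\operatorname{T}_3\coloneqq\operatorname{T}_2$ collects the Clebsch-Gordan coefficients of $\big((\CC^2)^{\otimes2},(\rho_3)^{\otimes2}\big)$ with respect to the canonical basis. I foresee no real obstacle: this is a direct simultaneous-diagonalisation computation, and the only point requiring care is to order the common eigenvectors consistently both with the chosen ordering of the one-dimensional irreps in the decomposition and with the column convention of $\operatorname{T}_2$, since the Clebsch-Gordan basis is determined only up to that ordering and up to phases.
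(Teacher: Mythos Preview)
Your proposal is correct and follows essentially the same approach as the paper: both set up the conditions $\sigma_3^{\otimes2}(r)\mathbf{v}=\pm\mathbf{v}$, $\sigma_3^{\otimes2}(x)\mathbf{v}=\pm\mathbf{v}$ dictated by the four one-dimensional characters of $\mathrm{D}_4$ and solve for the common eigenvectors, obtaining the ordered Bell basis $(\ket{\phi^+},\ket{\psi^-},\ket{\phi^-},\ket{\psi^+})$ whose coordinate columns reproduce $\operatorname{T}_2$. Your presentation is in fact slightly more explicit than the paper's, which merely states that the calculation is ``similar to that done for $p=2$''.
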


\medskip 

\boldsymbol{$p=5$}\\
According to Theorem~\ref{thrortheorpqub}, there are two inequivalent $5$-adic qubits, arising from the irreps $\rho_5^{(1)}$ and $\rho_5^{(2)}$ of $G_5$. As a consequence of Proposition~\ref{prop:clebschpdeco}, the Clebsch-Gordan decompositions of systems of two $5$-adic qubits as in Theorem~\ref{thrortheorpqub} are
\begin{align}
&(\rho_5^{(1)})^{\otimes2}\simeq (\rho_5^{(2)})^{\otimes2}\simeq(\mathdutchcal{triv}\circ\pi_1)\oplus(\mathdutchcal{s}\circ\pi_1)\oplus \rho_5^{(2)},\label{eq:5qub1}\\ 
&\rho_5^{(1)}\otimes \rho_5^{(2)}\simeq \rho_5^{(2)}\otimes \rho_5^{(1)}\simeq \rho_5^{(1)}\oplus(\mathdutchcal{t}\circ \pi_1)\oplus (\mathdutchcal{st}\circ\pi_1),\label{eq:5qub2}
\end{align} 
where the one-dimensional irreps of $G_5$ are as in Eq.~\eqref{1dirrepsDp+1Gp} (cf.\ also~\cite[Prop.~VI.3]{our1st}). In terms of stable subspaces, Eq.~\eqref{eq:5qub1} corresponds to $2\otimes 2\simeq1\oplus1\oplus 2$ and similarly Eq.~\eqref{eq:5qub2} to $2\otimes 2\simeq2\oplus 1\oplus1$. We set up a similar calculation to those done for $p=2,3$ for the Clebsch-Gordan coefficients of both Eq.~\eqref{eq:5qub1} and Eq.~\eqref{eq:5qub2}. We find that the basis realising the decompositions of $(\rho_5^{(1)})^{\otimes2}$ and $\rho_5^{(1)}\otimes \rho_5^{(2)}$ is
\beq
\big(\ket{\phi^+},\,\ket{\psi^-},\ket{\phi^-},%-
\ket{\psi^+}\big),
\eeq 
as for $p=2,3$. Here these states are coming in two \emph{singlets} and a \emph{doublet}.
\begin{proposition}
The matrix elements of $\operatorname{T}_5\coloneqq\operatorname{T}_2$ in Eq.~\eqref{2CBcoeff} are the Clebsch-Gordan coefficients for two $5$-adic qubits $\big((\CC^2)^{\otimes2},(\rho_5^{(1)})^{\otimes2}%\simeq (\rho_5^{(2)})^{\otimes2}
\big)$ or $\big((\CC^2)^{\otimes2},\rho_5^{(1)}\otimes \rho_5^{(2)}%\simeq \rho_5^{(2)}\otimes \rho_5^{(1)}
\big)$ with respect to the canonical basis.
\end{proposition}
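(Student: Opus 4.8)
The plan is to reduce both Clebsch--Gordan problems to the dihedral group $\mathrm{D}_6$ and then to check, by a direct computation, that the Bell basis in a suitable ordering is an intertwiner. Since $\rho_5^{(j)}=\sigma_5^{(j)}\circ\upvarphi_5\circ\mathdutchcal{F}_5\circ\pi_1$ with $\sigma_5^{(1)},\sigma_5^{(2)}$ the two two-dimensional irreps of $\mathrm{D}_6$ (indeed $\frac{p-1}{2}=2$ for $p=5$), and $\upvarphi_5\circ\mathdutchcal{F}_5\circ\pi_1$ is a surjective homomorphism onto $\mathrm{D}_6$, it suffices to exhibit one ordered orthonormal basis of $\CC^2\otimes\CC^2$ that realises the Clebsch--Gordan decompositions of $\sigma_5^{(1)\otimes2}$ and of $\sigma_5^{(1)}\otimes\sigma_5^{(2)}$ in the forms prescribed by Proposition~\ref{prop:clebschpdeco}, i.e. by Eqs.~\eqref{eq:5qub1} and~\eqref{eq:5qub2}; pulling this basis back through $\pi_1$ then yields the statement for $\SO(3)_5$. (We only need $(\rho_5^{(1)})^{\otimes2}$ and $\rho_5^{(1)}\otimes\rho_5^{(2)}$; the remaining two-qubit system $(\rho_5^{(2)})^{\otimes2}$ is merely equivalent to the first and is not claimed here.)

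The structural point is that $\sigma_5^{(j)}(x)=\diag(1,-1)$ for every $j$, so in the canonical basis $\big(\ket{00},\ket{01},\ket{10},\ket{11}\big)$ both tensor-product representations send the generator $x$ to $\diag(1,-1,-1,1)$. Hence the $+1$- and $-1$-eigenspaces of the image of $x$ are exactly $\Span(\ket{\phi^+},\ket{\phi^-})$ and $\Span(\ket{\psi^+},\ket{\psi^-})$, spanned by the Bell states of Eq.~\eqref{eq:Bellpsiphi}. Any coupled basis must be compatible with this two-dimensional eigenspace splitting, which singles out the Bell basis as the natural candidate. I would then compute the action of $r$ on the Bell states: writing $c_j\coloneqq\cos\frac{2\pi j}{6},\ s_j\coloneqq\sin\frac{2\pi j}{6}$ (so $(c_1,s_1)=(\tfrac12,\tfrac{\sqrt3}{2})$, $(c_2,s_2)=(-\tfrac12,\tfrac{\sqrt3}{2})$), one expands $\sigma_5^{(a)}(r)\otimes\sigma_5^{(b)}(r)$ on the canonical vectors and regroups using product-to-sum identities.

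For $(a,b)=(1,1)$ this gives: $r$ fixes $\ket{\phi^+}$; $r$ fixes $\ket{\psi^-}$ while $x$ sends it to $-\ket{\psi^-}$; and on $\Span(\ket{\phi^-},\ket{\psi^+})$ the generator $r$ acts by $\left(\begin{smallmatrix}c_2 & -s_2\\ s_2 & c_2\end{smallmatrix}\right)$ and $x$ by $\diag(1,-1)$. For $(a,b)=(1,2)$ it gives: on $\Span(\ket{\phi^+},\ket{\psi^-})$ the generator $r$ acts by $\left(\begin{smallmatrix}c_1 & -s_1\\ s_1 & c_1\end{smallmatrix}\right)$ and $x$ by $\diag(1,-1)$, while $r$ sends $\ket{\phi^-}\mapsto-\ket{\phi^-}$ and $\ket{\psi^+}\mapsto-\ket{\psi^+}$. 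Comparing with the irrep table of $\mathrm{D}_6$ recalled before Eq.~\eqref{eq:repnD4caz}, this identifies the stable subspaces: in the first case $\ket{\phi^+}$ carries $\sigma^{\textup{triv}}_{\textup{1D}}$, $\ket{\psi^-}$ carries $\sigma^{(1)}_{\textup{1D}}$, and $\Span(\ket{\phi^-},\ket{\psi^+})$ carries $\sigma_5^{(2)}$, matching Eq.~\eqref{eq:5qub1} (after lifting the one-dimensional pieces through $\upvarphi_5\circ\mathdutchcal{F}_5$ as in Eq.~\eqref{1dirrepsDp+1Gp}); in the second case $\Span(\ket{\phi^+},\ket{\psi^-})$ carries $\sigma_5^{(1)}$, $\ket{\phi^-}$ carries $\sigma^{(2)}_{\textup{1D}}=\mathdutchcal{t}$, and $\ket{\psi^+}$ carries $\sigma^{(3)}_{\textup{1D}}=\mathdutchcal{st}$, matching Eq.~\eqref{eq:5qub2}.

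In both cases the resulting ordered coupled basis is $\big(\ket{\phi^+},\ket{\psi^-},\ket{\phi^-},\ket{\psi^+}\big)$, and the matrix whose columns are these four vectors written in the canonical basis is precisely $\operatorname{T}_2$ of Eq.~\eqref{2CBcoeff}. Since this basis realises the Clebsch--Gordan decompositions of Proposition~\ref{prop:clebschpdeco}, its entries are by definition the Clebsch--Gordan coefficients; composing with $\pi_1$ gives the claim for both $5$-adic two-qubit systems. The only real work lies in the third-paragraph bookkeeping: I expect the (entirely routine) care to be needed in matching the $2\times2$ blocks not just up to equivalence but exactly to the standard forms of $\sigma_5^{(1)}$ resp. $\sigma_5^{(2)}$ --- in particular getting the orientation of the rotation right, $\left(\begin{smallmatrix}\cos & -\sin\\ \sin & \cos\end{smallmatrix}\right)$ rather than its transpose --- and in tracking the signs of the one-dimensional pieces so that the direct-sum ordering agrees with Eqs.~\eqref{eq:5qub1} and~\eqref{eq:5qub2}.
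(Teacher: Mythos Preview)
Your proposal is correct and follows essentially the same approach as the paper: both reduce to the dihedral group $\mathrm{D}_6$ and verify by direct computation on the generators $r,x$ that the ordered Bell basis $(\ket{\phi^+},\ket{\psi^-},\ket{\phi^-},\ket{\psi^+})$ realises the decompositions~\eqref{eq:5qub1} and~\eqref{eq:5qub2}, whence $\operatorname{T}_5=\operatorname{T}_2$. Your organisation---first isolating the $x$-eigenspaces to single out the Bell vectors, then checking the $r$-action block by block---is a slightly cleaner packaging of the same ``similar calculation to those done for $p=2,3$'' the paper invokes.
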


\medskip 

\boldsymbol{$p=7$}\\
According to Theorem~\ref{thrortheorpqub}, there are three inequivalent $7$-adic qubits, arising from the irreps $\rho_7^{(1)}$, $\rho_7^{(2)}$ and $\rho_7^{(3)}$ of $G_7$. As a consequence of Proposition~\ref{prop:clebschpdeco}, the Clebsch-Gordan decompositions of systems of two $7$-adic qubits as in Theorem~\ref{thrortheorpqub} are
%\begin{align} & {\sigma_{7}^{(2)}}^{\ox2}\simeq \sigma^{\textup{triv}}_{\textup{1D}}\oplus \sigma^{(1)}_{\textup{1D}} \oplus \sigma^{(2)}_{\textup{1D}}\oplus \sigma^{(3)}_{\textup{1D}},\\ & {\sigma_{7}^{(1)}}^{\ox2}\simeq {\sigma_{7}^{(3)}}^{\ox2}\simeq \sigma^{\textup{triv}}_{\textup{1D}}\oplus \sigma^{(1)}_{\textup{1D}}\oplus\sigma_{7}^{(2)},\\ &\sigma_{7}^{(1)}\ox \sigma_{7}^{(3)}\simeq \sigma_{7}^{(3)} \ox \sigma_{7}^{(1)}\simeq \sigma_{\textup{1D}}^{(2)}\oplus \sigma_{\textup{1D}}^{(3)}\oplus \sigma_{7}^{(2)},\\ & \sigma_{7}^{(1)}\ox \sigma_{7}^{(2)}\simeq \sigma_{7}^{(2)}\ox \sigma_{7}^{(1)} \simeq \sigma_{7}^{(2)}\ox \sigma_{7}^{(3)} \simeq \sigma_{7}^{(3)}\ox \sigma_{7}^{(2)}\simeq \sigma_{7}^{(1)}\oplus \sigma_{7}^{(3)}. \end{align}
\begin{align}
& \big(\rho_{7}^{(2)}\big)^{\ox2}\simeq 
(\mathdutchcal{triv}\circ\pi_1)\oplus (\mathdutchcal{s}\circ\pi_1) \oplus
(\mathdutchcal{t}\circ\pi_1)\oplus (\mathdutchcal{st}\circ\pi_1),\label{eq:7prima}\\
& \big(\rho_{7}^{(1)}\big)^{\ox2}\simeq \big(\rho_{7}^{(3)}\big)^{\ox2}\simeq (\mathdutchcal{triv}\circ\pi_1)\oplus (\mathdutchcal{s}\circ\pi_1)\oplus\rho_{7}^{(2)},\label{eq:7seconda}\\
&\rho_{7}^{(1)}\ox \rho_{7}^{(3)}\simeq \rho_{7}^{(3)} \ox \rho_{7}^{(1)}\simeq \rho_{7}^{(2)} \oplus (\mathdutchcal{t}\circ\pi_1)\oplus (\mathdutchcal{st}\circ\pi_1),\label{eq:7terza}\\
& \rho_{7}^{(1)}\ox \rho_{7}^{(2)}\simeq \rho_{7}^{(2)}\ox \rho_{7}^{(1)} 
\simeq \rho_{7}^{(2)}\ox \rho_{7}^{(3)} \simeq \rho_{7}^{(3)}\ox \rho_{7}^{(2)}\simeq \rho_{7}^{(1)}\oplus \rho_{7}^{(3)}.\label{eq:7quarta}
\end{align}
In terms of stable subspaces, Eq.~\eqref{eq:7prima} corresponds to $2\otimes 2\simeq 1\oplus1\oplus1\oplus1$, Eq.~\eqref{eq:7seconda} to $2\otimes 2\simeq 1\oplus1\oplus2$, similarly Eq.~\eqref{eq:7terza} to $2\otimes 2\simeq 2\oplus 1\oplus1$, and Eq.~\eqref{eq:7quarta} to $2\otimes 2\simeq 2\oplus2$. We set up a similar calculation to those done for $p=2,3,5$ for the Clebsch-Gordan coefficients of each of cases above. Again, we find that the basis with respect to which all these decompositions are realised is
\beq 
\big(\ket{\phi^+},\,\ket{\psi^-},\ket{\phi^-},%-
\ket{\psi^+}\big),
\eeq 
as for $p=2,3,5$.
\begin{proposition}
The matrix elements of $\operatorname{T}_7\coloneqq\operatorname{T}_2$ in Eq.~\eqref{2CBcoeff} are the Clebsch-Gordan coefficients for two $7$-adic qubits 
$\big((\CC^2)^{\otimes2},(\rho_7^{(2)})^{\otimes2}\big)$ or $\big((\CC^2)^{\otimes2},(\rho_7^{(1)})^{\otimes2}%\simeq(\rho_7^{(1)})^{\otimes2}
\big)$ or $\big((\CC^2)^{\otimes2},\rho_7^{(1)}\otimes \rho_7^{(3)}%\simeq \rho_7^{(3)}\otimes \rho_7^{(1)}
\big)$ or $\big((\CC^2)^{\otimes2},\rho_7^{(1)}\otimes \rho_7^{(2)}%\simeq \rho_7^{(2)}\otimes \rho_7^{(1)}\simeq \rho_7^{(2)}\otimes \rho_7^{(3)}\simeq \rho_7^{(3)}\otimes \rho_7^{(2)}
\big)$ with respect to the canonical basis.
\end{proposition}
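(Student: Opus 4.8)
The plan is to run, for each of the four listed tensor products, the same type of calculation carried out above for $p=2,3,5$: to exhibit a basis of $\CC^4$ in which the tensor-product representation is block-diagonal according to the Clebsch--Gordan decomposition already obtained in Proposition~\ref{prop:clebschpdeco} (concretely, Eqs.~\eqref{eq:7prima}--\eqref{eq:7quarta}, with $n=p+1=8$). Since every $7$-adic qubit has the form $\rho_7^{(j)}=\rho_{7,1}^{(j)}\circ\pi_1$ and $\pi_1$ is surjective, it suffices to work at the level of $G_7$, hence of $\mathrm{D}_8$; and since each representation is generated by the images of the two generators $r$ and $x$ of $\mathrm{D}_8$, it is enough to impose the block-diagonalisation conditions on these two elements only, exactly as in the displayed systems for $p=2$. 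Here all the relevant dihedral irreps are in the standard form of Eq.~\eqref{eq:repnD4caz}: writing $R(\theta)$ for the planar rotation of angle $\theta$, one has $\sigma_7^{(m)}(r)=R(2\pi m/8)$ and $\sigma_7^{(m)}(x)=\mathrm{diag}(1,-1)$ for $m=1,2,3$.

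A convenient way to handle all four cases at once is the following structural remark. For any $j,l\in\{1,2,3\}$ the two-dimensional subspaces $\Span_\CC\{\ket{\phi^+},\ket{\psi^-}\}$ and $\Span_\CC\{\ket{\phi^-},\ket{\psi^+}\}$ of $\CC^2\ox\CC^2$ (with the Bell vectors of Eq.~\eqref{eq:Bellpsiphi}) are invariant under $(\sigma_7^{(j)}\ox\sigma_7^{(l)})(r)=R(2\pi j/8)\ox R(2\pi l/8)$ and under $(\sigma_7^{(j)}\ox\sigma_7^{(l)})(x)=\mathrm{diag}(1,-1)\ox\mathrm{diag}(1,-1)$, hence under the whole image of $\mathrm{D}_8$. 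A short computation with the angle-addition formulas shows that, in the ordered bases $(\ket{\phi^+},\ket{\psi^-})$ and $(\ket{\phi^-},\ket{\psi^+})$, the generator $r$ acts respectively as $R(2\pi(l-j)/8)$ and $R(2\pi(j+l)/8)$, while $x$ acts as $\mathrm{diag}(1,-1)$ on each. Therefore each such block is the irrep $\sigma_7^{(m)}$ whenever the relevant index $m$ (namely $|l-j|$, resp. $j+l$) lies in $\{1,2,3\}$; it splits as $\sigma_{\textup{1D}}^{\textup{triv}}\oplus\sigma_{\textup{1D}}^{(1)}$ (that is, $\mathdutchcal{triv}\oplus\mathdutchcal{s}$ after lifting through $\upvarphi_7\circ\mathdutchcal{F}_7$) when the angle is $0$, with $\Span\{\ket{\phi^+}\}$ and $\Span\{\ket{\psi^-}\}$ carrying the two summands; and it splits as $\sigma_{\textup{1D}}^{(2)}\oplus\sigma_{\textup{1D}}^{(3)}$ (that is, $\mathdutchcal{t}\oplus\mathdutchcal{st}$) when the angle is $\pi$, with $\Span\{\ket{\phi^-}\}$ and $\Span\{\ket{\psi^+}\}$ carrying the two summands.

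It then remains to read off the four cases. For $(\rho_7^{(2)})^{\ox2}$ one has $(|l-j|,\,j+l)=(0,4)$, giving $\mathdutchcal{triv}\oplus\mathdutchcal{s}\oplus\mathdutchcal{t}\oplus\mathdutchcal{st}$; for $(\rho_7^{(1)})^{\ox2}$, $(0,2)$, giving $\mathdutchcal{triv}\oplus\mathdutchcal{s}\oplus\rho_7^{(2)}$; for $\rho_7^{(1)}\ox\rho_7^{(3)}$, $(2,4)$, giving $\rho_7^{(2)}\oplus\mathdutchcal{t}\oplus\mathdutchcal{st}$; and for $\rho_7^{(1)}\ox\rho_7^{(2)}$, $(1,3)$, giving $\rho_7^{(1)}\oplus\rho_7^{(3)}$ --- in each case precisely the decomposition and ordering of Eqs.~\eqref{eq:7prima}--\eqref{eq:7quarta} (after $\circ\,\pi_1$), realised in the basis $(\ket{\phi^+},\ket{\psi^-},\ket{\phi^-},\ket{\psi^+})$ of Eq.~\eqref{Bellbasis2}. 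Since $\operatorname{T}_2$ of Eq.~\eqref{2CBcoeff} is exactly the real orthogonal (hence unitary) matrix whose columns are $\ket{\phi^+},\ket{\psi^-},\ket{\phi^-},\ket{\psi^+}$ in the canonical basis, the change of basis effected by $\operatorname{T}_2$ turns each $\rho_7^{(j)}\ox\rho_7^{(l)}$ into the stated block-diagonal form, so $\operatorname{T}_7\coloneqq\operatorname{T}_2$ is a valid matrix of Clebsch--Gordan coefficients in all four cases.

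I do not expect a genuine obstacle: the argument is a finite verification. The only point needing care is the bookkeeping in the two degenerate blocks (rotation angle $0$ or $\pi$), where one must correctly pair each one-dimensional summand and its $x$-parity with $\mathdutchcal{triv},\mathdutchcal{s}$ (resp. $\mathdutchcal{t},\mathdutchcal{st}$) so that the order of the summands matches that written in Eqs.~\eqref{eq:7prima}--\eqref{eq:7quarta}; and, if one foregoes the structural shortcut above, one is simply obliged to solve four nearly identical linear systems --- precisely the ``similar calculation to those done for $p=2,3,5$'' alluded to in the statement.
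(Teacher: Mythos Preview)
Your proposal is correct and in fact cleaner than what the paper does. The paper merely says ``we set up a similar calculation to those done for $p=2,3,5$'' and quotes the resulting Bell-state basis, deferring the four separate linear systems to~\cite{PhDtesi}. You replace this case-by-case verification with a single structural observation: in the Bell basis the action of $R(\alpha)\otimes R(\beta)$ block-diagonalises into $R(\beta-\alpha)$ on $\Span\{\ket{\phi^+},\ket{\psi^-}\}$ and $R(\alpha+\beta)$ on $\Span\{\ket{\phi^-},\ket{\psi^+}\}$, with $x$ acting as $\diag(1,-1)$ on each block. This handles all four tensor products at once (indeed, all primes $p$ and all pairs $(j,l)$ simultaneously), and the individual decompositions~\eqref{eq:7prima}--\eqref{eq:7quarta} drop out by reading off $|l-j|$ and $j+l$. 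The paper's route has the advantage of being entirely mechanical, while yours explains \emph{why} the same matrix $\operatorname{T}_2$ keeps recurring across primes and makes the later entanglement analysis in Section~\ref{sec:entangldectp} almost immediate.
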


%%%%%%%%%%%%%%%%%%%%%%%%%%%%%%%%%%%%%%%%%%%%%%%%%%%%%%%%%%%%%%%%%%%
\section{Entanglement for two $p$-adic qubits}\label{sec:entangldectp}
Here we aim at showing how entangled states appear from the Clebsch-Gordan problem of two $p$-adic qubits, compared to standard quantum mechanics. Indeed, entanglement is precious resource in quantum information and computation in several scenarios~\cite{oroent}. We investigate the structural nature of the basis vector states that realise a direct-sum decomposition of a tensor product of two $p$-adic qubits, in the various possible cases: $2\otimes 2\simeq 1\oplus 1\oplus 1\oplus 1$, $2\otimes 2\simeq 1\oplus 1\oplus 2$, $2\otimes 2\simeq 1\oplus 3$ or  $2\otimes 2\simeq 2\oplus 2$. We then study the quantum states corresponding to each of these subsystem.

Note that the decomposition $2\otimes 2\simeq 1\oplus 3$ (familiar from standard quantum mechanics) never appears in Section~\ref{CBdecocoedf}, as no three-dimensional irrep of $\SO(3)_p$ factorises through $G_p$ (cf.\ Subsection~\ref{subsec:strirrpsGp}). %magari ci sono irreps 3-dim di SO(3)_p o mai? e se sì, ottengo mai questa 3+1 deco standard? ne avrò mai di veramente complesse???
Contrarily to representation theory of $\SO(3)\simeq \SU(2)/\{\pm\mI\}$ for standard quantum mechanics, there are multiple $p$-adic qubit representations of $\SO(3)_p$, increasing in number as $p$ grows, so that entangled states can arise more often than in the unique standard Clebsch-Gordan decomposition~\eqref{eq:singltripl}. However, besides the singlet quantum states, doublets and triplets are given by separable quantum states.

\medskip

We recall the definition of entanglement in quantum mechanics, holding for both the standard and our $p$-adic frameworks, since the considered representations are over complex Hilbert spaces.
\begin{definition}
A pure (vector) state of a bipartite quantum system $\ket{\psi_{AB}}\in \mathcal{H}_A\otimes\mathcal{H}_B$ is said to be an {\it entangled state} if it is not a tensor product, i.e., if there do not exist $\ket{\psi_A}\in\mathcal{H}_A,\,\ket{\psi_B}\in\mathcal{H}_B$ such that $\ket{\psi_{AB}}=\ket{\psi_A}\otimes \ket{\psi_B}$.
\end{definition}
The \emph{Schmidt rank} of a state is defined as the number of non-vanishing coefficients appearing in the Schmidt decomposition of a bipartite pure state, writing the latter as a superposition of biorthogonal product states weighted by non-negative real coefficients via singular value decomposition. If a state has Schmidt rank equal to $1$, it is separable; if the Schmidt rank is larger, the state is entangled. Then, Bell states (as in Eq.~\eqref{eq:Bellpsiphi}) are classified as maximally entangled, because they achieve the maximum Schmidt rank $2$ for a two-qubit system with uniform singular values.

More generally, a mixed quantum state $\rho_{AB}$ in the space of density operators on $\cH_A\otimes \cH_B$ is called \emph{separable} if it can be written as a convex combination $\rho_{AB}=\sum_kp_k\rho_A^{(k)}\otimes \rho_B^{(k)}$, where $\rho_A^{(k)}$ and $\rho_B^{(k)}$ are density operators on $\cH_A$ and $\cH_B$ respectively (factorable quantum states are a subset of the separable ones); $\rho_{AB}$ is called \emph{entangled} if it is not separable. The \emph{PPT criterion} (positive partial transpose, cf.\ e.g.~\cite[Theorem $4.4.1$]{manwin}) tells that a necessary condition for the separability of a bipartite quantum state is the non-negativity of its partial transposition with respect to any of the two subsystems (it is also sufficient for two qubits). A quantum state on $\cH$ is \emph{maximally mixed} if its rank equals $\dim(\cH)$. The reduced density operator of a maximally entangled state is the maximally mixed state. For other basic notions in quantum information theory%(e.g. about mixed states)
, the reader may refer to~\cite{manwin,Wilde}. %QICnielchu

%For a unitary representation, the irreducible subrepresentations providing a direct-sum decomposition are pairwise orthogonal. This applies, in particular, for the stable subspaces in the list above.

\begin{proposition}
\label{prop:1irrmaxent}
%Every one-dimensional subrepresentation of the tensor product of two two-dimensional irreducible unitary representations of $SO(3)_p$ is spanned by a maximally entangled state. 
Let $\mathdutchcal{U}_A,\mathdutchcal{V}_B\colon \SO(3)_p\rightarrow \U(2)$ %può essere in realtà qualsiasi gruppo topologico
be $p$-adic qubit irreps of $\SO(3)_p$, and consider $\mathdutchcal{U}_A\otimes \mathdutchcal{V}_B\colon \SO(3)_p\rightarrow \mathrm{U}(\CC^2_A\otimes \CC^2_B)$. Every one-dimensional subrepresentation $\mathcal{M}\subseteq \CC_A^2\otimes \CC_B^2$ is spanned by a maximally entangled state. 
\end{proposition}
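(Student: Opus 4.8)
The plan is to reduce the statement to Schur's lemma for the irreducible factor $\mathdutchcal{U}_A$, working with the reduced density operator of the putative singlet. Let $\mathbf{v}$ be a unit vector spanning the one-dimensional subrepresentation $\mathcal{M}$. Because $\mathcal{M}$ is stable under $\mathdutchcal{U}_A\otimes\mathdutchcal{V}_B$, there is a character $\lambda\colon\SO(3)_p\to\U(1)$ with $(\mathdutchcal{U}_A(g)\otimes\mathdutchcal{V}_B(g))\mathbf{v}=\lambda(g)\,\mathbf{v}$ for every $g$ --- unitarity forcing $\abs{\lambda(g)}=1$, which is all we shall use. First I would note that the rank-one projector $P\coloneqq\proj{\mathbf{v}}$ is then genuinely invariant, $(\mathdutchcal{U}_A(g)\otimes\mathdutchcal{V}_B(g))\,P\,(\mathdutchcal{U}_A(g)\otimes\mathdutchcal{V}_B(g))^{\dagger}=\abs{\lambda(g)}^2P=P$, the phase cancelling.

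Next I would take the partial trace over the $B$ subsystem. Writing $\rho_A\coloneqq\Tr_B P$ for the reduced state on $\CC^2_A$, and using that $\mathdutchcal{U}_A(g)$ pulls out of the $B$-trace while the unitary $\mathdutchcal{V}_B(g)$ cancels against $\mathdutchcal{V}_B(g)^{\dagger}$ by cyclicity --- the intertwining identity
\beq\notag
\Tr_B\!\big[(\mathdutchcal{U}_A(g)\otimes\mathdutchcal{V}_B(g))\,X\,(\mathdutchcal{U}_A(g)\otimes\mathdutchcal{V}_B(g))^{\dagger}\big]=\mathdutchcal{U}_A(g)\,(\Tr_B X)\,\mathdutchcal{U}_A(g)^{\dagger}
\eeq
applied to $X=P$ --- the invariance of $P$ gives $\rho_A=\mathdutchcal{U}_A(g)\,\rho_A\,\mathdutchcal{U}_A(g)^{\dagger}$ for all $g\in\SO(3)_p$; that is, $\rho_A$ commutes with the entire image of $\mathdutchcal{U}_A$.

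Then I would invoke Schur's lemma: since $\mathdutchcal{U}_A$ is an irrep (it is a $p$-adic qubit), the only operators on $\CC^2_A$ commuting with all $\mathdutchcal{U}_A(g)$ are scalar multiples of the identity, and $\Tr\rho_A=\Tr P=1$ then pins down $\rho_A=\frac{1}{2}\,\mI_2$, the maximally mixed state. By the Schmidt decomposition, a pure bipartite state whose reduction is maximally mixed has both Schmidt coefficients equal to $1/\sqrt{2}$, hence Schmidt rank $2$ with uniform singular values; this is precisely maximal entanglement. Thus $\mathbf{v}$ --- and hence $\mathcal{M}=\CC\mathbf{v}$ --- is spanned by a maximally entangled state.

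The whole argument is short because the representations involved are genuinely unitary --- being the $p$-adic qubit irreps of Theorem~\ref{thrortheorpqub} lifted from the compact group $\SO(3)_p$, indeed they are orthogonal --- so the only point demanding care, the ``main obstacle'', is the partial-trace intertwining identity above, valid for any unitary $\mathdutchcal{V}_B(g)$; this is routine. An equivalent, even more elementary route bypasses the partial trace: identify $\mathbf{v}$ with a matrix $M\in\mM(2,\CC)$ so that $\mathdutchcal{U}_A(g)\otimes\mathdutchcal{V}_B(g)$ acts by $M\mapsto\mathdutchcal{U}_A(g)\,M\,\mathdutchcal{V}_B(g)^{\top}$; the eigenvector equation becomes $\mathdutchcal{U}_A(g)\,M=\lambda(g)\,M\,\overline{\mathdutchcal{V}_B(g)}$, whence $MM^{\dagger}$ commutes with every $\mathdutchcal{U}_A(g)$, so $MM^{\dagger}\propto\mI_2$ and $M$ is a nonzero scalar times a unitary, i.e. $\mathbf{v}$ has equal Schmidt coefficients and is maximally entangled.
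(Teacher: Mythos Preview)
Your proof is correct and follows essentially the same route as the paper's: show the rank-one projector onto $\mathcal{M}$ is invariant under $\mathdutchcal{U}_A\otimes\mathdutchcal{V}_B$, take a partial trace to obtain an intertwiner for one of the irreducible factors, and apply Schur's lemma to conclude the reduced state is maximally mixed. The only cosmetic difference is that the paper traces over $A$ and invokes irreducibility of $\mathdutchcal{V}_B$ whereas you trace over $B$ and use $\mathdutchcal{U}_A$; your alternative matrix-realisation argument at the end is a nice bonus not present in the paper.
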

\begin{proof}
Let $\mathcal{M}\equiv \CC \ket{v}\subseteq \CC^2\otimes \CC^2$ be an $\SO(3)_p$-stable subspace, and let $P_\mathcal{M}\equiv \proj{v}\colon \CC^2\otimes \CC^2\rightarrow \mathcal{M} $ be the rank-one orthogonal projector onto $\mathcal{M}$. Then, %per rappresentazione unitaria (di qualsiasi dimensione), stable \big(\mathdutchcal{U}_A(L)\otimes\mathdutchcal{V}_B(L)\big)\mathcal{M}\subseteq\mathcal{M} means that the operators commute
$P_\mathcal{M}$ is an intertwining operator for $\mathdutchcal{U}_A\otimes \mathdutchcal{V}_B$, i.e.
\beq \label{eq:sottostabileunit}
P_\mathcal{M}=\big(\mathdutchcal{U}_A(L)\otimes\mathdutchcal{V}_B(L)\big)P_\mathcal{M}\big(\mathdutchcal{U}_A(L)^\dagger\otimes \mathdutchcal{V}_B(L)^\dagger\big),
\eeq 
for every $L\in \SO(3)_p$%, where $\mathdutchcal{U}_A(L)^\dagger$ denotes the Hermitian adjoint of $\mathdutchcal{U}_A(L)$
. We exploit the invariance of the partial trace $\Tr_A$ under conjugation by local unitaries on the subspace $\CC_A^2$ (a symmetric discussion holds for $\Tr_B$):
\beq \label{eq:proptracciaparz}
\Tr_A(P_\mathcal{M}) = \mathdutchcal{V}_B(L)\Tr_A(P_\mathcal{M})\mathdutchcal{V}_B(L)^\dagger,
\eeq 
for every $L\in \SO(3)_p$. In other words, $\Tr_A(P_\mathcal{M})$ is an intertwining operator for the local action $\mathdutchcal{V}_B$ of $\SO(3)_p$ on the subsystem $\CC_B^2$. Since $\mathdutchcal{V}_B$ is an irreducible representation (over the algebraically closed field $\CC$), Schur's lemma applies%l'unica matrice che commuta con $\mathdutchcal{V}_B$ è l'identità, che quindi è equivalente
: $\Tr_A(P_\mathcal{M})=\lambda \mathbbm{1}_B$ for some $\lambda\in\CC^\times$, where $\mathbbm{1}_B$ denotes the identity operator on $\CC_B^2$. We have $2\lambda=\Tr (\lambda \mathbbm{1}_B)=\Tr \Tr_A(P_\mathcal{M}) = \Tr(P_\mathcal{M})=1$, i.e. $\lambda=1/2$. It means that $\Tr_A(P_\mathcal{M})$ is a maximally mixed quantum state, hence $\ket{v}$ is a maximally entangled vector state. 
\end{proof}

%Now we look for a classification of orthogonal basis of $\CC^2\otimes \CC^2$ of four maximally entangled states.
\begin{fact}\label{fact:notoifo}
On $\CC^2\otimes \CC^2$, any maximally entangled state is obtainable by local unitary transformations on a Bell state.
\end{fact}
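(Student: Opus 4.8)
The plan is to obtain the statement at once from the Schmidt decomposition, which the excerpt has already recalled. Let $\ket{\psi}\in\CC^2_A\otimes\CC^2_B$ be maximally entangled. The first step is to bring $\ket{\psi}$ to canonical form: its Schmidt decomposition reads $\ket{\psi}=s_0\,\ket{a_0}\otimes\ket{b_0}+s_1\,\ket{a_1}\otimes\ket{b_1}$ with $s_0,s_1\ge 0$ and orthonormal Schmidt bases $(\ket{a_0},\ket{a_1})$ of $\CC^2_A$ and $(\ket{b_0},\ket{b_1})$ of $\CC^2_B$; the definition of maximal entanglement used here (Schmidt rank $2$ with uniform singular values, equivalently $\mathrm{Tr}_A\proj{\psi}=\tfrac12\mathbbm{1}_B$) together with $\norm{\ket{\psi}}=1$ forces $s_0=s_1=1/\sqrt2$, so $\ket{\psi}=\tfrac{1}{\sqrt2}\big(\ket{a_0}\otimes\ket{b_0}+\ket{a_1}\otimes\ket{b_1}\big)$.

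Next I would introduce the local unitaries $U_A,V_B\in\U(2)$ determined on the canonical basis by $U_A\ket{i}=\ket{a_i}$ and $V_B\ket{i}=\ket{b_i}$ for $i=0,1$; these maps are unitary precisely because their images are orthonormal bases of $\CC^2$. A one-line computation then gives $(U_A\otimes V_B)\ket{\phi^+}=\tfrac{1}{\sqrt2}\big(U_A\ket{0}\otimes V_B\ket{0}+U_A\ket{1}\otimes V_B\ket{1}\big)=\ket{\psi}$, with $\ket{\phi^+}$ the Bell state of Eq.~\eqref{eq:Bellpsiphi}; this proves the claim. It is moreover irrelevant which Bell state is taken as reference, since $\ket{\phi^-},\ket{\psi^+},\ket{\psi^-}$ are themselves images of $\ket{\phi^+}$ under the local Pauli unitaries $\mathbbm{1}_A\otimes Z_B$, $\mathbbm{1}_A\otimes X_B$ and $\mathbbm{1}_A\otimes X_BZ_B$, so the orbit of the local-unitary group through any fixed Bell state is the same set --- which, by the above, is exactly the set of maximally entangled states.

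I do not expect a genuine obstacle here; the only care needed is bookkeeping: that the ``uniform singular values'' hypothesis together with normalisation pins the Schmidt coefficients down to $1/\sqrt2$ \emph{exactly}, and that the identity $(U_A\otimes V_B)\ket{\phi^+}=\ket{\psi}$ is understood as an equality of unit rays, so that any residual global phase can be absorbed into, say, $U_A$. As an alternative route that bypasses even the Schmidt decomposition, one can use the operator--vector correspondence $\sum_{ij}M_{ij}\ket{i}\ket{j}\leftrightarrow M$: maximal entanglement of $\ket{\psi}$ then amounts to the reduced operator $MM^\dagger$ being $\tfrac12\mathbbm{1}_A$, i.e.\ $M=\tfrac{1}{\sqrt2}W$ for a unitary $W$, whence $\ket{\psi}=(\mathbbm{1}_A\otimes W^{\!\top})\ket{\phi^+}$ displays the required local unitary directly.
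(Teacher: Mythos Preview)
Your proof is correct and follows essentially the same approach as the paper: both rely on the Schmidt decomposition and the fact that maximally entangled states have all Schmidt coefficients equal to $1/\sqrt{2}$. The paper invokes the general principle that two pure bipartite states are local-unitary equivalent iff they share the same Schmidt coefficients, whereas you carry out that step explicitly by constructing $U_A,V_B$ from the Schmidt bases; this is the same argument with a slightly more constructive presentation.
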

\begin{proof}
A state in a bipartite system of finite dimension is maximally entangled if it has maximum Schmidt rank, i.e. if all its Schmidt coefficients are equal to the inverse of the minimum dimension of the subsystems. Two pure states can be transformed into each other by means of local unitaries if and only if they have the same Schmidt coefficients. Therefore, every pair of maximally entangled states are related by local unitaries. The four Bell states are maximally entangled two-qubit states in $\CC^2\otimes \CC^2$.
\end{proof}
\begin{corollary}\label{cor:maxentproj1d}
Let $\mathdutchcal{U}_A,\mathdutchcal{V}_B\colon \SO(3)_p\rightarrow \U(2)$ be $p$-adic qubit irreps of $\SO(3)_p$, and consider $\mathdutchcal{U}_A\otimes \mathdutchcal{V}_B\colon \SO(3)_p\rightarrow \mathrm{U}(\CC^2_A\otimes \CC^2_B)$. Up to local change of basis, the vector spanning a one-dimensional subrepresentation is $\ket{\phi^+}$. The projector on any one-dimensional subrepresentation is a maximally entangled (Bell) state.
\end{corollary}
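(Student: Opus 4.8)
The plan is to obtain this corollary as an immediate packaging of Proposition~\ref{prop:1irrmaxent} and Fact~\ref{fact:notoifo}. First I would take a one-dimensional subrepresentation $\mathcal{M}=\CC\ket{v}\subseteq\CC_A^2\otimes\CC_B^2$ of $\mathdutchcal{U}_A\otimes\mathdutchcal{V}_B$, with $\ket{v}$ of norm $1$. Proposition~\ref{prop:1irrmaxent} tells us that $\ket{v}$ is maximally entangled; concretely, its reduced operators satisfy $\Tr_A\proj{v}=\tfrac12\mathbbm{1}_B$ and $\Tr_B\proj{v}=\tfrac12\mathbbm{1}_A$, so its Schmidt spectrum is $\bigl(\tfrac{1}{\sqrt2},\tfrac{1}{\sqrt2}\bigr)$, the same as that of the Bell state $\ket{\phi^+}=\tfrac{1}{\sqrt2}(\ket{00}+\ket{11})$.

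Next, by Fact~\ref{fact:notoifo} (two pure bipartite states with equal Schmidt coefficients are interconvertible by local unitaries), there exist unitaries $W_A\in\U(\CC_A^2)$ and $W_B\in\U(\CC_B^2)$ and a phase $e^{i\theta}$ such that $(W_A^\dagger\otimes W_B^\dagger)\ket{v}=e^{i\theta}\ket{\phi^+}$. Reading $W_A^\dagger$ and $W_B^\dagger$ as changes of orthonormal basis on the individual factors $\CC_A^2$ and $\CC_B^2$ --- which is legitimate, since every $2\times 2$ unitary is such a change of basis --- this shows that, up to a local change of basis (and an irrelevant global phase, as we work with vector rays), the vector spanning $\mathcal{M}$ is $\ket{\phi^+}$. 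Conjugating by the same local unitaries, $P_\mathcal{M}=\proj{v}=(W_A\otimes W_B)\proj{\phi^+}(W_A^\dagger\otimes W_B^\dagger)$, so the rank-one projector onto any one-dimensional subrepresentation is the Bell-state projector up to local change of basis; equivalently, since its marginals are maximally mixed, $P_\mathcal{M}$ is a maximally entangled quantum state.

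I do not expect any real obstacle: the substance has already been established in Proposition~\ref{prop:1irrmaxent} and Fact~\ref{fact:notoifo}, and the only point deserving a line of care is the bookkeeping of global phases versus vector rays, together with the observation that the local unitaries produced by Fact~\ref{fact:notoifo} are precisely changes of basis of the single-qubit factor spaces, so that \emph{local unitary equivalence} and \emph{local change of basis} coincide here.
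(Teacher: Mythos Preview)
Your proposal is correct and follows essentially the same approach as the paper: invoke Proposition~\ref{prop:1irrmaxent} to get maximal entanglement of the spanning vector, then Fact~\ref{fact:notoifo} to reduce it to $\ket{\phi^+}$ by a local change of basis, and conclude that the projector is maximally entangled via its maximally mixed marginals. Your write-up is slightly more explicit (spelling out the Schmidt data and the local unitaries $W_A,W_B$), but the logic is identical.
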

\begin{proof}
According to Proposition~\ref{prop:1irrmaxent}, a one-dimensional subrepresentation of any $2\otimes 2$ is spanned by a maximally entangled state $\ket{v}$. According to Fact~\ref{fact:notoifo}, $\ket{v}$ can be transformed to any desired maximally entangled state, by a change of basis of $\CC^2\otimes \CC^2$ through local unitaries. Hence, without loss of generality, we assume $\ket{v}$ to be the Bell state $\ket{\phi^+}$. Moreover, $\proj{\phi^+}$ is maximally entangled, as it has maximally mixed partial traces.
\end{proof}

\begin{lemma}\label{lemma:1+1}
Let $2\otimes2$ admit a direct-sum decomposition into irreps which contains $1\oplus1$. Up to a local change of basis, the vectors spanning the one-dimensional subrepresentations are $\ket{\phi^+}$ and $\ket{\phi^-}$.
\end{lemma}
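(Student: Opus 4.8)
The plan is to reduce the statement to a short computation with the operator--vector correspondence on $\CC^2\otimes\CC^2$. Let $\mathdutchcal{U}_A,\mathdutchcal{V}_B\colon\SO(3)_p\rightarrow\U(2)$ be the two $p$-adic qubit irreps whose tensor product carries the decomposition, and let $\mathcal{M}_1=\CC\ket{v_1}$ and $\mathcal{M}_2=\CC\ket{v_2}$ be the two one-dimensional subrepresentations appearing in it. Since $\mathdutchcal{U}_A\otimes\mathdutchcal{V}_B$ is unitary, the direct-sum decomposition may be taken orthogonal, so $\braket{v_1}{v_2}=0$; and by Proposition~\ref{prop:1irrmaxent} both $\ket{v_1}$ and $\ket{v_2}$ are maximally entangled. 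First I would recall that every maximally entangled state of $\CC^2\otimes\CC^2$ has the form $(\mathbbm{1}\otimes U)\ket{\phi^+}$ for some $U\in\U(2)$: this follows from the Schmidt decomposition together with the transpose identity $(A\otimes\mathbbm{1})\ket{\phi^+}=(\mathbbm{1}\otimes A^\top)\ket{\phi^+}$. Writing $\ket{v_i}=(\mathbbm{1}\otimes U_i)\ket{\phi^+}$ and applying the local unitary $\mathbbm{1}\otimes U_1^\dagger$, one reaches $\ket{v_1}\mapsto\ket{\phi^+}$ and $\ket{v_2}\mapsto(\mathbbm{1}\otimes W)\ket{\phi^+}$ with $W\coloneqq U_1^\dagger U_2\in\U(2)$.

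Next I would convert orthogonality into a normal form for $W$. From $\braket{\phi^+}{(\mathbbm{1}\otimes W)\phi^+}=\tfrac12\Tr(W)$ and $\braket{v_1}{v_2}=0$ we get $\Tr(W)=0$; a traceless $2\times2$ unitary has eigenvalues $\{e^{i\theta},-e^{i\theta}\}$, so $W=e^{i\theta}R$ with $R=R^\dagger$, $R^2=\mathbbm{1}$, $\Tr R=0$, and in particular $R$ is unitarily conjugate to the Pauli matrix $Z=\diag(1,-1)$. Since $(\mathbbm{1}\otimes Z)\ket{\phi^+}=\ket{\phi^-}$, it suffices to use the residual gauge freedom — local unitaries fixing $\ket{\phi^+}$ up to phase — to move $W$ to $Z$. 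Using the transpose identity again, $(A\otimes B)\ket{\phi^+}=e^{i\gamma}\ket{\phi^+}$ forces $B=e^{i\gamma}\overline{A}$, and such an element sends $(\mathbbm{1}\otimes W)\ket{\phi^+}$ to $e^{i\gamma}(\mathbbm{1}\otimes\overline{A}\,W\,\overline{A}^\dagger)\ket{\phi^+}$; that is, it acts on the representing unitary by $W\mapsto e^{i\gamma}\,\overline{A}\,W\,\overline{A}^\dagger$. Choosing $\overline{A}$ to conjugate $R$ to $Z$ and $\gamma=-\theta$ yields $W\mapsto Z$, hence $\ket{v_2}\mapsto\ket{\phi^-}$ while $\ket{v_1}$ remains $\ket{\phi^+}$ (up to a global phase). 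Composing the two local unitaries gives the asserted local change of basis.

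The only real obstacle is organising the bookkeeping: keeping the transpose and conjugation identities straight and splitting the construction into two stages — first align $\ket{v_1}$ with $\ket{\phi^+}$ via an arbitrary local unitary, then exploit the remaining stabiliser of $\ket{\phi^+}$ (which acts on the second state's representing unitary by conjugation and an overall phase) to bring $\ket{v_2}$ to $\ket{\phi^-}$. Everything else is immediate: maximal entanglement of the $\ket{v_i}$ is Proposition~\ref{prop:1irrmaxent}, their orthogonality follows from unitarity of $\mathdutchcal{U}_A\otimes\mathdutchcal{V}_B$, and the normal form $W=e^{i\theta}R$ is elementary linear algebra. One could alternatively phrase the whole argument as: two orthogonal maximally entangled two-qubit states always extend to a local-unitary image of the Bell basis, hence in particular can be simultaneously mapped to $(\ket{\phi^+},\ket{\phi^-})$.
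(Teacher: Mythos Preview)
Your proof is correct and follows essentially the same route as the paper's: use Proposition~\ref{prop:1irrmaxent} to get maximal entanglement, write each singlet as a local unitary applied to $\ket{\phi^+}$, align the first with $\ket{\phi^+}$, use orthogonality to force the remaining unitary to be traceless (hence $e^{i\theta}$ times a conjugate of $Z$), and then use the stabiliser of $\ket{\phi^+}$ under local unitaries to conjugate it to $Z$. The only cosmetic differences are that the paper places the representing unitary on the first tensor factor rather than the second, and invokes Fact~\ref{fact:notoifo} for the first alignment step where you write out $\mathbbm{1}\otimes U_1^\dagger$ explicitly.
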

\begin{proof}
By Proposition~\ref{prop:1irrmaxent} and Fact~\ref{fact:notoifo}, assume that the state spanning one of the two one-dimensional irreps is $\ket{\phi^+}$. Call $\ket{\phi}$ the maximally entangled state spanning the other one-dimensional irrep. The Bell state $\ket{\phi^+}$ is rotationally invariant, i.e., $(\mathdutchcal{U}\otimes \mathdutchcal{U}^\ast)\ket{\phi^+}=\ket{\phi^+}$ for every $\mathdutchcal{U}\in \U(2)$, where $\mathdutchcal{U}^\ast$ denotes the complex conjugate of $\mathdutchcal{U}$. It follows that
\beq \label{eq:proprBell+}
(\mathbbm{1}\otimes \mathdutchcal{U}^\ast)\ket{\phi^+} = (\mathdutchcal{U}^\dagger\otimes \mathbbm{1})\ket{\phi^+},
\eeq 
for every $\mathdutchcal{U}\in \U(2)$. This allows us to move all local unitary operators on $\ket{\phi^+}$ to one subsystem only, e.g. the first copy of $\CC^2$. Thus, without loss of generality, we can assume $\ket{\phi}=(\mathdutchcal{V}\otimes \mathbbm{1})\ket{\phi^+}$, for some $\mathdutchcal{V}\in\U(2)$. Now, we wonder if there exists $\mathdutchcal{U}\in \U(2)$ such that $(\mathdutchcal{U}\otimes \mathdutchcal{U}^\ast)\ket{\phi}\sim\ket{\phi^-}$, where $\sim$ denotes the ray equivalence relation in the projective space of $\CC^2\otimes \CC^2$. By Eq.~\eqref{eq:proprBell+} we get
\begin{align}
(\mathdutchcal{U}\otimes \mathdutchcal{U}^\ast)\ket{\phi} %&= (\mathdutchcal{U}\mathdutchcal{V}\otimes \mathdutchcal{U}^\ast)\ket{\phi^+} \nonumber\\ = (\mathdutchcal{U}\mathdutchcal{V}\otimes 1)(1\otimes \mathdutchcal{U}^\ast)\ket{\phi^+} \\ & 
= (\mathdutchcal{U}\mathdutchcal{V}\mathdutchcal{U}^\dagger \otimes \mathbbm{1})\ket{\phi^+}.
\end{align}
As $\mathdutchcal{V}$ is unitary, it is unitarily diagonalisable, and we assume $\mathdutchcal{U}$ to be such that $\mathdutchcal{U}\mathdutchcal{V}\mathdutchcal{U}^\dagger$ is diagonal. Moreover, as the $p$-adic qubit representations are unitary, it must be 
\beq 
0=\braket{\phi^+}{\phi} = \frac{1}{2}\Tr \mathdutchcal{V}.
\eeq 
This imposes the condition $\Tr (\mathdutchcal{U}\mathdutchcal{V}\mathdutchcal{U}^\dagger)=0$. The set of traceless diagonal matrices in $\U(2)$ is $\{e^{i\theta}Z,\ \theta\in\RR\}$. In particular, $\mathdutchcal{U}\mathdutchcal{V}\mathdutchcal{U}^\dagger$ is necessarily the $Z$ Pauli matrix up to a global phase. This provides 
\beq 
(\mathdutchcal{U}\mathdutchcal{V}\mathdutchcal{U}^\dagger \otimes \mathbbm{1})\ket{\phi^+}= (e^{i\theta}Z\otimes \mathbbm{1})\ket{\phi^+}=e^{i\theta}\ket{\phi^-}.
\eeq 
To summarise, there exists $\mathdutchcal{U}\in\U(2)$ such that $\mathdutchcal{U}\otimes \mathdutchcal{U}^\ast$ transforms $\ket{\phi}$ to $e^{i\theta}\ket{\phi^-}$, $\theta\in\RR$, and leaves $\ket{\phi^+}$ invariant. %Lastly, we can change $e^{i\theta}\ket{\phi^-}$ with whatever of its scalar multiples, which span the same one-dimensional stable subspace: we choose any $\theta=2k\pi$, $k\in\ZZ$.
\end{proof}

\begin{proposition}
Case $2\otimes 2=1\oplus 1\oplus 1\oplus 1$. Up to a local change of basis, this decomposition is realised on a basis of four Bell states, and a projector on each singlet is maximally entangled.
\end{proposition}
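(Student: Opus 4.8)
The plan is to bootstrap from Lemma~\ref{lemma:1+1}: since the decomposition $1\oplus1\oplus1\oplus1$ in particular contains $1\oplus1$, a local change of basis brings two of the four one-dimensional subrepresentations to $\CC\ket{\phi^+}$ and $\CC\ket{\phi^-}$. Let $W$ denote the orthogonal complement of $\CC\ket{\phi^+}\oplus\CC\ket{\phi^-}$ in $\CC^2\otimes\CC^2$, so that $W=\Span\{\ket{\psi^+},\ket{\psi^-}\}$. Because $\mathdutchcal{U}_A\otimes\mathdutchcal{V}_B$ is unitary and $\CC\ket{\phi^+}\oplus\CC\ket{\phi^-}$ is a subrepresentation, $W$ is a subrepresentation; being two-dimensional and containing the remaining two one-dimensional subrepresentations, it is their direct sum $W=L_1\oplus L_2$. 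Replacing $L_2$ by $L_1^\perp\cap W$ (again a subrepresentation, by unitarity) we may assume $L_1\perp L_2$.

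Next I would pin down the maximally entangled states inside $W$. By Proposition~\ref{prop:1irrmaxent} each $L_i=\CC\ket{\mu_i}$ with $\ket{\mu_i}$ maximally entangled. Writing a generic unit vector of $W$ as $a\ket{\psi^+}+b\ket{\psi^-}=\tfrac{1}{\sqrt2}\bigl((a+b)\ket{01}+(a-b)\ket{10}\bigr)$, a one-line computation of the reduced density operator (the $\ketbra{01}{10}$ and $\ketbra{10}{01}$ terms drop out under the partial trace) shows it is maximally mixed if and only if $\abs{a+b}=\abs{a-b}$, i.e.\ the $\ket{01}$- and $\ket{10}$-amplitudes have equal modulus. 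Hence, up to a global phase, every maximally entangled state of $W$ is $\tfrac{1}{\sqrt2}(\ket{01}+e^{i\delta}\ket{10})$ for some $\delta\in\RR$, and two of them are orthogonal precisely when their phases differ by $\pi$. Therefore $\ket{\mu_1}=\tfrac{1}{\sqrt2}(\ket{01}+e^{i\delta}\ket{10})$ and $\ket{\mu_2}=\tfrac{1}{\sqrt2}(\ket{01}-e^{i\delta}\ket{10})$ for one common $\delta$.

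Finally I would exploit the residual gauge freedom that still fixes $\ket{\phi^+}$ and $\ket{\phi^-}$. Applying $\mathdutchcal{U}\otimes\mathdutchcal{U}^\ast$ with $\mathdutchcal{U}=\diag(e^{i\delta/2},1)$ leaves $\ket{\phi^+}$ invariant (rotational invariance of $\ket{\phi^+}$ under every $\mathdutchcal{U}\otimes\mathdutchcal{U}^\ast$) and leaves $\ket{\phi^-}$ invariant (direct check on $\ket{00},\ket{11}$), while sending $\ket{01}\mapsto e^{i\delta/2}\ket{01}$ and $\ket{10}\mapsto e^{-i\delta/2}\ket{10}$, hence $\ket{\mu_1}\mapsto e^{i\delta/2}\ket{\psi^+}$ and $\ket{\mu_2}\mapsto e^{i\delta/2}\ket{\psi^-}$. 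Composing all the local unitaries used so far gives a single local change of basis after which the four one-dimensional subrepresentations are exactly $\CC\ket{\phi^+},\CC\ket{\phi^-},\CC\ket{\psi^+},\CC\ket{\psi^-}$, i.e.\ the decomposition is realised on the Bell basis. The projector onto each such line is then a Bell-state projector, whose two partial traces are both $\tfrac12\mathbbm{1}$, so each is maximally entangled (alternatively, apply Corollary~\ref{cor:maxentproj1d} to each subrepresentation).

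The step I expect to be the main obstacle is the structural bookkeeping around $W$: a priori two of the four one-dimensional pieces could share the same character, so the line decomposition of $W$ is not canonical; the fix is the unitarity argument (replace $L_2$ by $L_1^\perp\cap W$) yielding an orthogonal pair, after which the single-parameter family of Step~2 applies verbatim, and one must check that the diagonal $\mathdutchcal{U}\otimes\mathdutchcal{U}^\ast$ gauge — which is precisely the freedom left after fixing $\ket{\phi^\pm}$ — is exactly rich enough to carry an arbitrary such orthogonal pair to $\{\ket{\psi^+},\ket{\psi^-}\}$.
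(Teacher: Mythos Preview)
Your proof is correct and follows essentially the same route as the paper: invoke Lemma~\ref{lemma:1+1} to fix $\ket{\phi^\pm}$, use Proposition~\ref{prop:1irrmaxent} to see that the remaining invariant lines in $\Span\{\ket{01},\ket{10}\}$ are spanned by maximally entangled vectors (hence of the form $\tfrac{1}{\sqrt2}(\ket{01}+e^{i\delta}\ket{10})$ up to phase), and then exploit the residual diagonal $\mathdutchcal{U}\otimes\mathdutchcal{U}^\ast$ symmetry to rotate the relative phase to $0$ or $\pi$. The only cosmetic difference is that you treat the two remaining lines $L_1,L_2$ simultaneously and are explicit about forcing $L_1\perp L_2$ in the possibly-isotypic case, whereas the paper handles one vector $\ket{\psi}$ first and infers the fourth by orthogonality; your bookkeeping here is in fact slightly more careful than the paper's.
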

\begin{proof} By Lemma \ref{lemma:1+1}, we assume that two of the four one-dimensional subrepresentations are spanned by $\ket{\phi^+}$ and $\ket{\phi^-}$. Let $\ket{\psi}$ be spanning another one-dimensional subrepresentation. Again, this can be assumed of the form $\ket{\psi}=(\mathdutchcal{V}\otimes\mathbbm{1})\ket{\phi^+}$ for some $\mathdutchcal{V}\in\U(2)$ by Eq.~\eqref{eq:proprBell+}. By unitarity of the $p$-adic qubit representations, the four subrepresentations must be orthogonal one another: $\ket{\psi}\in\Span(\ket{\phi^+},\ket{\phi^-})^\perp = \Span (\ket{\psi^+},\ket{\psi^-}) = \Span(\ket{01},\ket{10})$, which means 
\beq 
\ket{\psi} = \lambda_{01}\ket{01}+\lambda_{10}\ket{10},
\eeq 
for some $\lambda_{01},\lambda_{10}\in\CC$%\footnote{Equivalently, $0=\langle \phi^+\ket{\psi} = \frac{1}{2}\Tr V$ as before, but also $0=\langle \phi^-\ket{\psi}=\langle \phi^+|(Z\otimes \mathbbm{1})(V\otimes \mathbbm{1})\ket{\phi^+} = \frac{1}{2}\Tr ZV$. This means that $V$ is orthogonal to $\mathbbm{1}$ and $Z$ with respect to the Hilbert-Schmidt product (here equal to the standard Hermitian product) in $\mathsf{M}_{2\times2}(\CC)$. An orthogonal basis for $\mathsf{M}_{2\times2}(\CC)$ is given by the identity together with the Pauli matrices, $\{\mathbbm{1},X,Y,Z\}$. Hence, $V\in\Span\{X,Y\}$ is a unitary off-diagonal matrix: $V=\lambda_XX+\lambda_YY$ for some $\lambda_X,\lambda_Y\in\CC$ provides $\ket{\psi} = \lambda_X\ket{\psi^+}-i\lambda_Y\ket{\psi^-} = \frac{\lambda_X-i\lambda_Y}{\sqrt{2}}|01\rangle + \frac{\lambda_X+i\lambda_Y}{\sqrt{2}}|10\rangle$.}
. By Proposition~\ref{prop:1irrmaxent}, $\ket{\psi}$ is a maximally entangled state, hence it must be $|\lambda_{01}|^2=|\lambda_{10}|^2=\frac{1}{2}$: we get $\ket{\psi}=\frac{1}{\sqrt{2}}\left(e^{i\arg\lambda_{01}}\ket{01} + e^{i\arg\lambda_{10}}\ket{10}\right)$. Except from a global multiplicative constant, the state $\ket{\psi}$ is uniquely determined up to a relative phase. We can remove this ambiguity by exploiting the common remaining symmetry of $\ket{\phi^+}$ and $\ket{\phi^-}$. Indeed, let $\Theta\in\U(2)$ be diagonal, namely $\Theta=\begin{pmatrix}e^{i\theta} & 0\\0&e^{i\varphi} \end{pmatrix}$ for some $\theta,\varphi\in\RR$. Clearly $(\Theta\otimes \Theta^\ast)\ket{\phi^\pm}=\ket{\phi^\pm}$, moreover 
\beq 
(\Theta\otimes \Theta^\ast)\ket{\psi} = \frac{e^{i(\arg\lambda_{01}+\theta-\varphi)}}{\sqrt{2}}\left(\ket{01}+e^{i(\arg\lambda_{10}-\arg\lambda_{01}+2\varphi-2\theta)}\ket{10}\right).
\eeq 
If $\varphi-\theta=\frac{1}{2}(\arg\lambda_{01}-\arg\lambda_{10}+2k\pi)$ then $(\Theta\otimes \Theta^\ast)\ket{\psi}\sim \ket{\psi^+}$; if $\varphi-\theta=\frac{1}{2}(\arg\lambda_{01}-\arg\lambda_{10}+(2k+1)\pi)$ then $(\Theta\otimes \Theta^\ast)\ket{\psi}\sim \ket{\psi^-}$. %La fase rimasta e' rispettivamente $\arg\lambda_{01}+\theta-\varphi = \frac{1}{2}(\arg\lambda_{01}+\arg\lambda_{10}-2k\pi)$ o $\arg\lambda_{01}+\theta-\varphi = \frac{1}{2}(\arg\lambda_{01}+\arg\lambda_{10}-(2k+1)\pi)$.
%We are just left to get rid of the global phase, by chosing the appropriate scalar multiple of $(\Theta\otimes \Theta^\ast)\ket{\psi}$ (spanning the same subspace), or considering rays in the projective Hilbert space.
We have shown that three of the four one-dimensional subrepresentations are spanned by three Bell states. The forth stable line must be orthogonal to the others and again spanned by a maximally entangled state. As by Corollary~\ref{cor:maxentproj1d}, the projector on each one-dimensional stable subspace is a Bell density operator, which is maximally entangled.
\end{proof}
This is, for instance, what we already noticed for $p=3,7$ in Subsection~\ref{subsec:CBcpeff2357}. 

\begin{proposition}
Case $2\otimes 2=1\oplus 1\oplus 2$. Up to a local change of basis, this decomposition is realised on a basis of four Bell states. Projectors on the singlets are maximally entangled, while a projector on the doublet is separable.
\end{proposition}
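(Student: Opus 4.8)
The plan is to reduce everything to the normal form already obtained in Lemma~\ref{lemma:1+1} and then read off the entanglement properties using Corollary~\ref{cor:maxentproj1d}. Since the decomposition $2\otimes 2\simeq 1\oplus1\oplus2$ in particular contains a $1\oplus1$ piece, Lemma~\ref{lemma:1+1} supplies a local change of basis of $\CC^2_A\otimes \CC^2_B$ after which the two one-dimensional subrepresentations are $\CC\ket{\phi^+}$ and $\CC\ket{\phi^-}$. Write $\mathcal{S}\coloneqq\Span(\ket{\phi^+},\ket{\phi^-})=\Span(\ket{00},\ket{11})$ for their span, an invariant subspace of dimension $2$. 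The whole argument then comes down to showing that, after this change of basis, the two-dimensional (``doublet'') subrepresentation is forced to be $\mathcal{S}^\perp=\Span(\ket{\psi^+},\ket{\psi^-})=\Span(\ket{01},\ket{10})$.

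To see this I would use that the $p$-adic qubit representations, hence $\mathdutchcal{U}_A\otimes \mathdutchcal{V}_B$, are unitary, so $\mathcal{S}^\perp$ is again invariant, of dimension $2$. It is moreover irreducible: if it split into two invariant lines, then $\CC^2_A\otimes \CC^2_B$ would be a direct sum of four invariant lines and could carry no irreducible two-dimensional subrepresentation, contradicting the assumed form of the decomposition. Now if $\mathcal{N}$ denotes the two-dimensional irreducible summand in a given realisation of $2\otimes2\simeq 1\oplus1\oplus2$ with singlets $\CC\ket{\phi^+},\CC\ket{\phi^-}$, then $\mathcal{N}\cap\mathcal{S}^\perp$ is invariant; it cannot be a line (that would be an invariant line inside the irreducible $\mathcal{N}$), and it cannot be $\{0\}$ (otherwise the projection along $\mathcal{S}^\perp$ would restrict to an isomorphism of representations $\mathcal{N}\simeq(\CC^2_A\otimes \CC^2_B)/\mathcal{S}^\perp\simeq\mathcal{S}$, impossible since $\mathcal{S}\simeq 1\oplus1$ is reducible while $\mathcal{N}$ is not). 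Hence $\mathcal{N}=\mathcal{S}^\perp$, and the decomposition is realised on the orthonormal basis $\big(\ket{\phi^+},\ket{\phi^-},\ket{\psi^+},\ket{\psi^-}\big)$, all four of whose vectors are Bell states.

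Finally I would read off the quantum states. By Corollary~\ref{cor:maxentproj1d} the projectors $\proj{\phi^+}$ and $\proj{\phi^-}$ onto the two singlet lines are Bell density operators, hence maximally entangled (their partial traces equal $\frac12\mathbbm{1}$). For the doublet, the orthogonal projector onto $\mathcal{S}^\perp$ is
\beq\notag
P_{\text{doublet}}=\proj{\psi^+}+\proj{\psi^-}=\proj{01}+\proj{10},
\eeq
so the associated normalised state $\frac12\big(\ketbra{0}{0}\otimes\ketbra{1}{1}+\ketbra{1}{1}\otimes\ketbra{0}{0}\big)$ is a convex combination of product states and therefore separable; equivalently it coincides with its own partial transpose, which is positive, so it is PPT, and for two qubits this is sufficient for separability. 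The only step requiring genuine care is the representation-theoretic identification of the doublet with $\mathcal{S}^\perp$ in the second paragraph; granting that, the statement follows immediately from Lemma~\ref{lemma:1+1} and Corollary~\ref{cor:maxentproj1d}, so I do not anticipate any real obstacle.
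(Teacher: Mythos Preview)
Your proof is correct and follows essentially the same route as the paper: invoke Lemma~\ref{lemma:1+1} to place the two singlets at $\ket{\phi^\pm}$, identify the doublet as the orthogonal complement $\Span(\ket{\psi^\pm})$, and then read off the entanglement of the resulting projectors. Two small differences: where the paper simply asserts that the two-dimensional stable subspace is orthogonal to the singlet lines (relying implicitly on orthogonality of inequivalent isotypic components in a unitary representation), you spell this out via the intersection argument with $\mathcal{S}^\perp$; and for separability of the doublet projector you rewrite $\proj{\psi^+}+\proj{\psi^-}=\proj{01}+\proj{10}$ directly as a sum of product states, which is more elementary than the paper's appeal to the PPT threshold for Bell-diagonal states.
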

\begin{proof}
By Lemma~\ref{lemma:1+1}, the two one-dimensional subrepresentations are spanned by $\ket{\phi^+}$ and $\ket{\phi^-}$. These lines are  orthogonal to the bidimensional stable subspace. Hence, a possible basis for the latter is given by the remaining Bell states, $\ket{\psi^+}$ and $\ket{\psi^-}$. Again, the projectors onto the one-dimensional stable subspaces are maximally entangled by Corollary~\ref{cor:maxentproj1d}. On the other hand, by the PPT criterion, a convex combination of Bell density operators is separable if and only if all probability coefficients are smaller than or equal to $\frac{1}{2}$. It follows that the projector $
%\frac{1}{2}P_{\Span(\ket{\psi^\pm})}=
\frac{1}{2}\left(\proj{\psi^+}+\proj{\psi^-}\right)$ onto the bidimensional stable subspace is separable.
\end{proof}
We have specifically noticed this already for $p=2,5,7$ in Subsection~\ref{subsec:CBcpeff2357}.

\begin{proposition}
Case $2\otimes 2 = 1\oplus 3$. Up to a local change of basis, the one-dimensional stable subspace is spanned by the singlet $\ket{\psi^-}$, and the three-dimensional one by the triplet $(\ket{00}, \ket{\psi^+}, \ket{11})$ [or equivalently $(\ket{\psi^+}, \ket{\phi^+}, \ket{\phi^-})$]. The projector onto the singlet is maximally entangled, while the projector onto the triplet is separable.
\end{proposition}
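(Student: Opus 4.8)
The plan is to keep this case strictly parallel to the three preceding propositions, reducing everything to tools already in place: Proposition~\ref{prop:1irrmaxent}, Fact~\ref{fact:notoifo}, Corollary~\ref{cor:maxentproj1d}, the uniqueness of the isotypic decomposition of a unitary representation, and the PPT criterion. First I would note that, unlike the other three cases, the decomposition $2\otimes 2\simeq 1\oplus 3$ does not actually occur for $p$-adic qubits lifted from $G_p$, because $\SO(3)_p$ has no three-dimensional irrep factoring through $G_p$ (cf.\ Subsection~\ref{subsec:strirrpsGp}); the statement records, for comparison, the standard spin-$\tfrac12$ situation of Eqs.~\eqref{eq:singltripl+} and~\eqref{eq:singltripl}.

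For the structural part I would argue as follows. By Proposition~\ref{prop:1irrmaxent} the one-dimensional stable subspace is spanned by a maximally entangled vector, and by Fact~\ref{fact:notoifo} a local change of basis carries it onto any prescribed Bell state; I would choose $\ket{\psi^-}$, in line with the conventional singlet. Since the representation is unitary and the one- and three-dimensional irreducible components are distinct and each occurs with multiplicity one, the two pieces are mutually orthogonal, so the three-dimensional stable subspace is $\Span(\ket{\psi^-})^{\perp}=\Span(\ket{00},\ket{\psi^+},\ket{11})$. As $\{\ket{00},\ket{11}\}$ and $\{\ket{\phi^+},\ket{\phi^-}\}$ are orthonormal bases of one and the same two-dimensional subspace, the triplet space is equally spanned by $(\ket{\psi^+},\ket{\phi^+},\ket{\phi^-})$, which gives the stated equivalent basis.

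For the entanglement claims: the projector onto the singlet is $\proj{\psi^-}$, a pure Bell state, so its reduced operators $\Tr_A\proj{\psi^-}=\Tr_B\proj{\psi^-}=\tfrac12\mathbbm{1}$ are maximally mixed and the state is maximally entangled (alternatively one may invoke Corollary~\ref{cor:maxentproj1d}). The normalised projector onto the triplet is $\tfrac13(\proj{00}+\proj{\psi^+}+\proj{11})$; here I would reuse the manoeuvre from the $1\oplus1\oplus2$ case and rewrite $\proj{00}+\proj{11}=\proj{\phi^+}+\proj{\phi^-}$ (two orthonormal bases of the same plane give the same orthogonal projector), turning the state into the Bell-diagonal mixture $\tfrac13(\proj{\phi^+}+\proj{\phi^-}+\proj{\psi^+})$ with all weights equal to $\tfrac13\le\tfrac12$, which is separable by the PPT criterion.

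The two points needing a little care, rather than any real obstacle, are: justifying that the three-dimensional component is exactly the orthogonal complement of the singlet line --- this rests on complete reducibility of unitary representations together with the irreducibility and multiplicity-one appearance of the spin-$1$ representation --- and performing the re-expression of $\proj{00}+\proj{11}$ in the Bell basis, which is precisely what makes the Bell-diagonal/PPT separability criterion applicable. Everything else is a direct transcription of the arguments already used for the $1\oplus1\oplus1\oplus1$ and $1\oplus1\oplus2$ cases.
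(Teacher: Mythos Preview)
Your proposal is correct and follows essentially the same approach as the paper: invoke Proposition~\ref{prop:1irrmaxent} and Fact~\ref{fact:notoifo} to place the singlet at $\ket{\psi^-}$, take the triplet as its orthogonal complement, and use the PPT criterion for separability of the triplet. The only minor difference is that the paper verifies separability by directly asserting that the partial transpose of $\proj{00}+\proj{11}+\proj{\psi^+}$ is positive semidefinite, whereas you rewrite it as the Bell-diagonal mixture $\tfrac13(\proj{\phi^+}+\proj{\phi^-}+\proj{\psi^+})$ and reuse the $\le\tfrac12$ criterion already stated in the $1\oplus1\oplus2$ case; both are the same PPT argument.
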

\begin{proof}
By Proposition~\ref{prop:1irrmaxent} and Fact~\ref{fact:notoifo}, the one-dimensional subrepresentation is spanned by $\ket{\psi^-}$%se prendo le \phi, poi \ket{00} e \ket{11} non sono più ortogonali
; the three-dimensional stable subspace is its orthogonal complement. The projector onto the three-dimensional subspace is $\proj{00}+ \proj{11} +\proj{\psi^+}%=|00\rangle\!\langle00|+\frac{1}{2}|01\rangle\!\langle01|+ \frac{1}{2}|10\rangle\!\langle10|+ |11\rangle\!\langle11|  +\frac{1}{2}|01\rangle\!\langle10|+ \frac{1}{2}|10\rangle\!\langle01| = \begin{pmatrix}1&0&0&0\\0&\frac{1}{2}&\frac{1}{2}&0\\0&\frac{1}{2}&\frac{1}{2}&0\\0&0&0&1 \end{pmatrix}
$, whose partial transpositions are semidefinite-positive. %partial transposed with respect to the second subsystem equal to $|00\rangle\!\langle00|+\frac{1}{2}|01\rangle\!\langle01|+ \frac{1}{2}|10\rangle\!\langle10|+ |11\rangle\!\langle11|  +\frac{1}{2}|00\rangle\!\langle11|+ \frac{1}{2}|11\rangle\!\langle00| = \begin{pmatrix}1&0&0&\frac{1}{2}\\0&\frac{1}{2}&0&0\\0&0&\frac{1}{2}&0\\\frac{1}{2}&0&0&1 \end{pmatrix}$ of eigenvalues $3/2$ and $\frac{1}{2}$ with multiplicity three. %They are all $\geq0$, i.e. the PT operator is semidefinite positive, hence
By the PPT criterion, the projector on the triplet subspace is separable. %Instead, the Bell projector on the singlet subspace is clearly maximally entangled. 
\end{proof}
This is what happens in standard quantum mechanics (cf.\ Eq.~\eqref{eq:singltripl}), and has not shown in the $p$-adic framework at the level $\SO(3)_p\mod p$.

\begin{proposition}
%Every two-dimensional subrepresentation of the tensor product of two two-dimensional irreducible unitary representations is spanned by maximally entangled states. 
Let $\mathdutchcal{U}_A,\mathdutchcal{V}_B\colon \SO(3)_p\rightarrow \U(2)$ be $p$-adic qubit irreps of $\SO(3)_p$, and consider $\mathdutchcal{U}_A\otimes \mathdutchcal{V}_B\colon \SO(3)_p\rightarrow \mathrm{U}(\CC^2_A\otimes \CC^2_B)$. For every two-dimensional subrepresentation $\mathcal{M}\subseteq \CC_A^2\otimes \CC_B^2$, a projector $P_\mathcal{M}$ onto $\mathcal{M}$ is separable. Every two-dimensional subrepresentation %no need to be irreducible
$\mathcal{M}$ is spanned by two orthogonal maximally entangled states. Up to local change of basis, $2\otimes2\simeq 2\oplus2$ is realised on a basis of four Bell states. 
\end{proposition}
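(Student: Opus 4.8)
The plan is to show that, after a suitable local change of basis $W=W_A\otimes W_B$, the subspace $\mathcal{M}$ is the linear span of two of the four Bell states $\ket{\phi^+},\ket{\phi^-},\ket{\psi^+},\ket{\psi^-}$; all three assertions then follow at once. First I observe that, $\mathcal{M}$ being $(\mathdutchcal{U}_A\otimes\mathdutchcal{V}_B)$-invariant and the representation unitary, the rank-two orthogonal projector $P_\mathcal{M}$ commutes with every $\mathdutchcal{U}_A(L)\otimes\mathdutchcal{V}_B(L)$, i.e.\ it is an intertwiner exactly as in Eq.~\eqref{eq:sottostabileunit}. Taking the partial trace $\Tr_A$ and using its invariance under conjugation by local unitaries on $\CC_A^2$ (as in Eq.~\eqref{eq:proptracciaparz}), I obtain that $\Tr_A(P_\mathcal{M})$ commutes with $\mathdutchcal{V}_B(L)$ for every $L\in\SO(3)_p$; since $\mathdutchcal{V}_B$ is irreducible over $\CC$, Schur's lemma forces $\Tr_A(P_\mathcal{M})=\lambda\,\mathbbm{1}_B$, and taking traces (recall $\Tr P_\mathcal{M}=2$) gives $\lambda=1$. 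Symmetrically $\Tr_B(P_\mathcal{M})=\mathbbm{1}_A$. Hence the normalised state $\sigma_\mathcal{M}\coloneqq\tfrac12 P_\mathcal{M}$ is a two-qubit density operator both of whose reduced states are maximally mixed; this holds whether or not $\mathcal{M}$ is irreducible (so it also covers the $1\oplus1$-type two-dimensional subrepresentations). This step is Proposition~\ref{prop:1irrmaxent} with a rank-two projector in place of a rank-one one.

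Next I invoke the standard normal form for two-qubit states with maximally mixed marginals: in the Pauli (Bloch) expansion $\sigma_\mathcal{M}=\tfrac14\bigl(\mathbbm{1}\otimes\mathbbm{1}+\sum_{a,b=1}^{3}T_{ab}\,\sigma_a\otimes\sigma_b\bigr)$ the one-qubit Bloch vectors vanish because the marginals are maximally mixed, and the real $3\times3$ correlation matrix $T$ can be diagonalised as $T\mapsto O_1 T O_2^{\top}$ with $O_1,O_2\in\SO(3)$ — a singular value decomposition with one determinant sign absorbed into the diagonal if needed — which is realised by local unitaries through the covering $\SU(2)\to\SO(3)$. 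After this local change of basis $\sigma_\mathcal{M}$ becomes $\tfrac14\bigl(\mathbbm{1}\otimes\mathbbm{1}+\sum_a c_a\,\sigma_a\otimes\sigma_a\bigr)$, which is diagonal in the Bell basis. Since $\sigma_\mathcal{M}=\tfrac12 P_\mathcal{M}$ has spectrum $\{\tfrac12,\tfrac12,0,0\}$ and local unitaries preserve the spectrum, exactly two of the Bell eigenvalues equal $\tfrac12$ and the other two vanish. Thus there is a local unitary $W=W_A\otimes W_B$ with $W\sigma_\mathcal{M}W^{\dagger}=\tfrac12(\proj{\beta}+\proj{\beta'})$ for two distinct Bell states $\ket{\beta},\ket{\beta'}$, i.e.\ $W\mathcal{M}=\Span(\ket{\beta},\ket{\beta'})$.

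All three claims now follow. (i) $\mathcal{M}=W^{\dagger}\Span(\ket{\beta},\ket{\beta'})=\Span(W^{\dagger}\ket{\beta},W^{\dagger}\ket{\beta'})$ is spanned by two orthogonal, maximally entangled states, since local unitaries preserve orthogonality and Schmidt coefficients while Bell states are maximally entangled. (ii) $P_\mathcal{M}$ is separable: if $\ket{\beta''},\ket{\beta'''}$ denote the complementary Bell pair, then $W P_\mathcal{M} W^{\dagger}=\mathbbm{1}-\proj{\beta''}-\proj{\beta'''}$, and a direct computation shows $\proj{B}^{T_B}=\tfrac12\mathbbm{1}-\proj{v_B}$ for each Bell state $\ket{B}$, where $(v_B)$ is again an orthonormal basis (each $\proj{B}^{T_B}$ has spectrum $\{\tfrac12,\tfrac12,\tfrac12,-\tfrac12\}$ and $\sum_B\proj{B}^{T_B}=\mathbbm{1}$); hence $(W P_\mathcal{M} W^{\dagger})^{T_B}=\proj{v_{\beta''}}+\proj{v_{\beta'''}}\ge 0$, so by the PPT criterion (sufficient for two qubits, cf.~\cite{manwin}) $W P_\mathcal{M} W^{\dagger}$, and therefore $P_\mathcal{M}$, is separable. (iii) If $\mathdutchcal{U}_A\otimes\mathdutchcal{V}_B\simeq 2\oplus 2$, pick a two-dimensional subrepresentation $\mathcal{M}$; by unitarity $\mathcal{M}^{\perp}$ is also a two-dimensional subrepresentation and $\CC_A^2\otimes\CC_B^2=\mathcal{M}\oplus\mathcal{M}^{\perp}$ realises the decomposition, with $W\mathcal{M}=\Span(\ket{\beta},\ket{\beta'})$ and $W\mathcal{M}^{\perp}=\Span(\ket{\beta},\ket{\beta'})^{\perp}=\Span(\ket{\beta''},\ket{\beta'''})$; thus, up to the local change of basis $W$, the decomposition $2\otimes 2\simeq 2\oplus 2$ is realised on the basis of four Bell states.

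The only non-routine ingredient is the Bell-diagonal normal form used in the second step; it is classical, resting on the surjectivity $\SU(2)\to\SO(3)$ together with a singular value decomposition of the $3\times3$ correlation matrix (with a determinant sign corrected if necessary), so I would state it with a reference rather than re-derive it. Everything else — the Schur argument for the marginals, the elementary partial-transpose identity, and the invariance of separability and of maximal entanglement under local unitaries — is immediate from the machinery already developed in this section.
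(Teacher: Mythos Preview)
Your proof is correct and complete, but after the common Schur step (which both you and the paper carry out identically, obtaining $\Tr_A P_\mathcal{M}=\mathbbm{1}_B$ and $\Tr_B P_\mathcal{M}=\mathbbm{1}_A$) you take a genuinely different route. The paper does \emph{not} invoke the Bell-diagonal normal form. Instead it uses the bound of Parthasarathy and Cubitt--Montanaro--Winter that a fully entangled subspace of $\CC^2\otimes\CC^2$ has dimension at most $1$; hence $\mathcal{M}$ contains a product vector $\ket{\alpha_1}\otimes\ket{\beta_1}$, and the marginal identities then force its orthogonal complement in $\mathcal{M}$ to be a second product vector $\ket{\alpha_2}\otimes\ket{\beta_2}$ with $\braket{\alpha_1}{\alpha_2}=\braket{\beta_1}{\beta_2}=0$. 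This exhibits $P_\mathcal{M}$ \emph{explicitly} as a separable operator $\proj{\alpha_1}\otimes\proj{\beta_1}+\proj{\alpha_2}\otimes\proj{\beta_2}$, with no appeal to PPT, and the Bell-basis statement is read off at the end by a local rotation. Your approach trades the entangled-subspace dimension bound for the Horodecki normal form and then uses PPT (equivalently, the ``all Bell weights $\le\tfrac12$'' criterion already used in the $1\oplus1\oplus2$ case) for separability. What each buys: the paper's argument is constructive (it hands you a product basis for $\mathcal{M}$) and avoids the $\SU(2)\!\to\!\SO(3)$/SVD machinery; yours is more structural, lands directly on the Bell basis, and makes the $2\oplus2$ Bell-basis claim immediate without a second local rotation. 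Both are short and either would be acceptable here.
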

\begin{proof}
Let $(\ket{\phi_1},\ket{\phi_2})$ be an orthonormal basis for $\mathcal{M}$, which is an $\SO(3)_p$-stable subspace. Let $P_\mathcal{M}= \proj{\phi_1}+\proj{\phi_2}\colon \CC^2\otimes \CC^2\rightarrow \mathcal{M} $ be the rank-two orthogonal projector onto $\mathcal{M}$. In this context, %per stabilità e uintarietà
Eqs.~\eqref{eq:sottostabileunit},~\eqref{eq:proptracciaparz} still hold, hence $\Tr_A(P_\mathcal{M})=\lambda\mathbbm{1}_B$ by Schur's lemma. Here $2\lambda=\Tr (\lambda\mathbbm{1}_B)=\Tr(P_\mathcal{M})=2$ implies $\lambda=1$: we have $\mathbbm{1}_B=\Tr_A(P_\mathcal{M})=\Tr_A(\proj{\phi_1})+\Tr_A(\proj{\phi_2})$. Let $\rho_i\coloneqq \Tr_A(\proj{\phi_i})$ for $i=1,2$. Then $\rho_1$ and $\rho_2$ commute, and with respect to a common eigenbasis $\rho_1=\begin{pmatrix}\lambda_1&0\\0&1-\lambda_1\end{pmatrix}$, $\rho_2=\begin{pmatrix}1-\lambda_1&0\\0&\lambda_1\end{pmatrix}$. By~\cite{Parthasarathy},~\cite{CuMonAndreas}, the maximal dimension of a fully entangled subspace for two qubits is $1$, %r=2 ossia no prodotti tensori, arriva in dimensione massima (d_A-1)(d_B-1)... per 2 qubit d_A=d_B=2
hence $\mathcal{M}$ contains a product state. Let $\ket{\phi_1}=\ket{\alpha_1}_A\otimes\ket{\beta_1}_B$, for some $\ket{\alpha_1}_A\in\CC_A^2$, $\ket{\beta_1}_B\in\CC_B^2$. Equivalently, $\rho_1$ is a rank-one projector, and so is $\rho_2$%\rho_1,\rho_2 hanno stessi spettri di Smidth, vedi sopra la loro scrittura diagonale
, i.e., also $\ket{\phi_2}$ is a product state. Let $\ket{\phi_2}=\ket{\alpha_2}_A\otimes\ket{\beta_2}_B$, for some $\ket{\alpha_2}_A\in\CC_A^2$, $\ket{\beta_2}_B\in\CC_B^2$. Now $\mathbbm{1}_B=\Tr_A(\proj{\phi_1})+\Tr_A(\proj{\phi_2}) = \proj{\beta_1}+\proj{\beta_2}$, %somma di proiettori su linee è l'identità su tutto lo spazio sse le linee sono ortogonali
hence it must be $\braket{\beta_1}{\beta_2}=0$. Similarly $\Tr_B(P_\mathcal{M})=\mathbbm{1}_A$ provides $\braket{\alpha_1}{\alpha_2}=0$. It follows that $P_\mathcal{M}=\proj{\alpha_1}\otimes\proj{\beta_1}+\proj{\alpha_2}\otimes\proj{\beta_2}$ is separable. Up to a local change of basis, $\mathcal{M}$ is spanned by a basis of maximally mixed states, e.g. $(\ket{\psi^+},\ket{\psi^-})$. Then, a basis for the bidimensional orthogonal complement is $(\ket{\phi^+},\ket{\phi^-})$, with associated separable projector.
\end{proof}
This has already been noted in Subsection~\ref{subsec:CBcpeff2357} in particular for $p=7$.

\section{\texorpdfstring{$3$}{Lg}-Adically controlled quantum logic gates}\label{3pUnigates}
In the realm of quantum computing, we propose to construct $p$-adically controlled quantum logic gates on single-, two- and multi-qubit systems, using image elements of the unitary representations of $\SO(3)_p$. A gate on $n$ qubits is modelled by a $2^n$-dimensional unitary transformation from a representation of $\SO(3)_p$ (plus an ancillary qubit in the case of orthogonal transformations). As in the standard approach, the emphasis lies primarily on $p$-adic quantum logic gates operating on a small number of qubits. Our program aims to examine $2$-qubit gates from unitary representations of $\SO(3)_p$ acting on $(\CC^2)^{\otimes 2}$. We seek for an entangling $2$-qubit gate, with the ultimate goal of providing a universal set of $p$-adically controlled gates%\bibitem{Brylinski}: per avere un set universale di one- e two-qubit gates, è necessario avere una porta two-qubit entangling
. Once explicitly found some four-dimensional irrep of $\SO(3)_p$, we study its image elements in $\U(4)$, wondering whether they factorise as a tensor product of two unitaries in $\U(2)$. The non-tensor-product unitaries of a representation will possibly be entangling $2$-qubit gates (if not equal to a $\mathrm{SWAP}$ times a product unitary)~\cite{Brylinski}%\bibitem{Brylinski}: 
%U primitiva/non-entangling <->)def) mappa stati fattorizzabili in stati fattorizzabili <->(Th1.4) o U=S\oxT o U=(S\oxT)SWAP.. SWAP è l'unica primitiva non banale, nel senso che non entangla ma non è fattorizzabile... 
%quindi, equivalentemente, U imprimitive/entangling <-> crea entanglement da alcuni stati fattore <-> nè U nè USWAP sono S\oxT... quindi, verificare che U non fattorizza non è sufficiente a dimostrare che U è entangling
; from the product images, instead, we will extract the single tensor factors to be one-qubit gates. 

In the present work, we studied representations of $\SO(3)_p$ factorising through $G_p$, and only $G_3$ has four-dimensional irreps (see Remark~\ref{rem:solo4Dp3}). This sets our staring point, from which we will put forward a set of $3$-adically controlled gate (constructed as described above), and prove that it is approximately universal. This is %Subsections~\ref{subsec:irreps4G3},~\ref{eq:gateresearch} and~\ref{eq:gatesetuniversal} are 
done with the aid of a code intertwining GAP (Groups, Algorithms, and Programming)~\cite{GAP4} and Wolfram Mathematica~\cite{Math}; this code can be found in~\cite{ourcode}.

\subsection{Universal quantum computation}
We deal with the circuit model of quantum computation, where the building blocks of quantum circuits are quantum logic gates%interconnected by wires/wired together
. Each gate corresponds to a unitary operator acting on a certain number $n$ of qubits, i.e. to an operator in $\U(2^n)$. For physical implementations, it is desirable that quantum gates act on a small number of qubits, typically three or preferably two. It is useful to have a small repertoire of simple gates, combining which all the circuits (i.e. all the unitary transformations in any number of qubits) can be reconstructed approximatively. %This is why an extensive line of research has focussed on studying universality of sets of gates, as different sets are better suited to different purposes, both in the theoretical design of algorithms (and in understanding the power of quantum computing) and in their practical laboratory implementation.

\begin{definition}[\cite{Brylinski,Aharonov}]
\label{def:uinversalitu}
A set $\mathcal{S}$ of multi-qubit gates is called \emph{(approximatively) universal} if it satisfies one of the following equivalent conditions: there exists $n_0\in\NN$ such that, for every $n\geq n_0$, 
\begin{itemize}
\item every $n$-qubit gate can be approximated with arbitrary accuracy by a circuit made up of the $n$-qubit gates produced by the elements in $\mathcal{S}$;
\item i.e., for every $U\in\U(2^n)$ and every $\epsilon>0$ there exists $U'\in\U(2^n)$ made up of elements in $\mathcal{S}$ such that $\sup\limits_{\substack{\ket{\psi}\in(\CC^{2})^{\otimes n}\\\norm{\psi}=1}}\norm{(U-U')\ket{\psi}}<\epsilon;$
\item \vspace{-0.4cm} the $n$-qubit gates produced by the elements in $\mathcal{S}$ generate a subgroup dense in $\U(2^n)$.
\end{itemize}
In all these formulations, it is assumed that any given $k$-qubit gate from $\mathcal{S}$ can act on any subfamily of $k$ out of the $n$ qubits, i.e.~each such gate gives rise to really $k!{n\choose k}$ $n$--qubit unitaries.

A set $\mathcal{S}$ of qubit gates is called \emph{exactly universal} if there exists $n_0\in\NN$ such that, for each $n\geq n_0$, every $n$-qubit gate can be obtained exactly by a circuit made up of the $n$-qubit gates produced by the elements in $\mathcal{S}$, i.e., the $n$-qubit gates produced by the elements in $\mathcal{S}$ generate the full group $\U(2^n)$.
\end{definition}

A finite set of gates cannot be exactly universal, but just universal. There are also weaker notions of universality. These include reproducing a $U\in \U(2^n)$ as $U\otimes \mI$ using ancilla qubits~\cite{Shi}, as well as encoded universality via orthogonal gates~\cite{BernVazi} (as we will detail in the paragraph ``Real quantum computation'' below).

A first foundational result by Deutsch is that the three-qubit gate $Q\coloneqq C^2U_\alpha$ is universal~\cite{Deutsch} with $n_0=3$, where $U_\alpha\coloneqq ie^{-i\frac{\pi}{2}\alpha}X^\alpha$ with $\alpha\in\RR\setminus\QQ$ ($Q$ is a double-controlled unitary %with an angle $\alpha$ incommensurate with $\pi$
representing a continuous quantum generalization of the Toffoli gate $C^2X$). In particular, $Q$ approximates the Toffoli gate, which is universal for classical computation, and hence realises the permutations of the basis elements%non dei qubit subspaces bidimensionali come dice Llooyd, ma dei sottospazi unidimensioali di base
. With this, Deutsch proved that $Q$ approximates any unitary in $\U(2^3)$, and then any unitary in $\U(2^n)$ with $n\geq 3$. Upon this, DiVincenzo showed that two-qubit gates are universal~\cite{DiVincenzo} with $n_0=2$, by showing that two-qubit gates approximate three-qubit gates by Lie algebra techniques. If two elements of the Lie algebra $\mathfrak{u}(4)$ do not belong to a Lie subalgebra and do not commute, their commutator provides a third generator of $\mathfrak{u}(4)$, and iterating this process one generates the whole $\mathfrak{u}(4)$; the elements in the connected group $\U(4)$ corresponding to the generators of $\mathfrak{u}(4)$ %tramite esponenziazione.. ovvio che se esponenzio con un coefficiente continuo reale allora genero proprio tutto \U(4) 
generate a dense subgroup in $\U(4)$. %Then, it was proven that any gate on two (or more) qubits is universal~\cite{DeutschBarenkoE} (as also pointed out in~\cite{Lloyd}).
Then, it was proved that the set $\{\U(2),CX\}$ is exactly universal~\cite{BarencoetAl}, and the set $\{\textup{Toffoli}, \textup{Hadamard}\}$ is universal~\cite{Shi}, \cite{Aharonov}, just to cite a few.

\begin{remark}
\label{eq:remBVrelaQC}
Since a quantum state is a ray in a projective Hilbert space, where global phases are irrelevant, Definition~\ref{def:uinversalitu} can be given in terms of special unitary transformations only. %Unitary groups modular a phase, come dice~\cite{Shi}. %alla fine del circuito SU, avrò implementato una speciale unitaria, che è una unitaria up to global phase, e così per ogni circuito
Furthermore, Definition~\ref{def:uinversalitu} can be given in terms of special orthogonal transformations only, since complex quantum computation can be encoded in \emph{real quantum computation} which relies exclusively on such operations~\cite{BernVazi}. %fare computazione complessa con un'altra che usa solo coefficienti reali e trasformazioni ortogonali, porte reali ortogonali per l'universalità se lavoriamo in un sottospazio codificato (raddoppiando la dimensione possiamo rappresentare lo spazio di Hilbert complesso per uno reale), con SO(4), realizziamo SU(2) in codificazione... 
\end{remark}

\noindent\textbf{Real quantum computation}.  We recall how to concretely transform any quantum circuit to a circuit (computing the same function) composed solely of special orthogonal matrices~\cite{BernVazi}, \cite{Aharonov}. An extra qubit is added to the circuit, doubling the dimension of the initial Hilbert space% (as $\CC^m\simeq \RR^{2m}$)
. This is used as a flag to select between the real and imaginary parts of a state, so to get a purely real state. A conventional way to do it is as follows. The extra qubit's state is put to $\ket{0}$ (resp. $\ket{1}$) to indicate that the circuit's state is in the real (resp.~imaginary) part of the Hilbert space. Let $\{\ket{j}\}_j$ denote the canonical basis of $(\CC^2)^{\otimes n}$, and consider $(\CC^2)^{\otimes n} \ni \ket{\psi}=\sum_jc_j\ket{j}=\sum_j\left(\mathrm{Re}(c_j)+i\mathrm{Im}(c_j)\right)\ket{j}=\mathrm{Re}(\ket{\psi})+i\mathrm{Im}(\ket{\psi})$, with $c_j\in\CC$ for every $j$. This is encoded via the extra qubit by 
\begin{align} 
(\CC^2)^{\otimes n}\ni \ket{\psi} \mapsto \ket{\widetilde{\psi}}
 \coloneqq& \,\mathrm{Re}(\ket{\psi})\ket{0}+\mathrm{Im}(\ket{\psi})\ket{1}\in (\RR^2)^{\otimes (n+1)}\label{encodeingBVQC}\\
         =& \sum_j\mathrm{Re}(c_j)\ket{j}\ket{0}+\sum_j\mathrm{Im}(c_j)\ket{j}\ket{1},\nonumber
\end{align}
which is a real-linear map; clearly $\braket{\psi}{\psi}=\braket{\widetilde{\psi}}{\widetilde{\psi}}$. 
Now, in the initial circuit, each unitary gate $U$ acting on $m$ qubits is replaced by a real matrix $\widetilde{U}$ acting on $m+1$ qubits (the same $m$ qubits and the extra one), defined as follows:
\begin{align}
 \widetilde{U}\ket{j}\ket{0} &\coloneqq \phantom{-} \big(\mathrm{Re}(U)\ket{j}\big)\ket{0}+ \big(\mathrm{Im}(U)\ket{j}\big)\ket{1},\nonumber\\ 
 \widetilde{U}\ket{j}\ket{1} &\coloneqq -\big(\mathrm{Im}(U)\ket{j}\big)\ket{0}+ \big(\mathrm{Re}(U)\ket{j}\big)\ket{1}.\label{eq:realQC}
\end{align}
% puoi rimpiazzare ogni singola porta così, ed il qubit ausiliario rimarrà per la codifica del prossimo stato per la prossima porta... perchè in realtà puoi pensare che la porta locale U è I\otimes U \otimes I che agisce sullo stato totale, e lì fai la codifica..

This encoding places the flag qubit at the end of the tensor product (contrarily to the algebraic convention). %This convention is well-suited for quantum computation, mostly for the representation of local gates (applichi la stessa di struttura tensoriale con l'identità a contornare la gate locale tildata... vedi anche Aharonov CP(i) che diventa la localizzata aggraziata C^2(XZ), contro il mio approccio che dà una matrice di blocchi spread all over). Nevertheless, this is simply a basis ordering convention. By a basis change, placing the ancilla at the beginning of the tensor product, one recovers the convention used in linear algebra, where the canonical isomorphism $\CC^{m}\simeq \RR^{2m}$ of $\RR$-vector spaces is given by $\ket{\psi}\mapsto\ket{\widehat{\psi}}\coloneqq \ket{0}\mathrm{Re}(\ket{\psi})+\ket{1}\mathrm{Im}(\ket{\psi}) = \begin{pmatrix}\mathrm{Re}(\ket{\psi})\\ \mathrm{Im}(\ket{\psi}) \end{pmatrix}$. (This is related by a change of basis to Eq.~\eqref{encodeingBVQC} (in the same dimension $m=2^{n+1}$), moving the flag qubit from the beginning to the end in the tensor product. With respect to the algebraic approach, ) Here, it is easy to check that $U \mapsto \widehat{U} \coloneqq \begin{pmatrix} \mathrm{Re}(U) & -\mathrm{Im}(U)\\ \mathrm{Im}(U) & \mathrm{Re}(U)\end{pmatrix}$ provides the canonical embedding $\U(m)\hookrightarrow\SO(2m)$ (which is surjective on $\SO(2m)\cap \mathrm{Sp}(2m,\RR)$). We have $\widehat{U}\ket{\widehat{\psi}}=\begin{pmatrix} \mathrm{Re}(U) & -\mathrm{Im}(U)\\ \mathrm{Im}(U) & \mathrm{Re}(U)\end{pmatrix}\begin{pmatrix}\mathrm{Re}(\ket{\psi})\\\mathrm{Im}(\ket{\psi})\end{pmatrix} = \begin{pmatrix}\mathrm{Re}(U)\mathrm{Re}(\ket{\psi})\\\mathrm{Im}(U)\mathrm{Re}(\ket{\psi})\end{pmatrix} + \begin{pmatrix} -\mathrm{Im}(U)\mathrm{Im}(\ket{\psi})\\ \mathrm{Re}(U)\mathrm{Im}(\ket{\psi}) \end{pmatrix}$, in agreement with Eq.~\eqref{eq:realQC}. Since the matrices $\widetilde{U}$ and $\widehat{U}$ are conjugated, we get $\widetilde{U}\in\SO(2^{n+1})$. 
For $U\in\U(m)$, $m\in\NN$, this corresponds to transforming the matrix representation $U=\left(U_{ij}\right)_{ij}$, with respect to some basis, by $U_{ij}\mapsto\begin{pmatrix}\mathrm{Re}(U_{ij}) & -\mathrm{Im}(U_{ij})\\ \mathrm{Im}(U_{ij}) & \mathrm{Re}(U_{ij}) \end{pmatrix}$ for every $i,j$. This provides an embedding $\U(m)\hookrightarrow\SO(2m)$ (which is surjective on $\SO(2m)\cap \mathrm{Sp}(2m,\RR)$).

\subsection{Four-dimensional irreps of $G_3$}
\label{subsec:irreps4G3}

The group $G_3\coloneqq\SO(3)_3\mod 3$ has exactly four $4$-dimensional irreps (Theorem~\ref{theo:irrepsGp}), and we are going to explicitly write them by making use of the software GAP~\cite{ourcode}.

First of all, we define the group $G_3$, by listing its elements. Parametrisation~\eqref{eq:paramGp} for $p=3$ becomes
\beq \label{eq:paramG3}
G_3=\left\{\begin{pmatrix}
a & -s b & 0 \\
b & s a & 0 \\
c & d & s
\end{pmatrix}\midd  (a,b)\in\{(\pm1,0),\,(0,\pm1)\},c,d\in \ZZ/3\ZZ, s\in\{\pm1\}\right\}.
\eeq

By means of GAP, we establish that $G_3$ is a solvable (as all the groups of order $72$) but not a nilpotent group, of structure
\beq \label{eq:G3strctGens} 
G_3 = \left\langle %M10 =
\begin{pmatrix} 0& 1& 0 \\ -1& 0& 0 \\ 0& 0& 1 \end{pmatrix},\ %M4=
\begin{pmatrix} 0& 1& 0 \\ 1& 0& 0 \\ 1 & 0& -1 \end{pmatrix}\right\rangle \simeq (\mathrm{S}_3\times \mathrm{S}_3)\rtimes C_2\simeq (C_3\times C_3)\rtimes \mathrm{D}_4.
\eeq

We compute the character table of $G_3$ by GAP through the Dixon-Schneider algorithm~\cite{GAPch1,DixonCh,SchneiderCh,Hulpke}, and obtain the following:
\begin{center}
\begin{tabular}{ |c||c|c|c|c|c|c|c|c|c| } 
 \hline
 $G_3$ & $C_1$ &  $C_2$ & $C_3$ & $C_4$ &  $C_5$ & $C_6$ & $C_7$ &  $C_8$ & $C_9$ \\ \hline\hline
 $\mathdutchcal{triv}$ & $1$ & $1$ & $1$ & $1$ & $1$ & $1$ & $1$ & $1$ & $1$ \\ \hline
 $\mathdutchcal{s}$ & $1$ & $1$ & $1$ & $1$ & $1$ & $-1$ & $-1$ & $-1$ & $-1$ \\ \hline
 $\mathdutchcal{t}$ & $1$ & $1$ & $1$ & $1$ & $-1$ & $1$ & $1$ & $-1$ & $-1$ \\ \hline
 $\mathdutchcal{st}$ & $1$ & $1$ & $1$ & $1$ & $-1$ & $-1$ & $-1$ & $1$ & $1$\\ \hline
 $\rho_{3,1}$ & $2$ & $2$ & $2$ & $-2$ & $0$ & $0$ & $0$ & $0$ & $0$\\ \hline
 $\mathdutchcal{U}_{3,1}^{(1)}$ & $4$ & $1$ & $-2$ & $0$ & $0$ & $-2$ & $1$ & $0$ & $0$  \\ \hline
 $\mathdutchcal{U}_{3,1}^{(2)}$ & $4$ & $1$ & $-2$ & $0$ & $0$ & $2$ & $-1$ & $0$ & $0$ \\ \hline
 $\mathdutchcal{U}_{3,1}^{(3)}$ & $4$ & $-2$ & $1$ & $0$ & $0$ & $0$ & $0$ & $-2$ & $1$ \\ \hline
 $\mathdutchcal{U}_{3,1}^{(4)}$ & $4$ & $-2$ & $1$ & $0$ & $0$ & $0$ & $0$ & $2$ & $-1$ \\ \hline
\end{tabular}
\end{center}

\medskip
\noindent
Here, the conjugacy classes $C_i$, $i=1,...,9$, of $G_3$ are as in~\cite[App.~C]{our2nd}, the $1$-dimensional irreps of $G_3$ are as in Eq.~\eqref{1dirrepsDp+1Gp}, $\rho_{3,1}$ denotes the $2$-dimensional irrep of $G_3$ as in Eq.~\eqref{eq:GpdaDp+1}, and $\mathdutchcal{U}_{3,1}^{(j)}$, $j=1,2,3,4$, denote the four $4$-dimensional irreps of $G_3$. All irreducible representations of $G_3$ are self-dual, as their characters take real values. Moreover, by the product of the characters, we see the following equivalences of representations:
\begin{align}
&\mathdutchcal{s}\ox \mathdutchcal{U}_{3,1}^{(1)}\simeq \mathdutchcal{U}_{3,1}^{(2)},\qquad \mathdutchcal{s}\ox \mathdutchcal{U}_{3,1}^{(3)}\simeq \mathdutchcal{U}_{3,1}^{(4)},\label{eq:equivtens}\\
\mathdutchcal{t}\ox \mathdutchcal{U}_{3,1}^{(1)}\simeq& \mathdutchcal{U}_{3,1}^{(1)},\qquad \mathdutchcal{t}\ox \mathdutchcal{U}_{3,1}^{(2)}\simeq \mathdutchcal{U}_{3,1}^{(2)},\qquad  \mathdutchcal{t}\ox \mathdutchcal{U}_{3,1}^{(3)}\simeq \mathdutchcal{U}_{3,1}^{(4)}.
\end{align}

Now, through GAP, we construct a unitary irreducible representation of a finite group affording an irreducible character by exploiting Dixon's method~\cite{DixonIrr}.
We obtain the following expression for the representation $\mathdutchcal{U}_{3,1}^{(2)}$, up to equivalences (with respect to some basis of $\CC^4$ GAP chooses), on the minimal set of generators of $G_3$ as in Eq.~\eqref{eq:G3strctGens}:
\begin{align}
&\mathdutchcal{U}_{3,1}^{(2)}\colon G_3\rightarrow \U(4),\label{repnU2chi9daGAP}\\
&\begin{pmatrix} 0& 1& 0 \\ -1& 0& 0 \\ 0& 0& 1 \end{pmatrix}   %M10
\mapsto 
\begin{pmatrix}
\frac{1}{2} e^{\frac{i \pi}{3}} & \frac{1}{2} e^{-\frac{5i\pi}{6}} & 0 & \frac{e^{-\frac{2i\pi}{3}}}{\sqrt{2}} \\
\frac{i}{2} & \frac{1}{2} e^{-\frac{2i\pi}{3}} & 0 & \frac{i}{\sqrt{2}} \\
\frac{e^{-\frac{2i\pi}{3}}}{\sqrt{2}} & \frac{e^{-\frac{5i\pi}{6}}}{\sqrt{2}} & 0 & 0 \\
0 & 0 & -1 & 0
\end{pmatrix},
\quad 
%in realtà GAP mi dava U2 = \begin{pmatrix} 0& 1& 0 \\ 1& 0& 0 \\ 0& 1& -1 \end{pmatrix} \mapsto \begin{pmatrix} 0 & 0 & \frac{e^{-i\frac{2}{3}\pi}}{\sqrt{2}} & \frac{e^{-i\frac{2}{3} \pi}}{\sqrt{2}} \\ 0 & 0 & -\frac{i}{\sqrt{2}} & \frac{i}{\sqrt{2}} \\ \frac{e^{-i\frac{2}{3} \pi}}{\sqrt{2}} & \frac{e^{-i\frac{5}{6} \pi}}{\sqrt{2}} & 0 & 0 \\ \frac{1}{\sqrt{2}} & \frac{e^{i \frac{5}{6}\pi}}{\sqrt{2}} & 0 & 0 \end{pmatrix}
\begin{pmatrix} 0& 1& 0 \\ 1& 0& 0 \\ 1 & 0& -1 \end{pmatrix}   %M4
\mapsto \begin{pmatrix} 
0 & 0 & \frac{e^{\frac{2 i \pi}{3}}}{\sqrt{2}} & \frac{1}{\sqrt{2}} \\
0 & 0 & \frac{e^{\frac{5 i \pi}{6}}}{\sqrt{2}} & \frac{e^{-\frac{5 i \pi}{6}}}{\sqrt{2}} \\
\frac{e^{\frac{2 i \pi}{3}}}{\sqrt{2}} & \frac{i}{\sqrt{2}} & 0 & 0 \\
\frac{e^{\frac{2 i \pi}{3}}}{\sqrt{2}} & -\frac{i}{\sqrt{2}} & 0 & 0
\end{pmatrix}.
\nonumber
\end{align}
The representation $\mathdutchcal{U}_{3,1}^{(2)}$ is faithful, for $\big\lvert\mathdutchcal{U}_{3,1}^{(2)}(G_3)\big\rvert = 72=\lvert G_3\rvert$. Then, $\mathdutchcal{U}_{3,1}^{(1)}$, up to equivalence, is found by $\mathdutchcal{U}_{3,1}^{(1)}\simeq \mathdutchcal{s}\ox \mathdutchcal{U}_{3,1}^{(2)}$, observing that $\begin{pmatrix} 0& 1& 0 \\ -1& 0& 0 \\ 0& 0& 1 \end{pmatrix}\in C_5$, $\begin{pmatrix} 0& 1& 0 \\ 1& 0& 0 \\ 1 & 0& -1 \end{pmatrix} %=U4, ma anche U2=\begin{pmatrix} 0& 1& 0 \\ 1& 0& 0 \\ 0& 1& -1 \end{pmatrix}
\in C_9$, and recalling that the representation $\mathdutchcal{s}$ maps every matrix in $C_5$ to $1$, and every matrix in $C_9$ to $-1$. Then, the images of the generator in $C_5$ with respect to $\mathdutchcal{U}_{3,1}^{(1)}$ and to $\mathdutchcal{U}_{3,1}^{(2)}$ coincide, while those of the generator in $C_9$ are opposite to each other:
\begin{align}
&\mathdutchcal{U}_{3,1}^{(1)}\colon G_3\rightarrow \U(4),\\
&\begin{pmatrix} 0& 1& 0 \\ -1& 0& 0 \\ 0& 0& 1 \end{pmatrix}  \mapsto \begin{pmatrix}
\frac{1}{2} e^{\frac{i \pi}{3}} & \frac{1}{2} e^{-\frac{5i\pi}{6}} & 0 & \frac{e^{-\frac{2i\pi}{3}}}{\sqrt{2}} \\
\frac{i}{2} & \frac{1}{2} e^{-\frac{2i\pi}{3}} & 0 & \frac{i}{\sqrt{2}} \\
\frac{e^{-\frac{2i\pi}{3}}}{\sqrt{2}} & \frac{e^{-\frac{5i\pi}{6}}}{\sqrt{2}} & 0 & 0 \\
0 & 0 & -1 & 0
\end{pmatrix},
\quad 
\begin{pmatrix} 0& 1& 0 \\ 1& 0& 0 \\ 1 & 0& -1 \end{pmatrix} \mapsto \begin{pmatrix}
0 & 0 & \frac{e^{-\frac{i\pi}{3}}}{\sqrt{2}} & -\frac{1}{\sqrt{2}} \\
0 & 0 & \frac{e^{-\frac{i\pi}{6}}}{\sqrt{2}} & \frac{e^{\frac{i\pi}{6}}}{\sqrt{2}} \\
\frac{e^{-\frac{i\pi}{3}}}{\sqrt{2}} & -\frac{i}{\sqrt{2}} & 0 & 0 \\
\frac{e^{-\frac{i\pi}{3}}}{\sqrt{2}} & \frac{i}{\sqrt{2}} & 0 & 0
\end{pmatrix}
.\nonumber
\end{align}

We use GAP again to find $\mathdutchcal{U}_{3,1}^{(4)}$ as an irrep affording the last character of the table, and get $\mathdutchcal{U}_{3,1}^{(3)}$ as $\mathdutchcal{s}\ox \mathdutchcal{U}_{3,1}^{(4)}$, up to equivalences:
\begin{align}\label{repnU4chi8daGAP}
\mathdutchcal{U}_{3,1}^{(3)},\,\mathdutchcal{U}_{3,1}^{(4)}\colon G_3&\rightarrow\U(4),\qquad \begin{pmatrix} 0& 1& 0 \\ -1& 0& 0 \\ 0& 0& 1 \end{pmatrix} \mapsto \begin{pmatrix} 0& -\frac{1}{2}& -\frac{\sqrt{3}}{2}%\frac{1}{2}e^{-i\frac{5}{6}\pi}-\frac{1}{2}e^{-i\frac{\pi}{6}}
& 0 \\ 1& 0& 0& 0 \\ 0& 0& 0& -1 \\ 0& \frac{\sqrt{3}}{2}%-\frac{1}{2}e^{-i\frac{5}{6}\pi}+\frac{1}{2}e^{-i\frac{\pi}{6}}
& -\frac{1}{2}& 0 \end{pmatrix},\nonumber\\
\mathdutchcal{U}_{3,1}^{(4)}\begin{pmatrix} 0& 1& 0 \\ 1& 0& 0 \\ 1 & 0& -1 \end{pmatrix} & =   \begin{pmatrix}
    1 & 0 & 0 & 0 \\
0 & -\frac{1}{2} & \frac{\sqrt{3}}{2} & 0 \\
0 & -\frac{\sqrt{3}}{2} & -\frac{1}{2} & 0 \\
0 & 0 & 0 & -1
\end{pmatrix}   = - \mathdutchcal{U}_{3,1}^{(3)}\begin{pmatrix} 0& 1& 0 \\ 1& 0& 0 \\ 1 & 0& -1 \end{pmatrix}.
\end{align}
The representation $\mathdutchcal{U}_{3,1}^{(4)}$ is faithful.

The Frobenius-Schur indicator of $\mathdutchcal{U}_{3,1}^{(j)}$, $j=1,2,3,4$, is equal to $1$, meaning that %complex representation has a real realization.. ma siccome la complessa è unitaria, la realizzazione reale sarà ortogonale
$\mathdutchcal{U}_{3,1}^{(j)}$ is equivalent to a representation of $G_3$ in $\mathrm{O}(4)$. The determinant on $\mathdutchcal{U}_{3,1}^{(j)}(G_3)$ takes values $\pm1$%strettamente in U(1) delle unitarie
. The irreps $\mathdutchcal{U}_{3,1}^{(3)}$ and $\mathdutchcal{U}_{3,1}^{(4)}$ are already in orthogonal form.
%*%*%*%*%*%*I think we need to write $\mathdutchcal{U}_{3,1}^{(1)}$ and $\mathdutchcal{U}_{3,1}^{(2)}$ in orthogonal form as well, by a change of basis using GAP/Mathematica. Even better, as said in Subsection~\ref{subsec:strirrpsGp}, one should induce the four $4$-dimensional orthogonal irreps of $G_3$ from the two stabiliser subgroups isomorphic to $C_2$.

\medskip

From the real representations of $G_3$, we aim at finding a universal set of orthogonal gates for quantum computing in an encoded real space (cf. Remark~\ref{eq:remBVrelaQC} and below).

We shall investigate whether or not the image elements of $\mathdutchcal{U}_{3,1}^{(j)}$ in $\U(4)$ (or $\mathrm{O}(4)$) factorise into the tensor product of two elements in $\U(2)$ (or $\mathrm{O}(2)$).  We can just focus on $\mathdutchcal{U}_{3,1}^{(2)}$ and $\mathdutchcal{U}_{3,1}^{(4)}$ by virtue of Eq.~\eqref{eq:equivtens}. %For each of them, we shall look for entangling two-qubit gates among the non-factorable unitaries, and extract one-qubit gates as the tensor-factors of the factorable unitaries.

%%%%%%%%%%%%%%%%%%%%%%%%%%%%%%%%%%%%%%%%%%%%%%%%%%%%%%%%%%%%%%%%%%%%%%%%%%%%%%%%%%%%%

\subsection{Gates from the representations \texorpdfstring{$\mathdutchcal{U}_{3,1}^{(2)}$}{Lg} and \texorpdfstring{$\mathdutchcal{U}_{3,1}^{(4)}$}{Lg}}\label{eq:gateresearch}
We start studying the gates emerging from the unitary representation $\mathdutchcal{U}_{3,1}^{(2)}$, thereby clarifying the reasoning and methods employed. Then, we will apply the same techniques to the orthogonal representation $\mathdutchcal{U}_{3,1}^{(4)}$. From our studies, we will not observe universal sets of gates from $\mathdutchcal{U}_{3,1}^{(2)}$ (at least not for $n_0\leq 4$ as in Definition~\ref{def:uinversalitu}), while $\mathdutchcal{U}_{3,1}^{(4)}$ will give a universal set of three orthogonal gates (two in $\mathrm{O}(2)$ and one in $\mathrm{O}(4)$), through which any circuit in $\U(2^n)$ can be approximated in  real encoding for every $n\geq2$.

First of all, we study the tensor-product factorisation of the transformations in $\mathdutchcal{U}_{3,1}^{(2)}(G_3)$, independently on the choice of the basis with respect to which the corresponding matrices are written, by working with their eigenvalues. In fact, a unitary transformation $\U\in\U(4)$ with (unimodular) eigenvalues $\{a_i\}_{i=1}^4$ admits a factorisation into the tensor product of two unitary %or antiunitary???? come sono in quel caso gli autovalori?
transformations $\mathrm{V}_1,\mathrm{V}_2\in\U(2)$ of (unimodular) eigenvalues $\{\lambda_i\}_{i=1,2}$, $\{\mu_i\}_{i=1,2}$, respectively, if and only if %corretto? La certezza è che NO SOLUZ A (20) -> NON FATTORIZZAZIONE, mentre SOLUZ A (20) -> fattorizzazione in principio / rispetto qualche isomorphismo di C4 con C2xC2 / rispetto qualche base... ma tutte le uintarie che fattorizzano in principio non possono condividere tutte la stessa base, come vedremo poi
there exist bases with respect to which $\diag(a_1,a_2,a_3,a_4) = \diag(\lambda_1\mu_1,\lambda_1\mu_2,\lambda_2\mu_1,,\lambda_2\mu_2)$, as it turns out from the Kronecker product of $\diag(\lambda_1,\lambda_2)$ with $\diag(\mu_1,\mu_2)$. Then, the issue is to calculate the eigenvalues $\{a_i\}_{i=1}^4$ of the unitaries image of $G_3$ with respect to the four-dimensional irreps, and check whether the system
\beq \label{eq:sistemfactteor}
\begin{cases}
\lambda_1\mu_1=a_1,\\
\lambda_1\mu_2=a_2,\\
\lambda_2\mu_1=a_3,\\
\lambda_2\mu_2=a_4,
\end{cases}
\eeq 
admits solution $(\lambda_1,\lambda_2,\mu_1,\mu_2)\in\U(1)^4$.

To do so, we translate the unitary matrices in $\mathdutchcal{U}_{3,1}^{(2)}(G_3)$ from GAP's to Wolfram Mathematica's language, where we write a code answering the above question~\cite{ourcode}. As outcome, $18$ elements %indici 46-63
of $\mathdutchcal{U}_{3,1}^{(2)}(G_3)<\U(4)$ do not tensor-factorise with respect to any basis.
\begin{comment}
that the elements in $\mathdutchcal{U}_{3,1}^{(2)}(G_3)<\U(4)$ which do not tensor-factorise with respect to any basis are: %indici 46-63
\begin{itemize}
    \item the six unitaries sharing eigenvalue $1$ of multiplicity three and eigenvalue $-1$;
%\begin{align*}& 
%46
%\begin{pmatrix} \frac{1}{2} & \frac{1}{2} e^{\frac{5 i \pi}{6}} & 0 & \frac{e^{- \frac{i \pi}{3}}}{\sqrt{2}} \\ \frac{1}{2} e^{- \frac{5 i \pi}{6}} & \frac{1}{2} & 0 & \frac{e^{- \frac{i \pi}{6}}}{\sqrt{2}} \\ 0 & 0 & 1 & 0 \\ \frac{e^{\frac{i \pi}{3}}}{\sqrt{2}} & \frac{e^{\frac{i \pi}{6}}}{\sqrt{2}} & 0 & 0 \end{pmatrix}, \qquad 
% 49
%\begin{pmatrix} \frac{1}{2} & \frac{1}{2} e^{\frac{5 i \pi}{6}} & 0 & \frac{e^{\frac{i \pi}{3}}}{\sqrt{2}} \\ \frac{1}{2} e^{- \frac{5 i \pi}{6}} & \frac{1}{2} & 0 & \frac{i}{\sqrt{2}} \\ 0 & 0 & 1 & 0 \\ \frac{e^{- \frac{i \pi}{3}}}{\sqrt{2}} & -\frac{i}{\sqrt{2}} & 0 & 0 \end{pmatrix}, \qquad 
%52
%\begin{pmatrix} \frac{1}{2} & \frac{1}{2} e^{\frac{5 i \pi}{6}} & 0 & -\frac{1}{\sqrt{2}} \\ \frac{1}{2} e^{- \frac{5 i \pi}{6}} & \frac{1}{2} & 0 & \frac{e^{- \frac{5 i \pi}{6}}}{\sqrt{2}} \\ 0 & 0 & 1 & 0 \\ - \frac{1}{\sqrt{2}} & \frac{e^{\frac{5 i \pi}{6}}}{\sqrt{2}} & 0 & 0 \end{pmatrix},\\ & 
%55
%\begin{pmatrix} \frac{1}{2} & \frac{1}{2} e^{- \frac{i \pi}{6}} & \frac{e^{\frac{i \pi}{3}}}{\sqrt{2}} & 0 \\ \frac{1}{2} e^{\frac{i \pi}{6}} & \frac{1}{2} & -\frac{i}{\sqrt{2}} & 0 \\ \frac{e^{- \frac{i \pi}{3}}}{\sqrt{2}} & \frac{i}{\sqrt{2}} & 0 & 0 \\ 0 & 0 & 0 & 1 \end{pmatrix},\qquad
%56
%\begin{pmatrix} \frac{1}{2} & \frac{1}{2} e^{- \frac{i \pi}{6}} & -\frac{1}{\sqrt{2}} & 0 \\ \frac{1}{2} e^{\frac{i \pi}{6}} & \frac{1}{2} & \frac{e^{\frac{i \pi}{6}}}{\sqrt{2}} & 0 \\ - \frac{1}{\sqrt{2}} & \frac{e^{- \frac{i \pi}{6}}}{\sqrt{2}} & 0 & 0 \\ 0 & 0 & 0 & 1 \end{pmatrix}, \qquad 
%57
%\begin{pmatrix} \frac{1}{2} & \frac{1}{2} e^{- \frac{i \pi}{6}} & \frac{e^{- \frac{i \pi}{3}}}{\sqrt{2}} & 0 \\ \frac{1}{2} e^{\frac{i \pi}{6}} & \frac{1}{2} & \frac{e^{\frac{5 i \pi}{6}}}{\sqrt{2}} & 0 \\ \frac{e^{\frac{i \pi}{3}}}{\sqrt{2}} & \frac{e^{- \frac{5 i \pi}{6}}}{\sqrt{2}} & 0 & 0 \\ 0 & 0 & 0 & 1 \end{pmatrix}. \end{align*}
    \item the twelve unitaries sharing the set of eigenvalues $\{1,-1,e^{\pm i\frac{2}{3}\pi}\}$.
\end{itemize}
\end{comment}
In turn, out of the $72$ unitaries of $\mathdutchcal{U}_{3,1}^{(2)}(G_3)$, there are $54$ which admit a tensor-factorisation with respect to some isomorphism of $\CC^4$ as $\CC^2\ox \CC^2$ (i.e. with respect to some basis, or equivalence of representations, or similarity of matrices), though not all $54$ simultaneously with respect to the same isomorphism. In fact, with respect to a particular isomorphism $\CC^4\simeq \CC^2\ox \CC^2$, the subset of unitaries which tensor-factorise closes a subgroup of $\mathdutchcal{U}_{3,1}^{(2)}(G_3)$, whose order must divide $72$ by Lagrange's theorem, while $54\nmid72$. Therefore, the maximum number of unitaries in $\mathdutchcal{U}_{3,1}^{(2)}(G_3)$ which are simultaneously tensor-factorising is $36$.

We now investigate whether such a maximal subgroup of $36$ tensor-factorising unitaries exists within $\mathdutchcal{U}_{3,1}^{(2)}(G_3)\simeq G_3$, trying to maximise the number of factors in $\U(2)$, which will play the role of $1$-qubit gates.
% insieme, c'è poi da vedere se c'e' una entangled per fare il set di porte auspicabilmemte universale

The group $G_3$ has three normal subgroups of order $36$,
\begin{align}
&N_1\coloneqq \big\{L(\pm1,0,c,d,1),\, L(0,\pm1,c,d,-1),\textup{ for }c,d\in\ZZ/3\ZZ\big\},\nonumber\\
&N_2\coloneqq \big\{L(a,b,c,d,1),\textup{ for }(a,b)\in\{(\pm1,0),(0,\pm1)\},\ c,d\in\ZZ/3\ZZ\big\},\label{eq:G3normal36N}\\
&N_3\coloneqq\big\{L(\pm1,0,c,d,s),\textup{ for }s\in\{\pm1\},\,c,d\in\ZZ/3\ZZ\big\},\nonumber
\end{align}
no subgroups of order $24$, five subgroups of order $18$,
\begin{align}
&H_1\coloneqq\big\{L(1,0,c,d,s),\textup{ for }s\in\{\pm1\},\,c,d\in\ZZ/3\ZZ\big\},\nonumber\\
&H_2\coloneqq\big\{L(1,0,c,d,1),\,L(-1,0,c,d,-1),\textup{ for }c,d\in\ZZ/3\ZZ\big\},\nonumber\\
&H_3\coloneqq\big\{L(\pm1,0,c,d,1),\textup{ for }c,d\in\ZZ/3\ZZ\big\}=N_1\cap N_2\cap N_3,\qquad (\textup{normal})\\
&H_4\coloneqq\big\{L(1,0,c,d,1),\,L(0,1,c,d,-1),\textup{ for }c,d\in\ZZ/3\ZZ\big\},\nonumber\\
&H_5\coloneqq\big\{L(1,0,c,d,1),\,L(0,-1,c,d,-1),\textup{ for }c,d\in\ZZ/3\ZZ\big\},\nonumber
\end{align}
and twelve subgroups of order $12$,
\begin{align}
&K_1\coloneqq\big\{L(a,0,0,d,s),\textup{ for }a,s\in\{\pm1\},\,d\in\ZZ/3\ZZ\big\},\nonumber
%{37,38,39,46,47,48,55,56,57,64,65,66}
\\
&K_2\coloneqq\big\{L(\pm1,0,0,d,\pm1),\,L(\pm1,0,\pm1,d,\mp1),\textup{ for } d\in\ZZ/3\ZZ\big\} 
%{37,38,39,49,50,51,55,56,57,70,71,72}
,\nonumber\\
&K_3\coloneqq\big\{L(\pm1,0,0,d,\pm1),\,L(\pm1,0,\mp1,d,\mp1),\textup{ for }d\in\ZZ/3\ZZ\big\}
%{37,38,39,52,53,54,55,56,57,67,68,69}
,\nonumber\\
&K_4\coloneqq\big\{L(a,0,c,0,s),\textup{ for }a,s\in\{\pm1\},\,c\in\ZZ/3\ZZ\big\}
%{37,40,43,46,49,52,55,58,61,64,67,70}
,\nonumber\\
&K_5\coloneqq\big\{L(a,0,c,s(1-a),s),\textup{ for }a,s\in\{\pm1\},\,c\in\ZZ/3\ZZ\big\}
%{37,40,43,46,49,52,56,59,62,66,69,72}
,\nonumber\\
&K_6\coloneqq\big\{L(1,0,c,0,\pm1),\,L(-1,0,c,\pm1,\pm1),\textup{ for }c\in\ZZ/3\ZZ\big\}, \label{eq:subG3ord12}
%{37,40,43,46,49,52,57,60,63,65,68,71}
\\
&K_7\coloneqq\big\{L(\pm1,0,c,c,1),\,L(0,\pm1,c,c,-1),\,c\in\ZZ/3\ZZ\big\},\nonumber
%{1,5,9,28,32,36,37,41,45,64,68,72}
\\
&K_8\coloneqq\left\{\begin{aligned}
&L(0,1,c,c,-1),\,L(0,-1,c',d',-1),\,L(1,0,c,c,1),\,L(-1,0,-c',-d',1),\nonumber\\
& \textup{for } c\in\ZZ/3\ZZ,\,(c',d')\in\{(0,1),(1,-1),(-1,0)\}
\end{aligned}\right\},%{1,5,9,29,33,34,37,41,45,66,67,71}
\\
&K_9\coloneqq\left\{\begin{aligned}
&L(0,1,c,c,-1),\,L(0,-1,-c',-d',-1),\,L(1,0,c,c,1),\,L(-1,0,c',d',1),\nonumber\\
& \textup{for } c\in\ZZ/3\ZZ,\,(c',d')\in\{(0,1),(1,-1),(-1,0)\}
\end{aligned}\right\},%{1,5,9,30,31,35,37,41,45,65,69,70}
\\ &K_{10}\coloneqq\left\{L(a,b,c,d,s)\midd (a,b,s)\in\{(\pm1,0,1),(0,\pm1,-1)\},\, c+d\equiv0\right\},\nonumber%{1,6,8,28,33,35,37,42,44,64,69,71}
\\ & K_{11}\coloneqq\left\{\begin{aligned}&L(0,1,c,d,-1)\midd c+d\equiv1,\\
& L(0,-1,c,d,-1),L(1,0,c,d,1)\midd c+d\equiv0,\,L(-1,0,c,d,1)\midd c+d\equiv-1\end{aligned}\right\},\nonumber %{2,4,9,28,33,35,37,42,44,66,68,70}
\\ & K_{12}\coloneqq \left\{\begin{aligned}&L(0,1,c,d,-1)\midd c+d\equiv-1,\\
& L(0,-1,c,d,-1),L(1,0,c,d,1)\midd c+d\equiv0,\,L(-1,0,c,d,1)\midd c+d\equiv1\end{aligned}\right\}
.\nonumber %{3,5,7,28,33,35,37,42,44,65,67,72}
\end{align}

In particular, $\mathdutchcal{U}_{3,1}^{(2)}(N_3)$ contains the $18$ non-tensor-product unitaries of $\mathdutchcal{U}_{3,1}^{(2)}(G_3)$, so this cannot be a maximal subgroup of factorising unitaries. We are left to check whether either one of the subgroups $\mathdutchcal{U}_{3,1}^{(2)}(N_1)$, $\mathdutchcal{U}_{3,1}^{(2)}(N_2)$ is composed entirely of tensor product unitaries with respect to the same basis.

We approach this problem developing our Mathematica code~\cite{ourcode}, as follows: for each unitary in $\mathdutchcal{U}_{3,1}^{(2)}(G_3)$ (most of them has degenerate spectrum), we find a basis of eigenvectors, write all the elements in $\mathdutchcal{U}_{3,1}^{(2)}(G_3)$ with respect to that eigenbasis, and check whether each of them is a Kronecker product. We find that there are (at least) the following subgroups composed each of factorised unitaries with respect to the same basis: 
\begin{itemize}
    \item the normal subgroup $\mathdutchcal{U}_{3,1}^{(2)}(N_1)$ of order $36$, %\\up to isomorphisms, given by a suitable choice of an eigenbasis of the first nine matrices in matricesU2; %ste matrici hanno spettro DEGENERE
    \item the normal subgroup $\mathdutchcal{U}_{3,1}^{(2)}(H_3%=N18=H_{tot}
    )$  of order $18$,%\\ up to isomorphisms, given by a suitable choice of an eigenbasis of the elements of matricesU2 at positions $38,39,40,42,43,44$; %ste matrici hanno spettro DEGENERE
    \item the subgroups $\mathdutchcal{U}_{3,1}^{(2)}(K_7),\,\mathdutchcal{U}_{3,1}^{(2)}(K_8),\,\mathdutchcal{U}_{3,1}^{(2)}(K_9)$ of $\mathdutchcal{U}_{3,1}^{(2)}(N_1)$ of order $12$%, \\ up to isomorphisms, given by a suitable choice of an eigenbasis of the elements of matricesU2 at positions $29,30,31,32,34,36,64,65,66,67,68,69,70,71,72$. %e ste matrici hanno spettro NON degenere!! le prime e DEGENERE le ultime
    .
\end{itemize}

At this point, for each of these subgroups, with respect to the related eigenbasis, we collect the Kronecker factors of the factorising unitaries%delating duplicates, also up to global phase
, and we analyse whether the non-tensor-product unitaries in the complement are entangling or product up to $\mathrm{SWAP}$.

Starting from $\mathdutchcal{U}_{3,1}^{(2)}(N_1)$, we get that there is no entangling $2$-qubit gate among the non-tensor-product unitaries in $\mathdutchcal{U}_{3,1}^{(2)}(G_3\setminus N_1)$ with respect to the related eigenbasis. But an entangling gate is necessary for a universal set of $3$-adically controlled gates. %This is somehow unfortunate, because with a maximum subgroup like $\mathdutchcal{U}_{3,1}^{(2)}(N_1)$ we were trying to maximise the number of Kronecker factors, i.e., the number of one-qubit gates. But it could also be positive, since this scenario minimises the non-factorable two-qubit gates just to one representative: the subgroup $\mathdutchcal{U}_{3,1}^{(2)}(N_1)$ is of index $2$ in $\mathdutchcal{U}_{3,1}^{(2)}(G_3)$, then it is a normal subgroup and $\mathdutchcal{U}_{3,1}^{(2)}(G_3)=\mathdutchcal{U}_{3,1}^{(2)}(N_1)\sqcup \mathrm{U}_1\mathdutchcal{U}_{3,1}^{(2)}(N_1)$ for some $\mathrm{U}_1\in \mathdutchcal{U}_{3,1}^{(2)}(G_3)\setminus \mathdutchcal{U}_{3,1}^{(2)}(N_1)$, meaning that the set $\mathrm{U}_1\mathdutchcal{U}_{3,1}^{(2)}(N_1)$ of non-factorable gates is given by just one non-factorable gate $\mathrm{U}_1$ multiplied by the factorable ones.

Then, we move on and consider the subgroup $\mathdutchcal{U}_{3,1}^{(2)}(H_3)$. 
It is $H_3\simeq \mathdutchcal{U}_{3,1}^{(2)}(H_3) \simeq (C_3\times C_3)\rtimes C_2$, with quotient group
\beq \label{Klein4iso}
G_3/H_3\simeq \mathdutchcal{U}_{3,1}^{(2)}\left(G_3/H_3\right)\simeq C_2\times C_2,
\eeq
the Klein four-group of two generators, say $f_1$ and $f_2$. With respect to the basis
\beq\label{eq:autobase38Last18} %autobase della matrice 38 con relativi autovalori
\left\{\mathbf{v}_1\coloneqq\begin{pmatrix}0\\0\\1\\0\end{pmatrix},\ %e^{-\frac{2}{3}\pi i}
\mathbf{v}_2\coloneqq\begin{pmatrix}0\\0\\0\\-i\end{pmatrix},\ %1
\mathbf{v}_3\coloneqq\begin{pmatrix}\frac{e^{-\frac{i \pi}{6}}}{\sqrt{2}}\\\frac{1}{\sqrt{2}}\\0\\0\end{pmatrix},\ %1
\mathbf{v}_4\coloneqq\begin{pmatrix}\frac{e^{\frac{5i\pi}{6}}}{\sqrt{2}}\\\frac{1}{\sqrt{2}}\\0\\0\end{pmatrix} %e^{\frac{2}{3}\pi i}
\right\},
\eeq 
(written in terms of the basis GAP chose to write Eq.~\eqref{repnU2chi9daGAP}), the irrep $\mathdutchcal{U}_{3,1}^{(2)}$ is redefined as follows:
\begin{align}
&\mathdutchcal{U}_{3,1}^{(2)}\colon G_3\rightarrow \U(4),\label{repnU2chi9wrt38}\\
& \begin{pmatrix} 0& 1& 0 \\ -1& 0& 0 \\ 0& 0& 1 \end{pmatrix}   %U10
\mapsto  U_{10}\coloneqq
\begin{pmatrix} 0 & 0 & e^{- \frac{5i\pi}{6}} & 0 \\ -i & 0 & 0 & 0 \\ 0 & 0 & 0 & e^{- \frac{2i\pi}{3}} \\ 0 & 1 & 0 & 0 \end{pmatrix},
\quad 
\begin{pmatrix} 0& 1& 0 \\ 1& 0& 0 \\ 1 & 0& -1 \end{pmatrix}   %U4
\mapsto U_4\coloneqq \begin{pmatrix}
0 & 0 & i & 0 \\
0 & 0 & 0 & 1 \\
e^{\frac{5i\pi}{6}} & 0 & 0 & 0 \\
0 & e^{\frac{2i\pi}{3}} & 0 & 0
\end{pmatrix}.
\nonumber
\end{align}
As by Eq.~\eqref{Klein4iso}, we find the four (left and right) cosets of $\mathdutchcal{U}_{3,1}^{(2)}(H_3)$ in $\mathdutchcal{U}_{3,1}^{(2)}(G_3)$:
\begin{itemize}
\item $\mathdutchcal{U}_{3,1}^{(2)}(H_3)$ itself of factorising unitaries, with representative $\mI_4$, corresponding to $e$ in $C_2\times C_2$; %matrici [37-45]\cap[64-72]
\item a coset of $18$ entangling unitaries, denoted by $\mathdutchcal{U}_{3,1}^{(2)}(Ent_1)$ where $Ent_1\coloneqq\big\{M(0,\pm1,c,d,1),$\\$\textup{for }c,d\in\ZZ/3\ZZ\big\}$, 
%matrici [37-45]\cap[10-27]
a representative of which is $U_{10}$, which corresponds to, say, $f_1$ in $C_2\times C_2$;
\item a coset of $18$ entangling unitaries, denoted by $\mathdutchcal{U}_{3,1}^{(2)}(Ent_2)$ where $Ent_2\coloneqq\big\{M(0,\pm1,c,d,-1),$\\$\textup{for }c,d\in\ZZ/3\ZZ\big\}$,  %matrici [1,9]\cap[28,36]
such that $\mathdutchcal{U}_{3,1}^{(2)}(Ent_1)\sqcup \mathdutchcal{U}_{3,1}^{(2)}(Ent_2)$ is the set of all the $36$ entangling unitaries in $\mathdutchcal{U}_{3,1}^{(2)}(G_3)$ with respect to the basis~\eqref{eq:autobase38Last18}. A representative of $\mathdutchcal{U}_{3,1}^{(2)}(Ent_2)$ is $U_4$, and corresponds to $f_2$ in $C_2\times C_2$;
\item a coset of unitaries, denoted by $\mathdutchcal{U}_{3,1}^{(2)}(S)$ where $S\coloneqq\big\{M(\pm1,0,c,d,-1),\textup{ for }c,d\in\ZZ/3\ZZ\big\}$, %[46,63]
which are both non-entangling and non-tensor-product with respect to the basis~\eqref{eq:autobase38Last18}. A representative of this set is
\beq \label{eq:theSWAP}
    %U_{49}\coloneqq
    \mathrm{SWAP}=\begin{pmatrix} 1 & 0 & 0 & 0 \\ 0 & 0 & 1 & 0 \\ 0 & 1 & 0 & 0 \\ 0 & 0 & 0 & 1 \end{pmatrix}=U_{10}U_4,
    \eeq 
and corresponds to $f_1f_2$ in $C_2\times C_2$. 
\end{itemize}
By the group structure, one has $\mathdutchcal{U}_{3,1}^{(2)}(Ent_1)\mathdutchcal{U}_{3,1}^{(2)}(Ent_2)=\mathdutchcal{U}_{3,1}^{(2)}(S)$, and 
for every $E_1 \in \mathdutchcal{U}_{3,1}^{(2)}(Ent_1)$ and $E_2 \in \mathdutchcal{U}_{3,1}^{(2)}(Ent_2)$ one has $E_1E_2  = %H_3^{(10)}U_{10}H_3^{(4)}U_4 = H_3^{(10)}U_{10}H_3^{(4)}U_{10}^{-1}U_{10}U_4=H_3^{(10)}H_3'U_{10}U_4 per normalità
(A\otimes B)\mathrm{SWAP}$ for some $A,B\in\U(2)$ such that $A\otimes B\in \mathdutchcal{U}_{3,1}^{(2)}(H_3)$. The SWAP operator allows us to treat the single factors of the matrices in $\mathdutchcal{U}_{3,1}^{(2)}(H_3)$ as $1$-qubit gates needing to track which of the two $\CC^2$ subspaces they originate from, and to consider all the $1$- and $2$-qubit gates without worrying about which qubits in a register they are applied to. 

Furthermore, we collect all the tensor-factors in $\U(2)$ coming from the unitaries in $\mathdutchcal{U}_{3,1}^{(2)}(H_3)$, 
\begin{comment}
\begin{align} 
\label{eq:factorsU2wrt38}
&\begin{pmatrix}
1 & 0 \\
0 & 1
\end{pmatrix}, \quad
\begin{pmatrix}
1 & 0 \\
0 & e^{\pm \frac{2 i \pi}{3}}
\end{pmatrix}, \quad
\begin{pmatrix}
0 & 1 \\
-i & 0
\end{pmatrix}, \quad
\begin{pmatrix}
0 & 1 \\
e^{\frac{i \pi}{6}} & 0
\end{pmatrix}, \quad
\begin{pmatrix}
0 & 1 \\
e^{\frac{5 i \pi}{6}} & 0
\end{pmatrix}, \\
%%%%%%%%%%%%%
\nonumber &
e^{\pm\frac{2 i \pi}{3}}\begin{pmatrix}
1 & 0 \\
0 & 1
\end{pmatrix}, \quad
e^{\pm\frac{2 i \pi}{3}}\begin{pmatrix}
1 & 0 \\
0 & e^{\frac{2 i \pi}{3}}
\end{pmatrix}, \quad
e^{\pm \frac{2 i \pi}{3}}\begin{pmatrix}
1 & 0 \\
0 & e^{-\frac{2 i \pi}{3}}
\end{pmatrix}, \quad
i\begin{pmatrix}
0 & 1 \\
-i & 0
\end{pmatrix}, \\
%%%%%%%%%%%%
\nonumber & 
e^{-\frac{i \pi}{6}}\begin{pmatrix}
0 & 1 \\
-i & 0
\end{pmatrix} % = \begin{pmatrix} 0 & e^{-\frac{i \pi}{6}} \\ e^{-\frac{2 i \pi}{3}} & 0 \end{pmatrix}
, \quad
e^{-\frac{5 i \pi}{6}} \begin{pmatrix}
0 & 1 \\
-i & 0
\end{pmatrix}%= \begin{pmatrix} 0 & e^{-\frac{5 i \pi}{6}} \\ e^{\frac{2 i \pi}{3}} & 0 \end{pmatrix}
, \quad
i\begin{pmatrix} 0 & 1 \\ e^{\frac{i \pi}{6}} & 0 \end{pmatrix}%=\begin{pmatrix} 0 & i \\ e^{\frac{2 i \pi}{3}} & 0 \end{pmatrix}
, \quad
e^{-\frac{i \pi}{6}}\begin{pmatrix}
0 & 1 \\
e^{\frac{i \pi}{6}} & 0
\end{pmatrix}, \quad
e^{-\frac{5 i \pi}{6}}\begin{pmatrix}
0 & 1 \\
e^{\frac{i \pi}{6}} & 0
\end{pmatrix} %=\begin{pmatrix} 0 & e^{-\frac{5 i \pi}{6}} \\ e^{-\frac{2 i \pi}{3}} & 0 \end{pmatrix}
, \\
%%%%%%%%%%%%
\nonumber & 
i\begin{pmatrix}
0 & 1 \\
e^{\frac{5 i \pi}{6}} & 0
\end{pmatrix} %=\begin{pmatrix} 0 & i \\ e^{-\frac{2 i \pi}{3}} & 0 \end{pmatrix}
, \quad 
e^{-\frac{i \pi}{6}}\begin{pmatrix}
0 & 1 \\
e^{\frac{5 i \pi}{6}} & 0
\end{pmatrix} % = \begin{pmatrix} 0 & e^{-\frac{i \pi}{6}} \\ e^{\frac{2 i \pi}{3}} & 0 \end{pmatrix}
, \quad
e^{-\frac{5 i \pi}{6}}\begin{pmatrix}
0 & 1 \\
e^{\frac{5 i \pi}{6}} & 0
\end{pmatrix}.
\end{align}
\end{comment}
we consider the group they generate%of order 72 ma non è G3, generare da tutte o da up to phase sole è uguale qui
, and find a minimal set of its generators:
\beq \label{eq:gen1qubitgates}
\left\{ A\coloneqq \begin{pmatrix}
1 & 0 \\
0 & e^{\frac{2 i \pi}{3}}
\end{pmatrix}, \quad
B\coloneqq \begin{pmatrix}
0 & 1 \\
e^{\frac{5 i \pi}{6}} & 0
\end{pmatrix} \right\}.
\eeq 
%One can check that the $X$ and $Y$ Pauli matrices are not in this generated group, however the $Z$ Pauli matrix is... nella precedente autobase della 38* matrice data in automatico da Mathematica, invece, avevamo Pauli $X$

We collect the generators of the $1$-qubit gates as in Eq.~\eqref{eq:gen1qubitgates}, together with the entangling $2$-qubit gates as in Eq.~\eqref{repnU2chi9wrt38}%, as well as with the SWAP gate from Eq.~\eqref{eq:theSWAP}.. non ce la metto perche' viene dal prodotto delle altre 2 entangling... e mi ricordo sempre quindi che posso swappare in un circuito di porte
; hence, we find the following set of $3$-adically controlled gates: $\left\{
A %=\begin{pmatrix} 1 & 0 \\ 0 & e^{\frac{2 i \pi}{3}} \end{pmatrix}
, \
B%=\begin{pmatrix} 0 & 1 \\ e^{\frac{5 i \pi}{6}} & 0 \end{pmatrix}
, \
U_4%=\begin{pmatrix} 0 & 0 & i & 0 \\ 0 & 0 & 0 & 1 \\ e^{\frac{5i\pi}{6}} & 0 & 0 & 0 \\ 0 & e^{\frac{2i\pi}{3}} & 0 & 0 \end{pmatrix}
, \
U_{10}%=\begin{pmatrix} 0 & 0 & e^{- \frac{5i\pi}{6}} & 0 \\ -i & 0 & 0 & 0 \\ 0 & 0 & 0 & e^{- \frac{2i\pi}{3}} \\ 0 & 1 & 0 & 0 \end{pmatrix}
\right\}$.
%E' minimale o serve togliere altro? $G_3$ generato da 2 elmts, uno puo essere preso fattorizzabile.. ma vedendo la lista delle possibili coppie di generatori, non ce n'è una che vede U10 o U4 insieme ad un elemento fattorizzato qui (38-45 o 64-72)

Clearly $\mathdutchcal{U}_{3,1}^{(2)}(H_3)$ is a subgroup of $\langle A\otimes \mI_2,\, \mI_2\otimes A, \,B\otimes \mI_2,\, \mI_2\otimes B\rangle$. The latter group is of finite order%$432$
; also the group $\langle A\otimes \mI_2,\, \mI_2\otimes A, \,B\otimes \mI_2,\, \mI_2\otimes B,\, U_4, \,U_{10}\rangle<\U(4)$ is of finite order%$10368$, %of which $\mathdutchcal{U}_{3,1}^{(2)}(G_3)$ is a proper subgroup
, while a group must be countable to be dense in $\U(4)$. Seeking for universality, we add a third ancillary qubit, however the group $\left\langle A\otimes \mI_4, \mI_2\otimes A\otimes \mI_2, \mI_4\otimes A, B\otimes \mI_4, \mI_2\otimes B\otimes \mI_2, \mI_4\otimes B,  \mI_2\otimes U_4, U_4\otimes \mI_2, \mI_2\otimes U_{10},\right.$\\$\left.U_{10}\otimes \mI_2\right\rangle <\U(8)$ has finite order%$13436928$
. The group generated by $\{A,B,U_4,U_{10}\}$ in dimension $2^4$ in a similar fashion is again finite%139314069504
. This is not a universal set, at least not for $n_0\leq 4$ in Definition~\ref{def:uinversalitu}.

\medskip
We now proceed with the analysis of the orthogonal transformations in $\mathdutchcal{U}_{3,1}^{(4)}(G_3)$, using methods and codes~\cite{ourcode} analogous to those employed for the unitaries in $\mathdutchcal{U}_{3,1}^{(2)}(G_3)$. 

By studying the factorisation from the spectrum, we get that $18$ elements of $\mathdutchcal{U}_{3,1}^{(4)}(G_3)$ %indici 1-9, 28-36
do not tensor-factorise with respect to any basis.
\begin{comment}
\begin{itemize}
    \item the six unitaries sharing eigenvalue $1$ of multiplicity three and eigenvalue $-1$;
    \item the twelve unitaries sharing the set of eigenvalues $\{1,-1,e^{\pm i\frac{2}{3}\pi}\}$.
\end{itemize}
These unitaries are different and correspond to different elements in $G_3$, with respect to the scenario for $\mathdutchcal{U}_{3,1}^{(2)}(G_3)$. 
\end{comment}
In turn, there may be a subgroup of $\mathdutchcal{U}_{3,1}^{(4)}(G_3)$ of factorising unitaries with respect to the same isomorphism $\CC^4\simeq \CC^2\ox\CC^2$ at most of order $36$. The subgroup $\mathdutchcal{U}_{3,1}^{(4)}(N_1)$ has to be excluded, as it contains those $18$ non-tensor-product unitaries found above. 

However, from our code in Mathematica, we detect subgroups of order at maximum $12$ of factorising unitaries in $\mathdutchcal{U}_{3,1}^{(4)}(G_3)$ with respect to a same (eigen)basis; in particular 
$\mathdutchcal{U}_{3,1}^{(4)}(K_1)$, %w.r.t. 33                %{37,38,39,46,47,48,55,56,57,64,65,66} 
$\mathdutchcal{U}_{3,1}^{(4)}(K_2)$, %w.r.t. 35 (e38,39,71,72) %{37,38,39,49,50,51,55,56,57,70,71,72} 
$\mathdutchcal{U}_{3,1}^{(4)}(K_3)$, %w.r.t. 1 (e 69)          %{37,38,39,52,53,54,55,56,57,67,68,69} 
$\mathdutchcal{U}_{3,1}^{(4)}(K_4)$, %w.r.t. 40 (e43,64,67)    %{37,40,43,46,49,52,55,58,61,64,67,70} 
$\mathdutchcal{U}_{3,1}^{(4)}(K_6)$ %w.r.t. 5 (e 65)          %{37,40,43,46,49,52,57,60,63,65,68,71}
[cf. Eq.~\eqref{eq:subG3ord12}]. These five subgroups are not normal in $\mathdutchcal{U}_{3,1}^{(4)}(G_3)$, and they are all subgroups of $\mathdutchcal{U}_{3,1}^{(4)}(N_3)$.

\medskip

With respect to the basis
\beq\label{eq:autobase1U4} %autobase della matrice 1 con relativi autovalori
\left\{\mathbf{v}_1\coloneqq\begin{pmatrix}-\frac{\sqrt{3}}{2}\\0\\0\\\frac{1}{2}\end{pmatrix},\ \mathbf{v}_2\coloneqq\begin{pmatrix}\frac{1}{2}\\0\\0\\\frac{\sqrt{3}}{2}\end{pmatrix},\ 
\mathbf{v}_3\coloneqq\begin{pmatrix}0\\0\\1\\0\end{pmatrix},\ \mathbf{v}_4\coloneqq\begin{pmatrix}0\\1\\0\\0\end{pmatrix}
\right\},
\eeq 
(written in terms of the basis GAP chose to write Eq.~\eqref{repnU4chi8daGAP}), the  irrep $\mathdutchcal{U}_{3,1}^{(4)}$ is redefined as follows:
\begin{align}
&\mathdutchcal{U}_{3,1}^{(4)}\colon G_3\rightarrow \U(4),\label{repnU4chi8wrt1}\\
&\begin{pmatrix} 0& 1& 0 \\ -1& 0& 0 \\ 0& 0& 1 \end{pmatrix}   %U10
\mapsto 
\begin{pmatrix}
0 & 0 & \frac{1}{2} & \frac{\sqrt{3}}{2} \\
0 & 0 & -\frac{\sqrt{3}}{2} & \frac{1}{2} \\
-\frac{1}{2} & -\frac{\sqrt{3}}{2} & 0 & 0 \\
-\frac{\sqrt{3}}{2} & \frac{1}{2} & 0 & 0
\end{pmatrix},
\quad 
\begin{pmatrix} 0& 1& 0 \\ 1& 0& 0 \\ 1 & 0& -1 \end{pmatrix}   %U4
\mapsto \begin{pmatrix}
\frac{1}{2} & -\frac{\sqrt{3}}{2} & 0 & 0 \\
-\frac{\sqrt{3}}{2} & -\frac{1}{2} & 0 & 0 \\
0 & 0 & -\frac{1}{2} & -\frac{\sqrt{3}}{2} \\
0 & 0 & \frac{\sqrt{3}}{2} & -\frac{1}{2}
\end{pmatrix}
.
\nonumber
\end{align}
%All the unitaries in $\mathdutchcal{U}_{3,1}^{(4)}(G_3)$ are distinct, also up to global phase factors. However, 
In this case $\mathrm{SWAP}\not\in \mathdutchcal{U}_{3,1}^{(4)}(G_3)$, and the group of unitaries $\mathdutchcal{U}_{3,1}^{(4)}(G_3)$ is not invariant under $\mathrm{SWAP}$, i.e., it is not true that for every $U\in \mathdutchcal{U}_{3,1}^{(4)}(G_3)$ there exists $U'\in \mathdutchcal{U}_{3,1}^{(4)}(G_3)$ such that $\mathrm{SWAP}\,U\,\mathrm{SWAP} \propto U'$. 

%Eq.~\eqref{eq:autobase1U4} shows an eigenbasis of the first matrix in the list $\texttt{matricesU4}$, with respect to which 
With respect to the basis~\eqref{eq:autobase1U4}, $\mathdutchcal{U}_{3,1}^{(4)}(K_3)\simeq \mathrm{D}_6$ is a non-normal subgroup of order $12$ of $\mathdutchcal{U}_{3,1}^{(4)}(G_3)$ of all factorising unitaries. We collect the generating $1$-qubit gates 
%\beq \left\{\ X:=\begin{pmatrix} 0 & 1 \\ 1 & 0 \end{pmatrix}, \quad \begin{pmatrix} -\frac{1}{2} & \frac{\sqrt{3}}{2} \\ \frac{\sqrt{3}}{2} & \frac{1}{2} \end{pmatrix} \right\}, \eeq 
among the Kronecker tensor factors from $\mathdutchcal{U}_{3,1}^{(4)}(K_3)$.
\begin{comment}
\begin{align} \label{eq:factorsU4wrt1}
&\mathrm{I}_2, \ 
X:=\begin{pmatrix} 0 & 1 \\ 1 & 0 \end{pmatrix}, \ 
Z:=\begin{pmatrix} 1 & 0 \\ 0 & -1 \end{pmatrix}, \ 
\begin{pmatrix} -\frac{1}{2} & \frac{\sqrt{3}}{2} \\ \frac{\sqrt{3}}{2} & \frac{1}{2} \end{pmatrix}, \ 
\begin{pmatrix} \frac{1}{2} & -\frac{\sqrt{3}}{2} \\ \frac{\sqrt{3}}{2} & \frac{1}{2} \end{pmatrix}, \
\begin{pmatrix} \frac{1}{2} & \frac{\sqrt{3}}{2} \\ -\frac{\sqrt{3}}{2} & \frac{1}{2} \end{pmatrix}, \ 
\begin{pmatrix} \frac{1}{2} & \frac{\sqrt{3}}{2} \\ \frac{\sqrt{3}}{2} & -\frac{1}{2} \end{pmatrix}, \\
&-\mathrm{I}_2,\ 
-Z, \
-\begin{pmatrix} -\frac{1}{2} & \frac{\sqrt{3}}{2} \\ \frac{\sqrt{3}}{2} & \frac{1}{2} \end{pmatrix}, \
-\begin{pmatrix} \frac{1}{2} & -\frac{\sqrt{3}}{2} \\ \frac{\sqrt{3}}{2} & \frac{1}{2} \end{pmatrix}, \
-\begin{pmatrix} \frac{1}{2} & \frac{\sqrt{3}}{2} \\ -\frac{\sqrt{3}}{2} & \frac{1}{2} \end{pmatrix}, \ 
-\begin{pmatrix} \frac{1}{2} & \frac{\sqrt{3}}{2} \\ \frac{\sqrt{3}}{2} & -\frac{1}{2} \end{pmatrix}.\nonumber
\end{align}
\end{comment}
The elements in $\mathdutchcal{U}_{3,1}^{(4)}(G_3\setminus K_3)$ are all entangling unitaries. We can pick a representative for each of the non-trivial five left cosets of $\mathdutchcal{U}_{3,1}^{(4)}(K_3)$ in $\mathdutchcal{U}_{3,1}^{(4)}(G_3)$, however it is enough to take just one entangling unitary such that it generates $\mathdutchcal{U}_{3,1}^{(4)}(G_3)$ together with a product unitary.
%le generating one-qubit gates X, A1 generano tutte le porte 1-qubit (gruppo d'ord 12) di G3 e più.. e quindi ovviamente A1xI, IxA1, XxI, IxX generano anche di più di quel sottogruppo d'ordine 12... basta allora aggiungere a questo una 4x4 entangling U tale che U con una fattorizzabile formano un set minimale di 2 generatori di U4(G3), perchè con le suddette 1-qubit gates ottengo sicuro tale fattorizzabile, che insieme alla entangling generano sicuro tutto U4(G3), e quindi le altre entangling sono già ottenuto e non devono essere messe nel set di porte logiche perchè saranno già ottenibili, quindi sono generatori extra non necessari in un set minimo...
%Through GAP, we see that a minimal set of generators of $\mathdutchcal{U}_{3,1}^{(4)}(G_3)\simeq G_3$ is given by the elements at position $53$ (factorable) and $1$ (entangled from Coset 4), and this is what we choose. 
We implicitly include $\mathrm{SWAP}$, i.e., we allow ourselves to apply gates to any qubits in a circuit. In this way, we propose the following set of $3$-adically controlled gates:
\beq \label{eq:1gatesU4wrt1}
\mathcal{G}_1\mathdutchcal{p}_3\coloneqq\left\{ X\coloneqq \begin{pmatrix} 0 & 1 \\ 1 & 0 \end{pmatrix},\ %A_1=
S\coloneqq\begin{pmatrix} -\frac{1}{2} & \frac{\sqrt{3}}{2} \\ \frac{\sqrt{3}}{2} & \frac{1}{2} \end{pmatrix},\ %U1 =
\widetilde{CZ}\coloneqq\begin{pmatrix}
-1 & 0 & 0 & 0 \\
0 & 1 & 0 & 0 \\
0 & 0 & 1 & 0 \\
0 & 0 & 0 & 1
\end{pmatrix}%,\,\mathrm{SWAP}
\begin{comment} %sennò, con gli altri rappresentanti di coset avrei preso anche queste
,\ \begin{pmatrix}
\frac{1}{2} & -\frac{\sqrt{3}}{2} & 0 & 0 \\
-\frac{\sqrt{3}}{2} & -\frac{1}{2} & 0 & 0 \\
0 & 0 & -\frac{1}{2} & -\frac{\sqrt{3}}{2} \\
0 & 0 & \frac{\sqrt{3}}{2} & -\frac{1}{2}
\end{pmatrix} %=U4
,\ 
\begin{pmatrix}
\frac{1}{2} & \frac{\sqrt{3}}{2} & 0 & 0 \\
\frac{\sqrt{3}}{2} & -\frac{1}{2} & 0 & 0 \\
0 & 0 & -\frac{1}{2} & \frac{\sqrt{3}}{2} \\
0 & 0 & -\frac{\sqrt{3}}{2} & -\frac{1}{2}
\end{pmatrix} %=U7
, \ \begin{pmatrix}
-\frac{1}{2} & -\frac{\sqrt{3}}{2} & 0 & 0 \\
\frac{\sqrt{3}}{2} & -\frac{1}{2} & 0 & 0 \\
0 & 0 & -\frac{1}{2} & \frac{\sqrt{3}}{2} \\
0 & 0 & -\frac{\sqrt{3}}{2} & -\frac{1}{2}
\end{pmatrix}%=U40
,\quad 
\begin{pmatrix}
-\frac{1}{2} & \frac{\sqrt{3}}{2} & 0 & 0 \\
-\frac{\sqrt{3}}{2} & -\frac{1}{2} & 0 & 0 \\
0 & 0 & -\frac{1}{2} & -\frac{\sqrt{3}}{2} \\
0 & 0 & \frac{\sqrt{3}}{2} & -\frac{1}{2}
\end{pmatrix},%=U43
\end{comment}
\right\}.
\eeq 
In this set, we find the Pauli matrix $X$ (aka $\mathrm{NOT}$%che mappa \ket{0}<->\ket{1}
), the improper rotation %(i.e. orthogonal matrix with determinant $-1$) 
$S=R(2\pi/3)Z$ where the Pauli $Z\coloneqq\begin{pmatrix} 1 & 0\\ 0 & -1 \end{pmatrix}$ is a reflection and $R(\theta)\coloneqq\begin{pmatrix}\cos\theta & -\sin\theta \\ \sin\theta & \cos\theta
\end{pmatrix}$ is a (proper) rotation, and a sort of controlled $Z$%se \widetilde{Z}\coloneqq\begin{pmatrix}-1&0\\0&1\end{pmatrix} allora \widetilde{Z}\ket{0}=-\ket{0} e \widetilde{Z}\ket{1}=\ket{1}.. e qui il controllo può essere visto sul 2° qubit: se il 2° è in \ket{0} allora si agisce con \widetilde{Z} sul primo, sennò no
. By GAP, it is easy to see that $\langle X\otimes \mI_2,\, \mI_2\otimes X,\, S\otimes \mI_2,\mI_2\otimes S, \widetilde{CZ}%,\mathrm{SWAP}
\rangle \leq \U(4)$ is an infinite group. 

The situation is very similar for the groups $\mathdutchcal{U}_{3,1}^{(4)}(K_1)$, $\mathdutchcal{U}_{3,1}^{(4)}(K_2)$ and $\mathdutchcal{U}_{3,1}^{(4)}(K_6)$, leading to the same set of $3$-adically controlled gates $\mathcal{G}_1\mathdutchcal{p}_3$ in Eq.~\eqref{eq:1gatesU4wrt1}.

\medskip

Another option is the following. With respect to the basis
\beq\label{eq:autobase40U4} %autobase della matrice 40 con relativi autovalori
\left\{\mathbf{v}_1\coloneqq\begin{pmatrix}-\frac{i}{\sqrt{2}}\\0\\0\\\frac{1}{\sqrt{2}}\end{pmatrix},\ \mathbf{v}_2\coloneqq\begin{pmatrix}0\\ \frac{i}{\sqrt{2}}\\ \frac{1}{\sqrt{2}} \\ 0 \end{pmatrix},\ 
\mathbf{v}_3\coloneqq\begin{pmatrix} \frac{i}{\sqrt{2}} \\0\\0\\ \frac{1}{\sqrt{2}} \end{pmatrix},\ \mathbf{v}_4\coloneqq\begin{pmatrix}0\\- \frac{i}{\sqrt{2}}\\\frac{1}{\sqrt{2}}\\0\end{pmatrix}
\right\},
\eeq 
% the unitary irrep $\mathdutchcal{U}_{3,1}^{(4)}$ is redefined as follows: \begin{align} &\mathdutchcal{U}_{3,1}^{(4)}\colon G_3\rightarrow \U(4),\label{repnU4chi8wrt40}\\ &\begin{pmatrix} 0& 1& 0 \\ -1& 0& 0 \\ 0& 0& 1 \end{pmatrix}  U10 \mapsto \begin{pmatrix} 0 & 0 & 0 & e^{-\frac{2 \pi i}{3}} \\ -1 & 0 & 0 & 0 \\ 0 & e^{\frac{2 \pi i}{3}} & 0 & 0 \\ 0 & 0 & -1 & 0 \end{pmatrix}, \quad \begin{pmatrix} 0& 1& 0 \\ 1& 0& 0 \\ 1 & 0& -1 \end{pmatrix}  \mapsto \begin{pmatrix} 0 & 0 & -1 & 0 \\ 0 & e^{-\frac{2 \pi i}{3}} & 0 & 0 \\ -1 & 0 & 0 & 0 \\ 0 & 0 & 0 & e^{\frac{2 \pi i}{3}} \end{pmatrix} . \nonumber \end{align} Here, 
the group $\mathdutchcal{U}_{3,1}^{(4)}(K_4)$ is formed by all factorising unitaries, and the elements of $\mathdutchcal{U}_{3,1}^{(4)}(G_3\setminus K_4)$ are all entangling unitaries.
%%%%%%Vedi meglio invarianza sotto SWAP qui
% all $1$-qubit gates: \begin{align}\label{eq:factorsU4wrt40} &\mI_2,\ X,\ \begin{pmatrix} 1 & 0 \\ 0 & e^{\pm \frac{2 \pi i}{3}} \end{pmatrix},\ \begin{pmatrix} 0 & 1 \\ e^{\pm\frac{2 \pi i}{3}} & 0 \end{pmatrix}\\ & \nonumber -\mI_2, \ e^{\pm \frac{2 \pi i}{3}}\mI_2,\ e^{\pm \frac{\pi i}{3}}\mI_2, \ -X,\ e^{\pm \frac{2 \pi i}{3}}X,\ e^{\pm \frac{\pi i}{3}}X \end{align}
%generatori $1$-qubit gates non possono presi tra quelle a meno di fase...
%genero tutto il gruppo G3 con matrici 9 (ent, coset 5) e 58 (fattorizzata)
Arguing as in the above cases, we can give the following set of $3$-adically controlled gates:
\beq 
\left\{ -X,\ \begin{pmatrix}
0 & 1 \\
e^{-\frac{2 \pi i}{3}} & 0
\end{pmatrix},\ 
\begin{pmatrix}
0&0&-1&0\\
0&1&0&0\\
-1&0&0&0\\
0&0&0&1
\end{pmatrix}%,\,\mathrm{SWAP}
\right\}.
\eeq 
However, in dimension $2^n$ ($n=2,3,4$) this set of gates generates finite groups.
%1296, 839808, 88159684608  .. oppure 3888, 17635968$ con SWAP

\subsection{A universal set of $3$-adically controlled quantum gates}
\label{eq:gatesetuniversal}
From the previous section, we have found a promising set of $3$-adically controlled gates, i.e., $\mathcal{G}_1\mathdutchcal{p}_3$ in Eq.~\eqref{eq:1gatesU4wrt1}. Here, we prove that $\mathcal{G}_1\mathdutchcal{p}_3$ is a universal set of gates. 

Let %$\mathcal{G}\mathdutchcal{s}$ or 
\beq 
\mathdutchcal{G}\coloneqq\langle X\otimes\mI_2,\mI_2\otimes X, S\otimes\mI_2,\mI_2\otimes S,\widetilde{CZ}\rangle\leq \mathrm{O}(4)
\eeq 
be the group generated by the gates in $\mathcal{G}_1\mathdutchcal{p}_3$ in dimension $4$. First of all, we show that $\mathdutchcal{G}$ is dense in $\mathrm{O}(4)$, or equivalently that its connected component at the identity, $\mathdutchcal{G}_0$, is dense in $\SO(4)$.

\medskip

For every $m\in\NN$, $\mathrm{O}(m)\simeq\SO(m)\rtimes\{\pm1\}$. Indeed $\det(\mathrm{O}(m))=\{\pm1\}$ and $\mathrm{O}(m)$ has two connected components: $\SO(m)$ and the set of orthogonal transformations with determinant equal to $-1$.  The Lie algebra associated to $\mathrm{O}(m)$ and $\SO(m)$ is $\mathfrak{o}(m)=\mathfrak{so}(m)=\{A\in \mathsf{M}(m,\RR)\midd A^\top=-A\}$ %authomatically traceless
of dimension $\frac{m(m-1)}{2}$. 

We have that $\det(\overline{\mathdutchcal{G}})=\{\pm1\}$, and $\overline{\mathdutchcal{G}}$ also %non connesso, perchè if f on X continuous and X connected then f(X) connected. For me det continuous but \det(\mathcal{G}_3) disconnected, so \mathcal{G}_3 disconnected... since continuous surjective f is such that #CC[f(X)]\leq#CC[X], \leq#CC[X]>1 X not connected, and #CC[f(X)]=2, then we can only deduce \leq#CC[X]\geq 2 but do not know if exactly two connected components.. ma posso dire che det è anche homomorphismo gruppale e allora posso dedurre che ho esattamente 2 CC:
has two connected components: $\overline{\mathdutchcal{G}}_0=\ker(\det)\leq \SO(4)$, and $\det^{-1}(-1)=\widetilde{CZ}\overline{\mathdutchcal{G}}_0$.
%of all matrices in $\overline{\mathcal{G}_3}$ of determinant equal to $-1$
%ker(det) connesso perchè det continua e {1} connesso, simile per det^{-1})(-1) o perchè la moltiplicazione per un elemento è un homeomorphism
%la preimmagine w.r.t homo f di un qualunque altro elemento del gruppo diverso da 1, f^{-1}(g), è un coset di f^{-1}(1)=\ker(f).... Since $\det$ is a surjective homomorphism, $\det^{-1}(-1)$ is the non-trivial coset of $\ker(\det)$.
%\beq \label{eq:G3densoU4?} \overline{\mathcal{G}_3}=(\overline{\mathcal{G}_3})_0\sqcup \widetilde{CZ}(\overline{\mathcal{G}_3})_0. \eeq 
%rifacendo lo stesso discorso ma per la chiusura di \mathcal{G}_3, troviamo \mathcal{G}_3=(\mathcal{G}_3)_0\sqcup \widetilde{CZ}(\mathcal{G}_3)_0
%The group $\U(4)\simeq \mathrm{SU(4)}\rtimes\U(1)$ is connected, but its subgroups, like $\mathcal{G}_3$ and $\overline{\mathcal{G}_3}$, are not necessarily so. 
By the closed-subgroup theorem, $\overline{\mathdutchcal{G}}$ is a Lie subgroup of $\mathrm{O}(4)$%compatto ma non connesso
, $\overline{\mathdutchcal{G}_0}\simeq \overline{\mathdutchcal{G}}_0$ is a Lie subgroup of $\SO(4)$, 
and the algebra $\mathrm{Lie}(\overline{\mathdutchcal{G}})=\mathrm{Lie}(\overline{\mathdutchcal{G}}_0)\subseteq \mathfrak{o}(4)$. Since $\SO(4)$ and $\overline{\mathdutchcal{G}}_0$ are both connected, one has 
\beq 
\mathrm{Lie}(\overline{\mathdutchcal{G}}_0)=\mathfrak{o}(4) \textup{ if and only if } \overline{\mathdutchcal{G}}_0=\SO(4).
\eeq 
Then, our goal is to prove $\mathrm{Lie}(\overline{\mathdutchcal{G}}_0)=\mathfrak{o}(4)$ by reconstructing $\mathrm{Lie}(\overline{\mathdutchcal{G}}_0)$ generator by generator.

But first, we recall some technical results. The eigenvalues of an element $O$ in $\mathrm{O}(m)$ are in $\U(1)$: if $O$ has some non-real eigenvalues, these must be pairwise conjugated phases%la parte complessa degli autovalori non può dare -1 al determinante, perchè la matrice è reale, in caso deve venire da qualche autovalore reale -1
. An element $O\in\mathrm{O}(m)$ is of infinite order if and only if it has an eigenvalue $e^{i\theta}\in \U(1)$ with $\theta\in\RR\setminus\pi\QQ$, i.e., with $\theta$ incommensurate with $\pi$. In fact, $e^{i\theta}$ is of finite order %q, cioè genera il ciclico di ordine q
if and only if $\theta\in\pi\QQ$%(i.e. $e^{i\theta}$ is a root of unity $e^{2\pi i \frac{p}{q}}$)
, otherwise $\langle e^{i\theta}\rangle %=\left\{e^{in\theta},\,n\in\ZZ\right\}
$ is dense in $\U(1)$. Moreover, $(e^{i\theta_k})_{k=1}^m\subseteq%\mathrm{T}^4\coloneqq
\U(1)^m$ generates $\{(e^{in\theta_k})_k,\,n\in\ZZ\}$ which is dense in the $d$-dimensional torus $\U(1)^d$, $1\leq d\leq m$, if and only if there are $d$ $\QQ$-linearly independent $\theta_k\not\in\pi\QQ$, if and only if $d=\mathrm{dim}_\QQ\mathrm{Span}_\QQ\{\pi,\theta_k\}_k-1\geq1$ %while is discrete if and only if $\theta_i/2\pi\in\QQ$ for all $i$ cioè \mathrm{dim}_\QQ\mathrm{Span}_\QQ\{1,\theta_i/2\pi\}_i-1=0
(by the Kronecker-Weyl theorem)%devo considerare la lineare indipendenza solo delle \theta_i/\pi non razionali, perchè solo quelle singolarmente danno densità e non discretezza in T. tutti i numeri razionali sono Q-lin dip tra loro (Q spazio vett di dim_Q 1).. Invece razionale e irrazionale sempre Q-lin indip (se non almeno uno delle due categorie nulla), quindi considerare anche le fasi razionali mi aggiunge una dimensione, che diminuisco correttamente con il -1. l'1 presente nel set mi seleziona la dimensione della lin indip delle irrazionali (che sono lin indip con 1) e assorbe la dimesione delle razionali (che sono lin dip con 1
%l'alternativa è $m=\mathrm{dim}_\QQ\mathrm{Span}_\QQ\{\theta_i/2\pi\}_i$ where at least one $\theta_i/2\pi$ is irrational.. ma questo può dare dimensione aggiuntiva rispetto all'altra formula, considerando anche le fasi razionali che danno una dimensione del toro aggiuntiva: lin indip tra razionali e irrazionali, quindi incommensurabili quando ne prendo le potenze, e la densità della irrazionale si va a mixare incommensurabilmente con la discretezza della razionale, creando una ulteriore dimensione/direzione di densità. ma non credo.. è irrealistico dire che una fase in una dimensione mi dia qualcosa di denso nel toro se in realtà le sue potenze sono discrete
.

If $R\in \mathdutchcal{G}_0$, the group $\overline{\langle R\rangle}$ is a Lie subgroup of $\overline{\mathdutchcal{G}}_0\leq\SO(4)$%perchè \overline{\langle U\rangle} nel sottogruppo chiuso \overline{G_0}=G_0, e a maggior ragione nella componente identità della chiusura.. e uso the closed-subgroup theorem
%, for which it makes sense to find the generators of the relative Lie algebra
, and $\mathrm{Lie}(\overline{\langle R\rangle})\subseteq \mathrm{Lie}(\overline{\mathdutchcal{G}}_0)\subseteq \mathfrak{o}(4)$. %esiste un cambio di base ortogonale tale che O diventa una matrice diagonale a blocchi, in cui ogni blocco 2x2 è la rotazione planare cos\theta -\sin\theta sin\theta \cos\tehta associata alla coppia di autovalori $e^{\pmi\theta}$ ed il restante sono gli autovalori diagonali +1 e -1... ovviamente, in questa "diagonalizzazione", l'lagebra di Lie associata è reale, è in so... ma a me non me frega e vado ai copmlessi e diagonalizzo proprio in una matrice complessa unitaria di fasi
There exists $\Lambda\in\U(4)$ such that $R=\Lambda D\Lambda^\dagger$ where $D\coloneqq\diag(e^{i\theta_k})_{k=1}^4$ is the diagonal matrix of eigenvalues of $R$ of unitary determinant. If $d=\mathrm{dim}_\QQ\mathrm{Span}_\QQ\{\pi,\theta_k\}_k-1$, then $\overline{\langle R\rangle} = \Lambda \overline{\langle D\rangle} \Lambda^\dagger \simeq \overline{\langle D\rangle} \simeq\U(1)^d$%is an $d$-parameter Lie group, hence it 
, with associated $d$-dimensional Lie algebra $\mathrm{Lie}(\overline{\langle R\rangle})=\Lambda\mathrm{Lie}(\overline{\langle D\rangle})\Lambda^\dagger\simeq \mathrm{Lie}(\overline{\langle D\rangle})\subset\mathfrak{su}(4)$.
One finds a set of generators $\{A_j\in\mathfrak{su}(4)\}_{j=1}^d$ of $\mathrm{Lie}(\overline{\langle D\rangle})$,
\begin{comment}
as follows: find a basis $\{2\pi%per i razionali
,\alpha_j%\in\RR\setminus2\pi\QQ per gli irrazionali
\}_{j=1}^d$ of $\mathrm{Span}_\QQ\{2\pi,\theta_k\}_{k=1}^4$; write the linear combination $(\theta_k)_{k=1}^4=\mathbf{v}_0+\sum_j\mathbf{v}_j\alpha_j$ for suitable $\mathbf{v}_0,\mathbf{v}_j=({v_j}_l)_l\in\QQ^4$ ($\mathbf{v}_0=\mathbf{0}$ if and only if $\theta_k\not\in\pi\QQ$ for all $k$); define the generator $A_j\coloneqq i \diag({v_j}_l)_{l=1}^4$ up to real multiples, for every $j=1,\dots,d$%cioè ho d generatori A_j, e lascio fuori l'eventuale (d+1)-esimo v_0 dovuto alle eventiali fasi razionali, che corrisponde a elementi di ordine finito e non genera un flusso continuo, quindi non contribuisce ai generatori dell'algebra
. Thus 
\end{comment}
so that $\mathrm{Lie}(\overline{\langle D\rangle})=\mathrm{Span}_\RR\{A_j\}_{j=1}^d$ and $\overline{\langle D\rangle}=\{e^{\sum_{j=1}^dt_jA_j},\,t_j\in\RR\}$. %mi si genera così perchè è connected as a torus e \det(D)=\det(U)=1 con tutti gli autovalori vicini a 1.. se fossi partita da una matrice con det=-1, o anche con det=1 ma perchè ce ne sono due -1, non è nella componente dell'identità, e la Lie algebra associata mi genera solo gli elementi vicini e non quelli lontani ad 1
By conjugation by $\Lambda$, one finds $\mathrm{Lie}(\overline{\langle R\rangle})$ and $\overline{\langle R\rangle}$.

We carry out this process for non-commuting matrices in $\mathdutchcal{G}_0$ of infinite order, and we will calculate the commutator of the generators of the associated Lie algebra, in order to find new linearly independent vectors, up to $\dim(\mathfrak{o}(4))=6$.

By GAP and Mathematica~\cite{ourcode}, we find the matrix
$R_1\coloneqq \left((S\otimes S)\widetilde{CZ}\right)^2 %=(346)^2
=\Lambda_1 D_1\Lambda_1^\dagger\in\mathdutchcal{G}_0$ where $\Lambda_1\coloneqq \frac{1}{\sqrt{10}}\begin{pmatrix}
0 & 0 & -i\sqrt{5} & i\sqrt{5} \\
-\sqrt{5} & \sqrt{3} & -1 & -1 \\
\sqrt{5} & \sqrt{3} & -1 & -1 \\
0 & 2 & \sqrt{3} & \sqrt{3}
\end{pmatrix}\in\U(4)$ and $D_1\coloneqq\diag(e^{i\theta_k})_k$ with $\theta_1=\theta_2=0$%the 1* one is the one coming from eigenvalue (-1) of the non-squared matrix #346
, $\theta_3=\arctan\left(\frac{\sqrt{15}}{7}\right)-\pi,\theta_4=-\theta_3\not\in\pi\QQ$. Then $\overline{\langle R_1\rangle}\simeq\U(1)$, the one-dimensional Lie algebra $\mathrm{Lie}(\overline{\langle D_1\rangle})\subseteq\mathfrak{su}(4)$ is generated by %$A=i\diag(0,0,\pi+\arctan\sqrt{15},-\pi-\arctan\sqrt{15})$ o piu minimalmente col metodo sopra
$A\coloneqq i\diag(0,0,1,-1)$%secondo l'altro metodo invece, avrei 2 generatori: a basis for $\mathrm{Span}_\QQ\{\theta_i/2\pi\}_i$ is $\{\alpha_1=1,\alpha_2=\frac{\arctan\sqrt{15}}{\pi}\}$, from which one finds $A_1=i\diag(0,1,1,1)$ and $A_2=i\diag(0,0,1,-1)$. 
, and $\mathrm{Lie}(\overline{\langle R_1\rangle})\subseteq\mathfrak{o}(4)$ is generated by
\beq \label{eq:genLieT1}
A_1\coloneqq\Lambda_1 A\Lambda_1^\dagger=
\frac{1}{\sqrt{5}}
\begin{pmatrix}
0 & -1 & -1 & \sqrt{3}\\ 1 & 0 & 0 & 0 \\ 1 & 0 & 0 & 0 \\ -\sqrt{3} & 0 & 0 & 0
\end{pmatrix}.
\eeq

Moreover, a matrix which does not commute with $R_1$ is $R_2\coloneqq \left((S\otimes \mI_2)\widetilde{CZ}(\mI_2\otimes S)\right)^2 %=(364)^2
=\Lambda_2 D_2\Lambda_2^\dagger\in\mathdutchcal{G}_0$ where $\Lambda_2\coloneqq
\frac{1}{\sqrt{5}}\begin{pmatrix}
0 & \sqrt{3} & e^{i(\arctan(\sqrt{5/3})-\pi)} & e^{-i(\arctan(\sqrt{5/3})-\pi)}\\
0 & 1 & \sqrt{2}e^{-i(\arctan\sqrt{15}-\pi)} & \sqrt{2}e^{i(\arctan\sqrt{15}-\pi)}\\
0 & 1 & \sqrt{2} & \sqrt{2}\\
\sqrt{5} & 0 & 0 & 0
\end{pmatrix}\in \U(4)$ and $D_2\coloneqq\diag(e^{i\phi_k})_k$ with $\phi_1=\phi_2=0$, 
$\phi_4=-\phi_3 =\theta_3\not\in\pi\QQ$. Again $\overline{\langle U_2\rangle}\simeq\U(1)$, the one-dimensional Lie algebra $\mathrm{Lie}(\overline{\langle D_2\rangle})$ is generated by $A=i\diag(0,0,1,-1)$, and $\mathrm{Lie}(\overline{\langle R_2\rangle})$ is generated by
\beq 
A_2\coloneqq \Lambda_2 A\Lambda_2^\dagger= 
\frac{1}{\sqrt{5}}\begin{pmatrix}
0 & -1 & 1 & 0 \\
1 & 0 & -\sqrt{3} & 0 \\
-1 & \sqrt{3} & 0 & 0 \\
0 & 0 & 0 & 0
\end{pmatrix}.
\eeq

Now, $A_1,\,A_2\in \mathrm{Lie}(\overline{\mathdutchcal{G}}_0)$ are elements of an $\RR$-basis of $\mathrm{Lie}(\overline{\mathdutchcal{G}}_0)$, and we seek for the other generators by taking their Lie bracket (i.e. commutator). We find that $A_1,\,A_2,\,[A_1,A_2]$ are $\RR$-linearly independent, hence $[A_1,A_2]$ is a third basis vector of $\mathrm{Lie}(\overline{\mathdutchcal{G}}_0)$. Similarly, we find that $\{A_1,\,A_2,\,[A_1,A_2],\,[A_1,[A_1,A_2]],\,[A_2,[A_1,A_2]],\,[A_1,[A_1,[A_1,A_2]]]$ is a set of $6$ $\RR$-linearly independent matrices of $\mathrm{Lie}(\overline{\mathdutchcal{G}}_0)$. This shows that $\mathrm{Lie}(\overline{\mathdutchcal{G}}_0)=\mathfrak{so}(4)$, and $\mathdutchcal{G}_0$ is dense in $\SO(4)$ (as well as $\overline{\mathdutchcal{G}}=\mathrm{O}(4)$).

\medskip

Now, one can show (with usual Lie algebra techniques) that $\langle\SO(4)\otimes\mI_2,\mI_2\otimes\SO(4)\rangle=\SO(8)$. %Up to a phase, this allows us to approximate the Toffoli gate , and thus all the classical gates. In particular, we are allowed to permute the basis elements of the working space. ...Ma ho gia' swappato $\SO(4)$ nei due suddetti embedding...
Then, by the real encoding of quantum computation (cf. Remark~\ref{eq:remBVrelaQC} and below), the set of gates $\mathcal{G}_1\mathdutchcal{p}_3$ approximates any circuit or gate in $\U(4)$. Finally, $\U(4)$ is universal by~\cite{DiVincenzo},~\cite{Deutsch}. Nevertheless, we can argue the universality of $\mathcal{G}_1\mathdutchcal{p}_3$ with a stronger argument as follows. As established in classical literature (e.g.~\cite[Sec.~5.1.8]{GolubGivens}), we show how $\SO(n)$ in dimension $m$ generates the whole group $\SO(m)$ for every $2\leq n<m$.

\begin{definition}
For $m\geq2$, $1\leq j<k\leq m$, and $\theta\in\RR$, a \emph{Givens rotation} $G_{jk}(\theta)$ in $\SO(m)$ is the following rank-$2$ correction of the identity,
\begin{align}
G_{jk}(\theta)\coloneqq&\begin{pmatrix} 1 & \cdots & 0 & \cdots & 0 & \cdots & 0 \\
\vdots & \ddots & \vdots &  & \vdots &  & \vdots\\
0 & \cdots & \cos\theta & \cdots & -\sin\theta & \cdots & 0 \\
\vdots &  & \vdots &\ddots  & \vdots &  & \vdots\\
0 & \cdots & \sin\theta & \cdots & \cos\theta & \cdots & 0 \\
\vdots &  & \vdots &  & \vdots & \ddots & \vdots\\
0 & \cdots & 0 & \cdots & 0 & \cdots & 1
\end{pmatrix}
\begin{matrix} \phantom{1} \\
\phantom{\vdots} \\
j \\
\phantom{\vdots} \\
k  \\
\phantom{\vdots} \\
\phantom{0}
\end{matrix}\\
&\hspace{1.85cm} j \hspace{1.65cm} k\nonumber
\end{align}
i.e., a rotation of angle $\theta$ in $\SO(2)$ embedded in $\SO(m)$ to act non-trivially only in the $(j,k)$ coordinate plane in $\RR^m$.
\end{definition}
In what follows, an embedding of $\SO(n)$ in $\SO(m)$, $n\leq m$, is always intended to be a canonical embedding as in the above definition.

Givens rotations are useful for computations, as they allow to annihilate all but one component of a vector (or of a column of a matrix). If $\mathbf{x}=(x_i)_i\in\RR^m$ and $\mathbf{y}=(y_i)_i\coloneqq G_{jk}(\theta)\mathbf{x}$, then $y_j=\cos(\theta) x_j-\sin(\theta) x_k$, $y_k=\sin(\theta)x_j + \cos(\theta)x_k$, and $y_l=x_l$ for all $l\neq j,k$. If not already $x_k=0$, then we can force $y_k=0$ by setting 
\beq \label{eq:annprimapenultimacomp}
\cos\theta=\frac{x_j}{\sqrt{x_j^2+x_k^2}},\qquad \sin\theta=-\frac{x_k}{\sqrt{x_j^2+x_k^2}}.
\eeq 
Similarly, if $x_j\neq 0$, we can make $y_j=0$ by setting %\label{eq:annsecondaultimacomp}
$\cos\theta=\frac{x_k}{\sqrt{x_j^2+x_k^2}}$, $\sin\theta=\frac{x_j}{\sqrt{x_j^2+x_k^2}}$. 

\begin{remark}\label{remark:GivensJacobi}
Givens rotations provide a \emph{generalisation of Euler- or nautical-angle decomposition} of $\SO(3)$ to any higher-dimensional $\SO(m)$. Indeed, we show how, for every $m\geq 2$, the group $\SO(m)$ is generated by products of rotations in $\SO(2)$ (or in $\SO(n)$ with $2\leq n< m$) embedded in $\SO(m)$.

Given $O\in\mathrm{O}(m)$, we want to annihilate its first column except for one element, by using Givens rotations. Since the first column of the orthogonal matrix $O$ is non-zero (e.g. $O_{1m}\neq 0$%primo, oppure l'ultimo
), we can use Eq.~\eqref{eq:annprimapenultimacomp} repeatedly%oppure \eqref{eq:annsecondaultimacomp}
. There exist $\theta_2,\dots,\theta_m\in\RR$ such that
\beq \label{eq:Givensdeco}
O'\coloneqq G_{1m}(\theta_m)\dots G_{12}(\theta_2)O = \begin{pmatrix} O'_{11} &\mathbf{0}^\top\\\mathbf{0}& O'_{m-1} \end{pmatrix}.
\eeq 
One has $O'\in\mathrm{O}(m)$%Since $G_{1m}(\theta_m),\dots ,G_{12}(\theta_2)\in\SO(m)$, $O\in\mathrm{O}(m)$
, thus $O'_{11}\in\{\pm1\}$ and $O'_{m-1}\in\mathrm{O}(m-1)$. We can write $O'$ in terms of special orthogonal matrices only: either already $O'_{m-1}\in\SO(m-1)$, or we modify last Givens rotation as $G_{1m}(\theta_m)\mapsto G_{1m}(\theta_m+\pi)$. This changes the sign of $O'_{11}$ and of last column of $O'_{m-1}$, whose determinant then changes sign. This shows that every $O\in\mathrm{O}(m)$ can be decomposed into the product of (at most) $m-1$ Givens rotations and an orthogonal transformation $O'$; the Givens rotations are elements of $\SO(2)$ suitably embedded in $\SO(m)$, and $O'$ is an element of $\SO(m-1)$ suitably embedded in $\mathrm{O}(m)$. Moreover $\det(O')=O'_{11}\det(O'_{m-1})$ in Eq.~\eqref{eq:Givensdeco}. Then, every $O\in\SO(m)$ can be written in terms of lower-dimensional rotations as just described with $O'\in\SO(m)$ (that is $O'_{m-1}\in\SO(m-1)$ and $O'_{11}=+1$ %rotation axis fixed by O'_{m-1}
in Eq.~\eqref{eq:Givensdeco}).

Every element of $\SO(3)$ can be written as the product of three rotations around the reference axes of $\RR^3$ according to the nautical- (aka Cardano-) or Euler-angle decomposition~\cite{our1st}, which is what the above method describes for $m=3$. %Every element in $\SO(4)$ decomposes into the product of $3$ Givens rotations and a rotation from $\SO(3)$, i.e. into the product of $6$ Givens rotations. 
It is easily checked by induction that, for every $m\geq2$, every $R\in\SO(m)$ is decomposable into %m-1 Givens e una (m-1)dimensionale, quest'ultima con m-1 Givens e una (m-3)dimensionale, quest'ultima......... 4dimensionale, quest'ultima con 3 Givens e una di SO(3), quest'ultima con 2 Givens e una di SO(2) (ossia con Cardano/Euler, con 3 Givens)
the product of (at most) $\sum_{n=1}^{m-1}n=\frac{m(m-1)}{2}=\dim(\mathfrak{o}(m))$ Givens rotations, i.e. ${m\choose 2}$ elements of $\SO(2)$ in $\SO(m)$. In other words, the group $\SO(2)$ (acting on the ${m \choose 2}$ coordinate planes) generates the group $\SO(m)$.
From this it also follows that for every $2\leq n\leq m$, the group $\SO(n)$, acting on the ${m\choose n}$ coordinate subspaces, generates the group $\SO(m)$.
\end{remark}

We have seen that our set of gates $\mathcal{G}_1\mathdutchcal{p}_3$ generates a dense group in $\mathrm{O}(4)$, through which to generate $\mathrm{O}(8)$. %se dico di farlo come O(4)\otimes I, I\otimes O(4) sto usando SWAP (che ho detto che posso farlo dall'inizio... se dico di usare Givens da 4 a 5 a 6 a 7 a 8 sto invece permutando i vettori di base per fare i diversi embedding (che quindi non è ancora concesso)
Hence, we are able to approximate the Toffoli gate%det-1
, and thus all the classical gates. In particular, we are allowed to permute the basis elements of the working space (which is different from, in fact more general than the \textup{SWAP} permuting two-dimensional qubit spaces in a tensor product). Therefore, we can realise Givens %eleemnti di SO(2) già li avevo, ora li posso embeddare su ogni permutazione di assi a caso, cioè farli agire su ogni asse
rotations and generate $\SO(2^{n+1})$ for every $n\in\NN$ by Remark~\ref{remark:GivensJacobi}. In other words, compositions of elements from  $\mathcal{G}_1\mathdutchcal{p}_3$ approximate all of  $\SO(2^{n+1})$ for every $n\in\NN$. Then, by the real encoding of quantum computation (cf. the discussion immediately following Remark~\ref{eq:remBVrelaQC}), we can realise every circuit or gate in $\U(2^n)$ for every $n\in\NN$.

All of this proves the following result.
\begin{theorem}
The set of $3$-adically controlled gates $\mathcal{G}_1\mathdutchcal{p}_3$ in Eq.~\eqref{eq:1gatesU4wrt1} is universal %, as it approximate any (special) orthogonal transformation, and hence can be served for a real encoding of universal 
for quantum computation.
\end{theorem}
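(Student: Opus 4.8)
The plan is to show that the group generated by the gates of $\mathcal{G}_1\mathdutchcal{p}_3$ acting on two qubits is dense in $\mathrm{O}(4)$, then to promote this to density in $\mathrm{O}(2^{n})$ for all $n$, and finally to translate back to complex quantum logic via the real-encoding reduction recalled in Remark~\ref{eq:remBVrelaQC}. I would first fix $\mathdutchcal{G}\coloneqq\langle X\otimes\mI_2,\,\mI_2\otimes X,\,S\otimes\mI_2,\,\mI_2\otimes S,\,\widetilde{CZ}\rangle\leq\mathrm{O}(4)$. Since $\mathrm{O}(4)\simeq\SO(4)\rtimes\{\pm1\}$ and $\widetilde{CZ}$ has determinant $-1$, it is enough to prove that the identity component $\overline{\mathdutchcal{G}}_0$ of the closure equals $\SO(4)$; then automatically $\overline{\mathdutchcal{G}}=\mathrm{O}(4)$. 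By the closed-subgroup theorem $\overline{\mathdutchcal{G}}$ is a Lie subgroup of $\mathrm{O}(4)$, and because $\SO(4)$ is connected, $\overline{\mathdutchcal{G}}_0=\SO(4)$ holds if and only if $\mathrm{Lie}(\overline{\mathdutchcal{G}}_0)=\mathfrak{so}(4)$, a $6$-dimensional algebra. Thus the whole problem reduces to reconstructing $\mathfrak{so}(4)$ from $\mathdutchcal{G}$.

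The core step is to recover $\mathfrak{so}(4)$ generator by generator. I would pick two non-commuting elements $R_1,R_2\in\mathdutchcal{G}_0$ of infinite order — e.g.\ $R_1\coloneqq\bigl((S\otimes S)\widetilde{CZ}\bigr)^{2}$ and $R_2\coloneqq\bigl((S\otimes\mI_2)\widetilde{CZ}(\mI_2\otimes S)\bigr)^{2}$ — and check from their spectra that each has an eigenvalue $e^{i\theta}$ with $\theta\notin\pi\QQ$, so that by the Kronecker--Weyl theorem $\overline{\langle R_j\rangle}\simeq\U(1)$. Diagonalising $R_j=\Lambda_jD_j\Lambda_j^\dagger$ yields a generator $A_j\coloneqq\Lambda_j\bigl(i\,\diag(0,0,1,-1)\bigr)\Lambda_j^\dagger$ of the one-dimensional algebra $\mathrm{Lie}(\overline{\langle R_j\rangle})\subseteq\mathrm{Lie}(\overline{\mathdutchcal{G}}_0)$. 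It then remains to verify, by a finite linear-algebra computation (assisted by GAP and Mathematica), that the six matrices $A_1,\,A_2,\,[A_1,A_2],\,[A_1,[A_1,A_2]],\,[A_2,[A_1,A_2]],\,[A_1,[A_1,[A_1,A_2]]]$ are $\RR$-linearly independent in $\mathfrak{so}(4)$. As $\mathrm{Lie}(\overline{\mathdutchcal{G}}_0)$ is a Lie subalgebra containing $A_1$ and $A_2$, it contains all of these, hence equals $\mathfrak{so}(4)$, so $\mathdutchcal{G}_0$ is dense in $\SO(4)$ and $\overline{\mathdutchcal{G}}=\mathrm{O}(4)$.

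For the final assembly I would use the Givens-rotation route of Remark~\ref{remark:GivensJacobi}: density in $\mathrm{O}(4)$ lets us approximate the Toffoli gate, hence all classical reversible gates, hence arbitrary coordinate permutations of $\RR^{2^{n+1}}$; combining these permutations with the approximate planar $\SO(2)$-rotations already available and the decomposition of every element of $\SO(m)$ into $\binom{m}{2}$ suitably embedded rotations, one generates a dense subgroup of $\SO(2^{n+1})$ for every $n\in\NN$. By the real encoding of quantum circuits (the discussion following Remark~\ref{eq:remBVrelaQC}) these elements approximate every unitary in $\U(2^n)$ for all $n$, which is the asserted universality. (A shorter closing alternative: $\langle\SO(4)\otimes\mI_2,\mI_2\otimes\SO(4)\rangle=\SO(8)$, so density in $\mathrm{O}(4)$ already gives density in $\mathrm{O}(8)$, whence universality follows from $\U(4)$ being universal by DiVincenzo and Deutsch.)

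The main obstacle is precisely the middle step: a priori the iterated commutators of $A_1$ and $A_2$ might span only a proper subalgebra of $\mathfrak{so}(4)$ — for instance a conjugate of $\mathfrak{so}(3)\oplus\mathfrak{so}(3)\simeq\mathfrak{su}(2)\oplus\mathfrak{su}(2)$, or of $\mathfrak{u}(2)$, or an abelian Cartan piece — in which case $\mathdutchcal{G}$ would fail to be dense and no reshuffling of the remaining gates could repair it. Hence everything hinges on the explicit, computer-verified choice of $R_1$ and $R_2$ inside $\mathdutchcal{G}_0$ whose associated generators do bracket-generate the full $\mathfrak{so}(4)$; the rest of the proof is structural and relies only on results already established in the excerpt.
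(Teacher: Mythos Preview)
Your proposal is correct and follows essentially the same approach as the paper: the same group $\mathdutchcal{G}$, the same reduction to $\mathrm{Lie}(\overline{\mathdutchcal{G}}_0)=\mathfrak{so}(4)$ via the closed-subgroup theorem, the very same choices $R_1=\bigl((S\otimes S)\widetilde{CZ}\bigr)^2$ and $R_2=\bigl((S\otimes\mI_2)\widetilde{CZ}(\mI_2\otimes S)\bigr)^2$, the same six iterated commutators, and the same Givens-rotation / real-encoding finish (with the same $\SO(8)$ shortcut noted). One small slip in your closing caveat: $\mathfrak{so}(3)\oplus\mathfrak{so}(3)$ is not a proper subalgebra of $\mathfrak{so}(4)$ but the whole thing, so the relevant danger would be landing in a single $\mathfrak{so}(3)$ factor or in $\mathfrak{u}(2)$.
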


%è fico che loro partono sempre dal Toffoli, perchè Toffoli universale per computazione classica (3 bit necessari per computazione reversibile), e vedono cosa devono aggiungere per universalità della computazione propriamente quantistica. Ma noi arriviamo ad un set universale senza Toffoli e soprattutto sempre porte a 1 o 2 qubit... anche se ce ne sono diversi così, e con meno porte anche

\section{Conclusions}
\label{sec:conclusions}
Advancing the study of $p$-adic quantum mechanical spin and angular momentum, we have here shown that low-dimensional irreducible representations of the three-dimensional $p$-adic rotation group $\SO(3)_p$ indeed give rise to meaningful elements of quantum information processing: $p$-adic qubits and universal sets of $p$-adically controlled quantum logic gates. %While the main objective is a $p$-adic formulation of quantum theories, this work has strong mathematical roots on the unitary irreducible representations of $\SO(3)_p$. OR While this work has strong mathematical roots, its main objective is to advance a step forward in a $p$-adic formulation of quantum mechanics and quantum computation \'a la Volovich.

We established that the finite-dimensional representations of $\SO(3)_p$ factorise through finite quotients $\SO(3)_p\mod p^k$. By focusing on the level $k=1$, we classified the unitary irreducible representations arising from $\SO(3)_p \bmod p$, identifying a $p$-adic qubit with such a representation of dimension two. A striking feature emerging from this classification, unlike the standard real case, is that $\SO(3)_p$ admits more than one $p$-adic qubit representation for primes $p>3$. This abundance is reflected in several possibilities of composing a multipartite system of $p$-adic qubits, through the tensor product of their representations. An open question is how many non-equivalent $p$-adic qubit representations there are for every prime $p$, and what they signify. The existence of non-equivalent one-dimensional representations of $\SO(3)_p$ might suggest to reduce $\SO(3)_p$ to a suitable subgroup %(e.g. the one corresponding to $p$-adic quaternions of reduced norm $1$)
with unique trivial character (as it happens in the real case from $\mathrm{O}(3)$ to $\SO(3)$), which could thereby provide a selection of physically meaningful $p$-adic qubit representations.

We addressed the Clebsch-Gordan problem for two $p$-adic qubits from $\SO(3)_p \bmod p$, decomposing in singlet and doublet states, driven by the representation theory of the dihedral group $\mathrm{D}_{p+1}$. Although the tensor product of two two-dimensional representations can theoretically decompose into a singlet and a triplet, this structure has not yet been observed in a system of two $p$-adic qubits factorizing modulo $p$. %Then, we identified maximally entangled Bell bases realising those decompositions: the singlet quantum states are the only maximally entangled ones, whereas the projectors onto two- (and three-) dimensional subspaces are separable. 
Then, we identified entangled states realising those decompositions: every singlet, doublet (and triplet) can be given by maximally entangled Bell states. Note however that the latter two cases present separable states, which are equivalently spanned by product vectors. 
%\textcolor{red}{This means that entangled vector states arises more often in the $p$-adic setting,} hence it would be interesting to quantify this aspect through the volume of entangled states in the whole set of states and contrast it with the standard case. 
%On the other hand, in terms of quantum states, the singlet quantum states are the only maximally entangled ones, whereas the projectors onto two- (and three-) dimensional subspaces are separable. 
We leave it open to analyse the irreducible representations of $\SO(3)_p\mod p^k$ for $k>1$, and whether this eventually yields three-dimensional representations, potentially allowing the Clebsch-Gordan decomposition of two $p$-adic qubits to recover the singlet and triplet structure familiar from standard quantum mechanics. A natural step beyond is to extend this analysis to systems composed of more than two $p$-adic qubits.

Lastly, quantum logic gates form the foundation of the circuit model of quantum computation, by manipulating qubit states via unitary operators. Here, we have proposed to take elements from the $2^n$-dimensional unitary representations of $\SO(3)_p$ as logic gates on $n$ qubits, drawing inspiration from Kitaev's topological anyon model with the actions of a braid group~\cite{KitaevBraid}. Our focal point is to identify some entangling logic gate on two qubits, that can be used to construct a universal set of gates. In this context, the prime $p=3$ appeared privileged: at the level of $\SO(3)_p\mod p$, $p=3$ is the only case exhibiting a unique $p$-adic qubit representation and, crucially, $4$-dimensional irreducible representations. We proved that the elements of such representation of $\SO(3)_3$ lead to a universal set of $3$-adically controlled quantum logic gates, ensuring the capacity of our model for arbitrary quantum computations, and paving the way of the research over $p$-adically driven quantum algorithms. Finally, we would like to understand if the singled out universal set of logic gates is of interest for practical application.

So far, the study of representations of $\SO(3)_p$ and their application to quantum physics has been restricted to the level of $\SO(3)_p\mod p$. It remains to be understood if the prominence of $p=3$ is just an artifact of this restriction. Moreover, the $4$-dimensional irreducible representations of $\SO(3)_3\mod 3$ are actually orthogonal, and an ancillary qubit needs to be added to encode unitary circuits through our universal set of orthogonal gates. We still need to see if genuinely complex unitary representations emerge at higher levels $\SO(3)_p \bmod p^k$, and if $4$-dimensional ones exist for other primes $p$, potentially offering universal sets of gates for all primes. To fully answer this, as well as to have a complete $p$-adic theory of angular momentum, spin and qubit, future work has to pursue a complete classification of unitary irreducible representations of $\SO(3)_p$. This will be addressed via Kirillov's orbit method and Howe's generalised theory, before extending to projective representations with the help of the Haar measure on $\SO(3)_p$.

Looking farther afield, we plan to provide a definition of qubit, entanglement, logic gates, etc. in the adelic framework, so to encompasses both standard and $p$-adic quantum mechanics for all primes $p$ simultaneously.

%%%%%%%%%%%%%%%%%%%%%%%%%%%%%%%%%%%%%%%%
\acknowledgments
The authors would like to thank Nicola Ciccoli, Jessica Fintzen, Massimo Giulietti, Decimus Phostle, Thomas S. Weigel and Evgeny I. Zelenov, for fruitful discussions and their valuable insights during the preparation of the present manuscript.

IS was supported by the Istituto Nazionale di Fisica Nucleare (INFN), Sezione di Perugia.
SLI acknowledges support from PRIN22 Models, Sets and Classifications, and from GNSAGA.
SM acknowledges financial support from COST Action CA23115: Relativistic Quantum 
Information, funded by COST (European Cooperation in Science and Technology) and from PNRR Italian Ministry of University and Research project PE0000023-NQSTI. 
AW is or was supported by the European Commission QuantERA project ExTRaQT (Spanish MICIN grant no.~PCI2022-132965); by the Spanish MICIN (project PID2022-141283NB-I00) with the support of FEDER funds; by the Spanish MICIN with funding from European Union NextGenerationEU (PRTR-C17.I1) and the Generalitat de Catalunya; by the Spanish MTDFP through the QUANTUM ENIA project: Quantum Spain, funded by the European 
Union NextGenerationEU within the framework of the ``Digital Spain 2026 Agenda''; and by the Alexander von Humboldt Foundation.

%%%%%%%%%%%%%%%%%%%%%%%%%%%%%%%%%%%%%%%%%%%%%%%%%%%%%%%%%%%%%%%%%%%%%%%%%%%%

\bibliographystyle{unsrt}

\begin{thebibliography}{99}
\bibitem{Aharonov}
D. Aharonov, ``A Simple Proof that Toffoli and Hadamard are Quantum Universal'', \href{https://doi.org/10.48550/arXiv.quant-ph/0301040}{arXiv:quant-ph/0301040} (2003).

\bibitem{Albeveriotris}
S. Albeverio, R. Cianci and A. Y. Khrennikov, ``Representation of a quantum field Hamiltonian in $p$-adic Hilbert space'', \textit{Theor. Math. Phys.} \textbf{112} (3), pp. 1081--1096 (1997), \href{https://doi.org/10.1007/BF02583040}{10.1007/BF02583040}.

\bibitem{Albeveriobis}
S. Albeverio and A. Y. Khrennikov, ``Representations of the Weyl group in spaces of square integrable functions with respect to $p$-adic valued Gaussian distributions'', \textit{J. Phys. A: Math. Gen.} \textbf{29} (17), pp. 5515--5527 (1996), \href{http://dx.doi.org/10.1088/0305-4470/29/17/023}{10.1088/0305-4470/29/17/023}.

\bibitem{AlbeverioPal}
S. Albeverio, A. Y. Khrennikov and P. E. Kloeden, ``Memory retrieval as a $p$-adic dynamical system'', \textit{BioSys.} \textbf{49} (2), pp. 105--115 (1999), \href{http://dx.doi.org/10.1016/S0303-2647(98)00035-5}{10.1016/S0303-2647(98)00035-5}.

\bibitem{Albeverio}
S. Albeverio, A. Y. Khrennikov and V. M. Shelkovich, \textit{Theory of $p$-Adic Distributions: Linear and Nonlinear Models}, London Mathematical Society Lecture Note Series \textbf{370}, Cambridge University Press, 2010.

\bibitem{AlbeverioKhTir}
S. Albeverio, A. Khrennikov and B. Tirozzi, ``$P$-Adic Dynamical Systems and Neural Networks'', \textit{Math. Models Methods Appl. Sci.} \textbf{9} (9), pp. 1417--1437 (1999), \href{https://doi.org/10.1142/S0218202599000634}{10.1142/S0218202599000634}.

\bibitem{AlbeverioMore}
S. Albeverio, A. Y. Khrennikov, B. Tirozzi and S. De Smedt, ``$p$-adic dynamical systems'', \textit{Theor. Math. Phys.} \textbf{114} (3), pp. 276--287 (1998), \href{https://doi.org/10.1007/BF02575441}{10.1007/BF02575441}.

\bibitem{Anashintris}
V. S. Anashin, ``Uniformly distributed sequences in computer algebra or how to construct program generators of random numbers'', \textit{J. Math. Sci.} \textbf{89} (4), pp. 1355--1390 (1998), \href{https://doi.org/10.1007/BF02355442}{10.1007/BF02355442}.

\bibitem{Anashinbis}
V. S. Anashin, ``Uniformly distributed sequences of $p$-adic integers'', \textit{Math. Notes} \textbf{55} (2), pp. 109--133 (1994).

\bibitem{Anashin}
V. S. Anashin and A. Y. Khrennikov, \textit{Applied Algebraic Dynamics}, de Gruyter Expositions in Mathematics \textbf{49}, Walter de Gruyter, 2009.

\bibitem{Haar1}
P. Aniello, S. L’Innocente, S. Mancini, V. Parisi, I. Svampa and A. Winter, ``Invariant measures on $p$-adic Lie groups: the $p$-adic quaternion algebra and the Haar integral on the $p$-adic rotation groups'', \textit{Lett. Math. Phys.} \textbf{114} (78), (2024), \href{https://doi.org/10.1007/s11005-024-01826-8}{10.1007/s11005-024-01826-8}.

\bibitem{Haar2}
P. Aniello, S. L’Innocente, S. Mancini, V. Parisi, I. Svampa and A. Winter, ``Characterising the Haar measure on the $p$-adic rotation groups via inverse limits of measure spaces'', \textit{Expo. Math.} \textbf{43} (2), 125592 (2025), \href{https://doi.org/10.1016/j.exmath.2024}{10.1016/j.exmath.2024}.

\bibitem{ArafevaDragovich}
L. Y. Araf’eva, B. Dragovich, P. H. Frampton and I. V. Volovich, ``The wave function of the universe and $p$-adic gravity'', \textit{Int. J. Mod. Phys. A} \textbf{6} (24), pp. 4341--4358 (1991), \href{http://dx.doi.org/10.1142/S0217751X91002094}{10.1142/S0217751X91002094}.

\bibitem{galois}
E. Artin, \textit{Galois Theory}, University of Notre Dame Press, 1971.

\bibitem{altromaKirillov}
N. Avni, B. Klopsch, U. Onn and C. Voll, ``Representation zeta functions of compact $p$-adic analytic groups and arithmetic groups'', \textit{Duke Math. J.} \textbf{162} (1), pp. 111--197 (2013), \href{http://dx.doi.org/10.1215/00127094-1959198}{10.1215/00127094-1959198}.

\bibitem{BarencoetAl}
A. Barenco,  C. H. Bennett, R. Cleve, D. P. DiVincenzo, N. Margolus, P. Shor, T. Sleator, J. A. Smolin and H. Weinfurter, ``Elementary gates for quantum computation'', \textit{Phys. Rev. A} \textbf{52} (5), pp. 3457--3467 (1995), \href{https://doi.org/10.1103/PhysRevA.52.3457}{10.1103/PhysRevA.52.3457}.

\bibitem{Bargmann}
V. Bargmann, ``On Unitary Ray Representations of Continuous Groups'', \textit{Ann. Math.} \textbf{59} (1), pp. 1--46 (1954), \href{https://doi.org/10.2307/1969831}{10.2307/1969831}.

\bibitem{BarutRaczka}
A. O. Barut and R. Raczka, \textit{Theory of group representations and applications}, World Scientific Ltd., Singapore. 1986.

\bibitem{BernVazi}
E. Bernstein and U. Vazirani, ``Quantum Complexity Theory'', \textit{SIAM J. Comp.} \textbf{26} (5), pp. 1411--1473 (1997), \href{https://doi.org/10.1137/S0097539796300921}{10.1137/S0097539796300921}.

\bibitem{BlijMonna}
F. van der Blij and A. F. Monna, ``Models of Space and Time in Elementary Physics'', \textit{J. Math. Anal. Appl.} \textbf{22} (3), pp. 537--545 (1968), \href{https://doi.org/10.1016/0022-247X(68)90192-3}{10.1016/0022-247X(68)90192-3}.

\bibitem{BourTop}
N. Bourbaki, \textit{General Topology: Chapters 1--4}, Elements of Mathematics, Springer, Paris Texas, 1995.

\bibitem{Brylinski}
J.-L. Brylinski and R. Brylinski, ``Universal quantum gates'', \href{https://doi.org/10.48550/arXiv.quant-ph/0108062}{arXiv:quant-ph/0108062}, contained in \textit{Mathematics of Quantum Computation} edited by R. K. Brylinski and G. Chen, Computational Mathematics Series, Chapman and Hall/CRC, 2002.

\bibitem{Bump}
D. Bump, J. W. Cogdell, E. de Shalit, D. Gaitsgory, E. Kowalski and S. S. Kudla, \textit{An introduction to the Langlands program}, Lectures presented at the Hebrew University of Jerusalem, March 12--16, 2001. Edited by J. Bernstein and S. Gelbart, Birkh\"auser, Basel, 2003.

\bibitem{GAPch1}
W. Burnside, \textit{Theory of Groups of Finite Order}, Dover Publications, 1955.

\bibitem{Cassels}
J. W. S. Cassels, \textit{Rational Quadratic Forms}, London Mathematical Society Monographs \textbf{13}, Courier Dover Publications, 2008.

\bibitem{Crespo}
L. Crespo and A. Pelayo, ``$p$-adic angular momentum coupling in symplectic geometry'', \href{https://doi.org/10.48550/arXiv.2510.13415}{arXiv[math.SG]:2510.13415} (2025).

\bibitem{CuMonAndreas}
T. S. Cubitt, A. Montanaro and A. Winter, ``On the dimension of subspaces with bounded Schmidt rank'', \textit{J. Math. Phys.} \textbf{49} (2), 022107 (2008), \href{https://doi.org/10.1063/1.2862998}{10.1063/1.2862998}.

\bibitem{Deutsch}
D. Deutsch, ``Quantum computational networks'', \textit{Proc. R. Soc. Lond. A} \textbf{425}, pp. 73--90 (1989), \href{http://dx.doi.org/10.1098/rspa.1989.0099}{10.1098/rspa.1989.0099}.

%\bibitem{DeutschBarenkoE} D. Deutsch, A. Barenco and A. Ekert ``Universality in quantum computation'', \textit{Proc. R. Soc. Lond. A} \textbf{449}, pp. 669--677 (1995), \url{http://dx.doi.org/10.1098/rspa.1995.0065}.

\bibitem{our1st}
S. Di Martino, S. Mancini, M. Pigliapochi, I. Svampa and A. Winter, ``Geometry of the $p$-adic special orthogonal group $\SO(3)_p$'', \textit{Lobachevskii J. Math.} \textbf{44} (6), pp. 2135--2159 (2023), \href{https://doi.org/10.1134/S1995080223060355}{10.1134/S1995080223060355}.

\bibitem{DiVincenzo}
D. P. DiVincenzo, ``Two-bit gates are universal for quantum computation'', \textit{Phys. Rev. A} \textbf{51} (2), pp. 1015--1022 (1995), \href{http://dx.doi.org/10.1103/PhysRevA.51.1015}{10.1103/PhysRevA.51.1015}.

\bibitem{DixonCh}
J. D. Dixon, ``High-speed computation of group characters'', \textit{Numer. Math.} \textbf{10} (5), pp. 446--450 (1967), \href{http://dx.doi.org/10.1007/BF02162877}{10.1007/BF02162877}.

\bibitem{DixonIrr}
J. D. Dixon, ``Constructing representations of finite groups'', contained in \textit{Groups and computation: Workshop on Groups and Computation 1991} edited by L. Finkelstein, W. M. Kantor, DIMACS Series in Discrete Mathematics, and Theoretical Computer Science \textbf{11}, pp. 105--112, American Mathematical Society, 1993, \href{https://doi.org/10.1090/dimacs/011/07}{10.1090/dimacs/011/07}.

\bibitem{Etingof}
P. Etingof, O. Golberg, S. Hensel, T. Liu, A. Schwendner, D. Vaintrob and E. Yudovina, \textit{Introduction to Representation Theory}, Student Mathematical Library \textbf{59}, American Mathematical Society, Rhode Island, 2011.

\bibitem{folland2016course}
G. B. Folland, \textit{A Course in Abstract Harmonic Analysis}, Studies in Advanced Mathematics \textbf{29}, CRC Taylor and Francis, 2016.

\bibitem{FreundWitten}
P. G. O. Freund and E. Witten, ``Adelic string amplitudes'', \textit{Phys. Lett. B} \textbf{199} (2), pp. 191--194 (1987), \href{https://doi.org/10.1016/0370-2693(87)91357-8}{10.1016/0370-2693(87)91357-8}.

\bibitem{FultonHarris}
W. Fulton and J. Harris, \textit{Representation Theory. A First Course}, Graduate Texts in Mathematics \textbf{129}, Springer, Berlin New York, 2004.

\bibitem{GAP4}
The GAP Group, \textit{GAP – Groups, Algorithms, and Programming, Version 4.14.0}, 2024, \url{https://www.gap-system.org}.

\bibitem{Georgi}
H. Georgi, \textit{Lie Algebras in Particle Physics: from Isospin to Unified Theories}, Frontiers in Physics, CRC Taylor and Francis, 2018.

\bibitem{GolubGivens}
G. H. Golub and C. F. Van Loan, \textit{Matrix Computations}, Johns Hopkins Studies in the Mathematical Sciences, Johns Hopkins University Press, 2013.

\bibitem{Gouvea}
F. Q. Gouv$\hat{\textup{e}}$a, \textit{$p$-Adic Numbers: An Introduction}, Universitext, Springer, Berlin Heidelberg, 2020.

\bibitem{Hall}
B. Hall, \textit{Lie Groups, Lie Algebras, and Representations: An Elementary Introduction}, Graduate Texts in Mathematics \textbf{222}, Springer, Berlin Heidelberg, 2015.

\bibitem{Masahito}
M. Hayashi, \textit{Group Representation for Quantum Theory}, Springer, Berlin Tokyo, 2017.

\bibitem{Liegen}
K. H. Hofmann and S. A. Morris, \textit{The Lie Theory of Connected Pro-Lie Groups: a Structure Theory for Pro-Lie Algebras, Pro-Lie Groups, and Connected Locally Compact Groups}, EMS Tracts in Mathematics \textbf{2}, European Mathematical Society, 2007.

\bibitem{Hofmann}
K. H. Hofmann and S. A. Morris, \textit{The Structure of Compact Groups: A Primer for the Student -- A Handbook for the Expert}, Studies in Mathematics \textbf{25}, De Gruyter, Berlin, 2013.

\bibitem{oroent}
R. Horodecki, P. Horodecki, M. Horodecki and K. Horodecki, ``Quantum entanglement'', \textit{Rev. Mod. Phys.} \textbf{81} (2), pp. 865--942 (2009), \href{https://doi.org/10.1103/RevModPhys.81.865}{10.1103/RevModPhys.81.865}.

\bibitem{Howe}
R. Howe, ``Kirillov theory for compact $p$-adic groups'', \textit{Pacific J. Math.} \textbf{73} (2), pp. 365-381 (1977), \href{http://dx.doi.org/10.2140/PJM.1977.73.365}{10.2140/PJM.1977.73.365}.

\bibitem{Hulpke}
A. Hulpke, \textit{Zur Berechnung von Charaktertafeln}, Diploma thesis, Lehrstuhl D f\"ur Mathematik, Rheinisch-Westf\"alische Technische Hochschule (RWTH) Aachen, 1993.

\bibitem{Krennpall}
A. Y. Khrennikov, ``Mathematical methods of non-Archimedean physics'', \textit{Uspekhi Mat. Nauk} \textbf{45} (4), pp. 79--110 (1990), \href{https://doi.org/10.1070/RM1990v045n04ABEH002378}{10.1070/RM1990v045n04ABEH002378}.

\bibitem{Khrennikov}
A. Y. Khrennikov, \textit{$p$-Adic Valued Distributions in Mathematical Physics}, Mathematics and its Applications \textbf{309}, Kluwer Academic Publishers, 1994.

\bibitem{Khrennikbis}
A. Y. Khrennikov, \textit{Non-Archimedean Analysis: Quantum Paradoxes, Dynamical Systems and Biological Models}, Mathematics and its Applications \textbf{427}, Kluwer Academic Publishers, 1997.

\bibitem{Khrennikbastaa}
A. Y. Khrennikov, S. Yamada and A. van Rooij, ``The measure-theoretical approach to $p$-adic probability theory'', \textit{Ann. Math. Blaise Pascal} \textbf{6} (1), pp. 21--32 (1999), \href{https://doi.org/10.5802/ambp.112}{10.5802/ambp.112}.

\bibitem{kirillov}
A. A. Kirillov, \textit{Lectures on the Orbit Method}, Graduate Studies in Mathematics \textbf{64}, American Mathematical Society, Rhode Island, 2004.

\bibitem{KitaevBraid}
A. Y. Kitaev, ``Fault-Tolerant Quantum Computation by Anyons'', \textit{Ann. Phys.} \textbf{303} (1), pp. 2--30 (2003), \url{http://dx.doi.org/10.1016/S0003-4916(02)00018-0}{10.1016/S0003-4916(02)00018-0}.

%\bibitem{Kitaev} A. Y. Kitaev, ``Quantum computations: algorithms and error correction'', \textit{Russ. Math. Surv.} \textbf{52} (6), pp. 1191--1249 (1997), \url{http://dx.doi.org/10.1070/RM1997v052n06ABEH002155}.

\bibitem{ourcode}
I. Svampa, S. L'Innocente, S. Mancini and A. Winter, `Composing p-adic qubits. From representations of SO3p to entanglement and universal quantum logic gates'', (2025), \url{https://gitlab.git.nrw/isvampa/composing-p-adic-qubits.git}.

%\bibitem{Lloyd} S. Lloyd, ``Almost Any Quantum Logic Gate is Universal'', \textit{Phys. Rev. Lett.} \textbf{75} (2), pp. 346--349 (1995), \url{https://doi.org/10.1103/PhysRevLett.75.346}.

\bibitem{Mackey}
G. W. Mackey, \textit{Induced Representations of Groups and Quantum Mechanics}, W. A. Benjamin, Inc.; Editore Boringhieri, 1968.

\bibitem{manwin}
S. Mancini and A. Winter, \textit{A Quantum Leap in Information Theory}, World Scientific Ltd., Singapore, 2020.

%\bibitem{QICnielchu} M. A. Nielsen and I. L. Chuang, \textit{Quantum Computation and Quantum Information}, Cambridge University Press, Cambridge, 2010.

\bibitem{Parthasarathy}
K. R. Parthasarathy, ``On the maximal dimension of a completely entangled subspace for finite level quantum systems'', \textit{Proc. Math. Sci.} \textbf{114} (4), pp. 365--374 (2004), \href{https://doi.org/10.1007/BF02829441}{10.1007/BF02829441}.

\bibitem{reiter2000}
H. Reiter and J. D. Stegeman, \textit{Classical Harmonic Analysis and Locally Compact Groups}, London Mathematical Society Monographs New Series \textbf{22}, Oxford University Press, 2000.

\bibitem{profinite}
L. Ribes and P. Zalesskii, \textit{Profinite Groups}, A Series of Modern Surveys in Mathematics, Springer, Berlin New York, 2010.

\bibitem{RobertWOW}
A. M. Robert, \textit{A Course in $p$-Adic Analysis}, Graduate Texts in Mathematics \textbf{198}, Springer, Berlin New York, 2000.

\bibitem{sakurai}
J. J. Sakurai and J. Napolitano, \textit{Modern Quantum Mechanics}, Cambridge University Press, Cambridge, 2017.

\bibitem{SchneiderCh}
G. J. A. Schneider, ``Dixon's character table algorithm revisited'', \textit{J. Symbolic Computation} \textbf{9} (5--6), pp. 601--606 (1990), \href{http://dx.doi.org/10.1016/S0747-7171(08)80077-6}{10.1016/S0747-7171(08)80077-6}.

\bibitem{Serre:reps}
J.-P. Serre, \textit{Linear Representations of Finite Groups}, Graduate Texts in Mathematics \textbf{42}, Springer, Berlin Bordeaux, 1977.

\bibitem{Serre}
J.-P. Serre, \textit{A Course in Arithmetic}, Graduate Texts in Mathematics \textbf{7}, Springer, Berlin Brest, 2012.

\bibitem{Shi}
Y.-Y. Shi,  ``Both Toffoli and Controlled-NOT need little help to universal quantum computing'', \textit{Quantum Information and Computation} \textbf{3} (1), pp. 84--92 (2003), \href{https://doi.org/10.26421/QIC3.1-7}{10.26421/QIC3.1-7}.

\bibitem{simon}
B. Simon, \textit{Representations of Finite and Compact Groups}, Graduate Studies in Mathematics \textbf{10}, American Mathematical Society, Rhode >Island, 1996.

\bibitem{our2nd}
I. Svampa, S. Mancini and A. Winter, ``An approach to $p$-adic qubits from irreducible representations of $\SO(3)_p$'', \textit{J. Math. Phys.} \textbf{63} (7), 072202 (2022), \href{https://doi.org/10.1063/5.0089077}{10.1063/5.0089077}.

\bibitem{PhDtesi}
I. Svampa, \textit{Representations of the $p$-adic three-dimensional rotation group: towards $p$-adic quantum computing}, PhD thesis, Universit\`a degli Studi di Camerino \& Universitat Aut\`onoma de Barcelona, 2025, \url{https://pubblicazioni.unicam.it/retrieve/d8920d9a-3a4c-4a9a-b910-1dbbce699359/Svampa_Tesi_PhD.pdf}, \url{https://ddd.uab.cat/pub/tesis/2025/hdl_10803_694069/is1de1.pdf}.

\bibitem{varadarajan}
V. S. Varadarajan, \textit{Geometry of Quantum Theory}, Springer, Berlin Kolkata, 2007.

\bibitem{VVZ}
V. S. Vladimirov, I. V. Volovich and E. I. Zelenov, \textit{$p$-Adic Analysis and Mathematical Physics}, Series on Soviet and East European Mathematics \textbf{1}, World Scientific Ltd., Singapore, 1994.

\bibitem{VVZpaper}
V. S. Vladimirov, I. V. Volovich and E. I. Zelenov, ``Spectral theory in $p$-adic quantum mechanics, and representation theory'', \textit{Izvestia Akad. Nauk SSSR, Ser. Mat.} \textbf{54} (2), pp. 275--302 (1990), \href{https://doi.org/10.1070/im1991v036n02abeh002022}{10.1070/im1991v036n02abeh002022}.

\bibitem{VolSeminal}
I. V. Volovich, ``$p$-adic space-time and string theory'', \textit{Theor. Math. Phys.} \textbf{71} (3), pp. 574--576 (1987).

\bibitem{Weizsaecker}
C. F. von Weizs\"acker, \textit{The Structure of Physics}, Fundamental Theories of Physics \textbf{155}, Springer, Berlin Heidelberg, 2006.
    
\bibitem{Wigner}
E. Wigner, \textit{Gruppentheorie und ihre Anwendung auf die Quantenmechanik der Atomspektren}, Friedr. Vieweg {\&} Sohn, Leipzig, 1931.

\bibitem{Wilde}
M. M. Wilde, \textit{Quantum Information Theory}, Cambridge University Press, Cambridge, 2017.

\bibitem{Math}
Wolfram Research, Inc., \textit{Mathematica, Version 14.1}, 2024, \url{https://www.wolfram.com/mathematica}.

\bibitem{ZunigaGal}
W. A. Z\'uñiga-Galindo, B. A. Zambrano-Luna and B. Dibba, ``Hierarchical Neural Networks, $p$-Adic PDEs, and Applications to Image Processing'',  \textit{J. Nonlinear Math. Phys.} \textbf{31} (63), (2024), \href{https://doi.org/10.1007/s44198-024-00229-6}{10.1007/s44198-024-00229-6}.

\end{thebibliography}

%%%%%%%%%%%%%%%%%%%%%%%%%%%%%%%%%%%%%%%%%%%%%%%%%%

\end{document}